\newtheorem{theorem}{Theorem}[section]
\newtheorem{lemma}[theorem]{Lemma}
\newtheorem{definition}[theorem]{Definition}
\newtheorem{corollary}[theorem]{Corollary}
\newtheorem*{theorem*}{Theorem}
\newtheorem{problem}{Problem}
\newtheorem*{problem*}{Problem}
\newtheorem*{lemma*}{Lemma}
\providecommand{\customgenericname}{}
\newcommand{\newcustomtheorem}[2]{%
  \newenvironment{#1}[1]
  {%
   \renewcommand\customgenericname{#2}%
   \renewcommand\theinnercustomgeneric{##1}%
   \innercustomgeneric
  }
  {\endinnercustomgeneric}
}
\def\cH{\mathcal{H}}
\def\cO{\mathcal{O}}
\def\cS{\mathcal{S}}
\def\cU{\mathcal{U}}
\def\hsga{H_\mathrm{SGA}}
\def\one{{\mathchoice {\rm 1\mskip-4mu l} {\rm 1\mskip-4mu l} {\rm 1\mskip-4.5mu l} {\rm 1\mskip-5mu l}}}
\newcommand{\sket}[1]{\, | #1 \rangle}
\begin{document}

\title{Hamiltonian simulation for low-energy states with optimal time dependence}

\author{Alexander Zlokapa}
\affiliation{Center for Theoretical Physics, MIT, 02139, USA}
\affiliation{Google Quantum AI, Venice, CA 90291, USA}

\author{Rolando D. Somma}
\affiliation{Google Quantum AI, Venice, CA 90291, USA}

\maketitle

\begin{abstract}
We consider the task of simulating time evolution under a Hamiltonian $H$ within its low-energy subspace. 
Assuming access to a block-encoding of  $H':=(H-E)/\lambda$, for some $\lambda>0$ and $E \in \mathbb R$, 
the goal is to implement an $\epsilon$-approximation to the evolution operator $e^{-itH}$ when the initial state 
is confined to the subspace corresponding to eigenvalues $[-1, -1+\Delta/\lambda]$ of $H'$, for $\Delta \leq \lambda$. We present a quantum algorithm that requires $\mathcal{O}(t\sqrt{\lambda\Gamma} + \sqrt{\lambda/\Gamma}\log(1/\epsilon))$ queries to the block-encoding for any choice of $\Gamma$ such that $\Delta \leq \Gamma \leq \lambda$.
When the parameters satisfy $\log(1/\epsilon) = o(t\lambda)$ and $\Delta/\lambda = o(1)$, this result improves over generic methods with query complexity $\Omega(t\lambda)$.
Our quantum algorithm leverages spectral gap amplification and the quantum singular value transform.

For a given $H$, 
the block-encoding of its $H'$ must be prepared efficiently to achieve an asymptotic speedup in simulating the low-energy subspace; we refer to these Hamiltonians as \emph{gap-amplifiable}.
We show necessary and sufficient conditions for gap amplifiability in terms of an operationally useful decomposition of $H$ into a sum of squares. 
Gap-amplifiable Hamiltonians include physically relevant examples such as frustration-free systems, and it encompasses all previously considered settings of low-energy simulation algorithms.
Any Hamiltonian can be expressed as a gap-amplifiable Hamiltonian after simple transformations, and our algorithm retains the asymptotic improvement over generic methods as long as the conditions on the parameters are met.

We also provide lower bounds for simulating dynamics of low-energy states.
In the worst case, we show that the low-energy condition cannot be used to improve the runtime of Hamiltonian simulation methods.
For gap-amplifiable Hamiltonians, we prove that our algorithm is tight in the query model with respect to $t$, $\Delta$, and $\lambda$. In the practically relevant regime where $\log (1/\epsilon) = o(t\Delta)$ and $\Delta/\lambda = o(1)$, we also prove a matching lower bound in gate complexity (up to logarithmic factors).
To establish the query lower bounds, we consider oracular problems including search and $\mathrm{PARITY}\circ\mathrm{OR}$, and also bounds on the degrees of trigonometric polynomials. To establish the lower bound on gate complexity, we use a circuit-to-Hamiltonian reduction, where a ``clock Hamiltonian'' acting on a low-energy state can simulate any quantum circuit.
\end{abstract}

\section{Introduction}
\label{sec:intro}

A leading application for quantum computers is the simulation of the dynamics of quantum systems~\cite{Fey82}: given a Hamiltonian $H$, an initial quantum state $\ket \psi$, and an evolution time $t$, the task is to produce the state at a later time as predicted by Schr\"odinger's equation, i.e., $e^{-itH}\ket \psi$. Simulating time evolution on a quantum computer, also known as the Hamiltonian simulation problem, is relevant to numerous problems in physics~\cite{Llo96,SOGKL02}, chemistry~\cite{Llo96,aspuru2005simulated}, and many other quantum algorithms (cf., ~\cite{CKS17,holmes2022quantum,an2023linear}). Extensive effort has thus been made towards obtaining the most efficient quantum algorithms for Hamiltonian simulation in a variety of settings~\cite{BAC07,WBH+10,berry2014exponential,BCC+15,LC17,low2019qubitization,Cam19,haah2018quantum}. While such quantum algorithms are efficient in that they require polynomial resources in the evolution time and system size for interesting classes of Hamiltonians, many applications introduce considerations that necessitate careful comparison between algorithms. For this reason, recent studies examine improvements for Hamiltonian simulation when given additional knowledge about structures of the Hamiltonian or properties of the initial states. Examples include the locality of interactions~\cite{haah2018quantum,CST+19}, Lie-algebraic properties~\cite{Som16}, symmetries~\cite{tran2021faster}, or particle statistics~\cite{BBK++15,BBK+15}.

Along similar lines, 
in this article we focus on the Hamiltonian simulation problem for an important class of initial quantum states: \emph{low-energy} states. Given a Hamiltonian $H$, low-energy states are those
supported exclusively on a subspace of energies that are strictly below a given energy (eigenvalue) of $H$.
Hamiltonian simulation of low-energy states is relevant in physically motivated contexts, such as when studying zero-temperature quantum phase transitions in condensed matter systems~\cite{sachdev1999quantum}, simulating quantum field theories~\cite{JLP12}, or computing ground-state energies of molecules with quantum phase estimation~\cite{aspuru2005simulated,lanyon2010towards}.
It is also essential for adiabatic quantum computing~\cite{FGGS00,albash2018adiabatic}, 
a generic technique to solve optimization and other problems in quantum computing by preparing the ground state of a complex (interacting) Hamiltonian starting from the ground state 
of a simpler (non-interacting) one. By ground state we mean the eigenstate (eigenvector) of lowest energy (eigenvalue). 
Consequently, any enhancement of Hamiltonian simulation techniques under the low-energy consideration of the initial state is anticipated to uncover substantial applications.

Owing to its significance, some recent research has already investigated Hamiltonian simulation of low-energy states. In particular, product formulas may exploit certain structures of the problem (e.g., locality of interactions), especially if the precision requirements are not too stringent. 
Reference~\cite{csahinouglu2021hamiltonian}  
discusses improvements on Hamiltonian simulation methods based on Trotter-Suzuki product formulas when $H$ is a local Hamiltonian presented as a sum of positive semidefinite (PSD) terms. Comparable results were provided in Ref.~\cite{gong2023theory} for additional product-formula based quantum algorithms, such as qDRIFT~\cite{Cam19} or other randomized product formulas. While these quantum algorithms show an improvement with respect to the general case that allows an arbitrary initial state, they  are generally not optimal in terms of the evolution time, Hamiltonian norm, or the allowed error, which is a common feature of product formulas. Nevertheless,
they might still offer some advantages with respect 
to qubit overheads in comparison to other methods.

Unlike product formulas, methods based on a block-encoding of the Hamiltonian can achieve optimal worst-case bounds (cf.~\cite{berry2014exponential,low2017hamiltonian}). In principle,  these methods do not offer much flexibility for taking advantage of structures in the Hamiltonian or initial state, and additional ideas are needed. To this end, Ref.~\cite{gu2021fastforwarding} provides 
an approach for simulating low-energy states of PSD Hamiltonians that uses quantum phase estimation in combination with the spectral gap amplification technique of Ref.~\cite{somma2013spectral}. That algorithm, which was intended for a new demonstration of ``fast-forwarding'' quantum evolution, is not optimal either because of the considerable overhead of quantum phase estimation with the allowed error.

The scaling of our upper bound more closely resembles that of Ref.~\cite{low2017hamiltonian}. Our results improve upon their upper bounds by logarithmic factors (and we show novel lower bounds not in Ref.~\cite{low2017hamiltonian}). We summarize the setting of Ref.~\cite{low2017hamiltonian} due to its relevance. For a given Hamiltonian $H$, we let $H'=(H-E)/\lambda$ be a shifted version of $H$.  Here, $E \in \mathbb R$ and $\lambda >0$ are such that the eigenvalues of $H'$ lie in $[-1,1]$. Both $H$ and $H'$ share the same eigenstates, and
simulating $H$ for time $t$ can be done by simulating $H'$ for ``time'' $t \lambda$.
Assuming access to a unitary $T$ that block-encodes $H'$ (i.e., a unitary that contains $H'$ in one of its blocks),
Ref.~\cite{low2017hamiltonian} provides a quantum algorithm for Hamiltonian simulation when the initial state
is confined to the subspace corresponding to eigenvalues of $H'$ in $[-1, -1+\Delta/\lambda]$, where $\Delta \in [0,\lambda]$ specifies the low-energy subspace. The interesting case occurs when $\Delta/\lambda \ll 1$ --- formally, when $\Delta = o(\lambda)$ --- where the complexity of that algorithm can be an asymptotic improvement over that of more generic Hamiltonian simulation methods, like quantum signal processing (QSP)~\cite{LC17}  or related approaches~\cite{berry2014exponential,BCC+15}.
QSP approximates $e^{-itH}$ using a different block-encoding of $H$, i.e., a unitary $W$ that contains $H/\Lambda$ in one of its blocks, for some $\Lambda >0$. In general, no guarantees are made with regards to the range of eigenvalues associated with the low-energy subspace of $H/\Lambda$, other than these being in $[-1,1]$.

It is unclear how to take advantage of the result in Ref.~\cite{low2017hamiltonian} in general, or whether that algorithm can be further improved.
In particular, it is not obvious how to construct a ``good'' block-encoding $T$ from a given $H$ such that: i) the low energies of $H$ map to eigenvalues of $H'$ in $[-1, -1+\Delta/\lambda]$ for $\Delta=o(\lambda)$, ii)  the process to block-encode $H'$ and implement $T$ is efficient, and iii) the value of $\lambda$ is not prohibitively large so that the overhead from simulating the shifted Hamiltonian (being linear in $\lambda$) does not ruin any other possible improvement coming from the low-energy assumption. For example, if access to the Hamiltonian is given by a block-encoding $W$ of $H/\Lambda$ rather than $T$, which is the block-encoding of the shifted version of $H$, we want to avoid situations in which constructing or implementing $T$ requires too many uses of $W$. Otherwise methods like QSP, which readily use $W$,  might already be more efficient. Similarly, if after constructing $T$ from $W$ we satisfy $\sqrt{\lambda \Delta} \gg \Lambda$, we will see
that QSP using $W$ can again outperform the approach of Ref.~\cite{low2017hamiltonian} that uses $T$. 
Hence, if any of these three considerations 
is not satisfied, it is unclear when --- and if --- the results of Ref.~\cite{low2017hamiltonian}  can be used to successfully exploit the low-energy condition of the initial state.

The goal of this article is to answer some of these open questions. To this end, we first construct a quantum algorithm for Hamiltonian simulation for low-energy states that asymptotically outperforms the time complexity of all known previous methods. Assuming access
to the block-encoding $T$ like in Ref.~\cite{low2017hamiltonian}, our algorithm computes $e^{-i(t\lambda)H'}\ket \psi$ --- equivalently, $e^{-i t H}\ket \psi$ ---  within specified accuracy.
It uses two known primitives, spectral gap amplification~\cite{somma2013spectral} 
and the quantum singular value transform (QSVT)~\cite{gilyen2019quantum}, to implement a certain polynomial of $T$ that approximates the correct evolution when acting on $\ket \psi$.

To demonstrate the utility of our algorithm, we study when it is possible to efficiently construct the block-encoding $T$ such that our quantum algorithm is useful.
Usually, the block-encoding $T$ will have to be constructed from certain operations that provide access to $H$, like its block-encoding $W$ or a matrix oracle. In these standard access models, we find that there exists a class of Hamiltonians that we deem {\em gap-amplifiable}, for which Hamiltonian simulation can be performed more efficiently by using our method compared to other techniques like QSP, as long as the initial state is of low energy. 
Indeed, we will see that there is an equivalence between 
having efficient access to $T$ with the desired properties and 
the Hamiltonian $H$ being gap-amplifiable. 
Roughly, in an interesting regime for the parameters (where the error is not too small), the improvement replaces a cost linear in $t\lambda$ as required by QSP to linear in $t\sqrt{\lambda \Delta}$, where $\Delta \le \lambda$ is an upper bound on the largest energy. Gap-amplifiable Hamiltonians are PSD and include frustration-free Hamiltonians as a special case, which appear ubiquitously in physics, quantum computing, and beyond~\cite{schuch2008computational,BOO10,BT09,chen2012ground}. These Hamiltonians are also known to permit asymptotic improvements in other settings, such as ground state preparation, Gibbs sampling, and 
quantum linear systems~\cite{somma2013spectral,chowdhury2016quantum,orsucci2021solving,thibodeau2023nearly}. Moreover, we show that this class of Hamiltonians encompasses all previously considered settings of low-energy simulation studied in the literature~\cite{csahinouglu2021hamiltonian,gu2021fastforwarding,gong2023theory,low2017hamiltonian}.

In the asymptotic query complexity of our algorithm, 
which refers to the number of times the block-encoding $T$ and its inverse are used,
the parameters $t\Delta$, $t\lambda$, and $1/\epsilon$ can be arbitrarily large, where 
 $\epsilon >0$ is the error. 
Our results show there are only three interesting asymptotic regimes to consider: i) a regime where the evolution time dominates the cost, i.e., where $\log (1/\epsilon) \ll t\Delta \ll t\lambda$; ii) an intermediate regime where the parameters satisfy $t\Delta \ll \log (1/\epsilon) \ll t\lambda$;  iii) a regime where error dominates the cost, i.e., where $t\Delta \ll t\lambda \ll \log (1/\epsilon)$. 
The respective complexities we obtain for the first two regimes are strict improvements over QSP. 
The time-dominated regime is most natural for applications, since a large $t\Delta \gg 1$ implies that the dynamics in the low-energy sector is not trivial, 
while still $\Delta/\lambda$ could be vanishingly small. 
The second regime is distinct in that the resulting query complexity is sublinear in time (we will see that it scales as $\sqrt{t \lambda}$), and hence likely an artificial regime for most applications.
The error-dominated regime is subsumed by previous work on general quantum simulation; that is, the complexity matches that of QSP in this regime, which is logarithmic in $1/\epsilon$. 

Finally, we study whether our quantum algorithm can be further improved.
We prove matching lower bounds for all three asymptotic regimes that show that our quantum algorithm is optimal with respect to $t$, $\Delta$, and $\lambda$, in terms of query complexity. Note that the complexity of our algorithm always produces logarithmic or sublogarithmic scaling in $1/\epsilon$, and previous results readily imply optimality in the error-dominated regime~\cite{berry2014exponential}. Our lower bounds are mostly stated in terms of the {\em query} complexity when given access to the Hamiltonian in standard oracle models. Nevertheless, in the time-dominated regime, which is arguably the most relevant regime to physics and chemistry problems, we go further and also prove a matching lower bound in terms of {\em gate} complexity.

While an asymptotic benefit is achieved for gap-amplifiable Hamiltonians (equivalently, when given access to $T$), our algorithm might not offer any improvement with respect to QSP or other known methods in the worst case. That is, if $T$ is not accessible a priori and has to be constructed from some other form of access to $H$, mapping the low energies of $H$ to eigenvalues near $-1$ of $H'$ can result in a block-encoding $T$ that is inefficient in general. This observation gives rise to a question of whether our method can be further improved for simulating low-energy states of general Hamiltonians beyond those gap-amplifiable ones. Unfortunately, the answer is negative: we establish matching lower bounds in two common access oracle models --- namely, the LCU and sparse matrix model --- that show that QSP is already optimal in terms of query complexity, even if the initial state is of low energy. Hence, our results show that the low-energy condition is only useful when additional conditions hold, such as having access to $T$ or being able to construct it efficiently. This is the case for gap-amplifiable Hamiltonians, including frustration-free ones. 
We emphasize, however, that: i) our lower bounds are for worst-case instances and in the oracle model, and ii) \emph{any} Hamiltonian $H$ (e.g., a sum of Pauli strings) can be transformed into a gap-amplifiable Hamiltonian after simple manipulations (e.g., adding a constant offset), allowing the application of our simulation algorithm. 
Whether the asymptotic improvement is retained or not for any given $H$ depends on the properties of the resulting parameters 
after the transformation, especially $\Delta$ and $\lambda$. For example, our algorithm still gives an asymptotic speedup for Hamiltonians that are perturbatively far from frustration-free Hamiltonians, even if they are not originally expressed as gap-amplifiable ones.



Last, we note that Ref.~\cite{gong2023theory} recently presented a lower bound for Hamiltonian simulation also within the context of low-energy states. Their proof relies on a state with partial support on an eigenstate of energy $\Theta(\norm{H})$; this high-energy portion of the state solves PARITY, which produces the lower bound. In our work, we show a strictly stronger lower bound in terms of query complexity to a block-encoding of the Hamiltonian; our result applies to a state fully contained in a low-energy subspace of vanishingly small energies compared to $\norm{H}$.

We give more details of our contributions next.

\subsection{Summary of results}
\label{sec:summary}

Roughly, our results are four-fold: 1) an improved quantum algorithm for simulating the time evolution of low-energy states, 2) defining a class of Hamiltonians (``gap-amplifiable Hamiltonians'') for which our algorithm achieves a speedup in the simulation of low-energy states, 3) a lower bound that shows that no speedup is possible for evolving low-energy states of general Hamiltonians, and 4) a lower bound that shows that our algorithm is optimal for gap-amplifiable Hamiltonians.

To present these results more formally, we let $H$ be the Hamiltonian of an $N$-dimensional quantum system with Hilbert space $\cH \equiv \mathbb C^N$, whose evolution we seek to approximate. (We can think of this as being an $n=\log_2(N)$ qubit system.) The ground state energy of $H$ (i.e., its lowest eigenvalue) is $E_0$ and largest energy (i.e., its largest eigenvalue) is $E_{\max} \ge E_0$. Without loss of generality, we may assume $E_0 \ge 0$, which can always be achieved by shifting the Hamiltonian. 
We want to simulate the time evolution of an initial state $\ket \psi \in \cH$, and we say that $\ket \psi$ is supported in the low-energy subspace $\cS_{\Delta}$  if and only if it is supported on the subspace corresponding to eigenvalues of $H$ in $[0,\Delta]$.  The interesting case for this article is when $\Delta \ll E_{\max}$ and $\Delta > 0$. 

Our quantum algorithm uses ancillary systems and is essentially a unitary operation in an enlarged space. In general, we write $\one_M$ to denote the $M \times M$ identity matrix.  We begin by stating our results in terms of a block-encoding of $H' = (H-E)/\lambda$, for some $E \in \mathbb R$ and $\lambda>0$, such that the eigenvalues of $H'$ lie within $[-1, 1]$\footnote{The  factor $\lambda$ also appears in other Hamiltonian simulation methods and is related to an L1 norm of $H$ when presented as a linear combination of unitaries (LCU). Other authors refer to this factor as $\alpha$.}. (To be precise, $H-E$ refers to the matrix $H -E.\one_N$ in our notation.)
Moreover, we assume that the low-energy subspace $\cS_{\Delta}$ of $H$ corresponds to eigenvalues of $H'$ within $[-1, -1+\Delta/\lambda]$. This block-encoding is a unitary operator of the form
\begin{align}
\label{eq:blenc}
    T &= \begin{pmatrix}H' & \cdot \\ \cdot & \cdot\end{pmatrix}
\end{align}
that acts on an enlarged Hilbert space $\mathbb C^M$, $M \ge N$.
We also let $\Pi$ be the orthogonal projector onto $\cH$ such that $\Pi T \Pi = H'$; see Sec.~\ref{sec:sgablenc} for details. To use the techniques of the quantum singular value transform (QSVT)~\cite{gilyen2019quantum}, we will be interested in quantum circuits that act upon the enlarged Hilbert space associated with a controlled-$T$ operation and its inverse. QSVT can produce a new unitary $U$ that block-encodes a new matrix as $\Pi U \Pi$. For our problem, the new matrix is such that it approximates the time evolution of $H$ when acting on the initial state, i.e., $\Pi U \Pi \ket \psi \approx e^{-itH} \ket \psi$. Our first result is a quantum algorithm that implements this unitary.

\begin{theorem}[Quantum algorithm for time evolution of low-energy states]
\label{thm:mainalgorithm}
Let $\epsilon > 0$ be the error and $t>0$ be the evolution time. Let $T$ be the unitary of Eq.~\eqref{eq:blenc} that block-encodes the Hamiltonian $H'=(H-E)/\lambda$ for some $E\in \mathbb R$ and $\lambda > 0$.  Let $\ket{\psi} \in \cS_{\Delta}$ be any state supported exclusively on the subspace corresponding to eigenvalues of $H'$ in $[-1, -1+\Delta/\lambda]$, and assume $t \Delta \ge \epsilon$. Then, for any $\Gamma$ such that $\Delta \le \Gamma \le \lambda$, there exists a quantum algorithm that implements a unitary $U$ and satisfies $|\!\bra{\psi}\Pi U^\dagger \Pi e^{-itH}\ket{\psi}\!| \geq 1 - \epsilon$, using
\begin{align}
\label{eq:mainresult}
    \cO\left(t\sqrt{\lambda\Gamma} + \sqrt{\frac{\lambda}{\Gamma}} \log \frac{1}{\epsilon}\right)
\end{align}
queries to controlled-$T$ and its inverse.
\end{theorem}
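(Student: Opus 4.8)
The plan is to reduce the task to a single eigenvalue transformation of $H'$ and then apply QSVT. Since $H=\lambda H'+E$ on $\cH$, we have $e^{-itH}=e^{-itE}e^{-i(t\lambda)H'}$, so after correcting the (free) global phase $e^{-itE}$ it suffices to approximate $e^{-i(t\lambda)H'}$ on the low-energy subspace. Because $\ket\psi$ is supported only on eigenvalues $x$ of $H'$ lying in the window $I_\Delta:=[-1,-1+\Delta/\lambda]$, it is enough to produce a polynomial $P$ with $\|P\|_{[-1,1]}\le1$ such that $|P(x)-e^{-i(t\lambda)x}|\le\epsilon$ for all $x\in I_\Delta$: QSVT applied to controlled-$T$ and its inverse then implements a unitary $U$ with $\Pi U\Pi=P(H')$, using $\cO(\deg P)$ queries, after which $\Pi U\Pi\ket\psi$ approximates $e^{-i(t\lambda)H'}\ket\psi$ to error $\epsilon$ by expanding $\ket\psi$ in the eigenbasis of $H'$; a complex-valued $P$ is accommodated in the usual way through its real and imaginary parts combined with one LCU ancilla, as in Ref.~\cite{gilyen2019quantum}. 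The theorem thus reduces to a polynomial-approximation statement: there exists such a $P$ of degree $\cO\!\left(t\sqrt{\lambda\Gamma}+\sqrt{\lambda/\Gamma}\log(1/\epsilon)\right)$.

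I would construct $P$ as a composition $P=Q\circ A$ of two ingredients. For the first, a spectral gap amplification step, I take $A$ to be a (shifted, rescaled) Chebyshev polynomial of degree $d_A=\Theta(\sqrt{\lambda/\Gamma})$, chosen so that $\|A\|_{[-1,1]}\le1$, $A(-1)=-1$, and $A$ acts on $I_\Delta$ essentially as the affine map $x\mapsto -1+(\lambda/\Gamma)(x+1)$; this is possible precisely because a degree-$d_A$ polynomial attains endpoint derivative $\Theta(d_A^2)=\Theta(\lambda/\Gamma)$, and because $\Delta/\lambda\lesssim 1/d_A^2$ the polynomial remains in its near-endpoint ``linear'' regime across all of $I_\Delta$, with deviation from affine only $\cO(\Delta/\Gamma)$ relative. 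Thus $A$ stretches the low-energy window to width $\Theta(\Delta/\Gamma)$ while keeping the spectrum in $[-1,1]$. For the second ingredient I take $Q$ to be a polynomial of degree $d_Q=\cO(t\Gamma+\log(1/\epsilon))$ with $\|Q\|_{[-1,1]}\lesssim1$ that approximates, on the stretched window $A(I_\Delta)=[-1,-1+\Theta(\Delta/\Gamma)]$, the function $g(y):=e^{-i(t\lambda)A^{-1}(y)}$, where $A^{-1}$ is the analytic branch of the inverse of $A$ near $y=-1$. The point is that along this branch $A^{-1}$ maps $[-1,1]$ into an $x$-window of width $\Theta(\Gamma/\lambda)$, so near $y=-1$ the function $g$ is analytic with effective frequency $\Theta(t\Gamma)$, while outside the stretched window (in particular near the branch point of $A^{-1}$ at $y=1$) it can be replaced by a smooth cutoff; the Jacobi--Anger/QSP construction for bounded complex exponentials of Refs.~\cite{LC17,gilyen2019quantum} then yields $Q$ of the stated degree, the cutoff costing only $\cO(\log(1/\epsilon))$ in degree. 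Then $\deg P=d_A d_Q=\cO(t\sqrt{\lambda\Gamma}+\sqrt{\lambda/\Gamma}\log(1/\epsilon))$, $\|P\|_{[-1,1]}\le\|Q\|_{[-1,1]}\lesssim1$ because $A$ maps $[-1,1]$ into $[-1,1]$ (rescaling $Q$ by $1-\cO(\epsilon)$ to make this a genuine bound by $1$), and for $x\in I_\Delta$ we get $P(x)=Q(A(x))\approx g(A(x))=e^{-i(t\lambda)A^{-1}(A(x))}=e^{-i(t\lambda)x}$ within error $\epsilon$.

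The main obstacle is this polynomial-approximation lemma, and specifically the fact that $Q$ must approximate the \emph{exact} function $g=e^{-i(t\lambda)A^{-1}(\cdot)}$ rather than the exponential of the affine approximant to $A$: the latter would incur a phase error of order $t\Gamma\cdot(\Delta/\Gamma)^2$, which is not negligible in general, so one must verify that $g$ itself is analytic on a Bernstein neighborhood of the relevant part of $[-1,1]$ (the apparent branch point of $A^{-1}$ at $y=-1$ is benign because $A$ has nonzero derivative there; the only genuine branch point is at $y=1$, which is removed by the smooth cutoff at $\cO(\log(1/\epsilon))$ cost) and that its effective frequency really is $\Theta(t\Gamma)$ uniformly over the stretched window. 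The remaining steps are routine: reading off QSP/QSVT phase factors for $P$, tracking the block-encoding with the projector $\Pi$ and the real/imaginary decomposition, and collecting the error contributions; the hypothesis $t\Delta\ge\epsilon$ is used only to exclude the degenerate regime in which $e^{-i(t\lambda)x}$ is constant to within $\epsilon$ over $I_\Delta$ (so that the evolution is implemented by a mere global phase) and thereby to ensure the stated complexity is the operative one.
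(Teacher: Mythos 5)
Your proposal is correct in outline but takes a genuinely different route from the paper. The paper performs the ``square root'' at the operator level: from $T$ it builds a block-encoding of $A=((T^\dagger+\one_M)/2)\Pi$ so that $H-F=2\lambda A^\dagger A$ (Lemma~\ref{lem:hsgablec}), forms the spectral-gap-amplification Hamiltonian $\hsga$ whose eigenvalues are exactly $\pm\sqrt{\gamma_j}$, and then applies QSVT with polynomial approximations of the \emph{explicit} functions $\cos(t\lambda x^2)$ and $\sin(t\lambda x^2)$ on the symmetric window $[-\sqrt{\Gamma/\lambda},\sqrt{\Gamma/\lambda}]$, for which the coefficient bound $B=e^{O(t\Gamma)}$ needed in Lemma~\ref{lem:polyapprox} is a one-line $\sinh/\cosh$ computation. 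You instead perform the square root at the polynomial level, composing a degree-$\Theta(\sqrt{\lambda/\Gamma})$ Chebyshev stretcher $A$ with a degree-$\cO(t\Gamma+\log(1/\epsilon))$ approximant $Q$ of the \emph{implicit} function $e^{-it\lambda A^{-1}(y)}$. Your diagnosis of why the affine approximant to $A$ is insufficient (phase error of order $t\Delta^2/\Gamma$, which is $\Theta(t\Delta)$ at the natural choice $\Gamma=\Delta$) is exactly right, and it mirrors the paper's remark on why the uniform spectral amplification of Ref.~\cite{low2017hamiltonian} incurs extra logarithmic factors. What each route buys: yours stays entirely with $T$ and avoids the $\hsga$ and rectangular-matrix bookkeeping, while the paper's keeps the function to be approximated explicit and reuses the gap-amplification structure elsewhere (for the Hamiltonian class and the lower bounds). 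The cost of your route is that all the work is shifted into the deferred approximation lemma for $g=e^{-it\lambda A^{-1}}$, which you only sketch; it does go through with the same local-Taylor-plus-rectangle machinery as Lemma~\ref{lem:polyapprox}, but you must actually verify that $|(A^{-1})'|=\cO(\Gamma/\lambda)$ uniformly on a constant-size complex neighborhood of the stretched window kept away from the branch point at $y=+1$ (so that $|g|\le e^{\cO(t\Gamma)}$ there and $B=e^{\cO(t\Gamma)}$), and you should take the approximation window for $Q$ to be $A([-1,-1+\Gamma/\lambda])$, which has constant width, rather than $A(I_\Delta)$ --- otherwise the hypothesis $\delta\le r$ of the local-Taylor lemma fails when $\Delta\ll\Gamma$. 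With those details supplied, your construction yields the same degree bound and hence the same query complexity.
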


The reason why we use the requirement on the overlap  $|\!\bra{\psi}\Pi U^\dagger \Pi e^{-itH}\ket{\psi}\!| \geq 1 - \epsilon$ to measure the performance of the quantum algorithm is because
this expression is amenable to block-encodings. 
This requirement is equivalent 
to the quantum circuit producing a state $U \ket \psi \ket 0^{\otimes q}=(1-\delta) e^{i \alpha} (e^{-itH}\ket \psi) \ket 0^{\otimes q} + \sket{\psi^\perp}$, where $\epsilon \ge \delta \ge 0$, and $\ket 0^{\otimes q}$ is the initial state of a working register of qubits. The phase $\alpha$ is irrelevant and $\sket{\psi^\perp}$ is an unnormalized state over the full system of $n+q$ qubits such that $\|\sket{\psi^\perp}\|\le \sqrt {2\epsilon}$.

In the hypothesis, we require $t \Delta \ge \epsilon$; otherwise we could simulate $e^{-itH}$ by applying $\one_N$ (or a global phase) at the cost of no queries, which is uninteresting. Depending on the joint scaling of $t\lambda$, $ t\Delta$, and $1/\epsilon$, we may choose different $\Gamma$ to minimize the query complexity. Because $\ket{\psi} \in \cS_\Delta$ also satisfies $\ket{\psi} \in \cS_\Gamma$ for any $\Gamma \geq \Delta$, such optimization is feasible. We describe the exhaustive set of possibilities for asymptotic regimes of the parameters; these regimes will be central throughout our work.
\begin{enumerate}
    \item \emph{Time-dominated regime}: $\log (1/\epsilon) = o(t\Delta)$. We choose $\Gamma = \Delta$, producing an upper bound of $\cO(t\sqrt{\lambda\Delta})$. 
    \item \emph{Intermediate regime}: $\log (1/\epsilon) = o(t\lambda)$ and $t\Delta = o(\log (1/\epsilon))$. We choose $\Gamma = \log(1/\epsilon)/t$, producing an upper bound of $\cO(\sqrt{t\lambda \log (1/\epsilon)})$.
    \item \emph{Error-dominated regime}: $t\lambda = o(\log (1/\epsilon))$. We choose $\Gamma = \lambda$, producing an upper bound of $\cO(\log 1/\epsilon)$.
\end{enumerate}

The time-dominated regime is expected to be the most relevant to applications, where $t\Delta$ must be large enough for interesting dynamics. It gives savings over standard QSP methods, which would instead time-evolve any state $\ket \psi \in \cH$ (not necessarily of low energy) with access to a block-encoding of $H/\lambda$ using
\begin{align}
\label{eq:QSPcomplexity}
    \cO\left(t\lambda + \log \frac{1}{\epsilon}\right)
\end{align}
queries (ignoring log log factors)~\cite{LC17}. Since $\Delta/\lambda$ can be asymptotically small, the time-dominated regime shows clear savings over QSP. 
The intermediate regime represents a fine-tuned region of parameter space, where a particular joint scaling between $t \Delta, 1/\epsilon$, and $\Delta/\lambda$ makes sublinear time dependence possible. For this regime we also obtain an improvement over QSP. 
In contrast, the error-dominated regime has equivalent cost to QSP. 

Our result improves upon Ref.~\cite{low2017hamiltonian} by logarithmic factors in $t \lambda/\epsilon$. 
It also improves upon the complexity of product formulas in that the asymptotic scaling is linear in $t$ and logarithmic or sublogarithmic in $1/\epsilon$; however, these approaches are not directly comparable. Later, we will also show novel lower bounds that address the time-dominated and intermediate regimes above to demonstrate optimal scaling with respect to $t$, $\Delta$, and $\lambda$. The error-dominated regime is already addressed by previous lower bounds~\cite{BCC+15}.

Theorem~\ref{thm:mainalgorithm} and subsequent results assume the initial state to be supported exclusively on the low-energy subspace. However, a similar result still applies even if the support of $\ket \psi$ outside $\cS_{\Delta}$ is small, i.e., $\|(\one_N - \cS_{\Delta})\ket \psi \|=\cO(\epsilon)$. 
This is because we are approximating a unitary transformation with unitary operations, and standard bounds for the Euclidean and operator norms can be used to show this result.

We analyze our algorithm in the standard query models considered by QSP and similar algorithms, counting queries to controlled-$T$ and its inverse. Much like QSP and similar algorithms, our quantum algorithm uses a number of other two-qubit gates essentially given by Eq.~\eqref{eq:mainresult} times the gate complexity needed to perform certain transformation on $H$ or $T$, as discussed in Sec.~\ref{sec:alg}, including the walk operators to perform polynomial transformations. (Our reference to two-qubit gates also includes one-qubit gates.)
This additional factor to the gate complexity will depend on the specific application, but it is expected to be polynomial --- and in many cases only constant or linear --- in the number of qubits or system size. Detailed analyses of the gate complexity for specific applications are outside the scope of this article.

Our quantum algorithm in Thm.~\ref{thm:mainalgorithm} is actually constructed from one that simulates a Hamiltonian of the form $H=\lambda A^\dagger A$, for $\lambda >0$ and $A$ a matrix of dimension $M \times N$
that satisfies $\|A\| \le 1$. This is an example of what we call a gap-amplifiable Hamiltonian in Sec.~\ref{sec:FrustrationFree}. The time evolution of $H$ can also be simulated if we have access to a unitary $V$ that is a block-encoding of $A$ and its inverse; that is,
\begin{align}
\label{eq:blecofA}
    V &= \begin{pmatrix} A & \cdot \\ \cdot & \cdot\end{pmatrix} \;.
\end{align}
Formally, this block-encoding satisfies $A=\Pi' V \Pi$, where $\Pi'$ is a projector onto $\cH' \equiv \mathbb C^M$, since $A$ can be a rectangular matrix. Gap-amplifiable Hamiltonians
are PSD ($H \succeq 0$) and readily satisfy $E_0 \ge 0$.

\begin{lemma}[Quantum algorithm for time evolution of low-energy states of gap-amplifiable Hamiltonians]
\label{lem:GAsimulation}
Let $\epsilon > 0$ be the error and $t>0$ be the evolution time. Consider a gap-amplifiable Hamiltonian $H=\lambda A^\dagger A$ acting on $\cH$, where $\lambda>0$ and the dimension of $A$ is $M \times N$.
Assume access to a unitary $V$ acting on an enlarged space that block-encodes $A$ as in Eq.~\eqref{eq:blecofA}.
Let $\ket \psi \in \cS_{\Delta}$ be any state supported exclusively on the subspace corresponding to energies (eigenvalues) of $H$ at most $\Delta$,  and assume $t\Delta \ge \epsilon$. Then, for any $\Gamma$ such that $\Delta \leq \Gamma \leq  \lambda$, there exists a quantum algorithm that implements a unitary $U$ and satisfies 
$|\!\bra{\psi}\Pi U^\dagger \Pi e^{-itH}\ket{\psi}\!| \geq 1 - \epsilon$, using
\begin{align}
\label{eq:gapampsimulation}
    \cO\left(t\sqrt{\lambda\Gamma} + \sqrt{\frac{\lambda}{\Gamma}} \log \frac{1}{\epsilon}\right)
\end{align}
queries to controlled-$V$ and its inverse.
\end{lemma}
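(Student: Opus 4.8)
The plan is to use spectral gap amplification to reduce the task to applying an even function of a Hermitian, block-encoded operator, then to realize that function with the QSVT; essentially all of the difficulty is concentrated in bounding the degree of the underlying scalar polynomial.

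\emph{Gap amplification.} From the block-encoding $V$ of $A$ I would build, with $\cO(1)$ calls to controlled-$V$, controlled-$V^\dagger$ and Clifford gates, a block-encoding of $\hsga/\sqrt\lambda$, where $\hsga:=\sqrt\lambda\,(\ket{1}\bra{0}\otimes A+\ket{0}\bra{1}\otimes A^\dagger)$ acts on one ancilla qubit tensored with the ambient space; since $\|A\|\le 1$, its norm is at most $\sqrt\lambda$, so this is a legitimate block-encoding. Two properties do the work. First, $\hsga^2=\mathrm{diag}(\lambda A^\dagger A,\ \lambda A A^\dagger)=\mathrm{diag}(H,\ \lambda A A^\dagger)$, so $e^{-it\hsga^2}$ restricted to the $\cH$-block is exactly $e^{-itH}$, while the embedded initial state $\ket{0}\ket{\psi}$ is annihilated by the complementary block and stays in the $\cH$-block under any even function of $\hsga$. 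Second, since $\hsga$ carries the singular data of $A$, the hypothesis $\ket{\psi}\in\cS_\Delta\subseteq\cS_\Gamma$ forces $\ket{0}\ket{\psi}$ into the spectral subspace of $\hsga/\sqrt\lambda$ with eigenvalues in $[-\sqrt{\Gamma/\lambda},\ \sqrt{\Gamma/\lambda}]\subseteq[-1,1]$.

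\emph{Reduction to a polynomial, and wrap-up.} Because $\hsga$ is Hermitian, QSVT with a definite-parity polynomial $p$ of degree $d$ implements a block-encoding of $p(\hsga/\sqrt\lambda)$ using $\cO(d)$ queries to the block-encoding of $\hsga/\sqrt\lambda$; taking $p$ even keeps the state in the $\cH$-block, and realizing $\mathrm{Re}\,p$ and $\mathrm{Im}\,p$ by two QSP phase sequences combined through a one-qubit linear-combination-of-unitaries gadget produces the required complex $p$. Thus the lemma reduces to constructing an even polynomial with $\|p\|_{[-1,1]}\le 1$ and $|p(y)-e^{-it\lambda y^2}|\le\epsilon$ for all $|y|\le\sqrt{\Gamma/\lambda}$, of degree $d=\cO\!\big(t\sqrt{\lambda\Gamma}+\sqrt{\lambda/\Gamma}\,\log(1/\epsilon)\big)$. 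Given such $p$, the output unitary $U$ satisfies $\Pi U\Pi=p(\hsga/\sqrt\lambda)$ up to a constant normalization; since $p$ is $\epsilon$-close in operator norm to the unitary $e^{-it\hsga^2}$ on the relevant subspace and $\|p\|\le 1$, expanding $|\bra{\psi}\Pi U^\dagger\Pi e^{-itH}\ket{\psi}|$ yields the claimed overlap after absorbing constants into $\epsilon$, and the hypothesis $t\Delta\ge\epsilon$ ensures $d=\Omega(1)$ so the statement is non-vacuous. The total query count is $\cO(d)$ times the $\cO(1)$ calls to controlled-$V$ per QSVT step.

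\emph{The polynomial (the main obstacle).} Substituting $u=y^2\in[0,1]$ and $\delta=\Gamma/\lambda$, it is equivalent to find $\tilde p$ with $\|\tilde p\|_{[0,1]}\le 1$ and $|\tilde p(u)-e^{-it\lambda u}|\le\epsilon$ on $[0,\delta]$ of degree $\cO(t\lambda\sqrt\delta+\delta^{-1/2}\log(1/\epsilon))$, so that $p(y):=\tilde p(y^2)$ has twice that degree. The difficulty is that $e^{-it\lambda u}$ has total phase variation $t\lambda\delta=t\Gamma$ over a window of width $\delta$ sitting at the edge $u=0$ of $[0,1]$ --- where, in the natural coordinate $\sqrt u$, the window has width only $\sqrt\delta$ --- which already forces degree $\Theta(t\Gamma/\sqrt\delta)=\Theta(t\sqrt{\lambda\Gamma})$, and this must be achieved while keeping $|\tilde p|\le 1$ on all of $[0,1]$. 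The naive construction --- a truncated Jacobi--Anger expansion of $e^{-it\lambda u}$ that is $\epsilon$-accurate on $[0,2\delta]$ (degree $\cO(t\Gamma+\log(1/\epsilon))$), multiplied by a smoothed window polynomial suppressing $[2\delta,1]$ --- loses a factor $\log(\lambda/\Gamma)$, because the Jacobi--Anger polynomial grows geometrically outside its window and the window must then be made correspondingly sharper. Removing this logarithmic overhead (this is precisely where we improve on Ref.~\cite{low2017hamiltonian}) requires a more careful construction that directly yields a globally bounded polynomial matching $e^{-it\lambda u}$ on the edge window, e.g.\ by assembling $\cos(t\lambda u)$ and $\sin(t\lambda u)$ from error-function-type building blocks localized near the edge at the correct scale, or by an explicit Chebyshev interpolation on $[0,\delta]$ with controlled tail growth. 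This polynomial estimate --- not the quantum circuit, which is routine QSVT plumbing once $p$ is in hand --- is the step I expect to demand the most care.
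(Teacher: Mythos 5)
Your first two steps — building $\hsga$ from the block-encoding of $A$, observing $(\hsga)^2=\mathrm{diag}(H,\lambda AA^\dagger)$ so that an even polynomial of $\hsga/\sqrt\lambda$ stays in the $\cH$-block and reproduces $e^{-itH}$ there, and noting that $\ket\psi\in\cS_\Gamma$ confines the embedded state to the spectral window $[-\sqrt{\Gamma/\lambda},\sqrt{\Gamma/\lambda}]$ — are exactly the paper's route (Secs.~\ref{sec:sga} and~\ref{sec:qsvt}), including the real/imaginary (cosine/sine) split combined by a one-qubit LCU. The rectangular-matrix padding you gloss over is handled explicitly in the paper but is indeed routine.

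The genuine gap is the polynomial itself. You correctly isolate the obstacle — an even polynomial, bounded by $1$ on $[-1,1]$, that is $\epsilon$-close to $e^{-it\lambda y^2}$ on the edge window, of degree $\cO(t\sqrt{\lambda\Gamma}+\sqrt{\lambda/\Gamma}\log(1/\epsilon))$ — and you correctly observe that the naive Jacobi--Anger-times-window construction loses logarithmic factors. But you then only gesture at candidate fixes (``error-function-type building blocks,'' ``Chebyshev interpolation with controlled tail growth'') without carrying any of them out, and you explicitly flag this as the step you have not done. Since the quantum-circuit layer is, as you say, routine QSVT plumbing, this unproven degree bound is the entire content of the lemma; as written the proof is incomplete. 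The paper closes it with Lemma~\ref{lem:polyapprox} (a variant of Corollary 66 of Ref.~\cite{gilyen2019quantum}, proved self-contained in Appendix~\ref{app:polyapprox}): one Taylor-expands $f(x)=\cos(t\lambda x^2)$ or $\sin(t\lambda x^2)$ directly in the variable $x$ on $[-r-\delta,r+\delta]$ with $r=\delta=\sqrt{\Gamma/\lambda}$, bounds the absolute coefficient sum by $B=e^{4\Gamma|t|}$, converts the truncated series to a low-weight Fourier/Chebyshev form, and multiplies by a rectangle-function polynomial calibrated to the scale $\delta$; the resulting degree is $\cO(\frac1\delta\log\frac B\epsilon)=\cO(t\sqrt{\lambda\Gamma}+\sqrt{\lambda/\Gamma}\log\frac1\epsilon)$, with a final $1/(1+\epsilon)$ rescaling to restore $\|P\|_{[-1,1]}\le 1$. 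Note the paper never needs your substitution $u=y^2$ or a degree bound for approximating $e^{-it\lambda u}$ near the edge of $[0,1]$; working directly in $x$ with a symmetric window about $0$ is what makes the local-Taylor argument go through cleanly.
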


The proof of this lemma is in Sec.~\ref{sec:alg} and has two steps. First, in Sec.~\ref{sec:sga} we use $A$ and $A^\dagger$ to construct a new Hamiltonian $\hsga$ that acts as the ``square root'' of $H$, borrowing ideas 
from spectral gap amplification~\cite{somma2013spectral}. This implies that $e^{-itH}$ can be obtained from $e^{-it(\hsga)^2}$. Second, in Sec.~\ref{sec:qsvt} we show 
how to simulate $e^{-it(\hsga)^2}$ using QSVT, which requires the block-encoding of $\hsga$ or $A$~\cite{gilyen2019quantum}. Hence, Lemma~\ref{lem:GAsimulation} already gives an improved Hamiltonian simulation method for low-energy states of the interesting class of  Hamiltonians $\lambda A^\dagger A$ when $\Delta = o(\lambda)$.

The main reason why Lemma~\ref{lem:GAsimulation} implies Thm.~\ref{thm:mainalgorithm} is because access to the block-encoding $T$ of Eq.~\eqref{eq:blenc} allows
us to equivalently express that Hamiltonian $H$ as a gap-amplifiable one, $\lambda A^\dagger A$, up to an irrelevant additive constant. Furthermore, access to $A$ can be efficiently obtained through $T$,  satisfying properties (i)-(iii) discussed in Sec.~\ref{sec:intro}.

\begin{lemma}[Efficient block-encoding of $A$]
\label{lem:hsgablec}
    Let $T$ be the unitary of Eq.~\eqref{eq:blenc} that block-encodes Hamiltonian $H'=(H-E)/\lambda$ acting on $\cH$, for $E \in \mathbb R$ and $\lambda >0$. Let $\ket \psi \in \cS_{\Delta}$ be any state supported exclusively on the subspace associated with eigenvalues of $H'$ in $[-1,-1+\Delta/\lambda]$ for some $\Delta > 0$.
    Then, there exists $F \in \mathbb R$ and an $M\times N$ matrix $A$ such that $H-F= 2 \lambda A^\dagger A$. A block-encoding $V$ of 
    $A$ as in Eq.~\eqref{eq:blecofA} can be implemented with one query to controlled-$T^\dagger$ and a constant number of two-qubit gates. In addition, $\ket \psi$ is supported in the subspace associated with energies at most $\Delta$ of $2\lambda A^\dagger A$.
\end{lemma}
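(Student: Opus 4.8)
The plan is to explicitly build $A$ out of $H'$ by a simple affine rescaling that sends the spectral window $[-1,1]$ to $[0,1]$ (so that $H-F$ becomes manifestly PSD), and then to realize the block-encoding $V$ of the resulting $A$ directly from $T$ by embedding $T^\dagger$ into a one-ancilla-qubit circuit. Concretely, since $H' = (H-E)/\lambda$ has spectrum in $[-1,1]$, the operator $(\one_N + H')/2$ has spectrum in $[0,1]$ and is PSD; writing $H - F = 2\lambda \cdot \tfrac{1}{2}(\one_N + H') = \lambda(\one_N + H')$ with $F = E - \lambda$ gives a PSD operator. The first task is therefore to exhibit an $M\times N$ matrix $A$ with $A^\dagger A = \tfrac12(\one_N + H')$; one natural choice is $A = \tfrac{1}{\sqrt2}(\one_M + T)\,\Pi$ restricted appropriately to a rectangular $M\times N$ block, since $\Pi T \Pi = H'$ gives $\Pi \tfrac12(\one_M+T)^\dagger(\one_M+T)\Pi = \tfrac12(\one_N + H' ) + (\text{cross terms})$ — here one must be a little careful that the off-diagonal blocks of $T$ do not contaminate $\Pi(\cdots)\Pi$. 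A cleaner route is to take $A$ to be the rectangular isometry-like object whose columns are indexed by $\cH$ and which acts as $\tfrac{1}{\sqrt2}$ times a suitable "square root of $\one + T$"; the key identity to check is just $A^\dagger A = \tfrac12(\one_N + H')$, which is a one-line spectral computation once the right $A$ is named.

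Next I would construct $V$. Given that $A$ is, up to constants, built from $\one + T$, the standard trick is to use one ancilla qubit and the circuit $(H_{\rm ad}\otimes\one)\,\mathrm{c\text{-}}T^{(\dagger)}\,(H_{\rm ad}\otimes\one)$ — a Hadamard on the ancilla, a controlled-$T^\dagger$, another Hadamard — which block-encodes $\tfrac12(\one + T^\dagger)$ (or $\tfrac12(\one+T)$, depending on the control convention) in the $\ket0$-block of the ancilla. Composing with the projector $\Pi$ that picks out $\cH$ from $\bC^M$ then yields $\Pi' V \Pi = A$ for the appropriate $\Pi'$, exactly in the form of Eq.~\eqref{eq:blecofA}. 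This uses precisely one query to controlled-$T^\dagger$ plus two Hadamards and whatever fixed reflections encode $\Pi$, i.e.\ a constant number of two-qubit gates, matching the claim. (If the statement wants $T^\dagger$ rather than $T$ specifically, the control/phase convention in the Hadamard sandwich is chosen accordingly; this is cosmetic.)

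Finally, the low-energy bookkeeping: $\ket\psi\in\cS_\Delta$ means it is supported on eigenvalues of $H'$ in $[-1,-1+\Delta/\lambda]$, hence on eigenvalues of $\tfrac12(\one_N+H') = A^\dagger A$ in $[0,\Delta/(2\lambda)]$, hence on eigenvalues of $2\lambda A^\dagger A = H - F$ in $[0,\Delta]$ — so $\ket\psi$ lies in the subspace of energies at most $\Delta$ of $2\lambda A^\dagger A$, as asserted. I expect the only genuinely delicate point to be the first step: pinning down an $M\times N$ matrix $A$ (as opposed to an $M\times M$ one) for which $A^\dagger A = \tfrac12(\one_N+H')$ holds exactly while $V$ still block-encodes it cleanly — i.e.\ making sure the rectangular shape of $A$ and the action of $\Pi$ versus $\Pi'$ are consistent, and that the off-diagonal blocks of $T$ genuinely drop out of $\Pi V^\dagger V \Pi$. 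Everything after that (the Hadamard-test block-encoding, the gate count, and the spectral mapping for the low-energy condition) is routine.
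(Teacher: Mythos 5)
Your proposal follows essentially the same route as the paper: take $A$ proportional to $(\one_M+T^\dagger)\Pi$, realize it as a block-encoding via a Hadamard--controlled-$T^\dagger$--Hadamard sandwich (one query, constant gates), and read off the low-energy condition from the affine spectral map. Two small clarifications. First, the correct normalization is $A=\tfrac12(\one_M+T^\dagger)\Pi$, not $\tfrac1{\sqrt2}(\one_M+T)\Pi$: the latter has $\|A\|\le\sqrt2$, so it cannot sit in a block of a unitary, and it yields $A^\dagger A=\one_N+H'$, i.e.\ $H-F=\lambda A^\dagger A$ rather than $2\lambda A^\dagger A$. The Hadamard sandwich you describe produces exactly the $\tfrac12$-normalized object, so your second paragraph already names the right $A$. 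Second, the ``delicate point'' you flag at the end is not delicate: expanding
\begin{align}
\Pi\,\frac{(\one_M+T)(\one_M+T^\dagger)}{4}\,\Pi=\frac{\Pi TT^\dagger\Pi+\Pi T\Pi+\Pi T^\dagger\Pi+\Pi}{4}=\frac{\one_N+H'}{2}\;,
\end{align}
the potentially contaminating term is $\Pi TT^\dagger\Pi=\Pi$ by unitarity of $T$, and the remaining terms are $\Pi T\Pi=\Pi T^\dagger\Pi=H'$ by the definition of the block-encoding and Hermiticity of $H'$; the off-diagonal blocks of $T$ never enter. With these two points fixed, your argument is the paper's proof.
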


The proof is given in Sec.~\ref{sec:sgablenc}. It is also possible to show the opposite direction of Lemma~\ref{lem:hsgablec}: any gap-amplifiable Hamiltonian $\lambda A^\dagger A$ can be efficiently converted into a Hamiltonian $H'$ with the property that, if $\ket \psi \in \cS_\Delta$, then $\ket \psi$ is supported in the subspace of eigenvalues of $H'$ in $[-1, -1+\Delta/\lambda]$. In addition, a block-encoding $T$ of $H'$ can be constructed from a constant number of queries to the block-encoding of $A$ and $A^\dagger$. This is because we can write $H'=H/\lambda-\one_N=\Pi V^\dagger (\Pi'-\one_M)V\Pi$ in this case, and a block-encoding of $H'$ follows from standard techniques (see Lemma~\ref{lem:lbhsgablec} for a proof). Hence, our quantum algorithm can give improved results {\em if and only if} the Hamiltonian can be presented as $H=\lambda A^\dagger A$ (up to additive constants) and certain access to $A$ is given or can be efficiently implemented. The result does not extend to all Hamiltonians
because creating access to $A$ can be computationally 
expensive in general.

More formally, we can write the result of Lemma~\ref{lem:GAsimulation} in terms of ``general gap-amplifiable Hamiltonians'' defined in 
Sec.~\ref{sec:FrustrationFree}. These Hamiltonians include the well-known frustration-free Hamiltonians as an example.
We will show that they can be equivalently expressed as $\lambda A^\dagger A$, and hence the same quantum algorithm for Lemma~\ref{lem:GAsimulation} gives an improved method for the time evolution of low-energy states of these Hamiltonians.
This result is captured by Thm.~\ref{thm:gapalgorithm}.
We note
a difference between Lemma~\ref{lem:GAsimulation} and Ref.~\cite{somma2013spectral}, where the latter work  was concerned on amplifying the gap of frustration-free Hamiltonians for faster quantum adiabatic state transformations rather than simulating time-evolution. 

Having devised a class of Hamiltonian simulation problems that can benefit from our quantum algorithm, it is natural to ask whether there is an even larger class of Hamiltonians for which simulating the time evolution of their low-energy states can be done more efficiently.  To this end, we show a no fast-forwarding result:  no improvement over the query complexity of QSP is possible in the worst case, even if the initial state is a low-energy state.
We prove this lower bound in two common access models:
the LCU model and the sparse matrix model~\cite{berry2014exponential,low2019qubitization}. 
These models provide a stronger form of access to the Hamiltonian than through its block-encoding, and hence the lower bound automatically applies to the case  where only access to the block-encoding is provided.
An informal statement of this result in terms of the block-encoding of $H/\lambda$ follows.

\begin{theorem*}[No fast-forwarding in the low-energy subspace of generic Hamiltonians, informal]
\label{thm:nff-inf}
Let $W$ be a unitary block-encoding of a Hamiltonian $H/\lambda$ acting on $\cH$, such that $\Pi W \Pi = H/\lambda$ for some $\lambda >0$, and assume $H \succeq 0$. Let $\ket \psi \in \cS_{\Delta}$ be any state supported exclusively on the subspace associated with energies  (eigenvalues) of $H$ at most $\Delta$, for $\lambda > \Delta >0$. Then, any quantum algorithm that implements a unitary $U$ such that $|\!\bra{\psi}\Pi U^\dagger \Pi e^{-itH}\ket{\psi}\!| \geq 2/3$ must use
\begin{align}
\label{eq:nffboundgeneral}
    \Omega\left(t\lambda\right)
\end{align}
queries to controlled-$W$ and its inverse, even when $\Delta/\lambda =o( 1)$.
\end{theorem*}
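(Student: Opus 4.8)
The plan is to prove the lower bound by a reduction from \emph{unstructured search}, whose bounded-error quantum query complexity is $\Omega(\sqrt N)$ for $N$ items. Let $O_x$ be the standard phase oracle for a hidden string $x\in\{0,1\}^N$ with $|x|\le 1$, let $\ket s := \tfrac{1}{\sqrt N}\sum_{i=1}^N\ket i$ be the uniform superposition, and let $\ket w$ denote the marked item when $|x|=1$. Define on $\cH = \mathbb C^N$ the operator $M_x := \ket{w}\!\bra{w} + \ket s\!\bra s$ (with the first term omitted when $|x|=0$), and set
\begin{align}
\label{eq:lbham}
    H_x := \lambda\Big(1+\tfrac{1}{\sqrt N}\Big)\one_N - \lambda M_x\,.
\end{align}
Since the nonzero eigenvalues of $M_x$ are $1\pm 1/\sqrt N$, we have $M_x \preceq (1+1/\sqrt N)\one_N$, so $H_x\succeq 0$ and $\|H_x\|\le 2\lambda$; thus $H_x$ is a legitimate PSD Hamiltonian with norm parameter $\Theta(\lambda)$. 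A block-encoding $W_x$ of $H_x/\lambda$ can be assembled from one query to $O_x$ (which block-encodes $\ket w\!\bra w=(\one_N-O_x)/2$), a reflection realizing $\ket s\!\bra s$, and a standard LCU combination, using $\cO(1)$ extra gates; so each use of controlled-$W_x$ or its inverse costs $\cO(1)$ queries to $O_x$. \emph{The key point} is that $\ket s$ lies in the $2$-dimensional range of $M_x$, spanned by the eigenvectors with eigenvalues $1\pm 1/\sqrt N$, which Eq.~\eqref{eq:lbham} maps to eigenvalues $0$ and $2\lambda/\sqrt N$ of $H_x$. Hence $\ket s\in\cS_{\Delta}$ with $\Delta = 2\lambda/\sqrt N$, so $\Delta/\lambda = 2/\sqrt N = o(1)$ as $N\to\infty$, exactly the regime claimed.

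Next I would track the dynamics. Since $\one_N$ contributes only a global phase, $e^{-itH_x}\ket s = e^{-it\lambda(1+1/\sqrt N)}\,e^{it\lambda M_x}\ket s$, and $e^{it\lambda M_x}\ket s$ is (the complex conjugate of) the continuous-time search evolution generated by $M_x$ run for ``time'' $t\lambda$; because $M_x$ and $\ket w$ are real, $|\!\bra{w}e^{it\lambda M_x}\ket s\!|$ equals the usual search overlap. By the standard analysis of continuous-time quantum search, choosing $t\lambda = \Theta(\sqrt N)$ --- i.e., $t = \Theta(\sqrt N/\lambda)$ --- gives $|\!\bra{w}e^{it\lambda M_x}\ket s\!|\ge 1-o(1)$ when $|x|=1$, while the evolution is trivial up to phase when $|x|=0$. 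Now suppose an algorithm uses $K$ queries to controlled-$W_x$ and its inverse to implement $U$ with $|\!\bra s\Pi U^\dagger\Pi e^{-itH_x}\ket s\!|\ge 2/3$. Using the reformulation noted after Theorem~\ref{thm:mainalgorithm}, $U\ket s\ket 0^{\otimes q}$ is $\cO(1)$-close in norm to $(e^{-itH_x}\ket s)\ket 0^{\otimes q}$ up to a phase; measuring the system register (and checking the outcome with one further query to $O_x$) returns the marked item $w$, and distinguishes the $|x|=0$ case, with constant probability. A standard amplification by $\cO(1)$ repetitions boosts this to success probability $\ge 2/3$, thereby solving search with $\cO(K)$ queries to $O_x$. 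The search lower bound forces $\cO(K) = \Omega(\sqrt N)$, hence $K = \Omega(\sqrt N) = \Omega(t\lambda)$, which is the claim. This argument works verbatim in the LCU model; for the sparse-matrix model one replaces Eq.~\eqref{eq:lbham} by a sparse spatial-search Hamiltonian (continuous-time quantum walk search on a constant-degree expander, sparsity $\cO(1)$), or by the path-graph construction encoding $\mathrm{PARITY}$ of $n = \Theta(t\lambda)$ bits, applying the same shift-and-scale to bury the relevant spectral band in $[0,\Delta]$.

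I expect the main obstacle to be two-fold. First, the shift-and-scale must be arranged so that the initial state provably lies \emph{exactly} in the low-energy eigenspace of a genuinely PSD Hamiltonian with $\Delta/\lambda = o(1)$, while the hard instance is still solved at the full ``speed'' $t\lambda$ rather than $t\Delta$; this is precisely what rules out fast-forwarding and is the conceptual heart of the theorem. Second, for the sparse-matrix model the dense diffusion operator $\ket s\!\bra s$ must be replaced by a sparse mechanism without losing the $\Theta(\sqrt N)$ (equivalently $\Theta(t\lambda)$) separation, which requires invoking optimality of spatial search on sparse graphs (or the path-graph parity construction) and re-verifying that one sparse-access query to $H_x$ costs $\cO(1)$ oracle queries. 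The remaining pieces --- the overlap-to-measurement conversion, positive semidefiniteness, and the norm estimate --- are routine.
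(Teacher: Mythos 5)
Your reduction is correct and is, at its core, the same one the paper uses: the continuous-time Grover Hamiltonian $(1+1/\sqrt N)\one_N-\ketbra{x}-\ketbra{s}$, with the uniform superposition $\ket s$ sitting exactly in the two-dimensional eigenspace with eigenvalues $\{0,2/\sqrt N\}$, so that $\Delta/\lambda=\Theta(1/\sqrt N)=o(1)$ while the evolution for $t\lambda=\Theta(\sqrt N)$ solves search; the LCU block-encoding from one call to $O_x$ plus the reflection about $\ket s$, the overlap-to-measurement conversion, and the sparse-model substitute via Childs--Goldstone spatial search are all exactly the paper's ingredients. The genuine difference is that you use a \emph{single} search instance and invoke the standard $\Omega(\sqrt N)$ bound, whereas the paper tensors together $K$ independent instances, $H_{K,N}=\sum_k \one^{\otimes k-1}\otimes H_{x_k}\otimes\one^{\otimes K-k}$, and lower-bounds the resulting multi-instance state-preparation task by $\Omega(K\sqrt N)$ via an adversary-composition bound for $\mathrm{PARITY}\circ\mathrm{OR}$. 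What the extra parameter $K$ buys is independent control of $t\Delta=\Theta(K)$ and $\lambda/\Delta=\Theta(\sqrt N)$: in your construction the evolution time is pinned to $t=\Theta(1/\Delta)$, i.e.\ $t\Delta=\Theta(1)$, which suffices for the informal statement (existence of hard instances with $\Delta/\lambda=o(1)$) but does not cover the formal Theorem~\ref{thm:no-gen}, where the hardness must persist across all joint scalings of $t\Delta$ and $\lambda/\Delta$ — in particular in the time-dominated regime $t\Delta\gg 1$ that is the paper's main target, and which is needed to cleanly separate the time lower bound from error-dependent effects.

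Two small points to tighten. First, $H_x/\lambda$ has norm $1+1/\sqrt N>1$, so the block-encoding must carry the LCU normalization $\beta=1+1/\sqrt N$ (this is harmless since $\beta=\Theta(1)$, but as written ``block-encoding of $H_x/\lambda$'' is not quite well-formed). Second, in the sparse spatial-search construction $\ket s$ is \emph{not} exactly supported on the low-energy subspace — it has $o(1)$ leakage onto higher eigenstates — so one must replace it by its normalized projection $\Pi_\Delta\ket s/\norm{\Pi_\Delta\ket s}$ and check that the lower bound survives this constant-fidelity perturbation; the paper does this explicitly, and your proposal should not gloss over it since the theorem demands a state supported \emph{exclusively} in $\cS_\Delta$.
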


By assuming $U$ to be unitary, the result also encompasses
quantum algorithms where measurements might be interleaved with queries and gates. These can be simulated coherently by enlarging the space using standard methods in quantum computing.

A formal statement of this result that applies to the LCU and sparse matrix models is given in Sec.~\ref{sec:nff}. 
Even when $\Delta/\lambda =o(1)$, the query complexity in Eq.~\eqref{eq:nffboundgeneral} can be asymptotically worse than the complexity in Thm.~\ref{thm:mainalgorithm}.
The proof of this result relies on a reduction from multiple instances of Grover quantum search or, more specifically, from a lower bound on computing $\mathrm{PARITY}\circ\mathrm{OR}$ of certain bit strings.  Among other consequences, this result implies the inefficiency of constructing a ``good'' block-encoding $T$ from a block-encoding of $H/\lambda$ in general, as discussed in Sec.~\ref{sec:intro}. 

For gap-amplifiable Hamiltonians of the form $\lambda A^\dagger A$ and an initial state with energy at most $\Delta$, it would have been intuitive to suggest that a modification of QSP would have provided an overall query complexity  $\cO(t\Delta + \log (1/\epsilon))$ for low-energy simulation. This is because the ``relevant'' energy scale in the problem is $\Delta$ rather than $\lambda$ or $\|H\|$. However,
we show that our quantum algorithm is optimal in its dependence on $t$, $\Delta$, and $\lambda$. Informally, our result is as follows.

\begin{theorem*}[Query complexity lower bound for time evolution of low-energy states of gap-amplifiable Hamiltonians, informal]
\label{thm:alg-lb-inf}
Let $\epsilon>0$ be the error and $t>0$ be the evolution time.
Let $H$ be any Hamiltonian acting on $\cH$ and $T$ be a block encoding of $H'=(H-E)/\lambda$, for some $E \in \mathbb R$ and $\lambda >0$, i.e., $\Pi T \Pi =H'$. 
Let $\ket{\psi}\in \cS_{\Delta}$ be any state supported exclusively on the subspace of eigenvalues of $H'$  in $[-1, -1+\Delta/\lambda]$ for some $\Delta >0$, $\Delta \le  \lambda$. Then, any quantum algorithm that implements a unitary $U$ such that $|\!\bra{\psi}\Pi U^\dagger \Pi e^{-itH}\ket{\psi}\!| \geq 1-\epsilon$, must use at least $Q$ queries to controlled-$T$ and its inverse given by
\begin{enumerate}
    \item $Q = \Omega(t\sqrt{\lambda\Delta})$ if $\log (1/\epsilon) = o(t\Delta)$,
    \item $Q = \Omega(\sqrt{t\lambda})$ if $\log (1/\epsilon )= o(t\lambda)$ and $t\Delta = o(\log( 1/\epsilon))$,
    \item $Q = \Omega(\log (1/\epsilon))$ if $t\lambda = o(\log (1/\epsilon))$,
\end{enumerate}
ignoring $\log \log$ factors in $1/\epsilon$ in the third case.
\end{theorem*}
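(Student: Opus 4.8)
\emph{Proof proposal.} The plan is to prove each of the three lower bounds by reducing it to a statement about the degree of a univariate polynomial and then invoking classical polynomial inequalities, and---in parallel---by reductions from oracular problems that stay robust at constant error. For the reduction it suffices to exhibit one hard instance, so I would take $H$ to be diagonal with $D$ eigenvalues spread through $[0,\Delta]$ and set $E=\lambda$, so that $H'=(H-\lambda)/\lambda$ has these eigenvalues at points $x_1,\dots,x_D\in[-1,-1+\Delta/\lambda]$ and $\ket{\psi}:=D^{-1/2}\sum_j\ket{x_j}\in\cS_{\Delta}$. By the polynomial characterization of algorithms built from (controlled-)$T$, $T^\dagger$, and arbitrary ancilla gates~\cite{gilyen2019quantum}, a $Q$-query algorithm gives $\Pi U\Pi=P(H')$ for a polynomial $P$ of degree $\cO(Q)$ with $|P(x)|\le 1$ on $[-1,1]$; since $H'$ is diagonal here, the success requirement $|\!\bra{\psi}\Pi U^\dagger\Pi e^{-itH}\ket{\psi}\!|\ge 1-\epsilon$ reads $|D^{-1}\sum_j\overline{P(x_j)}\,e^{-itE_j}|\ge 1-\epsilon$, which forces $D^{-1}\sum_j|P(x_j)-c_0\,e^{-i\omega x_j}|^2=\cO(\epsilon)$ for some unit $c_0$, with $\omega:=t\lambda$ and $E_j=\lambda(x_j+1)$. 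Thus I am left to lower bound the degree of a polynomial bounded by $1$ on $[-1,1]$ that approximates, in mean square over a fine grid in the edge window $[-1,-1+\Delta/\lambda]$, the oscillatory function $c_0\,e^{-i\omega x}$.

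Here I would use two inequalities tailored to the fact that the window abuts the endpoint $x=-1$. First, after $x=\cos\theta$ the window becomes a $\theta$-interval of length $\Theta(\sqrt{\Delta/\lambda})$ near $\theta=\pi$ and $c_0e^{-i\omega x}$ becomes a chirp whose instantaneous frequency grows linearly up to $\Theta(\omega\sqrt{\Delta/\lambda})=\Theta(t\sqrt{\lambda\Delta})$, while $P(\cos\theta)$ is a degree-$\cO(Q)$ trigonometric polynomial and so carries frequencies only up to $\cO(Q)$; if $Q=o(t\sqrt{\lambda\Delta})$, a $1-o(1)$ fraction of the chirp's $L^2$ energy on the window is then unrepresentable, contradicting the $\cO(\epsilon)$ bound whenever $\epsilon$ is below an absolute constant. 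This yields $Q=\Omega(t\sqrt{\lambda\Delta})$ as soon as $t\Delta=\Omega(1)$ (so that the chirp genuinely oscillates). Second, when $t\Delta=\cO(1)$ the target varies by $\Theta(t\Delta)$ across the window of width $\Delta/\lambda$, so mean-square closeness forces $|P'(x^\ast)|=\Omega(t\lambda)$ at some $x^\ast$ in the window (provided $\epsilon$ is small enough relative to $t\Delta$ to exclude the trivial algorithm $U=\one$); Markov's inequality $\|P'\|_{[-1,1]}\le(\deg P)^2$ then gives $\deg P=\Omega(\sqrt{t\lambda})$, i.e.\ $Q=\Omega(\sqrt{t\lambda})$.

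Combining these with the standard fact that approximating $e^{-i\omega x}$ to uniform error $\epsilon$ on an interval needs degree $\Omega(\log(1/\epsilon)/\log\log(1/\epsilon))$~\cite{berry2014exponential}, I obtain $Q=\Omega\big(t\sqrt{\lambda\Delta}+\sqrt{t\lambda}+\log(1/\epsilon)/\log\log(1/\epsilon)\big)$ with the first term included whenever $t\Delta=\Omega(1)$; reading off the dominant term under each regime's hypothesis produces the three claimed bounds, the $\log\log$ loss being absorbed by the caveat in case~3 and $t\Delta=\Omega(1)$ holding automatically in the time-dominated regime. For robustness at constant $\epsilon$ and for the versions in the bare LCU and sparse-access models, I would also give direct oracular reductions: from Grover search over $\Theta(t\lambda)$ elements for the $\Omega(\sqrt{t\lambda})$ bound, and from $\mathrm{PARITY}\circ\mathrm{OR}$ on $m=\Theta(t\Delta)$ blocks of size $k=\Theta(\lambda/\Delta)$---with query complexity $\Omega(m\sqrt k)=\Omega(t\sqrt{\lambda\Delta})$---for the time-dominated bound, in each case encoding the instance into a gap-amplifiable $H=\lambda A^\dagger A$ whose block-encoding costs $\cO(1)$ oracle calls and whose low-energy initial state, under time-$t$ evolution, encodes the output bit (a ``GHZ-type'' superposition over the $m$ blocks whose accumulated phase is the parity).

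The polynomial inequalities above (endpoint control of oscillation via Chebyshev/Bernstein, and Markov) are routine. I expect the real difficulty to be in two places. One is making the polynomial-method translation airtight for \emph{controlled} queries and their inverses together with the overlap-based success metric, and pinning down the implicit lower bound on $t\Delta$ in terms of $\epsilon$ needed so that the trivial algorithm does not already succeed. The other---for the oracular route---is designing the gap-amplifiable instance so that the initial state lies \emph{exactly} in $\cS_{\Delta}$, which forces the planted energy scale to $\Theta(\Delta/m)$ per block, while padding $\lambda$ above $\|H\|$ so that $k=\Theta(\lambda/\Delta)\ge1$, all without inflating the number of oracle calls per query to $T$ beyond $\cO(1)$; this balancing act is exactly what replaces $\lambda$ by $\sqrt{\lambda\Delta}$, and it is the crux.
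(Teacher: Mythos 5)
Your high-level plan overlaps substantially with the paper's (a $\mathrm{PARITY}\circ\mathrm{OR}$ reduction for regime~1, a polynomial-degree argument for regime~2, citing \cite{berry2014exponential} for regime~3), but the primary route you lay out has a genuine gap at its foundation. First, a query lower bound cannot be extracted from ``one hard instance'': if $H$ is a fixed, fully specified diagonal matrix, then $e^{-itH}$ is a known unitary and can be synthesized with zero queries to $T$; any query lower bound needs a \emph{family} of oracles that are indistinguishable without queries. Second, the claim that a $Q$-query algorithm built from controlled-$T$, $T^\dagger$ and arbitrary ancilla gates realizes $\Pi U\Pi = P(H')$ for a univariate polynomial $P$ of degree $\cO(Q)$ is the converse of QSVT and is not a theorem; what is true is that output amplitudes are degree-$\le Q$ polynomials in the \emph{matrix entries} of the oracle, which collapses to a univariate (trigonometric) polynomial only for a suitably structured one-parameter family. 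The paper's intermediate-regime proof does exactly this: it fixes the $2\times2$ family $U_\theta$, proves the amplitudes are degree-$K$ trigonometric polynomials in $\theta$ (Lemma~\ref{lem:queryboundtrigonometric}, adapted from \cite{mande2023tight}), and then applies Bernstein's inequality to the second derivative. Your Markov-on-the-first-derivative step is a reasonable substitute for that last inequality, but without the family-plus-translation lemma the argument does not start. Likewise, your chirp/$L^2$-energy claim for regime~1 is far from a ``routine polynomial inequality''; making ``a $1-o(1)$ fraction of the chirp's energy is unrepresentable below frequency $Q$'' rigorous is a nontrivial Fourier-localization argument, and the paper avoids it entirely by not using the polynomial method in that regime.

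On the oracular backup: your parameters for regime~1 ($m=\Theta(t\Delta)$ blocks of size $k=\Theta(\lambda/\Delta)$, bound $\Omega(m\sqrt{k})=\Omega(t\sqrt{\lambda\Delta})$) coincide with the paper's ($K=\Theta(t\Delta)$, $N=\Theta(\lambda/\Delta)$ in Theorem~\ref{thm:t-bound}). However, the readout mechanism you propose --- a GHZ-type superposition whose ``accumulated phase is the parity'' --- is not workable (an accumulated phase on a product initial state under commuting block Hamiltonians is not measurable as a parity), and it is not what the paper does: there, evolving $\ket{s}^{\otimes K}$ under a sum of $K$ commuting Grover-type gap-amplifiable Hamiltonians for time $t=\Theta(\lambda/\Delta)$ rotates each tensor factor onto its marked item, so the evolved state solves all $K$ search instances, and the adversary bound for $\mathrm{PARITY}\circ\mathrm{OR}$ composed with a state-preparation reduction (Lemma~\ref{lem:state-lb}) gives $\Omega(K\sqrt{N})$. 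Finally, your proposed Grover reduction for the intermediate regime cannot work: that regime requires $t\Delta = o(\log(1/\epsilon))$, so at constant error the evolution is essentially trivial on the low-energy subspace and no constant-error search problem can be embedded there; the $\Omega(\sqrt{t\lambda})$ bound genuinely needs the small-$\epsilon$ trigonometric-polynomial argument.
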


Formal statements of these results are given in Sec.~\ref{sec:nffsga}.
The above result concerns query complexity given access to $T$ in an oracle model. Since the third regime takes $\Gamma=\lambda$, both upper and lower bounds are already known~\cite{berry2014exponential,LC17}, and no improvement from the low-energy condition follows. In the second regime of intermediate scaling, the lower bound does not exactly match the upper bound $\cO(\sqrt{t \lambda \log(1/\epsilon)})$
up to a mild sublogarithmic correction in $1/\epsilon$, and
we leave open this discrepancy for error dependence. However, we do not expect a significant improvement in our upper bound in this case, which smoothly transitions to those of the other regimes as we change the asymptotic parameters. 
The first regime is likely most relevant in terms of applications; examples include quantum phase estimation
to determine ground state energies~\cite{aspuru2005simulated,lanyon2010towards} or adiabatic state transformations~\cite{BKS09}. 
Formally, the lower bounds stated above for this regime 
are proven in two different oracle models, related to the LCU model and the sparse matrix model. In the latter case, we leave open the question of a low-energy simulation algorithm that is optimal in the sparsity of the Hamiltonian; see Ref.~\cite{low2019sparse} for related work. 
For this regime, we also move outside the oracle model and show our lower bound in terms of gate complexity.

\begin{theorem}[Gate complexity lower bound for time evolution of low-energy states of gap-amplifiable Hamiltonians, informal]
\label{thm:gateinf}
Let $H = \lambda A^\dagger A$ be any $k$-local gap-amplifiable Hamiltonian acting on $\cH$, where $\lambda > 0$, the dimension of $A$ is $M \times N$, and $k$ is a constant. Let $\ket{\psi}\in \cS_{\Delta}$ be any state supported only on the subspace of eigenvalues of $H$ that are at most $\Delta$ for some $\Delta > 0$. For some projector $\Pi$, preparing a unitary $U$ satisfying $|\bra{\psi}\Pi U^\dagger \Pi e^{-itH}\ket{\psi}| \geq 2/3$, where $t > 0$, must use
\begin{align}
    \tilde \Omega\left(t\sqrt{\lambda \Delta}\right)
\end{align}
two-qubit gates, where $\tilde \Omega$ ignores logarithmic factors. The algorithm may use an unlimited number of ancilla qubits and the gates may be non-local and come from a possibly infinite gate set.
\end{theorem}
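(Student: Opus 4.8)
The plan is to prove the bound by a \emph{circuit-to-Hamiltonian} reduction living in the time-dominated regime ($t\Delta\to\infty$): encode an arbitrary quantum circuit $C$ with $G$ two-qubit gates into an $O(1)$-local, frustration-free (hence gap-amplifiable, hence expressible as $\lambda A^\dagger A$ after stacking the square roots of the local terms, cf.\ Lemma~\ref{lem:GAsimulation}) ``clock'' Hamiltonian $H_C$, together with a low-energy initial state $\ket\psi\in\cS_\Delta$ whose evolution under $H_C$ runs $C$. The central bookkeeping is that, for a clock of length $G$, the relevant parameters satisfy $G=\Theta(t\sqrt{\lambda\Delta})$; thus a $g$-gate algorithm simulating $e^{-itH_C}$ on $\ket\psi$ produces, up to cheap overhead, a $g$-gate circuit that reproduces the output of $C$, and choosing $C$ to be a circuit of near-maximal complexity forces $g=\tilde\Omega(G)=\tilde\Omega(t\sqrt{\lambda\Delta})$.

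Concretely, I would take a Feynman--Kitaev-type clock Hamiltonian on a path of $G+1$ time steps, $H_C=H_{\mathrm{prop}}+H_{\mathrm{in}}$ with $H_{\mathrm{prop}}=\tfrac12\sum_j\big(\one\otimes(|j\rangle\!\langle j|+|j{+}1\rangle\!\langle j{+}1|)-C_j\otimes|j{+}1\rangle\!\langle j|-C_j^\dagger\otimes|j\rangle\!\langle j{+}1|\big)$ and $H_{\mathrm{in}}$ penalizing wrong inputs at time $0$; with a domain-wall clock each term is PSD and $O(1)$-local, so $H_C=\sum_a h_a=A^\dagger A$ with $A$ the column-stack of the $\sqrt{h_a}$, and rescaling gives any desired norm parameter $\lambda$ with $\|A\|\le 1$. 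On the history subspace $H_C$ acts, up to $O(1)$ factors, as the path Laplacian, whose mode of momentum $p$ has energy $\Theta(\lambda p^2)$. I would pad $C$ with idle steps at the beginning and end so that: (i) on the initial idle stretch the history basis vectors are product states, and I take $\ket\psi=(\text{a momentum-}p_0\text{ wavepacket localized near the clock's start})\otimes\ket{\mathrm{input}}\otimes\ket0$, which lies in the history subspace with energy $\approx\lambda p_0^2=:\Delta$, so $\Delta/\lambda=p_0^2=o(1)$, and is preparable from $\ket{\mathrm{input}}$ with only $\tilde O(1/p_0)=\tilde O(\sqrt{\lambda/\Delta})=\tilde o(G)$ gates (it touches only the $\tilde O(1/p_0)$ clock qubits near the start, and $1/p_0=G/\Theta(t\Delta)=\tilde o(G)$ in the time-dominated regime); (ii) after $C$'s genuine gates finish there is an idle stretch of length $\Theta(G)$ on which the output register simply holds $C\ket{\mathrm{input}}$. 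Under $e^{-itH_C}$ the wavepacket propagates with group velocity $\Theta(\sqrt{\lambda\Delta})$, so after time $t=\Theta(G/\sqrt{\lambda\Delta})$ it has traversed the clock and — even after dispersion and reflection off the hard walls, which a standard continuous-time-quantum-walk-on-a-path estimate controls — deposits a constant fraction of its amplitude on clock times lying in the final idle stretch, i.e.\ on times where the output register holds $C\ket{\mathrm{input}}$.

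To finish, suppose an algorithm produces $U$ with $|\bra\psi\Pi U^\dagger\Pi e^{-itH_C}\ket\psi|\ge 2/3$ using $g$ two-qubit gates. First, by fixed-point (oblivious) amplitude amplification with $U$ and $U^\dagger$, boost this to overlap $\ge 1-\epsilon$ for a small fixed constant $\epsilon$, at the cost of an $O(1)$ factor in gates; equivalently, $U\ket\psi\ket0$ is $\sqrt{2\epsilon}$-close to $e^{-itH_C}\ket\psi\ket0$ up to phase. Prepending the $\tilde o(G)$-gate preparation of $\ket\psi\ket0$ and appending a measurement (or amplitude amplification) of the clock onto the final idle stretch then yields a circuit with $O(g)+\tilde o(G)$ two-qubit gates that prepares $C\ket{\mathrm{input}}$ to constant fidelity. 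It remains to invoke the unconditional fact, provable by a covering-number/dimension count over circuits (robust to an infinite gate set — via an $\epsilon$-net on $U(4)$ — and to unlimited ancillas, since a $g$-gate circuit touches at most $2g$ qubits), that there is a circuit $C$ with $G$ gates such that $C\ket{\mathrm{input}}$ cannot be prepared to constant fidelity by any circuit with $\tilde o(G)$ gates. Choosing such a $C$ gives $O(g)+\tilde o(G)=\tilde\Omega(G)$, hence $g=\tilde\Omega(G)=\tilde\Omega(t\sqrt{\lambda\Delta})$.

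I expect two main obstacles. The first is the tension between keeping $H_C$ genuinely $O(1)$-local (which essentially forces a unary/domain-wall clock, whose spread-out superpositions are expensive) and demanding that $\ket\psi$ be cheap to prepare; this is exactly what dictates the low-momentum-wavepacket-near-the-start plus idle-padding construction, and the $\tilde\Omega$ in the statement absorbs the resulting logarithmic slack (together with the slack from $\epsilon$-net discretization in the counting argument and from amplitude amplification). The second, and harder, point is the transport analysis: one must show quantitatively that the slow wavepacket — whose ballistic spread over time $t=\Theta(G/\sqrt{\lambda\Delta})$ is itself of order $G$ — nonetheless delivers a constant fraction of its weight to the ``answer'' clock times rather than piling up against a reflecting end or remaining stuck near the start; this is a concrete but delicate spectral estimate for the continuous-time quantum walk on the path graph. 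Relating the weak $2/3$-overlap guarantee to a usable fidelity is comparatively routine, as indicated above.
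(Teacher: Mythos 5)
Your construction is essentially the paper's: a Feynman--Kitaev clock Hamiltonian with a unary clock (for locality and sparsity), recognized as gap-amplifiable via its tight-binding/Laplacian structure, a low-momentum Gaussian wavepacket confined (after a Markov-inequality projection you should make explicit) to $\cS_\Delta$ with $\Delta=\Theta(\lambda p_0^2)$, idle padding before and after the genuine gates, ballistic transport for $t=\Theta(G/\sqrt{\lambda\Delta})$, and a counting argument to force $\tilde\Omega(G)$ gates. The one genuine gap is the step you yourself flag as ``delicate'': proving that a constant fraction of the evolved amplitude lands on clock times past the computation. This is not a routine remark --- it is the entire technical content of the paper's Lemma~\ref{lem:1d} and Appendix~\ref{app:gates}, where it is handled by (i) making the trailing idle padding a dominant fraction of the (periodic) clock, so that even the fully dispersed packet --- whose momentum spread is comparable to $p_0$, hence whose spatial spread at arrival is $\Theta(G)$ --- necessarily deposits constant weight there, and (ii) an explicit Poisson-summation/theta-function computation bounding the residual weight on pre-computation sites by $e^{-\Omega(R^3\log R)}$ for a tunable padding constant $R$. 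Without some version of that estimate your proof is incomplete; with it, your argument goes through.

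Two smaller points of comparison. First, your final step (an $\epsilon$-net/dimension count showing some $G$-gate output state cannot be prepared with $\tilde o(G)$ gates) differs from the paper, which instead invokes the Boolean-function-counting lemmas of Ref.~\cite{haah2018quantum} (a $2^{\tilde\Omega(Tn)}$ lower bound on functions computable at depth $T$ versus a $2^{\tilde O(G\log n)}$ upper bound for $G$ gates from any gate set with unlimited ancillas); both are valid routes to $\tilde\Omega(G)$ and both hide the same logarithmic slack. Second, your insistence on a narrow ($w=\tilde o(G)$) wavepacket to make state preparation cheap is a genuine generalization --- it places the instance strictly inside the time-dominated regime ($t\Delta\to\infty$) --- but it is also what makes the transport estimate hardest, since the velocity spread then equals the mean velocity. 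The paper instead sits at the boundary $t\Delta=\Theta(1)$ with a width-$\Theta(G)$ packet, accepts an $O(G)$-gate, $\cU$-independent preparation of $\ket\psi$, and absorbs it into the $\tilde\Omega$; you may find it simpler to do the same rather than fight the dispersion of a narrow packet.
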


A formal statement of this result is given in Sec.~\ref{sec:gate}.
The proof relies on the Feynman-Kitaev clock Hamiltonian in an argument similar to that of Ref.~\cite{jordan2017fast}, where the low-energy condition is achieved by controlling the width and momentum of the initial wavepacket. Time evolution evaluates the computation encoded in the Hamiltonian, allowing low-energy simulation to perform arbitrary quantum computations. We show the resulting circuit lower bound using an argument based on computing distinct Boolean functions due to Ref.~\cite{haah2018quantum}.

The rest of this article is structured as follows. 
We begin by presenting our quantum algorithm in Sec.~\ref{sec:alg}, where we provide the proof of Thm.~\ref{thm:mainalgorithm}. In Sec.~\ref{sec:FrustrationFree}, we explain how to improve Hamiltonian simulation of low-energy states of general gap-amplifiable Hamiltonians, including frustration-free Hamiltonians.
In Sec.~\ref{sec:nff}, we present formal lower bounds for the complexity of simulating the time evolution of low-energy states of generic Hamiltonians, that is, a no fast-forwarding result. In Sec.~\ref{sec:nffsga} we present query lower bounds for the case of gap-amplifiable Hamiltonians. In Sec.~\ref{sec:gate} we show gate complexity lower bounds for the time-dominated regime of parameters.
We give some conclusions in Sec.~\ref{sec:conclusions}. For completeness, Appendix~\ref{app:polyapprox} presents a detailed discussion on polynomial approximations to the exponential function (corresponding to the evolution operator), and Appendix~\ref{app:por} gives a lower bound of the $K$-partition quantum search problem (i.e., computing $\mathrm{PARITY}\circ\mathrm{OR}$), which we reduced to Hamiltonian simulation to show the lower bounds in Secs.~\ref{sec:nff} and~\ref{sec:nffsga}. Appendices~\ref{app:spectrumHtilde} and~\ref{app:gates} contain portions of the proofs for query complexity lower bounds in the sparse model and for gate complexity lower bounds, respectively.

\section{Quantum algorithm for time evolution of low-energy states}
\label{sec:alg}

The proof of Thm.~\ref{thm:mainalgorithm} 
is a direct consequence of Lemma~\ref{lem:GAsimulation}, which is proven in Secs.~\ref{sec:sga} and~\ref{sec:qsvt} below, 
and the efficient construction of the block-encoding of $A$ implied by Lemma~\ref{lem:hsgablec}, which is proven in Sec.~\ref{sec:sgablenc}. 

We first sketch the quantum algorithm for the case of gap-amplifiable Hamiltonians of the form $H=\lambda A^\dagger A$,
when the initial state is supported on the subspace of energy at most $\Delta >0$. For any $\Gamma \geq \Delta$, the state is also supported in a subspace of energy at most $\Gamma$. Since we will eventually optimize $\Gamma$ to minimize the query complexity, we will write everything in terms of a state confined to low-energy subspace with energy at most $\Gamma$ in $H$, where $0 < \Delta \le \Gamma \leq \lambda$ restricts the choice of $\Gamma$.

We assume access to the block-encoding of $A$ and use it to construct a new Hamiltonian $\hsga$ acting on an enlarged space that functions as the ``square root'' of $H$. The quantum algorithm implements the evolution operator $e^{-it \lambda (\hsga/\sqrt{\lambda})^2}$ that, after tracing out the ancillary systems, gives $e^{-itH}$. The important property is that we only need to construct an approximation of  $e^{-it \lambda (\hsga/\sqrt{\lambda})^2}$ when acting on an initial state supported exclusively on a subspace associated with
eigenvalues of $\hsga/\sqrt{\lambda}$ in 
\begin{align}
\label{eq:sgainterval}
  r_\Gamma:=  \left[-\sqrt{\frac{\Gamma}{\lambda}},\sqrt{\frac{\Gamma}{\lambda}}\right]\;.
\end{align} 
The problem is analogous to finding a polynomial approximation to $e^{-it\lambda x^2}$, where $x \in r_\Gamma$.
 This allows us to use a polynomial of lower degree than that needed for the general case (i.e., when the approximation is needed for all $x \in [-1,1]$). Once the polynomial is obtained, 
 the operation to implement is simply this polynomial where we replace $x \rightarrow \hsga/\sqrt{\lambda}$. 
 QSVT provides the quantum algorithm to implement this operation, which is a sequence of controlled queries to the block-encoding of $\hsga/\sqrt{\lambda}$ interleaved with other simple two-qubit gates. This is discussed in great detail in Ref.~\cite{gilyen2019quantum}. 
The block-encoding of $\hsga/\sqrt{\lambda}$ is implemented using a constant number of queries to the block-encoding of $A$ and its inverse. The query complexity of this algorithm is then determined by the degree of the approximating polynomial. 

If access to the Hamiltonian is not initially presented in gap-amplifiable form (i.e., access to the block-encoding of $A$), but instead access is given to the block-encoding $T$ of $H'$ with low-energy subspace $[-1, -1+\Delta/\lambda]$, then the Hamiltonian can be expressed in gap-amplifiable form after simple manipulations. This is captured by Lemma~\ref{lem:hsgablec}, which states that it is possible to efficiently construct the block-encoding of $A$ using only a constant number of queries to controlled-$T$ and its inverse. Then, the above algorithm can be used for this Hamiltonian as well. 

In comparison, Corollary 2 of Ref.~\cite{low2017hamiltonian} describes a quantum algorithm for the same setting that uses
\begin{align}
\label{eq:low2017hamiltonian}
    \cO\left(t\sqrt{\lambda\Gamma}\log^{3/2}\left(\frac{t\lambda}{\epsilon}\right) + \sqrt{\frac{\lambda}{\Gamma}}\log^{5/2}\left(\frac{t\lambda}{\epsilon}\right)\right)
\end{align}
queries to controlled-$T$ and its inverse. The additional overhead of $\log^{3/2}(t\lambda/\epsilon)$ originates from Lemma 16 of Ref.~\cite{low2017hamiltonian}, which defines a polynomial filter that uniformly maps $[-1, -1+\Gamma/\lambda]$ to $[-1, 0]$. The uniformity of this spectral amplification is not strictly necessary to evaluate time evolution; we instead use a non-uniform method of spectral gap amplification, yielding a more efficient algorithm that removes some logarithmic factors in Eq.~\eqref{eq:low2017hamiltonian}.

\subsection{Spectral gap amplification and block-encoding of \texorpdfstring{$\hsga$}{SGA Hamiltonian}}
\label{sec:sga}

For the first step of the proof of Lemma~\ref{lem:GAsimulation}, we use the spectral gap amplification
technique of Ref.~\cite{somma2013spectral}, which we revisit here.
Consider a gap-amplifiable Hamiltonian
that can be expressed as $H=\lambda A^\dagger A$ for some operator $A$ that can be represented as an $M \times N$ matrix. (In Sec.~\ref{sec:FrustrationFree}
we generalize this definition.)
We define a new $(M+N)\times (M+N)$ Hamiltonian
\begin{align}
\label{eq:sgainblockform}
    H_{\rm SGA}:=\sqrt \lambda \begin{pmatrix} {\bf 0} & A^\dagger \cr A& {\bf 0}  \end{pmatrix} \;.
\end{align}
We write ${\bf 0}$ to denote the all-zero matrices whose dimensions are implied from context. 
The central reason that $H_{\rm SGA}$ is useful is the property
\begin{align}
    ( H_{\rm SGA})^2 = \begin{pmatrix} H & {\bf 0} \cr {\bf 0}  & \lambda A A^\dagger \end{pmatrix} \;.
\end{align}
The first block of $( H_{\rm SGA})^2$ contains the Hamiltonian to be simulated, i.e., $\Pi ( H_{\rm SGA})^2 \Pi=H$. Hence, we can think of $H_{\rm SGA}$ as the ``square root'' of $H$. 
We diagonalize $\hsga$ explicitly to show that its eigenvalues are the square roots of the eigenvalues of $H$\footnote{Among other consequences, this property implies that the spectral gap of $\hsga$ is larger than that of $H$, allowing for faster adiabatic state transformations, which was the problem studied in Ref.~\cite{somma2013spectral}}. Let $H$ have eigenstates $\ket{\psi_j} \in \cH$ and eigenvalues $\gamma_j \ge 0$, $j \in [N]:=\{1,\ldots,N\}$.
Let $\ket{\Psi_j} \in \mathbb C^{M+N}$ be the corresponding states in the enlarged space by
setting the last $M-N$ amplitudes in the computational basis to zero.
Define
\begin{align}
\label{eq:sgaeigs}
    \sket{\phi_j^{\pm}} = \frac 1 {\sqrt 2} \left(  \ket{\Psi_j} \pm \frac 1 {\sqrt{\gamma_j}}\hsga  \ket{\Psi_j}\right)\;.
\end{align}
These are eigenstates of $\hsga$, since
\begin{align}
\hsga \sket{\phi_j^{\pm}} &= \frac{1}{\sqrt 2}\left(\hsga \ket{\Psi_j} \pm \frac{1}{\sqrt{\gamma_j}}  \hsga^2 \ket{\Psi_j}\right) \\
&=\frac{1}{\sqrt 2}\left(\hsga \ket{\Psi_j} \pm   \sqrt{\gamma_j}\ket{\Psi_j}\right)\\
&= \pm \sqrt{\gamma_j} \sket{\phi_j^\pm}\;.
\end{align}

It is possible to simulate certain transformations 
on $H$, such as the time evolution $e^{-itH}$, using $\hsga$. For instance, for any $\ket \psi  \in \cH$, a useful identity for our quantum algorithm is
\begin{align}
\label{eq:simulationequivalence}
  \Pi e^{-it(\hsga)^2} \ket \Psi  = \Pi e^{-it\lambda (\hsga/\sqrt \lambda)^2} \ket  \Psi  = e^{-itH} \ket \psi \;,
\end{align}
where $\ket \Psi \in \mathbb C^{M+N}$ is the version of $\ket \psi  \in \cH$ in the enlarged space (after setting the last $M-N$ amplitudes in the computational basis to zero).
If access to a block-encoding of $\hsga/\sqrt \lambda$ is given, 
the operator $e^{-itH}$ can be simulated via QSVT using such a block-encoding.

A technical challenge to construct this block-encoding is that $A$ and $A^\dagger$
can be, in principle, rectangular matrices.
It is possible, however,
to pad these matrices with zeroes to make them square. Without loss of generality, we can take $M \geq N$ by adding zeros when $M < N$.
Rather than considering $A$, we can then consider the $M \times M$ matrix
\begin{align}
\label{eq:Asquareversion}
    {\bf A}:= \begin{pmatrix} A & \vline & \begin{matrix}
        0 & \ldots &0 \cr \vdots & \ldots & \vdots \cr 0 & \ldots & 0
    \end{matrix}\end{pmatrix} \;,
\end{align}
where the last $M-N$ columns are zero. 
Accordingly, instead of using $\hsga/\sqrt{\lambda}$  to simulate the time evolution of $H$ as in Eq.~\eqref{eq:simulationequivalence}, 
we can use ${\bf \hsga}/\sqrt{\lambda}:=\ket 0 \! \bra 1 \otimes {\bf A}^\dagger + \ket 1 \! \bra 0 \otimes {\bf A}$, where $\bf \hsga$ is dimension $2M \times 2M$. 
To describe the quantum algorithm as a sequence of operations acting on qubit systems, we assume without loss of generality that $M=2^m$ and $N=2^n$ are powers of two, for $m \ge n$.
Then, $\bf \hsga$ acts on the space of $m+1$ qubits and Eq.~\eqref{eq:simulationequivalence} implies
\begin{align}
\label{eq:simulationequivalence2}
      e^{-it({\bf \hsga})^2} \ket{0}^{\otimes m+1-n}\ket \psi  =   e^{-it\lambda ({\bf \hsga}/\sqrt \lambda)^2} \ket{0}^{\otimes m+1-n}\ket  \psi  =\ket{0}^{\otimes m+1-n} (e^{-itH} \ket \psi) \;.
\end{align}
Time evolution with $H=\lambda A^\dagger A$ can then be simulated via QSVT when access to a block-encoding $V_{\rm SGA}$ of 
${\bf \hsga}/\sqrt \lambda$ is allowed.
The $m+1-n$ ancillas can be discarded at the end.

Let $V$ be the block-encoding of $A$ in Eq.~\eqref{eq:blecofA} that satisfies $\Pi' V \Pi=A$, where $\Pi'$ and $\Pi$
are the orthogonal projectors onto $\cH'\equiv \mathbb C^M$ and $\cH \equiv \mathbb C^N$, respectively. Assume that $V$ acts on the space $\mathbb C^Q$ of $q=\log_2 Q \ge m$ qubits. 
Let $P':=\ketbra 0^{\otimes m-n} \otimes \one_N$
be the projector associated with $\Pi'$ that acts on $\cH'$ and $P'':= \ketbra 0^{\otimes q-n}\otimes \one_N$ a projector in $\mathbb C^Q$. Then,
\begin{align}
   { \bf A }&= \Pi' V \Pi' P' = \Pi' V P'' \Pi'\;,
\end{align}
where $\Pi' V \Pi'$ corresponds to the first $M \times M$ dimensional block of $V$ and $VP''$ is such that the first $N$ columns are those of $V$ and the last $Q-N$ columns are zero; essentially, we constructed a block-encoding of  ${ \bf A }$, which is the matrix in Eq.~\eqref{eq:Asquareversion}.
We can write
\begin{align}
    {\bf \hsga}/\sqrt{\lambda}&=\left( \ketbra 0  \otimes {\bf A}^\dagger + \ketbra 1 \otimes {\bf A}
    \right) (X \otimes \one_M) \\
    \label{eq:closetoblechsga}
    & =  \Pi''\left( \ketbra 0  \otimes P'' V^\dagger + \ketbra 1 \otimes VP''
    \right) (X \otimes \one_M) \Pi'' \;,
\end{align}
where $X$ is the one-qubit Pauli flip operator, and $\Pi''$ is now the orthogonal projector onto the space of $m+1$ qubits, i.e., $\mathbb C^{2M}$. This
is the projector associated with the subspace where 
the $m$ qubits associated with $\mathbb C^M$ are in $\ket 0^{\otimes m}$.
Equation~\eqref{eq:closetoblechsga} is not yet a proper block-encoding  of 
${\bf \hsga}/\sqrt{\lambda}$ because $P''$ is not unitary. However, $P''$ can be expressed as an LCU of the form $P''=(U_{P''}+\one_Q)/2$, where 
$U_{P''}:=2P''-\one_Q$ is a unitary acting on $\mathbb C^Q$. 
There exist standard techniques to construct a block-encoding of an LCU; see, for example,  the results in Lemma 6 of Ref.~\cite{CKS17}
or Ref.~\cite{low2019qubitization}. More specifically, we can define the unitary that acts on a space of $q+2$
qubits
\begin{align}
    V':= \ketbra {00} \otimes U_{P''}V^\dagger + \ketbra {01}  \otimes  V^\dagger + \ketbra{10} \otimes V U_{P''} +\ketbra {11} V\;,
\end{align}
and also define
\begin{align}
\label{eq:hsgablec}
    V_{\rm SGA}:= (\one_2 \otimes {\rm H} \otimes \one_Q) V' ( X \otimes {\rm H} \otimes \one_Q) \;,
\end{align}
where ${\rm H}$ is the one-qubit Hadamard gate that implements ${\rm H} \ket 0=(\ket 0 + \ket 1)/\sqrt 2$ and ${\rm H} \ket 1=(\ket 0 - \ket 1)/\sqrt 2$. Note that the dimension of $V'$ and $V_{\rm SGA}$ is $4Q \times 4Q$ in this example. Then, if $\Pi''$ is the orthogonal projector onto $\mathbb C^{2M}$, which is the space associated with ${\bf \hsga}$, we obtain
\begin{align}
   \Pi'' V_{\rm SGA}\Pi''={\bf \hsga}/\sqrt{\lambda} \;.
\end{align}
The subspace associated with
$\Pi''$ is now that where qubits 2 through 
$m+2$ are in $\ket 0^{\otimes m+1}$.

Equation~\eqref{eq:hsgablec} is the desired block-encoding
of ${\bf \hsga}/\sqrt{\lambda}$. To implement it, we need to apply $V'$ once, which requires applying controlled-$V$ and controlled-$V^\dagger$
once. The additional gate complexity is dominated by that of implementing the controlled-$U_{P''}$ operation. This gate complexity is $\cO(q)$ using standard techniques. 

Having constructed a block-encoding
of ${\bf \hsga}/\sqrt{\lambda}$, it remains 
to show: i) how to simulate the time evolution of $H=\lambda A^\dagger A$ using Eq.~\eqref{eq:simulationequivalence2}, and ii) how to equivalently express any Hamiltonian as $\lambda A^\dagger A$ from its block-encoding $T$.
These are discussed next.

\subsection{Polynomial approximation of \texorpdfstring{$e^{-itx^2}$}{transformation} and QSVT}
\label{sec:qsvt}

For the second step of the proof of Lemma~\ref{lem:GAsimulation},
we follow an approach similar to Hamiltonian simulation via QSVT. Briefly, we review the standard approach for simulating $e^{-itH}$ given access to a block-encoding of $H/\lambda$, for any Hamiltonian $H$~\cite{gilyen2019quantum}. QSVT starts from proper even and odd degree polynomial approximations to the trigonometric functions $\cos(t\lambda x)$ and $\sin(t\lambda x)$, which can be obtained using a Jacobi-Anger expansion (Thm. 57 of Ref.~\cite{gilyen2019quantum}). Here, $x \in \mathbb R$
plays the role of an eigenvalue of $H/\lambda$, and after
replacing $x \rightarrow H/\lambda$ in the polynomials, we produce approximations of $\cos(tH)$ and $\sin(tH)$, respectively. 
QSVT can then implement these polynomials using 
the block-encoding of $H/\lambda$ a number of times that depends on the degree of the polynomials. They can be later combined through an LCU, ultimately implementing an approximation of $e^{-itH}=\cos(tH)-i \sin(tH)$. See Thm. 58 of Ref.~\cite{gilyen2019quantum} for more details. 

We will now modify this procedure to accommodate a block-encoding of ${\bf H}_{\rm SGA}/\sqrt \lambda$ instead of $H/\lambda$, applicable to gap-amplifiable Hamiltonians.
For our problem, we start from even and odd polynomial approximations to $\cos(t\lambda x^2)$ and $\sin(t\lambda x^2)$ instead. Here, $x \in \mathbb R$  plays the role of a certain eigenvalue of ${\bf H}_{\rm SGA}/\sqrt \lambda$, which 
coincides with an eigenvalue of $\pm \sqrt{H/\lambda}$.
After replacing $x \rightarrow {\bf H}_{\rm SGA}/\sqrt \lambda$,
these polynomials produce approximations of $\cos(t ({\bf H}_{\rm SGA})^2)$
and $\sin(t ({\bf H}_{\rm SGA})^2)$. QSVT can then implement these polynomials using 
the block-encoding of ${\bf H}_{\rm SGA}/\sqrt \lambda$.
These can be later combined through an LCU, ultimately implementing an approximation of  $e^{-it ({\bf H}_{\rm SGA})^2} = \cos (t   ({\bf H}_{\rm SGA})^2) - i \sin (t ({\bf H}_{\rm SGA})^2)$. This operation also simulates $e^{-itH}$ due to Eq.~\eqref{eq:simulationequivalence2}. 
Importantly, since we assume the initial state
$\ket \psi$ to be supported only on the subspace corresponding to eigenvalues of $H=\lambda A^\dagger A$ at most $\Gamma>0$, we only need to consider these polynomial approximations for the domain $x \in r_\Gamma$  in Eq.~\eqref{eq:sgainterval}.
That is, for given $\epsilon \ge 0$, 
we seek polynomials $P_{\epsilon, \cos}$ and $P_{\epsilon, \sin}$ satisfying
\begin{align}
\label{eq:polyapproxcond1}
    & |P_{\epsilon, f}(x)|  \leq 1 \ \forall \ x \in [-1, 1] \;, \\
    \label{eq:polyapproxcond2}
   & \max_{x \in r_\Gamma} |P_{\epsilon,f}(x) - f(x)| \leq \epsilon \;,
\end{align}
for $f(x)=\cos(t\lambda x^2)$ and  $f(x)=\sin(t\lambda x^2)$. The techniques of piecewise polynomial constructions outlined in Ref.~\cite{gilyen2019quantum} are useful for this task. In particular, we use a slightly modified version of Corollary 66 of Ref.~\cite{gilyen2019quantum}.

\begin{lemma}[Piecewise polynomial approximations based on a local Taylor series]
\label{lem:polyapprox}
Let $r \in (0, 1]$, $\delta \in (0,r]$, and let $f:[-r-\delta, r+\delta]$ be such that $f(x) = \sum_{k=0}^\infty a_k x^k$ for all $x \in [-r-\delta, r+\delta]$. Choose any $B>0$ such that $B \geq \sum_{k=0}^\infty  |a_k| (r+\delta)^k$. For $\epsilon \in (0, 3B/2)$, there is an efficiently computable polynomial $P$ of degree $\cO(\frac{1}{\delta}\log\frac{B}{\epsilon})$ such that
\begin{align}
    \|f(x)-P(x)\|_{[-r,r]} \leq \epsilon \quad \mathrm{and} \quad \|P(x)\|_{[-1, 1]} \leq \epsilon + \|f(x)\|_{[-r-\frac{\delta}{2}, r+\frac{\delta}{2}]}\;.
\end{align}
\end{lemma}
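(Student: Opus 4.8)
The plan is to obtain Lemma~\ref{lem:polyapprox} from Corollary~66 of Ref.~\cite{gilyen2019quantum}, which proves essentially this statement for a Taylor series centered at an arbitrary point; our hypotheses are the special case of a series centered at $0$ on the symmetric interval $[-r-\delta,r+\delta]$. That corollary yields, under exactly the assumptions listed, an efficiently computable $P\in\mathbb R[x]$ of degree $\cO(\tfrac1\delta\log\tfrac B\epsilon)$ with $\|f-P\|_{[-r,r]}\le\epsilon$ and $\|P\|_{[-1,1]}\le\epsilon+\|f\|_{[-r-\delta/2,\,r+\delta/2]}$ (it additionally controls $P$ outside $[-r-\delta/2,r+\delta/2]$, which we simply discard). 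So the proof is mostly a matter of quoting it with the hypotheses matched, plus the two small bookkeeping remarks below.

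For concreteness I would also recall the short argument behind that corollary. First, truncate the Taylor series of $f$ at degree $k_0=\cO(\tfrac1\delta\log\tfrac B\epsilon)$, obtaining $f_{k_0}(x)=\sum_{k=0}^{k_0}a_kx^k$; using $\sum_k|a_k|(r+\delta)^k\le B$, the tail on $[-r-\delta/2,r+\delta/2]$ is at most $B\bigl(\tfrac{r+\delta/2}{r+\delta}\bigr)^{k_0}\le Be^{-\Omega(\delta k_0)}$ (since $r+\delta\le2$), which is $\le\epsilon/4$ for a suitable $k_0$, so $\|f-f_{k_0}\|_{[-r-\delta/2,r+\delta/2]}\le\epsilon/4$ and hence $\|f_{k_0}\|_{[-r-\delta/2,r+\delta/2]}\le\|f\|_{[-r-\delta/2,r+\delta/2]}+\epsilon/4$; moreover $|f_{k_0}(x)|\le B$ for $|x|\le r+\delta$, so a Chebyshev extremal bound gives $\|f_{k_0}\|_{[-1,1]}\le B\,\bigl|T_{k_0}(\tfrac1{r+\delta})\bigr|$ (with $T_{k_0}$ the degree-$k_0$ Chebyshev polynomial), which can nonetheless be as large as $\sim B(1/\delta)^{k_0}$. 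Second, multiply $f_{k_0}$ by an error-function-based polynomial approximation of a window/rectangle function (as built in Ref.~\cite{gilyen2019quantum}) that is $1-\cO(\epsilon/B)$ on $[-r,r]$, bounded by $1$ on $[-1,1]$, of degree $\cO(\tfrac1\delta\log\tfrac B\epsilon)$, and with super-exponentially decaying tails away from $[-r-\delta/2,r+\delta/2]$. Taking $P$ to be the product, $\deg P=\cO(\tfrac1\delta\log\tfrac B\epsilon)$ and both required inequalities follow from the triangle inequality together with the estimates above, splitting $[-1,1]$ into the Taylor convergence region $|x|\le r+\delta$ (where $|f_{k_0}|\le B$) and its complement (where the window has already decayed below $\epsilon/\|f_{k_0}\|_{[-1,1]}$).

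Two minor adaptations complete the reduction to our statement: our $r\le1$ is a special case of the corollary's range (take center $0$); and although we only assume $\epsilon\in(0,3B/2)$, for $\epsilon\ge B$ one may just take $P\equiv0$, since $\|f\|_{[-r,r]}\le\sum_k|a_k|r^k\le B\le\epsilon$ and $\|0\|_{[-1,1]}=0\le\epsilon+\|f\|_{[-r-\delta/2,r+\delta/2]}$, so only $\epsilon<B$ needs Corollary~66. The one genuinely delicate point — the reason a careless argument loses roughly a factor of $1/\delta$ in the degree — is the balance between the window and $f_{k_0}$: since $\|f_{k_0}\|_{[-1,1]}$ can be $\sim B(1/\delta)^{k_0}$, merely demanding that the window drop to $\epsilon/\|f_{k_0}\|_{[-1,1]}$ within an $\cO(\delta)$-wide transition would force degree $\cO(\tfrac1{\delta^2}\log\tfrac1\delta\log\tfrac B\epsilon)$; keeping it at $\cO(\tfrac1\delta\log\tfrac B\epsilon)$ requires exploiting that $f_{k_0}$ only grows outside the convergence radius $r+\delta$ — where the window, thanks to its faster-than-exponential tails, has had an extra $\Omega(\delta)$ of transition to suppress it — while inside $[-r-\delta,r+\delta]$ the crude bound $|f_{k_0}|\le B$ already suffices. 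This quantitative interplay, inherited from Ref.~\cite{gilyen2019quantum}, is the crux.
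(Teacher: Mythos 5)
Your plan of quoting Corollary~66 of Ref.~\cite{gilyen2019quantum} is essentially what the paper does (Appendix~\ref{app:polyapprox} is a self-contained re-derivation of that corollary, with the $\epsilon$-range corrected to $(0,3B/2)$), and your truncation estimate and your patch for $\epsilon\ge B$ are fine. The problem is in your reconstruction of the argument behind the corollary, and it is exactly at the point you yourself flag as the crux. You propose to take $P$ to be the truncated Taylor polynomial $f_{k_0}$ multiplied by a polynomial window, and to control the product on $[r+\delta,1]$ by invoking the window's ``faster-than-exponential tails.'' A polynomial approximation of a rectangle (or erf-smoothed rectangle) function does not have decaying tails: it is only guaranteed to be uniformly bounded by its approximation error $\epsilon'$ on $[-1,1]\setminus[-r-\delta/2,r+\delta/2]$, and it cannot be pushed below that floor without increasing its degree. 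Meanwhile $f_{k_0}$, being bounded by $B$ only on $[-r-\delta,r+\delta]$, can grow to $B\,T_{k_0}(1/(r+\delta))\sim B\,e^{\Theta(k_0\log(1/\delta))}$ at $x=1$. Suppressing this with a uniformly-$\epsilon'$ window forces $\log(1/\epsilon')=\Omega(k_0\log(1/\delta))$ and hence window degree $\Omega(\delta^{-2}\log(1/\delta)\log(B/\epsilon))$ --- precisely the blowup you say must be avoided, with no mechanism actually supplied to avoid it.

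The missing ingredient is the Fourier-series step (Lemma~65 of Ref.~\cite{gilyen2019quantum}, restated as Lemma~\ref{lem:fourier} in the paper). One first approximates the truncated Taylor polynomial on the rescaled interval by a short Fourier sum $\sum_{m=-M}^{M}c_m e^{i\pi m y/2}$ with $\|{\bf c}\|_1\le\|{\bf b}\|_1\le B$, and then replaces each mode by its Jacobi--Anger polynomial approximation, which is bounded by $1$ on all of $[-1,1]$. The resulting polynomial $P_{\tilde f}$ therefore satisfies $\|P_{\tilde f}\|_{[-1,1]}\le B$ \emph{globally}, not just on the Taylor convergence interval, so the window only has to be $\le\epsilon/(3B)$ outside $[-r-\delta/2,r+\delta/2]$, costing degree $\cO(\frac1\delta\log\frac B\epsilon)$. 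Without this intermediate step your product construction does not close at the claimed degree; with it, the rest of your outline (triangle inequalities on the two regions, rescaling by $1/(1+\epsilon)$) goes through as in the paper.
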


We provide a self-contained proof of this Lemma in Appendix~\ref{app:polyapprox}. Here, we neglect the cost of classical processing (e.g., for finding the polynomial coefficients), which remains efficient; it is discussed briefly in Appendix~\ref{app:polyapprox} and in depth in Ref. \cite{gilyen2019quantum}.

For the specific case of the sine and cosine functions, Lemma~\ref{lem:polyapprox} implies the following.

\begin{lemma}
    Let $\epsilon > 0$ be the error, $t \in \mathbb R$, $\Gamma > 0$, and $\lambda \ge \Gamma$. Then, there exists polynomial approximations $P_{\epsilon,f}(x)$ to $f(x)=\sin (t \lambda x^2)$ and $f(x)=\cos (t \lambda x^2)$ of degree
    \begin{align}
         \cO\left(t\sqrt{\lambda \Gamma} + \sqrt{\frac{\lambda}{\Gamma}}\log \frac{1}{\epsilon}\right)
    \end{align}
    that satisfy Eqs.~\eqref{eq:polyapproxcond1} and~\eqref{eq:polyapproxcond2} with error $2 \epsilon$.
\end{lemma}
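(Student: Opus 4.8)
The plan is to apply Lemma~\ref{lem:polyapprox} twice, once for each target function $f(x) = \cos(t\lambda x^2)$ and $f(x) = \sin(t\lambda x^2)$, with a carefully chosen interval width and a carefully chosen bound $B$. First I would set $r = \sqrt{\Gamma/\lambda}$, which is exactly the half-width of the interval $r_\Gamma$ in Eq.~\eqref{eq:sgainterval} on which we need the approximation to be accurate (condition~\eqref{eq:polyapproxcond2}). The buffer $\delta$ should be chosen proportional to $r$, say $\delta = r$ (or $\delta = r/2$ — any constant fraction works), so that $\delta$ is in the allowed range $(0, r]$ and so that $1/\delta = \Theta(\sqrt{\lambda/\Gamma})$. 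The outer interval $[-r-\delta, r+\delta]$ then still lies inside $[-1,1]$ since $\Gamma \le \lambda$ (up to the constant from $\delta$; if $\delta = r$ gives $2r \le 1$ one needs $\Gamma \le \lambda/4$, so it is cleaner to take $\delta = r/2$ or to rescale — I would use $\delta$ a small enough constant multiple of $r$ that $r + \delta \le 1$, e.g.\ $\delta = \min\{r, 1-r\}$, noting $r \le 1$).

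The key computation is bounding $B \ge \sum_{k=0}^\infty |a_k|(r+\delta)^k$ where $a_k$ are the Taylor coefficients of $\cos(t\lambda x^2)$ or $\sin(t\lambda x^2)$ about $0$. Since $\cos(y) = \sum_j (-1)^j y^{2j}/(2j)!$ and $\sin(y) = \sum_j (-1)^j y^{2j+1}/(2j+1)!$, substituting $y = t\lambda x^2$ gives a power series in $x$ with all coefficients bounded in absolute value by the corresponding ones of $\cosh(t\lambda x^2)$ and $\sinh(t\lambda x^2)$; hence $\sum_k |a_k|(r+\delta)^k \le \cosh(t\lambda(r+\delta)^2) \le e^{t\lambda(r+\delta)^2}$. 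With $r+\delta = \Theta(\sqrt{\Gamma/\lambda})$ we get $t\lambda(r+\delta)^2 = \Theta(t\Gamma)$, so we may take $B = e^{O(t\Gamma)}$, i.e.\ $\log B = O(t\Gamma)$. Plugging into the degree bound of Lemma~\ref{lem:polyapprox}, the degree is
\begin{align}
    \cO\!\left(\frac{1}{\delta}\log\frac{B}{\epsilon}\right) = \cO\!\left(\sqrt{\frac{\lambda}{\Gamma}}\left(t\Gamma + \log\frac{1}{\epsilon}\right)\right) = \cO\!\left(t\sqrt{\lambda\Gamma} + \sqrt{\frac{\lambda}{\Gamma}}\log\frac1\epsilon\right),
\end{align}
which is exactly the claimed degree.

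Next I would verify the two required conditions. Condition~\eqref{eq:polyapproxcond2}, namely $\max_{x\in r_\Gamma}|P_{\epsilon,f}(x)-f(x)| \le \epsilon$, is immediate from the first conclusion $\|f-P\|_{[-r,r]}\le\epsilon$ of Lemma~\ref{lem:polyapprox} since $r_\Gamma = [-r,r]$. Condition~\eqref{eq:polyapproxcond1}, the bound $|P_{\epsilon,f}(x)|\le 1$ on all of $[-1,1]$, needs the second conclusion $\|P\|_{[-1,1]} \le \epsilon + \|f\|_{[-r-\delta/2, r+\delta/2]}$ together with the observation that $|\cos(t\lambda x^2)| \le 1$ and $|\sin(t\lambda x^2)| \le 1$ pointwise, so $\|f\|_{[-r-\delta/2,r+\delta/2]} \le 1$ and thus $\|P\|_{[-1,1]} \le 1+\epsilon$. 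This gives the bound with an extra additive $\epsilon$, which is why the lemma statement allows error $2\epsilon$: one can either absorb this by rescaling $P_{\epsilon,f} \mapsto P_{\epsilon,f}/(1+\epsilon)$ (which then satisfies $\|\cdot\|_{[-1,1]} \le 1$ exactly and degrades the $r_\Gamma$-accuracy to at most $2\epsilon$), or simply state both conditions with tolerance $2\epsilon$ as written. I would spell out the rescaling trick so that~\eqref{eq:polyapproxcond1} holds exactly with the slightly relaxed $2\epsilon$ in~\eqref{eq:polyapproxcond2}.

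The main obstacle — really the only nontrivial point — is getting the bound on $\sum_k |a_k|(r+\delta)^k$ tight enough: a naive bound using $|t\lambda| (r+\delta)^2 \le 1$ would be false here (that quantity is $\Theta(t\Gamma)$, which can be large), so one must genuinely track the $e^{O(t\Gamma)}$ growth and confirm it only contributes the additive $t\sqrt{\lambda\Gamma}$ term after multiplication by $1/\delta = \Theta(\sqrt{\lambda/\Gamma})$, rather than something worse. Everything else — checking $\delta \in (0,r]$, checking $r+\delta \le 1$ so the domain is inside $[-1,1]$, checking $\epsilon < 3B/2$ holds under the mild hypothesis $t\Delta \ge \epsilon$ (hence $t\Gamma \ge \epsilon$, so $B \ge e^{t\Gamma} \gg \epsilon$) — is routine bookkeeping.
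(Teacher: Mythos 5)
Your proposal is correct and follows essentially the same route as the paper: choose $r=\Theta(\sqrt{\Gamma/\lambda})$ and $\delta=\Theta(r)$, bound $B$ by $\cosh(t\lambda(r+\delta)^2)\le e^{\cO(t\Gamma)}$, read off the degree $\cO(\frac{1}{\delta}\log\frac{B}{\epsilon})$ from Lemma~\ref{lem:polyapprox}, and rescale by $1/(1+\epsilon)$ to get exact normalization at the cost of error $2\epsilon$. Your extra care in choosing $\delta$ so that $r+\delta\le 1$ is if anything slightly more scrupulous than the paper, which simply takes $\delta=r$.
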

\begin{proof}
Let $r =\delta =  \sqrt{\Gamma/\lambda}$. The Taylor expansions of $\sin(t\lambda x^2)$ and $\cos(t\lambda x^2)$ at $x = r+\delta$ are such that
\begin{align}
    B_\mathrm{sin} &= \sinh\left[\lambda t \left( {2\sqrt{\Gamma/\lambda}}\right)^2\right] < e^{4\Gamma |t|}, \quad 
    B_\mathrm{cos} = \cosh\left[\lambda t \left( {2\sqrt{\Gamma/\lambda}}\right)^2\right] < e^{4\Gamma |t|}\;.
\end{align}
We can then choose $B = \exp(4\Gamma |t|)$ for these cases. Applying Lemma~\ref{lem:polyapprox} to polynomial approximations of $\sin(t\lambda x^2)$ and $\cos(t\lambda x^2)$ with error $\epsilon$, we obtain $\frac{1}{\delta}\log(B/\epsilon) = \sqrt{\frac{\lambda}{\Gamma}} \log \frac{e^{4\Gamma t}}{\epsilon}$. Hence, for $t > 0$, a polynomial of degree
\begin{align}
    \cO\left(t\sqrt{\lambda \Gamma} + \sqrt{\frac{\lambda}{\Gamma}}\log \frac{1}{\epsilon}\right)
\end{align}
suffices to produce the approximating polynomials  for the sine and cosine functions.

However, Lemma~\ref{lem:polyapprox} only provides normalization up to $\|P(x)\|_{[-1,1]} \leq 1+\epsilon$ since $\sin$ and $\cos$ satisfy $\|f(x)\|_{[-r-\delta/2,r+\delta/2]} \leq 1$. Since QSVT can implement polynomials normalized to 1 or less, we rescale these polynomial approximations by $\frac{1}{1+\epsilon}$. By the triangle inequality 
\begin{align}
    \left\|\frac{1}{1+\epsilon}P(x) - f(x)\right\|_{[-r, r]} \leq \frac{1}{1+\epsilon} \|P(x) - f(x)\|_{[-r,r]} + \frac{\epsilon}{1+\epsilon}\|f(x)\|_{[-r,r]} \leq \frac{2\epsilon}{1+\epsilon} < 2\epsilon\;.
\end{align}
Consequently, this rescaling only increases the approximation error from $\epsilon$ to $2\epsilon$ in the worst case. 
The result is two polynomials that approximate $\cos(t\lambda x^2)$ and $\sin(t\lambda x^2)$ within error $2 \epsilon$ in the domain $x \in r_\Gamma$.
\end{proof}

We now have suitable polynomial approximations of $\sin(t\lambda x^2)$ and $\cos(t\lambda x^2)$. 
We can then use QSVT to simulate the Hamiltonian as explained above.
This requires access to the block-encoding $V_{\rm SGA}$
of  ${\bf H}_{\rm SGA}/\sqrt \lambda$; see Eq.~\eqref{eq:hsgablec}.
To apply the corresponding polynomials, QSVT uses the controlled-$V_{\rm SGA}$ and its inverse
a number of times determined by the degree of the polynomials,
which was shown to be $\cO(t\sqrt{\lambda \Gamma} + \sqrt{\frac{\lambda}{\Gamma}}\log \frac{1}{\epsilon})$.
Together with Sec.~\ref{sec:sga}, this completes the proof of Lemma~\ref{lem:GAsimulation}.
We refer to Ref.~\cite{gilyen2019quantum} for more details on QSVT.

\subsection{Efficient implementation of \texorpdfstring{$V$}{V}}
\label{sec:sgablenc}

The main result in Thm.~\ref{thm:mainalgorithm} is stated in terms of the block-encoding $T$, but our quantum algorithm is essentially that for gap-amplifiable Hamiltonians, i.e., Lemma~\ref{lem:GAsimulation}. Now
we prove Lemma~\ref{lem:hsgablec}, which relates $T$ to the block-encoding $V$ of a corresponding $A$.

The block-encoding $T$ is an $M \times M$ unitary matrix, where we assume $M \ge N$ without loss of generality. It satisfies $\Pi T \Pi = H'=(H-E)/\lambda$, for some $E \in \mathbb R$ and $\lambda >0$; see Eq.~\eqref{eq:blenc}. Here, $\Pi$ is the projector onto $\cH=\mathbb C^N$.  Let 
\begin{align}
\label{eq:Achoice}
    A:=\left( \frac {T^\dagger+ \one_M}2 \right) \Pi \;
\end{align} 
be of dimension $M \times N$.
It follows that, 
\begin{align}
    2 \lambda A^\dagger A &=2 \lambda \Pi \left( \frac {T+ \one_M}2 \right)\left( \frac {T^\dagger + \one_M}2 \right) \Pi \\
    & = \lambda (\one_N + (H-E)/\lambda) \\
    & = H - F\;,
\end{align}
where $F=E-\lambda$. This proves that, when access to $T$ and $T^\dagger$ is given, the Hamiltonian can be equivalently expressed as a gap-amplifiable Hamiltonian (up to an additive constant). 

Then, to simulate $H$, or equivalently $2\lambda A^\dagger A$, using the approach for gap-amplifiable Hamiltonians of Lemma~\ref{lem:GAsimulation} (also discussed in 
Secs.~\ref{sec:sga} and~\ref{sec:qsvt}), we need to construct the block-encoding $V$ of $A$ defined in Eq.~\eqref{eq:Achoice}. This satisfies
$\Pi' V \Pi=A$ for corresponding projectors $\Pi'$ and $\Pi$.
The only remaining challenge is that the term in brackets in Eq.~\eqref{eq:Achoice} is an LCU rather than unitary. Nevertheless, we can again use standard techniques to block-encode an LCU. For example, we can define
\begin{align}
\label{eq:blecAfromT}
    V:= ({\rm H} \otimes \one_M) \left( \ketbra 0 \otimes T^\dagger + \ketbra 1 \otimes \one_M \right) ({\rm H} \otimes \one_M) \;,
\end{align}
where H is the Hadamard gate, which implies
\begin{align}
    \bra 0 V \ket 0 = \left(\frac{T^\dagger+\one_M} 2 \right) \;.
\end{align}
Equivalently, $\Pi' V \Pi = A$, which is the desired result. In the notation of Sec.~\ref{sec:sga}, the unitary $V$ of Eq.~\eqref{eq:blecAfromT} acts on the space $\mathbb C^{Q}$ of $q=m+1$ qubits, and $\Pi'$ is the projector that takes $\mathbb C^{2M}$ to $\mathbb C^M$.

Implementing $V$ requires using a controlled-$T^\dagger$ operation once, in addition to a constant number of two-qubit gates.
Also, if the initial state $\ket \psi$ is supported on the subspace associated with eigenvalues of $H'$ in $[-1,-1+\Delta/\lambda]$, then it is supported on the subspace associated with energies at most $\Delta$ of the Hamiltonian $2 \lambda A^\dagger A$. This proves Lemma~\ref{lem:hsgablec} that, together with
Lemma~\ref{lem:GAsimulation}, imply Thm.~\ref{thm:mainalgorithm}. We note that the factor of $2$ that appears in the block-encoding $2\lambda A^\dagger A$ occurs because $\Delta$ is restricted to $[0, 2\lambda]$ to ensure the block-encoding $T$ has eigenvalues in $[-1, 1]$. Although we took $H=\lambda A^\dagger A$ in the previous subsections, the constant factor does not affect the reported query complexity of $\cO(t\sqrt{\lambda \Gamma} + \sqrt{\lambda/\Gamma}\log(1/\epsilon))$ in Thm.~\ref{thm:mainalgorithm}.

\section{Quantum algorithm for time evolution of low-energy states of general gap-amplifiable Hamiltonians}
\label{sec:FrustrationFree}

Our quantum algorithm for simulating time evolution of low-energy states is based on that for simulating Hamiltonians that can be expressed as $H=\lambda A^\dagger A$, for $\lambda >0$. 
In this case, if the initial state is of low-energy $\Delta \ll \lambda$ (e.g., $\Delta=o(\lambda)$), Lemma~\ref{lem:GAsimulation} can give an asymptotic improvement over generic methods like QSP.
We called these ``gap-amplifiable Hamiltonians'', but this definition can be extended to also address a more general case where $H=\sum_{l=1}^L \lambda_l A_l^\dagger A_l^{}$, $\lambda_l >0$ for all $l \in [L]$, and $A_l$ are matrices of dimension $M \times N$. 
Examples of such Hamiltonians are frustration-free Hamiltonians~\cite{schuch2008computational,BOO10,BT09,chen2012ground}, where each term $\lambda_l A_l^\dagger A_l^{}$ might correspond to a local interaction between spins or other physical systems. 
In this section we show how to use the quantum algorithm of Lemma~\ref{lem:GAsimulation} to also give an improved simulation method for these ``general gap-amplifiable Hamiltonians''.

\subsection{General gap-amplifiable Hamiltonians}
\label{sec:gap}

We begin by providing a general definition of a $(\lambda,L)$-gap-amplifiable Hamiltonian.

\begin{definition}[General gap-amplifiable Hamiltonian]
\label{def:h}
A Hamiltonian $H \succeq 0$ acting on $\cH \equiv \mathbb C^N$ is said to be 
$(\lambda,L)$-\emph{gap-amplifiable} if there exists 
$L \ge 1$ such that: i) $H$ can be expressed as
$H =\sum_{l=0}^{L-1} \lambda_l A_l^\dagger A_l^{}$, where $\lambda_l>0$ for all $l \in [L]$, ii) efficient
 access to block-encodings of each $A_l$ are provided, and iii) $\sum_{l=0}^{L-1} \lambda_l \le \lambda$.
\end{definition}

The case $L=1$ was  analyzed in Sec.~\ref{sec:alg}.
In general, the operators $A_l$ can be expressed as
rectangular matrices of dimension $M \times N$. 
A block-encoding $V_l$ of $A_l$ might involve two different projectors $\Pi$ and $\Pi'$ onto $\mathbb C^N$ and $\mathbb C^M$, respectively, such that $\Pi' V_l \Pi = A_l$.
The unitaries $V_l$ might act on a larger space $\mathbb C^Q$, where $Q \ge \max(M,N)$.

Each block-encoding $V_l$ will in practice incur some cost to construct. This cost can be measured as a query complexity to an oracle that provides some other form of access to $A_l$. Later, in Secs.~\ref{sec:accessmodels} and~\ref{sec:accessGAcase}, we will discuss two common access models for quantum algorithms: 
\begin{enumerate}[label=\roman*),wide, labelwidth=0pt, labelindent=0pt]
\item the  LCU model, which provides coefficients $\beta_{l,j} > 0$ and access to unitaries $U_{l,j}$, $l \in [L]$ and $j \in [J]$, such that $A_l = \frac 1 {\beta_l }\sum_{j=1}^J \beta_{l,j} U_{l,j}$, for $\beta_l:=\sum_{j=1}^J \beta_{l,j}$, and
\item the sparse matrix model, which provides access to the positions and values of nonzero matrix elements of each sparse $A_l$, $l \in [L]$.
\end{enumerate}
In our definition, we assume \emph{efficient access} to the block-encoding of $A_l$; this refers to requiring only a constant number of queries to the oracles in the LCU or sparse matrix model to construct each $V_l$.

\subsection{Quantum algorithm and complexity}
\label{sec:complexity}

To use the same algorithm as that of Lemma~\ref{lem:GAsimulation},
we begin by showing a simple reduction
from general $(\lambda,L)$-gap-amplifiable Hamiltonians to $(\lambda,1)$-gap-amplifiable Hamiltonians, i.e., $\lambda A^\dagger A$.

\begin{lemma}[Reduction to $(\lambda,1)$-gap amplifiable Hamiltonians] 
\label{lem:GAreduction}
Consider a $(\lambda,L)$-gap-amplifiable Hamiltonian 
$H$ acting on $\cH\equiv \mathbb C^N$ and assume access to the  unitary $W=\sum_{l=0}^{L-1} \ketbra l \otimes V_l$ and its inverse. Each unitary $V_l$ 
is the block-encoding of  the $M\times N$ dimensional $A_l$ and satisfies $A_l=\Pi' V_l \Pi$, where $\Pi'$ and $\Pi$ are orthogonal projectors onto $\cH'\equiv\mathbb C^M$
and $\cH \equiv \mathbb C^N$, respectively. 
Then, there exists a matrix $A$ of dimension $(LM)\times N$ such that $H$ can be alternatively expressed as $H=\lambda A^\dagger A$. A block-encoding $V$ of $A$  can be constructed using   controlled-$W$ 
once, and additional $\cO(L)$ two-qubit gates. Moreover, if $\ket \psi \in \cS_{\Delta}$ is any state supported exclusively on the subspace corresponding to energies (eigenvalues) of $H$ in $[0, \Delta]$, then $\ket \psi$ is supported exclusively on a subspace corresponding to energies at most $\Delta$ of $\lambda A^\dagger A$.
\end{lemma}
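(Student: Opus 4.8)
The plan is to realize $A$ as the vertical stacking of the rescaled blocks $A_l$. Identifying $\mathbb C^{LM}$ with $\mathbb C^L\otimes\mathbb C^M$, I would set
\begin{align}
A := \sum_{l=0}^{L-1}\ket{l}\otimes\sqrt{\frac{\lambda_l}{\lambda}}\,A_l\;,
\end{align}
an $(LM)\times N$ matrix. Then $\lambda A^\dagger A=\sum_{l=0}^{L-1}\lambda_l A_l^\dagger A_l^{}=H$, so $H$ is expressed as a $(\lambda,1)$-gap-amplifiable Hamiltonian with this choice of $A$. Moreover $\|A_l\|\le 1$ since $A_l=\Pi' V_l\Pi$ with $V_l$ unitary, and together with $\sum_l\lambda_l\le\lambda$ this gives $\|A\|^2=\|H\|/\lambda\le\lambda^{-1}\sum_l\lambda_l\|A_l\|^2\le 1$, as required for a block-encoding.

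Next I would construct the block-encoding $V$ of $A$ from $W$ and a ``prepare'' unitary on the label register. Let $G$ be a unitary on $\mathbb C^L$, implementable with $\cO(L)$ two-qubit gates by a standard state-preparation circuit, such that $G\ket{0}=\sum_{l=0}^{L-1}\sqrt{\lambda_l/\lambda}\,\ket{l}$; if $\sum_l\lambda_l<\lambda$ the target is subnormalized, which is absorbed by a standard padding (one flag qubit and an auxiliary label with $A_L:=0$ placed in the ``bad'' subspace), so we may assume $\sum_l\lambda_l=\lambda$. Define $V:=W\,(G\otimes\one)$ acting on $\mathbb C^L\otimes\mathbb C^Q$, with projectors $\Pi_A:=\ketbra{0}_L\otimes\Pi$ onto $\mathbb C^N$ and $\Pi_A':=\one_L\otimes\Pi'$ onto $\mathbb C^{LM}$. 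Pushing $G\otimes\one$ through the block-diagonal structure $W=\sum_l\ketbra{l}\otimes V_l$ and then projecting gives
\begin{align}
\Pi_A'\, V\,\Pi_A=\sum_{l=0}^{L-1}\sqrt{\frac{\lambda_l}{\lambda}}\,\ket{l}\!\bra{0}\otimes\Pi'V_l\Pi=\sum_{l=0}^{L-1}\sqrt{\frac{\lambda_l}{\lambda}}\,\ket{l}\!\bra{0}\otimes A_l\;,
\end{align}
which is precisely $A$ under the identification above (it maps $\ket\psi\mapsto\sum_l\sqrt{\lambda_l/\lambda}\,\ket{l}\otimes A_l\ket\psi$). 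Since $V$ uses one application of $W$ (one controlled-$W$ when the controlled version is needed for the QSVT walk operator), plus the $\cO(L)$ gates of $G$, the stated construction cost follows.

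Finally, the low-energy claim is immediate: $\lambda A^\dagger A$ and $H$ are literally the same operator on $\cH$, so any $\ket\psi\in\cS_\Delta$ supported on eigenvalues of $H$ in $[0,\Delta]$ is supported on eigenvalues of $\lambda A^\dagger A$ at most $\Delta$. The only mildly delicate points are bookkeeping --- keeping the two projectors $\Pi$ (domain) and $\Pi'$ (codomain) of the rectangular $A_l$ straight through the identification $\mathbb C^{LM}\cong\mathbb C^L\otimes\mathbb C^M$, and handling the possible subnormalization of the label state when $\sum_l\lambda_l<\lambda$ --- and I do not expect either to present a genuine obstacle.
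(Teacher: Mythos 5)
Your proposal is correct and follows essentially the same route as the paper: the same definition $A=\sum_l\sqrt{\lambda_l/\lambda}\,\ket{l}\otimes A_l$, the same block-encoding $V=W(G\otimes\one)$ with a state-preparation unitary $G$ costing $\cO(L)$ gates, and the same trivial low-energy observation. Your explicit handling of the subnormalized case $\sum_l\lambda_l<\lambda$ is a minor point the paper glosses over by taking $\lambda=\sum_l\lambda_l$, but it does not change the argument.
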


\begin{proof}
    The statement on the support of $\ket \psi$
    follows directly from the fact that $\lambda A^\dagger A$ is also an expression for $H$, which is PSD by assumption. 
    Since this is $(\lambda,L)$-gap-amplifiable, it can be expressed as 
    $H=\sum_{l=0}^{L-1} \lambda_l A_l^\dagger A_l^{}$, where $\lambda_l>0$ and each $A_l$ is of dimension $M \times N$. We define
    \begin{align}
        A:= \sum_{l=0}^{L-1} \sqrt{\frac{\lambda_l}{\lambda}}\ket l \otimes A_l \;,
    \end{align}
    which is of dimension $M'\times N$, $M':=LM$. Here,  $\lambda=\sum_{l=0}^{L-1} \lambda_l$ and note that
    \begin{align}
        A^\dagger A = \frac 1 {\lambda} \sum_{l=0}^{L-1} \lambda_l A^\dagger_l A^{}_l \;.
    \end{align}
   This implies $\|A\|\le \frac 1\lambda \sum_l \lambda _l \|A^\dagger_l A^{}_l\|\le 1$ and 
   \begin{align}
       H = \lambda A^\dagger A \;.
   \end{align}
   We seek a block-encoding of this $A$, which would allow us to simulate $H$ as explained in Sec.~\ref{sec:sga}. Assume $L=2^l$ without loss of generality and 
    let $G$ be a unitary that performs the map
    \begin{align}
      G:\ket{0} \to  \sum_{l=0}^{L-1} \sqrt{\frac{\lambda_l}{\lambda}}\ket{l}\;,
\end{align}
where $\ket 0 = \ket 0^{\otimes l}$ in this case.
This can be implemented with $\cO(L)$ two-qubit gates using standard methods. We also assume that each block-encoding $V_l$ of $A_l$ acts on the space of $q$ qubits, i.e., on $\mathbb C^{Q}$ for $Q=2^q$. Note that
\begin{align}
   W (G \ket 0 \otimes \one_Q) = \sum_{l=0}^{L-1} \sqrt{\frac{\lambda_l}{\lambda}} \ket l \otimes V_l \;.
\end{align}
This implies that
\begin{align}
    \Pi' W (G \otimes \one_Q) \Pi = \sum_{l=0}^{L-1} \sqrt{\frac{\lambda_l}{\lambda}} \ket l \otimes A_l =A \;.
\end{align}
The block-encoding is then $V=W (G \otimes \one_Q)$,
which requires one use of $W$ and $G$.
\end{proof}

The ability to simulate low-energy states of a general $(\lambda,L)$-gap-amplifiable Hamiltonian follows as a consequence of the above results. While Thm.~\ref{thm:mainalgorithm} is phrased in terms of queries to a Hamiltonian block-encoding with low-energy subspace $[-1, -1+\Delta/\lambda]$, the phrasing below is more natural for settings such as frustration-free systems.

\begin{theorem}[Quantum algorithm for time evolution of low-energy states of general gap-amplifiable Hamiltonians]
\label{thm:gapalgorithm}
Let $\epsilon > 0$ be the error and $t>0$ be the evolution time. Consider a $(\lambda,L)$-gap-amplifiable Hamiltonian $H=\sum_{l=0}^{L-1} \lambda_l A^\dagger_l A_l$ acting on $\cH \equiv \mathbb C^N$, where $\lambda_l >0$, and each $A_l$ has dimension $M \times N$. Assume access to unitaries $V_l$ block-encoding each $A_l$. Let $\ket \psi \in \cS_{\Delta}$ be any state supported exclusively on the subspace associated with energies (eigenvalues) of $H$ at most $\Delta > 0$,  and assume $t\Delta \ge \epsilon$. Then, for any $\Gamma$ such that $\Delta \le \Gamma \le \lambda$, there exists a quantum algorithm that implements a unitary $U$ and satisfies 
$|\!\bra{\psi}\Pi U^\dagger \Pi e^{-itH}\ket{\psi}\!| \geq 1 - \epsilon$, using
\begin{align}
    \cO\left(t\sqrt{\lambda\Gamma} + \sqrt{\frac{\lambda}{\Gamma}} \log \frac{1}{\epsilon}\right)
\end{align}
queries to each controlled-$V_l$ and its inverse.
\end{theorem}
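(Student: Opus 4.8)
The plan is to derive Theorem~\ref{thm:gapalgorithm} by composing the reduction of Lemma~\ref{lem:GAreduction} with the algorithm of Lemma~\ref{lem:GAsimulation}, and then translating queries to the combined block-encoding into queries to the individual $V_l$.

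First I would apply Lemma~\ref{lem:GAreduction}. It expresses $H = \lambda A^\dagger A$ for an $(LM)\times N$ matrix $A$ with $\|A\| \le 1$, and it constructs a block-encoding $V = W(G \otimes \one_Q)$ of $A$, where $W = \sum_{l=0}^{L-1} \ketbra{l} \otimes V_l$ and $G$ is a state-preparation unitary costing $\cO(L)$ two-qubit gates. The same lemma certifies that any $\ket\psi \in \cS_\Delta$ --- supported on eigenvalues of $H$ in $[0,\Delta]$ --- remains supported on the subspace of $\lambda A^\dagger A$ with eigenvalues at most $\Delta$; this is immediate since $\lambda A^\dagger A$ is literally $H$. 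Hence all hypotheses of Lemma~\ref{lem:GAsimulation} are met for this $A$, this $\lambda$, the same $t,\epsilon,\Delta$ (in particular $t\Delta \ge \epsilon$), and any $\Gamma$ with $\Delta \le \Gamma \le \lambda$.

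Next I would invoke Lemma~\ref{lem:GAsimulation} verbatim: it produces a unitary $U$ with $|\!\bra{\psi}\Pi U^\dagger \Pi e^{-itH}\ket\psi\!| \ge 1 - \epsilon$ using $\cO(t\sqrt{\lambda\Gamma} + \sqrt{\lambda/\Gamma}\log(1/\epsilon))$ queries to controlled-$V$ and its inverse. It then remains to unpack each such query. Because $V = W(G \otimes \one_Q)$, a controlled-$V$ decomposes into a controlled-$(G \otimes \one_Q)$ (costing $\cO(L)$ gates) and a controlled-$W = \sum_l \ketbra{l} \otimes (\text{controlled-}V_l)$, the latter realized with exactly one query to each controlled-$V_l$ --- as multiply-controlled $V_l$ operations, adding only $\cO(L)$ further two-qubit gates; controlled-$V^\dagger$ is handled identically in terms of controlled-$W^\dagger$ and the inverses $V_l^\dagger$. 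Substituting, the number of queries to each controlled-$V_l$ and its inverse is $\cO(t\sqrt{\lambda\Gamma} + \sqrt{\lambda/\Gamma}\log(1/\epsilon))$, with the two-qubit gate count inflated by only a multiplicative $\cO(L)$, which does not affect the stated query complexity. This proves the theorem.

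Since the two lemmas do all the real work, the only point needing care --- the main, and quite mild, obstacle --- is the bookkeeping at the interface: confirming that controlled access to each $V_l$ and $V_l^\dagger$ is precisely what builds controlled-$V$ and controlled-$V^\dagger$, and hence what the QSVT-based algorithm of Lemma~\ref{lem:GAsimulation} consumes, with no access beyond the inverses already demanded there; and keeping the $\cO(L)$ gate overhead from Lemma~\ref{lem:GAreduction} (and from implementing multiply-controlled $V_l$) cleanly separated from the query count.
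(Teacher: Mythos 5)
Your proposal is correct and follows the paper's own proof exactly: invoke Lemma~\ref{lem:GAreduction} to rewrite $H=\lambda A^\dagger A$ with block-encoding $V=W(G\otimes\one_Q)$, apply Lemma~\ref{lem:GAsimulation}, and observe that each query to controlled-$V$ costs one query to each controlled-$V_l$ plus $\cO(L)$ extra gates. The additional bookkeeping you spell out at the interface is consistent with, and slightly more explicit than, what the paper writes.
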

\begin{proof}
The result is an immediate consequence of Lemma~\ref{lem:GAreduction}, to construct a block-encoding of $A$ when expressing the Hamiltonian as $H=\lambda A^\dagger A$, and Lemma~\ref{lem:GAsimulation} to simulate the latter. The number of queries to $V=W (G \otimes \one_Q)$ appearing in Lemma~\ref{lem:GAreduction} is given by Eq.~\eqref{eq:gapampsimulation}. Each such query uses $W$ once, and this uses each controlled-$V_l$ once. 
\end{proof}

 When the initial state is such that $\Delta \ll \lambda$, or $\Delta=o(\lambda)$,
and the Hamiltonian is gap-amplifiable, the result is again an improvement over QSP.
Because any Hamiltonian in $\cH$ can be expressed as a sum of PSD terms $\sum_{l=0}^{L-1} \lambda_l A_l^\dagger A_l^{}$ up to a constant, the algorithm of Thm.~\ref{thm:gapalgorithm} can always be applied. However, in the following section, we will see that the low-energy assumption does not result in an improvement over QSP for {\em all} Hamiltonians, even in the LCU and sparse  matrix access models. Our low-energy simulation algorithm does not provide any benefit in general, because offsetting each term in a given $H$ to make it PSD may place the low-energy sector of $H$ to a relatively large energy sector of the shifted Hamiltonian. In other words, $\Delta/\lambda$ can become a constant after the shift, preventing any fast-forwarding from the low-energy assumption alone. 

Nonetheless, physically relevant classes of Hamiltonians are already $(\lambda,L)$-gap-amplifiable with reasonably small $\Delta/\lambda$. We consider the example of a frustration-free Hamiltonian $H_{\rm FF}=\sum_{l=0}^{L-1} \pi_l$. Each $\pi_l \succeq 0$ might refer to a local term in the Hamiltonian, e.g., involving a few qubits.
The ground state energy of $H_{\rm FF}$ is $E_0=0$. With some preprocessing, it is possible to find the $\lambda_l$'s and $A_l$'s such that $\pi_l=\lambda_l A^\dagger_l A^{}_l$ and construct the block-encodings $V_l$, which are also local operators in this example.  
Implementing the operation $W$ that applies the controlled-$V_l$ will incur in some gate cost, which is $\cO(L)$ for the case of local Hamiltonians. Hence, the total gate complexity to simulate the time evolution of a low-energy state of $H_{\rm FF}$ using our quantum algorithm is $\cO(L t \sqrt{\lambda \Delta})$, assuming constant precision. Applying QSP instead
to this problem would have resulted in gate complexity
$\cO(L t \lambda)$ for this case.

Similarly, if a Hamiltonian is nearly frustration-free, some benefit can be achieved.
To illustrate this, we consider the example where $H = \delta P + H_{\rm FF}$, for some Pauli string $P$, and $0 \le \delta \ll 1$. Hence, $H+\delta = \delta(P+\one_N) + H_{\rm FF}$ is a sum of PSD terms and a $(\lambda+2\delta,L+1)$-gap-amplifiable Hamiltonian. For example, we can choose
$A_0=(P+\one_N)/2$ for the first operator, so that   $A_0^\dagger A_0^{} = (P+\one_N)/2$. For an initial state $\ket{\psi} \in \cS_{\Delta_0}$ of $H$, the low-energy cutoff must now satisfy $\Delta \geq \Delta_0 + \delta$ due to the offset. Since $\lambda \approx \lambda +2 \delta$, we observe that this nearly frustration-free system can still benefit from our quantum algorithm. The additional query complexity  would only be $\cO(t\delta\sqrt{\lambda \Delta_0})$ for $\delta \ll 1$. This observation
hints at improvements over QSP for Hamiltonians that are not necessarily gap-amplifiable; although the 
asymptotic complexity might not change for a particular example, the actual gate complexity by using our algorithm could still give an improvement.

\section{No fast-forwarding theorem for simulating time evolution of low-energy states in general}
\label{sec:nff}

In general, the no fast-forwarding theorems of Refs.~\cite{BAC07,atia2017fast,haah2018quantum} establish a lower bound of $\Omega(t\|H\|)$ queries or gates needed to simulate the time evolution $e^{-itH}$. The result of Ref.~\cite{BAC07}, in particular,  follows from a reduction of the parity problem to Hamiltonian simulation. It is shown that simulating the dynamics of a particular quantum system in one dimension (i.e., a quantum walk) allows one to compute the parity of a bit string of increasing length $n$ after time $t=\Theta(n)$.
A lower bound on the query complexity of parity implies
a lower bound on the number of queries needed for Hamiltonian simulation. 
However, that proof cannot directly extend to simulating low-energy states, since the initial state is not fully confined to a low-energy subspace.

Without relying on the reduction from parity, we may gain some intuition for the simulation lower bound as follows. Hamiltonian simulation methods like QSP implement a polynomial of a unitary, and the query complexity coincides with the degree of this polynomial. In the worst case, the state $\ket \psi$ can be supported on states of
energy as high as $\|H\|$ or as small as $-\|H\|$.
If we use QSP, we would need to 
start from a polynomial approximation of $e^{-i\phi}$, where $|\phi|$ can be as large as $t \|H\|$. For constant error, this is achieved by a polynomial of degree $\Omega(t\|H\|)$. Then, QSP  must query the Hamiltonian
$\Omega(t\|H\|)$ times. 
When confined to the low-energy subspace with largest eigenvalue $\Delta$, this analysis suggests a Hamiltonian simulation lower bound of $\Omega(t\Delta)$ instead.
However, we will show that this intuition is incorrect.

As discussed in the previous section, our quantum algorithm can apply to any Hamiltonian. However, Thm.~\ref{thm:mainalgorithm} improves upon QSP when the block-encoding $T$ of $H'=(H-E)/\lambda$ can be accessed and has the property that the low-energy subspace of $H$ corresponds to the subspace associated with eigenvalues of $H'$ near 
$-1$. From simple inspection, it is not obvious how to efficiently construct such $T$ (and $H'$) when query access to the original $H$ is given (e.g., in form of a block-encoding of $H/\lambda$). Indeed, we find that no speedup is possible for simulating the time evolution of low-energy states of general Hamiltonians, and that the asymptotic complexity of QSP in Eq.~\eqref{eq:QSPcomplexity} remains optimal. Our result applies to any Hamiltonian simulation method that assumes access to $H$ in the LCU model or in the sparse matrix model, which we now formally introduce.

\subsection{Access models}
\label{sec:accessmodels}

We seek to prove that when access to $H$ is given through standard access models it is not generically possible to improve
the query complexity of methods like QSP under the low-energy assumption. To this end, we discuss two of the most common models used in Hamiltonian simulation to provide access to $H$: the LCU and sparse matrix models.

The LCU model (cf.,~\cite{BCC+15}) describes Hamiltonians of the form $H=\sum_{l=0}^{L-1} \beta_l U_l$, where $\beta_l > 0$ and $U_l$ is unitary acting on $\mathbb C^N$ for all $l$. It assumes access to SELECT and PREPARE oracles defined by
\begin{align}
\label{eq:LCUaccess}
    \mathrm{SELECT}(H) = \sum_{l=0}^{L-1} \ketbra l \otimes U_l, \quad 
    \mathrm{PREPARE}(H) : \ket{0} \to \sum_{l=0}^{L-1} \sqrt{\frac{\beta_l}{\beta}}\ket{l}\;,
\end{align}
where $\beta = \sum_{l=0}^{L-1} \beta_l$.
Without loss of generality, we can assume $L=2^l$, so that
$\ket 0$ above refers to the $l$-qubit state $\ket 0^{\otimes l}$. A useful property used by some quantum algorithms is
\begin{align}
\label{eq:LCUblec}
  \frac 1 \beta H=  \bra{0}^{\otimes l} (\mathrm{PREPARE} (H))^\dagger{\rm SELECT}(H) \mathrm{PREPARE}(H) \ket{0}^{\otimes l} \;,
\end{align}
which gives essentially a block-encoding of $H/\beta$.
The query complexity in the LCU model is measured by the number of times these oracles are used, together with their inverses and controlled versions.

In contrast, the sparse matrix model gives access to the  elements of a $d$-sparse matrix $H$~\cite{BC12}. It uses an oracle $O_H$ that allows to perform the transformations
\begin{align}
    \ket{j}\ket{k}\ket{z} \rightarrow \ket{j}\ket{k}\ket{z \oplus H_{jk}}, \quad  \ket{j}\ket{\ell} \rightarrow \ket{j}\ket{\nu(j,\ell)}\;.
\end{align}
Here, $j,k \in [N]:=\{1,\ldots,N\}$ label a row and column of $H$, and $H_{jk}$ is the corresponding matrix entry (specified in binary form). Also, $\ell \in [d]$ and $\nu: [N] \times [d] \rightarrow [N]$ computes the row index of the $\ell^{\rm th}$ nonzero entry of the $j^{\rm th}$ row in place. 
Like in the LCU model, having oracle access to the matrix allows one to construct a block-encoding of the Hamiltonian --- for example, a block-encoding of $H/(d\|H\|_{\max})$
(cf. Sec. 4.1 of Ref.~\cite{low2019qubitization} or Ref.~\cite{CKS17}).
The query complexity in the sparse matrix model is measured by the number of times $O_H$ is used, together with its inverse and controlled version.

A block-encoding of $H/\lambda$, for some $\lambda>0$, can be constructed using either the LCU or the sparse matrix model. 
Compared to simply accessing the block-encoding directly, both the LCU and sparse access models provide stronger access to $H$. Our lower bounds for these models are accordingly stronger than restricting access to the block-encoding model. Since the LCU and sparse matrix access models are incomparable, we must provide lower bounds in both models.

\subsection{Lower bound: formal statement}
\label{sec:nfflb}

Given a Hamiltonian $H$, 
we wish to show that even if the initial state is supported
on a subspace associated with energies that are vanishingly small, e.g., $\Delta = o(\norm{H})$ or $\Delta=o(\lambda)$, time evolution of low-energy states requires $\Omega(t\norm{H})$ queries in the worst case. 
We will construct a Hamiltonian $H$ that can be efficiently block-encoded as $H/\lambda$ for $\lambda = \Theta(\norm{H})$. In our algorithm (Thm.~\ref{thm:mainalgorithm}), the query complexity can be written in terms of independent and asymptotically large parameters $t\Delta, \lambda/\Delta$, and $1/\epsilon$. Here, we only show a lower bound for the term that depends linearly on $t$, and we disregard the dependence on $1/\epsilon$; that is, we are assuming constant error or instances where $t \lambda =\Omega(\log(1/\epsilon))$. A lower bound in terms of $1/\epsilon$ in the opposite regime $t \lambda =\cO(\log(1/\epsilon))$ for general Hamiltonians is already given in Ref.~\cite{berry2014exponential}.

In order to vary the parameters $t\Delta$ and $\lambda/\Delta$ independently of each other, we introduce a Grover-like problem defined by two integer parameters, $K \ge 1$ and $N \ge 1$. To show the general no fast-forwarding theorem, we will evolve a Hamiltonian $H_{K,N}$ and examine its low-energy subspace such that $t\Delta = \Theta(K)$ and $\lambda/\Delta = \Theta(\sqrt{N})$. In the following section, to show the optimality of our quantum algorithm for gap-amplifiable Hamiltonians, we will use the same task but instead take $t\Delta = \Theta(K)$ and $\lambda/\Delta = \Theta(N)$.

Before stating the no fast-forwarding theorem, we briefly introduce the Grover-like problem, which we refer to as $\mathrm{PARITY}\circ\mathrm{OR}_{K,N}$. Informally, the problem consists of performing $K$ instances of quantum search, each over $N$ items. There is a single marked element in each instance. The output is the parity of a bit string of size $K$, where the $k^{\rm th}$ bit is determined by the $k^{\rm th}$ instance, and is 0 or 1 whether 
the marked element is in the first $N/2$ items or not, respectively.  As shown in Appendix~\ref{app:por} via the adversary method, this problem requires $\Omega(K\sqrt{N})$ queries to a Grover-like oracle. 
We defer the formal definition of this problem 
to Appendix~\ref{app:por} as it is not essential for the analysis in this section. However, we note that this problem is related to a quantum state preparation problem, which we now describe; see Lemma~\ref{cor:state-lb} for its proof.

\begin{lemma}[Quantum state preparation lower bound]
\label{lem:state-lb}
For all $k \in [K]$ define functions $f_k : [N] \to \{0,1\}$ such that $f_k(x_k) = 1$ for some $x_k \in [N]$ and $f_k(x\neq x_k)=0$ otherwise. Then, at least $\Omega(K\sqrt{N})$ queries to an oracle $O_f:\ket{k}\ket{x} \to (-1)^{f_k(x)}\ket{k}\ket{x}$ acting on $\mathbb C^{K \times N}$ are required to prepare a quantum state
\begin{align}
    \ket{\phi} = \bigotimes_{k=1}^K \ket{\phi_k}
\end{align}
satisfying $|\!\bra{\phi_k}\ket{x_k}\!|^2 \ge \alpha^2$ for constant $\alpha \in (0, 1)$. Here, $\ket{\phi_k} \in \mathbb C^N$ and $\ket \phi \in \mathbb C^{N^K}$. 
\end{lemma}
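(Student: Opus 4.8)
The plan is to reduce the single-instance Grover search lower bound, amplified over $K$ independent copies, to the state-preparation task. First I would recall the standard quantum adversary (or polynomial) lower bound for unstructured search: distinguishing the all-zero function from one with a single marked element $x_k$ requires $\Omega(\sqrt N)$ queries to $O_{f_k}$; more precisely, any algorithm that outputs a state with overlap-squared at least $\alpha^2$ with $\ket{x_k}$ must make $\Omega(\sqrt N)$ queries, since such a state can be measured to recover $x_k$ with constant probability, solving search. This is the single-copy ($K=1$) base case.

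The main work is the direct-sum / tensor-product amplification: preparing $\ket\phi = \bigotimes_{k=1}^K \ket{\phi_k}$ with $|\!\braket{\phi_k}{x_k}\!|^2 \ge \alpha^2$ for every $k$ simultaneously is at least as hard as solving $K$ independent search instances in parallel, which is exactly the content of the $\mathrm{PARITY}\circ\mathrm{OR}_{K,N}$ lower bound proved in Appendix~\ref{app:por}. Concretely, given such a state I can measure each register $k$ in the computational basis and obtain $x_k$ with probability $\ge \alpha^2$; by a Chernoff/Markov argument a constant fraction of the $x_k$ are recovered, and in particular the first-half-versus-second-half bit of each $x_k$ is learned with constant bias, which lets one compute (a noisy version of) the parity of that length-$K$ string. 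Since the oracle $O_f$ acting on $\mathbb C^{K\times N}$ is exactly the oracle for the $K$-partition search problem (one query to $O_f$ equals one query to the combined Grover-like oracle), and that problem requires $\Omega(K\sqrt N)$ queries by the adversary method of Appendix~\ref{app:por}, we conclude the same $\Omega(K\sqrt N)$ lower bound for state preparation.

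The cleanest route, rather than going through parity, is to invoke Lemma~\ref{cor:state-lb} (the referenced proof) essentially directly: the state-preparation problem \emph{is} the intermediate problem used to derive the $\mathrm{PARITY}\circ\mathrm{OR}$ bound, so the plan is to state that the hybrid/adversary argument establishing $\Omega(K\sqrt N)$ for the oracle identification of $(x_1,\dots,x_K)$ applies verbatim: an algorithm making $o(K\sqrt N)$ queries cannot, for worst-case $(x_1,\dots,x_K)$, output a product state whose $k$-th factor has constant overlap with $\ket{x_k}$ for all $k$, because averaging over inputs the total "progress" toward all $K$ targets is bounded by the query count via the adversary bound, forcing some factor to have negligible overlap.

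I expect the main obstacle to be making the "simultaneous over all $k$" requirement interact correctly with the adversary bound — one must be careful that the lower bound is for the \emph{worst-case} tuple and that a product-state guarantee on \emph{every} factor (not just on average) is what is needed, and that measuring does not destroy the other registers (it does not, since the guaranteed state is a product state, so measuring register $k$ leaves register $k'$ untouched). A secondary technical point is handling the constant $\alpha$ bounded away from $1$: overlap $\alpha^2$ suffices because we only need to recover each $x_k$ with constant probability to feed into the parity/identification argument, and the $\Omega(K\sqrt N)$ adversary bound is robust to constant success probability. These are routine once the reduction is set up, so the proof should amount to: (i) cite the adversary lower bound of Appendix~\ref{app:por} for identifying $(x_1,\dots,x_K)$; (ii) show a state-preparation algorithm plus computational-basis measurement on each register solves the identification problem with constant probability using the same number of queries; (iii) conclude $\Omega(K\sqrt N)$.
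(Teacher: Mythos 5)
Your proposal targets the right lower bound (the $\mathrm{PARITY}\circ\mathrm{OR}_{K,N}$ adversary bound of Appendix~\ref{app:por}), but the reduction you describe from state preparation to that decision problem does not go through, and this is exactly where the real content of the proof lies. Measuring each register of a copy of $\ket\phi$ recovers each $x_k$ only with probability $\alpha^2<1$, so a constant fraction of the $K$ bits $g(x_k)$ are wrong; the parity of a length-$K$ string in which each bit is independently flipped with constant probability agrees with the true parity only with probability $\tfrac12(1+(2\alpha^2-1)^K)$, which tends to $1/2$ exponentially fast in $K$. A ``noisy version of the parity'' therefore does not decide $\mathrm{PARITY}\circ\mathrm{OR}$ with probability $2/3$, and no Chernoff/Markov argument rescues this: PARITY is maximally sensitive to per-bit errors. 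Your fallback routes are also not proofs: invoking Lemma~\ref{cor:state-lb} ``essentially directly'' is circular (that lemma \emph{is} the statement to be proven, not an ingredient of the $\mathrm{PARITY}\circ\mathrm{OR}$ bound), and Appendix~\ref{app:por} only bounds the Boolean decision problem via adversary composition --- it contains no adversary bound for the identification or state-generation problem that could be ``applied verbatim.''

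The missing idea is the paper's two-stage hybrid algorithm. First, take a \emph{constant} number of copies of $\ket\phi$, measure them, and \emph{verify} each candidate outcome with additional queries to $O_f$; this costs $\cO(1)$ times the preparation cost $Q$ plus $\cO(K)$ verification queries and, by concentration, correctly identifies (and certifies) the marked element in at least $\beta K$ of the instances with probability $2/3$, for any constant $\beta<1$. Second --- the step you omit --- run ordinary Grover search on the remaining $(1-\beta)K$ instances, which are known to be unsolved thanks to the verification, at a cost of $\frac{\pi}{4}(1-\beta)K\sqrt N$ further queries. Only now do you hold all $K$ marked elements and can decide $\mathrm{PARITY}\circ\mathrm{OR}_{K,N}$ exactly, giving $\cO(Q)+\cO(K)+\frac{\pi}{4}(1-\beta)K\sqrt N\ge cK\sqrt N$; choosing $\beta>1-4c/\pi$ makes the Grover term a strictly smaller constant fraction of the lower bound and forces $Q=\Omega(K\sqrt N)$. (Patching your version by repeating the preparation until \emph{all} $K$ instances succeed would instead require $\Theta(\log K)$ copies and would only yield $\Omega(K\sqrt N/\log K)$.)
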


Here, $x_k$ is the single marked element of the $k^{\rm th}$ instance of Grover search with $N$ items. Accordingly, preparation of $\ket \phi$ would allow us to solve $\cO(K)$ instances of Grover search with high probability (determined by $\alpha$). In this sense, the lower bound $\Omega(K \sqrt N)$ is not surprising, but the technical analysis of Appendix~\ref{app:por} is required to formally prove it. 

The above state preparation problem reduces to a Hamiltonian simulation problem, which we will use to prove lower bounds on time evolution. These apply to the LCU and sparse matrix models discussed in Sec.~\ref{sec:accessmodels}. The main result of this section follows.

\begin{theorem}[No fast-forwarding in the low-energy subspace of generic Hamiltonians, formal]
\label{thm:no-gen}
Let $K \ge 1$ and $N\ge 1$ be integers. Then, there exists a sequence of PSD Hamiltonians $\{H_{K,N}\}_{K,N}$, each acting on a $N^K$-dimensional Hilbert space $\cH_{K,N}:=\mathbb C^{N^K}$, with the following properties.
\begin{itemize}
    \item In the LCU model, each Hamiltonian can be expressed as $H_{K,N} = \sum_{l=0}^{L_K-1} \beta_l U_l$, where $\beta_l>0$ and $U_l$ is unitary. Also, 
    $\lambda_{K} = \sum_{l=0}^{L_K-1} \beta_l$ satisfies $\lambda_{K}=\Theta(K)$ and $L_K =\Theta(K)$.
    \item In the sparse matrix model, each Hamiltonian can be expressed a $d_K$-sparse matrix, with $d_K=\Theta(K)$. Also, $\lambda_{K}=d_K \|H_{K,N}\|_{\max}$ satisfies $\lambda_{K}=\Theta(K)$.
\end{itemize}
Assume access to the corresponding oracles for each model. Then, for each $H_{K,N}$, there exists $\Delta_{K,N}=\cO(K/\sqrt N)$, a low-energy state $\ket{\psi_{K,N}} \in \cS_{\Delta_{K,N}}$, and a time $t_{N} = \cO\left(\sqrt N\right)$, such that preparing a unitary $U$ satisfying $|\!\bra{\psi_{K,N}}\Pi U^\dagger \Pi e^{-itH_{K,N}}\ket{\psi_{K,N}}\!| \geq 2/3$ requires at least
\begin{align}
    \Omega(t_{N}\lambda_{K})
\end{align}
queries to either the $\mathrm{SELECT}$ and $\mathrm{PREPARE}$ oracles in the LCU model, or the $O_{H_{K,N}}$ oracle in the sparse matrix model. In addition,  $\Delta_{K,N}/\lambda_K =\cO(1/\sqrt N)$.
\end{theorem}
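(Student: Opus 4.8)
The plan is to construct the Hamiltonian $H_{K,N}$ as a sum of $K$ commuting ``Grover Hamiltonians'', one acting on each of the $K$ tensor factors $\mathbb C^N$, and to arrange the parameters so that time evolution of a carefully chosen low-energy product state performs the $K$ parallel Grover searches of Lemma~\ref{lem:state-lb}. Concretely, for the $k$-th instance with marked element $x_k$, let $\ket{s} = \frac{1}{\sqrt N}\sum_{x}\ket{x}$ be the uniform superposition and set $h_k := \one - \ketbra{x_k}$ acting on the $k$-th factor (and identity elsewhere); then $H_{K,N} := \sum_{k=1}^{K} h_k$. This $H_{K,N}$ is PSD, $K$-local across the factors, $O(K)$-sparse with $\|H_{K,N}\|_{\max}=\Theta(1)$, and is an LCU with $\lambda_K=\sum_l\beta_l=\Theta(K)$ and $L_K=\Theta(K)$; one query to $O_f$ (equivalently a sign oracle for each $x_k$) suffices to implement the SELECT/PREPARE oracles or the sparse oracle, so the oracle reductions in both access models are clean.

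The key analytic step is the single-instance Grover picture: within each factor, $h_k$ restricted to the two-dimensional invariant subspace $\mathrm{span}\{\ket{x_k},\ket{x_k^\perp}\}$ (where $\ket{x_k^\perp}\propto \ket{s}-\langle x_k|s\rangle\ket{x_k}$) has eigenvalue $0$ on $\ket{x_k}$ and eigenvalue $1$ on a state very close to $\ket{s}$, with the off-diagonal coupling of size $\Theta(1/\sqrt N)$. Thus $e^{-it h_k}$ acting on $\ket{s}$ rotates toward $\ket{x_k}$ at angular rate $\Theta(1/\sqrt N)$, so at time $t_N=\Theta(\sqrt N)$ we have $|\langle x_k|e^{-it_N h_k}|s\rangle|^2\ge\alpha^2$ for a constant $\alpha$. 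I would take the initial state $\ket{\psi_{K,N}}:=\ket{s}^{\otimes K}$, note it has energy $\|H_{K,N}\ket{\psi}\|$-support concentrated — more precisely it lies in the subspace of eigenvalues at most $\Delta_{K,N}=\Theta(K/\sqrt N)$ of $H_{K,N}$ (each factor contributes an eigenvalue component that, for $\ket s$, is $1-1/N$ on the ``excited'' branch but whose overlap structure keeps the relevant energy scale $\Theta(1/\sqrt N)$ per factor; here I need to be careful about what ``supported on $\cS_\Delta$'' demands — see below). Then $e^{-it_N H_{K,N}}\ket{\psi_{K,N}}=\bigotimes_k e^{-it_N h_k}\ket{s}$ is exactly a state of the form $\ket\phi=\bigotimes_k\ket{\phi_k}$ with $|\langle x_k|\phi_k\rangle|^2\ge\alpha^2$, so any unitary $U$ achieving $|\bra{\psi_{K,N}}\Pi U^\dagger\Pi e^{-itH_{K,N}}\ket{\psi_{K,N}}|\ge 2/3$ prepares (a good approximation to) such a $\ket\phi$ using $Q$ queries plus $O(1)$ oracle calls for the reduction, and Lemma~\ref{lem:state-lb} forces $Q=\Omega(K\sqrt N)=\Omega(t_N\lambda_K)$.

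The main obstacle I anticipate is the low-energy bookkeeping: the naive choice $\ket{s}^{\otimes K}$ is \emph{not} literally supported only on eigenvalues $\le \Theta(K/\sqrt N)$ of $\sum_k h_k$, since $\ket s$ has an $\Theta(1)$ component on the eigenvalue-$1$ eigenstate of each $h_k$, giving total energy $\approx K$. The fix, which I expect the authors use, is to replace $h_k$ by its \emph{gap-amplified square root} or, more simply, to rescale: work with $\tilde h_k$ having the same eigenvectors but eigenvalues $\{0,\,\Theta(1/\sqrt N)\}$ rather than $\{0,1\}$ — equivalently absorb a factor into the definition so that the ``excited'' energy per factor is $\Theta(1/\sqrt N)$ while keeping the LCU norm $\lambda_K=\Theta(K)$ by a compensating choice of the $\beta_l$ and $U_l$ (this is exactly where the freedom ``$\lambda/\Delta=\Theta(\sqrt N)$'' in the remark before the theorem is spent). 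One then re-derives: the rotation rate toward $\ket{x_k}$ becomes $\Theta(1/\sqrt N)$ slower, forcing $t_N=\Theta(\sqrt N)\cdot\Theta(\sqrt N)=\Theta(N)$?? — so more likely the cleaner route is to keep $h_k$ with eigenvalues $\{0,1\}$ but choose the initial state to be $\ket{s}$ \emph{projected} onto (or perturbed within) the genuine low-energy subspace, using that $\ket{x_k^\perp}$ itself has energy $1$ but $\ket s$ is within angle $\Theta(1/\sqrt N)$ of $\ket{x_k}$... I would resolve this by instead defining $H_{K,N}=\sqrt N\sum_k(\one-\ketbra{v_k})$ where $\ket{v_k}$ is an \emph{entangling} reflection tuned so the uniform-superposition input genuinely sits at energy $\Theta(K/\sqrt N)$ while evolving to the marked states in time $O(\sqrt N)$ — the precise construction mirroring the gap-amplification identity $H=\lambda A^\dagger A$ used elsewhere in the paper. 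Getting this spectral geometry exactly right — so that all three of $\Delta_{K,N}=O(K/\sqrt N)$, $t_N=O(\sqrt N)$, $\lambda_K=\Theta(K)$ hold simultaneously and the output state satisfies the overlap hypothesis of Lemma~\ref{lem:state-lb} up to the $2/3$ vs.\ $\alpha^2$ slack — is the delicate part; everything downstream is the black-box query lower bound and routine oracle-implementation counting.
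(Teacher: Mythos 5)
Your high-level strategy — $K$ commuting single-instance search Hamiltonians, initial state $\ket{s}^{\otimes K}$, reduction to Lemma~\ref{lem:state-lb}, and the parameter count $\Omega(K\sqrt N)=\Omega(t_N\lambda_K)$ — matches the paper. But your concrete Hamiltonian is wrong, and the fix you grope for at the end is exactly the one missing idea. With $h_k=\one-\ketbra{x_k}$ alone there is \emph{no} off-diagonal coupling in the $\{\ket{x_k},\ket{x_k^\perp}\}$ basis: $h_k$ is diagonal in any basis containing $\ket{x_k}$, so $e^{-ith_k}\ket{s}=\frac{1}{\sqrt N}\ket{x_k}+e^{-it}\sqrt{1-1/N}\,\ket{\perp_k}$ and the overlap with the marked state stays $1/\sqrt N$ for all time — no search happens. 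The claimed "$\Theta(1/\sqrt N)$ coupling'' and the rotation toward $\ket{x_k}$ both come from a second reflection that you omitted. The paper uses the Farhi--Gutmann analog-search Hamiltonian $H_{x}=(1+1/\sqrt N)\one_N-\ketbra{x}-\ketbra{s}$, which resolves \emph{both} of your difficulties simultaneously: it is PSD with eigenvalues $\{0,\,2/\sqrt N\}$ on $\mathrm{span}\{\ket{x},\ket{s}\}$ and $1+1/\sqrt N$ elsewhere, so $\ket{s}$ is \emph{exactly} supported on the subspace of energy at most $2/\sqrt N$ (hence $\ket{s}^{\otimes K}\in\cS_{2K/\sqrt N}$ with no projection or rescaling games), and $e^{-itH_x}\ket{s}$ rotates onto $\ket{x}$ exactly at $t=\tfrac{\pi}{2}\sqrt N$. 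It is also a 3-term LCU per factor, giving $\lambda_K=\Theta(K)$, with $\mathrm{SELECT}$ implementable from one call to $O_f$. Your closing speculation about "rescaling eigenvalues'' or an "entangling reflection $\ket{v_k}$'' never converges to this; as written the proposal does not produce a valid instance.

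A second, independent gap: the sparse-matrix half of the theorem cannot be dispatched by the same construction, because $\ketbra{s}$ is a dense $N\times N$ matrix. Your $h_k$ is sparse precisely because it lacks the term that makes the dynamics nontrivial. The paper handles the sparse model with a genuinely different Hamiltonian — the Childs--Goldstone spatial-search walk on a constant-dimensional lattice, $H_{x_k}=\gamma(D-A)-\ketbra{x_k}+c$ — for which $\ket{s}$ is only \emph{mostly} in the low-energy subspace, requiring an explicit projection $\ket{\psi}=\Pi_\Delta\ket{s}/\|\Pi_\Delta\ket{s}\|$ and a fidelity argument. You would need to supply something of this kind for the sparse case; asserting $\cO(K)$-sparsity of a Hamiltonian that does not perform the search does not do it.
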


The result proves that only having standard methods of access to $H$ might prevent one from using the low-energy condition of the initial state to improve upon methods like QSP. 
Independently, it gives a no fast-forwarding result 
for Hamiltonian simulation in the LCU model, complementing
the result of Ref.~\cite{BAC07} that applies to the sparse matrix model. (In general, Hamiltonians in the LCU model are not sparse.)

Given the result of no fast-forwarding in the LCU model, we can already conclude that access to a block-encoding of a general Hamiltonian does not allow fast-forwarding of time evolution in the low-energy subspace either. Otherwise there would be a contradiction, since Eq.~\eqref{eq:LCUblec} details how to efficiently build the block-encoding in this model.

We  give the proof of Thm.~\ref{thm:no-gen}, first in the LCU model and then in the sparse matrix model, below.

\subsection{No fast-forwarding in the low-energy subspace for the LCU model}
\label{sec:lcu-nff}

Consider an $n$-qubit system in $\mathbb C^N$, where $N=2^n$. For some $x \in \{0,1\}^n$, let $\ket{x}$ be a marked quantum state and $\ket s:=\frac 1 {\sqrt N} \sum_{x'=0}^{N-1} \ket{x'}$ be the uniform superposition state over all computational basis states (bitstrings).
The Hamiltonian
\begin{align}
    H_x = (1+1/\sqrt{N})\one_N - \ketbra{x} - \ketbra{s}
\end{align}
solves one instance of quantum search as follows~\cite{nielsen2010quantum}: i) initially prepare $\ket s$, ii) let it evolve for time $t_N=\Theta(\sqrt N)$, and iii) perform a measurement to obtain $x$ with high probability. The Hamiltonian acts non-trivially in a subspace of dimension 2
spanned by $\{\ket x, \ket s\}$. It has two distinct eigenvalues $\{0, \frac{2}{\sqrt{N}}\}$ in that subspace and the eigenvalue $1+\frac{1}{\sqrt{N}}$ is of multiplicity $N-2$. It follows that $H_x \succeq 0$,
 $\|H_x\|=1 + \frac{1}{\sqrt{N}} \leq 2$ and the spectral gap is $\Delta = \frac{2}{\sqrt{N}}$.

As described above, we will consider $K$ independent copies of this search problem; choosing arbitrary $K$ and $N$ permits the low-energy and evolution time to be  set independently. The total Hamiltonian is then
\begin{align}
    H_{K,N} := \sum_{k=1}^K \one_N^{\otimes k-1} \otimes H_{x_k} \otimes \one_N^{\otimes K-k} \;,
\end{align}
where $x_k \in \{0,1\}^n$ for all $k \in[K]$, referring to a marked item over a set of $N$ items in the $k^{\rm th}$ instance of quantum search.

We briefly comment on the assumed LCU access for $H_{K,N}$.
Observe that each $H_{x_k}$ can be easily expressed as an LCU since $H_x$ can be expressed as 
\begin{align}
    H_x = \frac 1 {\sqrt N} \one_N + \left(\frac {\one_N-2 \ketbra x } 2 \right) + \left( \frac {\one_N-2\ketbra s}2 \right) \;.
\end{align}
Each term above is associated with a unitary; for example, the first term in brackets is a reflection over the state $\ket x$ which corresponds to the Grover oracle that applies $O_x \ket y=(-1)^{\delta_{xy}}\ket y$. Then, if we let ${\rm U}_1=\one_N$, ${\rm U}_2(x)=\one_N-2 \ketbra x$, ${\rm U}_3=\one_N-2 \ketbra s$, we have $H_x= \frac 1 {\sqrt N}{\rm U}_1 + \frac 1 2 {\rm U}_2(x) + \frac 1 2 {\rm U}_3$ and
\begin{align}
\label{eq:HKNLCU}
    H_{K,N}=\sum_{k=1}^K \one_N^{\otimes k-1} \otimes  \left(\frac 1 {\sqrt N}{\rm U}_1 + \frac 1 2 {\rm U}_2(x_k) + \frac 1 2{\rm U}_3 \right)\otimes \one_N^{\otimes K-k} \;,
\end{align}
which is itself an LCU. That is, $H_{K,N}=\sum_{l=0}^{L_K-1} \beta_l U_l$, where each $U_l$ uses ${\rm U}_1$, ${\rm U}_2(x_k)$, or ${\rm U}_3$ once, $L_K=3K$, $\beta_l =\cO(1)$ for all $l$,
and $\lambda_{K} =\sum_{l=0}^{L_K-1} \beta_l =\cO(K)$.
We then assume access to the ${\rm SELECT}(H_{K,N})$ and ${\rm PREPARE}(H_{K,N})$ oracles of Eq.~\eqref{eq:LCUaccess}, where 
the $\beta_l$'s and $U_l$'s are specified from Eq.~\eqref{eq:HKNLCU}. 

Importantly, the corresponding ${\rm SELECT}(H_{K,N})$ 
can be implemented with a single call to an oracle that implements $O_f \ket k \ket x = (-1)^{x_k} \ket k \ket x$.
This is the oracle $O_f$ used in Lemma~\ref{lem:state-lb}.
To see this, note that ${\rm U}_2(x_k) = \sum_x (-1)^{f_k(x)} \ketbra x$ and then the oracle can be expressed as 
$O_f = \sum_{k}  \ketbra k \otimes {\rm U}_2(x_k)$. The ${\rm SELECT}(H_{K,N})$ oracle implements 
each ${\rm U}_2(x_k)$ on a different subsystem conditional on $\ket k$. This can be achieved 
by conjugating one $O_f \otimes \one_N^{K-1}$ with corresponding controlled-SWAP operations, which do not use the oracle.

Each Grover instance can be solved within the low-energy subspace $[0, \Delta_{K,N} = \frac{2K}{\sqrt{N}}]$ of the Hamiltonian $H_{K,N}$. 
Let
\begin{align}
\ket{\psi_{K,N}}:= \ket{S} = \ket{s}^{\otimes K}
\end{align}
be now the equal superposition state in $\mathbb C^{N^K}$. Each $\ket s$ is confined to the low-energy subspace of each $H_{x_k}$, and hence $ \ket{S}$ is confined to the 
low-energy subspace of $H_{K,N}$.
\begin{align}
    \sket{\phi_{k,0}} &= \sqrt{\frac{1+\sqrt{\frac{1}{N}}}{2}}\ket{x_k} + \sqrt{\frac{1-\sqrt{\frac{1}{N}}}{2}}\ket{\perp_k} \; ,\\
    \sket{\phi_{k,1}} &= \frac{1}{\sqrt{2\left(1 + \frac{1}{\sqrt{N}-1}\right)}}\ket{x_k} - \sqrt{\frac{1+\sqrt{\frac{1}{N}}}{2}}\ket{\perp_k} \; ,
\end{align}
where we introduced quantum state
\begin{align}
    \ket{\perp_k} = \frac{1}{\sqrt{N-1}} \sum_{j\neq x_k} \ket{j} \;.
\end{align}
The low-energy eigenstates of $H_{K,N}$
are then tensor products of the eigenstates above. In particular, we decompose $\ket{S}$ over its eigenbasis as
\begin{align}
    \ket{S} = \bigotimes_{k=1}^K \sqrt{\frac{1+\frac{1}{\sqrt{N}}}{2}}\sket{\phi_{k,0}} - \sqrt{\frac{1 - \frac{1}{\sqrt{N}}}{2}}\sket{\phi_{k,1}}  \;.
\end{align}

To apply the state preparation lower bound of Lemma~\ref{lem:state-lb}, it suffices to consider the preparation of
\begin{align}
    \ket{\phi} = \bigotimes_{k=1}^K \ket{x_k} \;,
\end{align}
 which corresponds to $\alpha=1$ in Lemma~\ref{lem:state-lb}. We show that $\ket \phi$ can be prepared {\em exactly} via time evolution with $H_{K,N}$ from the initial state $\ket S$. 
For simplicity, we isolate the analysis to a single subsystem $H_{x_k}$ with a marked element $x_k$, since the subsystems do not interact:
\begin{align}
    e^{-itH_{x_k}}\ket{s} &= \sqrt{\frac{1+\frac{1}{\sqrt{N}}}{2}}\ket{\phi_{k,0}} - e^{-\frac{2it}{\sqrt{N}}}\sqrt{\frac{1 - \frac{1}{\sqrt{N}}}{2}}\ket{\phi_{k,1}}\;.
\end{align}
The probability of producing the marked element after time $t$ is given by
\begin{align}
    |\bra{x_k}e^{-iH_{x_k}t}\ket{s}|^2 &= \left|\frac{1+\sqrt{\frac{1}{N}}}{2} - e^{-2it/\sqrt{N}}\frac{1-\sqrt{\frac{1}{N}}}{2}\right|^2 \geq \sin \left(\frac{t}{\sqrt{N}}\right)^2 \;,
\end{align}
for $0 \le t \leq \frac{\pi}{2}\sqrt{N}$. Hence, at time $t_{N} = \frac{\pi}{2}\sqrt{N}$, we have $e^{-it_{N}H_{K,N}}\ket{\psi_{K,N}}=e^{-it_{N}H_{K,N}}\ket{S} = \ket{\psi}$. By Lemma~\ref{lem:state-lb}, this requires
$\Omega (K \sqrt N)$ queries to the oracle, and hence 
$\Omega (K \sqrt N)$ queries to ${\rm SELECT}(H_{K,N})$.
Written in terms of $\lambda_{K}$ and $t_{N}$, this lower bound is $\Omega (t_{N} \lambda_{K})$, which applies even when $\Delta_{K,N}/\lambda_{K}=\cO(1/\sqrt N)$ is asymptotically small. Thus, we obtain a lower bound of $\Omega(t\lambda)$ at time $t = \Theta(\lambda/\Delta)$
in general, when access to the LCU oracles is given.

\subsection{No fast-forwarding in the low-energy subspace for the sparse matrix model}
\label{sec:sparse-nff}

We now consider the sparse matrix access model, which is incomparable to the LCU model. 
We also consider a Hamiltonian that uses the $(\mathrm{PARITY}\circ\mathrm{OR})_{K,N}$ lower bound, but the Hamiltonian is made sparse following Ref.~\cite{childs2004spatial} on spatial quantum walks. Each $H_{x_k}$ is now defined over a $d$-dimensional  periodic lattice with $N$ vertices, where we choose $d$ to be a constant larger than 4. As shown in Ref.~\cite{childs2004spatial} (see their Eq. 96), choosing constant $\gamma >0$ and $c=\cO(1/\sqrt{N})$ yields a PSD $N \times N$ Hamiltonian
\begin{align}
\label{eq:dlatticeHamiltonian}
    H_{x_k} = \gamma (D-A) - \ketbra{x_k} + c
\end{align}
with diagonal matrix $D$, adjacency matrix $A$, and marked state $x_k \in \{0,1\}^n$. Moreover,
\begin{align}
\label{eq:sparse-evolve}
    |\bra{x_k}e^{-iH_{x_k}t}\ket{s}|^2 = \alpha^2 \sin\left( \frac{\alpha t}{\sqrt{N}}\right)^2 \;,
\end{align}
for some constant $\alpha>0$ and $\ket{s} = \frac{1}{\sqrt{N}}\sum_{x'=0}^{N-1} \ket{x'}$. To apply a similar argument as in Sec.~\ref{sec:lcu-nff}, we need to show that $\ket{s}$ is (almost) in the low-energy subspace of $H_{x_k}$, and that the relevant energies are   $\cO(1/\sqrt{N})$ when the Hamiltonian is normalized so that $\|H_{x_k}\|=\cO(1)$. Define the integral $I_{j,d}$ by
\begin{align}
    I_{j,d} = (2d)^{-j} \int_0^\infty dx \frac{x^{j-1}e^{-x}}{\Gamma(x)}[\mathcal{I}_0(x/d)]^d
\end{align}
for modified Bessel function of the first kind $\mathcal{I}_0$. For a lattice with dimension $d > 4$, the constant $\alpha$ in the success probability is given by $I_{1,d}/\sqrt{I_{2,d}}$; moreover, $3/4 < \alpha < 1$ for all choices of $d$. It was also shown in Ref.~\cite{childs2004spatial} that $H_{x_k}$ with
\begin{align}
    \gamma = I_{1,d}, \quad c = \frac{\alpha}{\sqrt{N}}\;,
\end{align}
has ground and first excited state energies
\begin{align}
    E_0 &= 0, \quad E_1 = \frac{2\alpha}{\sqrt{N}}\;,
\end{align}
while the spectral norm of $H_{x_k}$ is upper-bounded by a constant. Hence, $H_{x_k}\succeq 0$. The $N$ eigenstates $\sket{\phi_{k,j}}$  of $H_{x_k}$ satisfy, for all $j=0,\dots,N-1$,
\begin{align}
    \left| \langle s\sket{\phi_{k,j}}\right|^2 = \frac{1}{N}\frac{1}{(E_j-c)^2 F'(E_j-c)}\;,
\end{align}
where in the large-$N$ limit,
\begin{align}
    F'(E) = \frac{1}{NE^2} + \frac{I_{2,d}}{\gamma^2}\;.
\end{align}
Evaluating the support of $\ket s$ on the two eigenstates of lowest energy $E_0$ and $E_1$, we obtain
\begin{align}
\label{eq:phi-support}
    \left|\bra{s_k}\ket{\phi_{k,0}}\right|^2 = \left|\bra{s_k}\ket{\phi_{k,1}}\right|^2 = \frac{1}{N}\left(\frac{1}{N} + \frac{I_{2,d}}{\gamma^2}\frac{I_{1,d}^2}{I_{2,d}N}\right)^{-1} \rightarrow \frac{1}{2}
\end{align}
to leading order. Consequently, $\ket{s}$ mostly is supported in a subspace associated with eigenvalues at most $2\alpha/\sqrt{N}$ of $H_{x_k}$ and has vanishingly small support outside this subspace. Formally, we can introduce the low-energy state
\begin{align}
    \ket{\psi} := \frac{\Pi_\Delta \ket{s}}{\norm{\Pi_\Delta \ket{s}}}
\end{align}
for projector $\Pi_\Delta$ onto the low-energy subspace of the Hamiltonian defined by $\Delta = 2\alpha/\sqrt{N}$. Eq.~\eqref{eq:phi-support} implies that $\ket{\psi}$ satisfies $|\bra{\psi}\ket{s}| \geq 1-\delta$ for any (arbitrarily small) $\delta > 0$ that is constant with respect to $N$. Consequently, a lower bound on the query complexity for preparing a state of the form $e^{-itH_{x_k}}\ket{s}$ with constant fidelity also implies a lower bound on the query complexity for preparing the state $e^{-itH_{x_k}}\ket{\psi}$ with constant fidelity. We continue our analysis in terms of $\ket{s}$ and apply it to the low-energy state $\ket{\psi}$ at the end.

These properties of $H_{x_k}$ and $\ket \psi$ are similar to the case studied in Sec.~\ref{sec:lcu-nff}. The rest of the proof is similar to that of the LCU model. Time evolution of the initial state $\ket{S} = \ket{s}^{\otimes K}$ for some time $t_{N} = \Theta(\sqrt{N})$ under the $\cO(Kd)$-sparse Hamiltonian 
\begin{align}
    H_{K,N} = \sum_{k=1}^K \one_N^{\otimes k-1} \otimes H_{x_k} \otimes \one_N^{\otimes K-k} \;,
\end{align}
will produce a state $\ket \phi = \bigotimes_{k=1}^K \ket{\phi_k}$ that satisfies
\begin{align}
    |\bra{\phi_k}\ket{x_k}| ^ 2 = \alpha^2 \;,
\end{align}
for the constant $\alpha$ in Eq.~\eqref{eq:sparse-evolve}. Since $d=\cO(1)$, note that the sparsity of $H_{K,N}$ is $d_K=\cO(K)$ and also $\|H_{K,N}\|_{\max}=\cO(1)$, since $\|H_{x_k}\|_{\max}=\cO(1)$
for Eq.~\eqref{eq:dlatticeHamiltonian}. Then, $\lambda_K =d_K \|H_{K,N}\|_{\max}=\cO(K)$.
Furthermore, the oracle $O_{H_{K,N}}$ that provides access 
to the nonzero matrix elements of $H_{K,N}$ and their positions
can be constructed with a single call to an oracle that implements $O_f \ket k \ket x= (-1)^{x_k} \ket k \ket x$.
This is similar to the oracle in Lemma~\ref{lem:state-lb},
which now implies a lower bound $\Omega(K \sqrt N)$ for the number of uses of $O_{H_{K,N}}$. To see this, note that each entry of $H_{K,N}$ can be labeled by $(k,x)$, and the positions of the nonzero entries are readily obtained from those of $H_{x_k}$, which are known due to Eq.~\eqref{eq:dlatticeHamiltonian}.
In terms of $\lambda_{K}$  and $t_{N}$, the lower bound is $\Omega(t_{N}\lambda_{K})$. Since the state $\ket{s}$ has vanishingly small support outside the low-energy subspace $\cS_{\Delta}$, the state $\ket S$ similarly has vanishingly small support outside the low-energy subspace $\cS_{\Delta_{K,N}}$, for $\Delta_{K,N} =\cO(K/\sqrt N)$.
The lower bound also holds for the state $\ket{\psi_{K,N}}:=\Pi_{\Delta_{K,N}} \ket S/\|\Pi_{\Delta_{K,N}} \ket S\|$, which is entirely contained within the low-energy subspace, and obtained by projecting each $\ket s$
into their low-energy subspaces.

Hence, we proved a lower bound $\Omega(t\lambda)$ at time $t=\Theta(\lambda/\Delta)$ in general, when access oracle access to a sparse matrix $H$ is given. Note that our result takes constant sparsity $d$; a more careful analysis would be needed to obtain a lower bound in terms of the sparsity of $H$.

\section{Query lower bounds for simulating time evolution of low-energy states of gap-amplifiable Hamiltonians}
\label{sec:nffsga}

We  show lower bounds on the query complexity of simulating low-energy states of gap-amplifiable Hamiltonians $H=\lambda A^\dagger A$, given some form of oracle access to $A$. 
As shown by Thm.~\ref{thm:mainalgorithm} and Sec.~\ref{sec:nff}, this proves to be more powerful than having access to $H$ alone.

Lemma~\ref{lem:hsgablec} and Lemma~\ref{lem:lbhsgablec} below show an equivalence between having access to a gap-amplifiable Hamiltonian $H = \lambda A^\dagger A$ and having access to a block-encoding of $H'$ with low-energy subspace corresponding to eigenvalues in $[-1, -1+\Delta/\lambda]$.
For this reason, we will perform the analysis for  gap-amplifiable Hamiltonians here. 
For a state $\ket{\psi}$ with energy at most $\Delta$ of such $H$, we summarize all possible asymptotic regimes and results:
\begin{enumerate}
    \item \emph{Time-dominated regime}: $\log (1/\epsilon) = o(t\Delta)$, producing an upper bound of $\cO(t\sqrt{\lambda\Delta})$. In this section, we will show matching query lower bounds in two different oracle models: the LCU model and the sparse matrix model. In the following section, we will show matching gate complexity lower bounds for this case.
    \item \emph{Intermediate regime}: $\log (1/\epsilon) = o(t\lambda)$ and $t\Delta = o(\log (1/\epsilon))$, producing an upper bound of $\cO(\sqrt{t\lambda \log 1/\epsilon})$. In this section, we will show a query lower bound in the LCU model of $\Omega(\sqrt{t\lambda})$.
    \item \emph{Error-dominated regime}: $t\lambda = o(\log (1/\epsilon))$, producing an upper bound of $\cO(\log (1/\epsilon))$. A matching lower bound (up to log log factors) is known~\cite{berry2014exponential}, based on parity.
\end{enumerate}

Hence, we will obtain query lower bounds that show our algorithm is optimal with respect to $t$, $\Delta$, and $\lambda$; we do not prove optimality with respect to $1/\epsilon$ in the intermediate scaling regime but note that the algorithm has a mild sublogarithmic dependence on error.

\subsection{Access models}
\label{sec:accessGAcase}

Similar to Sec.~\ref{sec:accessmodels}, we assume that access to $M\times N$ matrix $A$ is given through the LCU or sparse matrix model. 
Since $A$ is not necessarily a square matrix, some 
small modifications are needed. 

The matrix $A$ can be expressed as $A= (\sum_{l=0}^{L-1} \beta_l U_l) \Pi$,
where each $\beta_l>0$ and each $U_l$ is unitary acting on $\mathbb C^M$ for all $l$, and $\Pi$ is the projector onto $\mathbb C^N$.
The LCU model then assumes access to a SELECT and a PREPARE oracle defined by 
\begin{align}
\label{eq:LCUaccessA}
    \mathrm{SELECT}(A) = \sum_{l=0}^{L-1} \ketbra l \otimes U_l, \quad 
    \mathrm{PREPARE}(A) : \ket{0} \to \sum_{l=0}^{L-1} \sqrt{\frac{\beta_l}{\beta}}\ket{l}\;,
\end{align}
where $\beta = \sum_{l=0}^{L-1} \beta_l$. 
As before, we can assume $L=2^l$, so that
$\ket 0$ above refers to the $l$-qubit state $\ket 0^{\otimes l}$.

The sparse matrix model  uses an oracle $O_A$ that allows to perform the transformations
\begin{align}
    \ket{j}\ket{k}\ket{z} \rightarrow \ket{j}\ket{k}\ket{z \oplus A_{jk}}, \quad  \ket{j}\ket{\ell} \rightarrow \ket{j}\ket{\nu(j,\ell)}\;.
\end{align}
Here, $j\in [M]$ and $k \in [N]$ label a row and column of $d$-sparse $A$, and $A_{jk}$ is the corresponding matrix entry (specified in binary form). Also, $\ell \in [d]$ and $\nu: [M] \times [d] \rightarrow [N]$ computes the row index of the $\ell^{\rm th}$ nonzero entry of the $j^{\rm th}$ row in place.

A block-encoding $V$ of $A$ can be obtained with one use of ${\rm SELECT}(A)$, one use of $ \mathrm{PREPARE}(A)$, and one use of  $\mathrm{PREPARE}(A)^\dagger$ following a similar procedure to the one given in Sec.~\ref{sec:accessmodels}. It can also be constructed using $O_A$ and other two-qubit gates (cf. Lemma 6 of Ref.~\cite{low2019qubitization} or Ref.~\cite{CKS17}). In both models, the block-encoding is a unitary acting on an enlarged space that contains $A/\alpha$ in one of its blocks. The factor $\alpha>0$ can be different for the LCU and sparse models. In the results below, $\alpha$ is defined appropriately for either model.

In Lemma~\ref{lem:hsgablec}, we showed that a constant number of queries to a block-encoding $T$ of a Hamiltonian $H'=(H-E)/\lambda$ with the low-energy subspace contained in $[-1, -1+\Delta/\lambda]$ provides access to a block-encoding $V$ of $A$ such that $H-F=2\lambda A^\dagger A$, where $E, F \in \mathbb{R}$ and $\lambda >0$. Since our algorithm uses the block-encoding of $A$, Lemma~\ref{lem:hsgablec} implies that our algorithm can be applied when access to a Hamiltonian is given through the block-encoding $T$. Here, we will show our lower bounds for gap-amplifiable Hamiltonians. For these lower bounds to apply when given access to $T$ instead of $V$, we need to prove the opposite direction compared to Lemma~\ref{lem:hsgablec}: we must show that a constant number of queries to $V$ suffice to prepare $T$.
This result follows from standard techniques. 

\begin{lemma}[Efficient block-encoding of $H'$]
\label{lem:lbhsgablec}
Let $V$ be the unitary that block-encodes an $M\times N$ matrix $A$ as in Eq.~\eqref{eq:blecofA} and let the Hamiltonian acting on $\cH$ be $H = \lambda A^\dagger A$, for some $\lambda > 0$. Let $\ket \psi \in \cS_{\Delta}$ be any state supported exclusively on the subspace associated with eigenvalues of $H$ at most $\Delta$, for $\lambda \ge \Delta > 0$. Then, a block-encoding $T$ of $H' = (H-\lambda)/\lambda$ can be implemented with a constant number of queries to controlled-$V$ and controlled-$V^\dagger$. In addition, $\ket{\psi}$ is supported in the subspace associated with eigenvalues of $H'$ in $[-1, -1+\Delta/\lambda]$.
\end{lemma}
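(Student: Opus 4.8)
The goal is to build a block-encoding $T$ of $H'=(H-\lambda)/\lambda = A^\dagger A - \one_N$ from a constant number of queries to $V$ and $V^\dagger$, and to check that the low-energy subspace of $H$ maps into the eigenvalue window $[-1,-1+\Delta/\lambda]$ of $H'$. The plan is to proceed in three short steps: first write $H'$ explicitly in terms of the projectors associated with $V$; second apply a standard LCU/product construction to block-encode it; third verify the spectral statement about $\ket\psi$.

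\emph{Step 1: Express $H'$ through $V$.} Recall $V$ satisfies $A = \Pi' V \Pi$, where $\Pi$ projects onto $\cH\equiv\mathbb C^N$ and $\Pi'$ onto $\cH'\equiv\mathbb C^M$. I would first observe that $A^\dagger A = \Pi V^\dagger \Pi' V \Pi$, so
\begin{align}
H' = A^\dagger A - \one_N = \Pi V^\dagger \Pi' V \Pi - \Pi V^\dagger V \Pi = \Pi V^\dagger (\Pi' - \one_M) V \Pi\;,
\end{align}
using $V^\dagger V = \one$ on the relevant space so that $\Pi V^\dagger V \Pi = \one_N$. This is exactly the identity quoted in the discussion right after Lemma~\ref{lem:hsgablec} in the excerpt: $H' = \Pi V^\dagger(\Pi'-\one_M)V\Pi$. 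Since $\Pi'-\one_M = -( \one_M - \Pi')$, the middle factor is (up to sign) a projector, and $-(\one_M-\Pi') = \one_M - 2(\one_M - \Pi')/1$ is not itself unitary but, like any operator with norm at most $1$, can be written as an LCU of two unitaries, e.g.\ $\one_M - \Pi' = (\one_M - R)/2$ where $R := 2\Pi' - \one_M$ is a reflection (unitary), so $\Pi'-\one_M = (R-\one_M)/2 = \tfrac12 R - \tfrac12 \one_M$.

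\emph{Step 2: Assemble $T$ by standard techniques.} With $H' = \Pi V^\dagger \big(\tfrac12 R - \tfrac12 \one_M\big) V \Pi$, I would block-encode the inner operator $\tfrac12 R - \tfrac12\one_M$ by the usual one-ancilla LCU gadget (Hadamard--controlled-$R$--Hadamard, as in Lemma 6 of Ref.~\cite{CKS17} or Ref.~\cite{low2019qubitization}), which costs one controlled-$R$; then conjugate by $V$ and $V^\dagger$ — i.e., take $T$ to be $V^\dagger \cdot (\text{LCU block-encoding of } \Pi'-\one_M) \cdot V$ composed with the appropriate projectors, mirroring the construction of $V_{\rm SGA}$ in Eq.~\eqref{eq:hsgablec} and of $V$ from $T$ in Eq.~\eqref{eq:blecAfromT}. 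Block-encodings compose under products (the product of block-encodings block-encodes the product of the encoded operators), so this yields a unitary $T$ with $\Pi T \Pi = H'$ using one controlled-$V$, one controlled-$V^\dagger$, one controlled-$R$ (the reflection $R$ about the image of $\Pi'$ is a fixed operation requiring no queries, just $\cO(1)$ two-qubit gates or $\cO(\log M)$ depending on the convention), and $\cO(1)$ additional Hadamard/Pauli gates. I should double-check the eigenvalue normalization: since $\|A\|\le 1$ we have $0 \preceq A^\dagger A \preceq \one_N$, hence $-\one_N \preceq H' \preceq 0$, so all eigenvalues of $H'$ indeed lie in $[-1,1]$ (in fact in $[-1,0]$), confirming $T$ is a legitimate block-encoding.

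\emph{Step 3: Locate $\ket\psi$ in the spectrum of $H'$.} This is immediate from Step 1's normalization: if $\ket\psi$ is supported on eigenvalues of $H = \lambda A^\dagger A$ in $[0,\Delta]$, then it is supported on eigenvalues of $A^\dagger A$ in $[0,\Delta/\lambda]$, hence on eigenvalues of $H' = A^\dagger A - \one_N$ in $[-1, -1+\Delta/\lambda]$, as claimed. The main obstacle is not mathematical depth — every step is standard — but rather bookkeeping the ancilla registers and projector identifications consistently across the two block-encoding gadgets and the conjugation by $V, V^\dagger$, so that the composite projector $\Pi$ appearing in $\Pi T \Pi = H'$ is correctly identified; I would handle this exactly as in Sec.~\ref{sec:sgablenc}, where the analogous composition was tracked for the reverse direction.
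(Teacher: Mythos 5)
Your proposal is correct and follows essentially the same route as the paper's proof: express $H' = \Pi V^\dagger(\Pi'-\one_M)V\Pi$, write the middle factor as a linear combination of the reflection $2\Pi'-\one_M$ and the identity, block-encode that LCU with the standard one-ancilla Hadamard gadget, and conjugate by $V$ and $V^\dagger$, with the spectral relocation of $\ket\psi$ following immediately from the shift by $\one_N$. Indeed, your Step 1 even gets the sign of the LCU decomposition of $\Pi'-\one_M$ right, whereas the paper's displayed intermediate identity $H'=\Pi V^\dagger\big(\tfrac{U_{P'}+\one_M}{2}\big)V\Pi$ contains a sign slip (it should be $\tfrac{U_{P'}-\one_M}{2}$), which does not affect the constant query count.
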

\begin{proof}
The statement on the support of $\ket \psi$ immediately follows from the definition of $H'$.
We can alternatively write
\begin{align}
    H' = A^\dagger A - \one_N = \Pi V^\dagger \Pi' V \Pi \;.
\end{align}
Let $M=2^m$ and $N=2^n$. Above, the projector $\Pi'$ can be replaced by 
\begin{align}
    P'=\ketbra 0^{\otimes m-n} \otimes \one_N \; ;
\end{align}
see Sec.~\ref{sec:sga}. The projector $P'$ can be expressed as $(U_{P'}+ \one_M)/2$, where $U_{P'}:=2P'-\one_M$ is unitary. Then,
\begin{align}
      H' = \Pi V^\dagger \left( \frac{U_{P'}+ \one_M} 2 \right) V \Pi \;.
\end{align}
Standard techniques give a block encoding of $V^\dagger \left( \frac{U_{P'}+ \one_M} 2 \right) V$, which is a sum of two unitaries. For example, if 
\begin{align}
    \tilde V:= \ketbra + \otimes U_{P'} + \ketbra - \otimes \one_M \;,
\end{align}
where $\ket{\pm}=(\ket 0 \pm \ket 1)/\sqrt 2$, we obtain
\begin{align}
     H' = \Pi V^\dagger \tilde V V \Pi \;,
\end{align}
and hence
\begin{align}
    T=V^\dagger \tilde V V \;.
\end{align}
To construct it, we needed $V$ and $V^\dagger$ once, 
and $\cO(m)$ two-qubit gates to implement $U_{P'}$.
\end{proof}
We can now show lower bounds for simulating general gap-amplifiable Hamiltonians and conclude that the same lower bounds hold when given access to a block-encoding of $H'$ instead.

\subsection{Lower bounds: formal statements}

We first address the lower bounds in the time-dominated scaling regime. Our argument is once again based on the $\mathrm{PARITY}\circ\mathrm{OR}_{K,N}$ lower bound of $K$ partitions of independent Grover searches presented in Sec.~\ref{sec:nff}. We will construct a Hamiltonian such that the asymptotic parameters $t\Delta$ and $\lambda/\Delta$ are independently controlled as $t\Delta = \cO(K)$ and $\lambda/\Delta=\cO(N)$, allowing $t$ and $\Delta$ to be arbitrarily set with respect to $\lambda$. We present our lower bounds formally in terms of these parameters. 

\begin{theorem}[$t\sqrt{\lambda\Delta}$ lower bound on the query complexity of simulating low-energy states of gap-amplifiable Hamiltonians]
\label{thm:t-bound}
Let $K \ge 1$ and $N \ge 1$ be integers. Then, there exists a sequence of gap-amplifiable Hamiltonians 
$\{H_{K,N}=\lambda_K A^{\dagger}_{K,N} A^{}_{K,N}\}_{K,N}$,  each acting on an $N^K$-dimensional Hilbert space $\cH_{K,N}:=\mathbb C^{N^K}$, $\|A_{K,N}\|\le 1$, with the following properties.
\begin{itemize}
    \item In the LCU model, each term can be expressed as $A_{K,N}=(\sum_{l=0}^{L_K-1} \beta_l U_l) \Pi$, where $\beta_l>0$, $U_l$ is unitary, and $\Pi$ is a projector onto $\cH_{K,N}$. Also, $\sum_{l=0}^{L_K-1} \beta_l \le 1$, $L_K=\Theta(K)$, and $\lambda_K = \Theta(K)$.
    \item In the sparse matrix model, each term $A_{K,N}$ and $H_{K,N}$ can be expressed as a $d_K$-sparse matrix, with $d_K=\Theta(K)$. Also, $\lambda_K = d_K \|H_{K,N}\|_{\max}$ satisfies $\lambda_K = \Theta(K)$.
\end{itemize}
Assume access to the corresponding oracles for each model. Then, for each $H_{K,N}$, there exists 
$\Delta_{K,N}=\cO(K/N)$,
a low-energy state $\ket{\psi_{K,N}} \in \cS_{\Delta_{K,N}}$, and a time $t_N = \cO\left(N\right)$,
such that preparing a unitary $U$ satisfying $|\!\bra{\psi_{K,N}}\Pi U^\dagger \Pi e^{-it_N H_{K,N}}\ket{\psi_{K,N}}\!| \geq 2/3$   requires at least
\begin{align}
    \Omega\left(t_N\sqrt{\lambda_{K}\Delta_{K,N}}\right)
\end{align}
queries to each of either the $\mathrm{SELECT}(A_{K,N})$ and $\mathrm{PREPARE}(A_{K,N})$ oracles in the LCU model, or the $O_{A_{K,N}}$ oracles in the sparse matrix model.
In addition, $\Delta_{K,N}/\lambda_K=\cO(1/N)$.
\end{theorem}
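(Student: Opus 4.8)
The plan is to mirror the construction from Section~\ref{sec:nff} but use a gap-amplified version of the search Hamiltonian so that the relevant low-energy scale $\Delta$ and the norm scale $\lambda$ decouple with $\lambda/\Delta = \Theta(N)$ (rather than $\Theta(\sqrt N)$ as in the no-fast-forwarding result). For a single marked element $x$, the Grover-search Hamiltonian $H_x = (1+1/\sqrt N)\one_N - \ketbra{x} - \ketbra{s}$ is PSD with spectral gap $\Theta(1/\sqrt N)$; I would write it in gap-amplifiable form $H_x = \lambda_0 A_x^\dagger A_x$ for a suitable $A_x$ with $\|A_x\| \le 1$ and $\lambda_0 = \Theta(1)$, then take the $K$-fold sum $H_{K,N} = \sum_{k=1}^K \one_N^{\otimes k-1}\otimes H_{x_k}\otimes \one_N^{\otimes K-k}$, which by Lemma~\ref{lem:GAreduction} is again gap-amplifiable with $\lambda_K = \Theta(K)$ and $A_{K,N}$ an $(LM)\times N^K$ matrix. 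The low-energy subspace of each $H_{x_k}$ relevant to evolving $\ket s$ is the span of its two lowest eigenstates, with eigenvalues $0$ and $\Theta(1/\sqrt N)$; hence the relevant energy cutoff for $\ket S = \ket s^{\otimes K}$ is $\Delta_{K,N} = \Theta(K/\sqrt N)$, giving $\Delta_{K,N}/\lambda_K = \Theta(1/\sqrt N)$.

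Wait — to get $\lambda/\Delta = \Theta(N)$ as the statement requires, I cannot simply reuse the search Hamiltonian of Section~\ref{sec:lcu-nff}; I need the low-energy scale to be $\Theta(1/N)$ relative to $\lambda$. The fix is to use the gap-amplified ``square-root'' structure directly: take the single-instance operator to be (a shifted, rescaled version of) $A_x$ whose singular values are the square roots of the eigenvalues of the search Hamiltonian, so that the energy gap of the relevant two-dimensional block of $\lambda A_x^\dagger A_x$ is $\Theta(1/\sqrt N)$ but, when the search is realized as evolution under $\hsga \propto \bigl(\begin{smallmatrix}0 & A_x^\dagger\\ A_x & 0\end{smallmatrix}\bigr)$ with eigenvalues $\pm\sqrt{\gamma_j}$, the effective evolution time needed to rotate $\ket s$ onto $\ket x$ is $t_N = \Theta(N)$ (since the relevant frequency is $\sqrt{\gamma_1} = \Theta(N^{-1/4})$ — no, this still gives $\sqrt N$). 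The cleanest route, and the one I would actually carry out, is: keep $H_{x_k}$ as in Section~\ref{sec:lcu-nff} but recognize it as gap-amplifiable $H_{x_k} = \lambda_0 A_{x_k}^\dagger A_{x_k}$ with $\lambda_0 = \Theta(1)$; the low-energy cutoff is $\Delta_0 = \Theta(1/\sqrt N)$ so $\Delta_0/\lambda_0 = \Theta(1/\sqrt N)$. To push the ratio to $\Theta(1/N)$ I instead build $A$ so that access is to $A$ (not $H$): the query-complexity speedup of our algorithm is $t\sqrt{\lambda\Delta}$, so to show optimality I need $t\sqrt{\lambda\Delta} = \Theta(K\sqrt N)$ matching the $\mathrm{PARITY}\circ\mathrm{OR}$ bound, with $t = \Theta(N)$ and $\lambda\Delta = \Theta(K^2/N)$, i.e. $\lambda = \Theta(K)$, $\Delta = \Theta(K/N)$. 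This is achieved by choosing the marked-item energy penalty in $H_{x_k}$ to be $\Theta(1/N)$ rather than $\Theta(1/\sqrt N)$, accepting a correspondingly slower search that takes time $\Theta(N)$ — this is exactly the standard unamplified continuous-time search scaling, so no cleverness is needed.

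With the Hamiltonian fixed, the key steps are: (i) verify the LCU decomposition $A_{K,N} = (\sum_l \beta_l U_l)\Pi$ with $\sum_l \beta_l \le 1$, $L_K = \Theta(K)$ and the sparse-model analogue with $d_K = \Theta(K)$, $\|H_{K,N}\|_{\max} = \Theta(1)$, checking that a single query to $O_f:\ket k\ket x \to (-1)^{x_k}\ket k\ket x$ suffices to implement $\mathrm{SELECT}(A_{K,N})$ or $O_{A_{K,N}}$, exactly as in Sections~\ref{sec:lcu-nff} and~\ref{sec:sparse-nff}; (ii) compute the spectral decomposition of the (now $\Theta(1/N)$-penalized) single-instance Hamiltonian, identify $\ket s$'s support on its lowest eigenstates, and define $\ket{\psi_{K,N}}$ as the normalized projection of $\ket S = \ket s^{\otimes K}$ onto $\cS_{\Delta_{K,N}}$ with $\Delta_{K,N} = \Theta(K/N)$, checking $|\bra{\psi_{K,N}}\ket S| \ge 1 - \delta$ for arbitrarily small constant $\delta$; (iii) show that $e^{-it_N H_{K,N}}\ket S$ has overlap $\ge \alpha$ with $\bigotimes_k \ket{x_k}$ at time $t_N = \Theta(N)$, by the standard two-level rotation analysis on each factor; (iv) invoke Lemma~\ref{lem:state-lb} (the $\Omega(K\sqrt N)$ state-preparation bound) to conclude any $U$ with $|\bra{\psi_{K,N}}\Pi U^\dagger\Pi e^{-it_N H_{K,N}}\ket{\psi_{K,N}}| \ge 2/3$ needs $\Omega(K\sqrt N)$ queries, then rewrite $K\sqrt N = \Theta(t_N\sqrt{\lambda_K\Delta_{K,N}})$; (v) finally appeal to Lemma~\ref{lem:lbhsgablec} to transfer the bound from the $A$-access model to the block-encoding-$T$ model. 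The main obstacle I anticipate is step (ii)–(iii): one must choose the marked-state penalty and the ``hopping''/projector constants so that simultaneously $H_{x_k} \succeq 0$, $\|H_{x_k}\| = \Theta(1)$, the ground/first-excited gap is $\Theta(1/N)$, the equal superposition $\ket s$ has $\Omega(1)$ (ideally $\to 1/2$) overlap with each of the two lowest eigenstates, and the search succeeds with constant probability at $t_N = \Theta(N)$ — balancing these four constraints is the delicate part, and I would likely import the relevant spectral estimates from the continuous-time search literature (as the sparse-model proof imports from Ref.~\cite{childs2004spatial}) rather than recompute them. Everything else is bookkeeping analogous to Section~\ref{sec:nff}.
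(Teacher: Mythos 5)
Your overall architecture is the right one and matches the paper's: tensor $K$ independent search Hamiltonians, invoke the $\Omega(K\sqrt N)$ state-preparation bound of Lemma~\ref{lem:state-lb}, arrange $t_N=\Theta(N)$, $\lambda_K=\Theta(K)$, $\Delta_{K,N}=\Theta(K/N)$ so that $K\sqrt N=\Theta(t_N\sqrt{\lambda_K\Delta_{K,N}})$, and transfer between access models via Lemma~\ref{lem:lbhsgablec}. But the one step that makes the theorem work — producing a gap-amplifiable, $\Theta(1)$-norm single-instance Hamiltonian whose relevant gap is $\Theta(1/N)$ and which still solves search exactly in time $\Theta(N)$ — is exactly the step you leave unresolved. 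Your proposed fix, ``choose the marked-item energy penalty to be $\Theta(1/N)$,'' is not the standard continuous-time search scaling and is unlikely to work as stated: for Hamiltonians of the form $c\one_N-a\ketbra{x}-b\ketbra{s}$ the search only rotates $\ket s$ fully onto $\ket x$ under a resonance condition ($a\approx b$ up to $\cO(1/\sqrt N)$), and under that condition the gap in the two-dimensional block is pinned at $\Theta(1/\sqrt N)$; detuning the penalty to shrink the gap biases the eigenvectors and kills the success probability. You correctly flag this tension as ``the delicate part'' but do not resolve it, and the reference you would import from (Childs--Goldstone) also gives gap $\Theta(1/\sqrt N)$, not $\Theta(1/N)$.

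The paper's resolution in the LCU model is to \emph{square} the search Hamiltonian: take $H_{K,N}=\sum_k \one^{\otimes k-1}\otimes (H_{x_k})^2\otimes\one^{\otimes K-k}$ with $A_k=H_{x_k}/(1+1/\sqrt N)$, so that $(H_{x_k})^2=\lambda A_k^\dagger A_k$ is gap-amplifiable for free (since $H_{x_k}\succeq 0$ is Hermitian), the eigenvectors are unchanged, the low eigenvalues $\{0,2/\sqrt N\}$ become $\{0,4/N\}$ while the bulk eigenvalue stays $\Theta(1)$, and $|\bra{x_k}e^{-it(H_{x_k})^2}\ket s|^2\ge\sin^2(2t/N)$ reaches $1$ exactly at $t_N=\pi N/4$. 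You came within a hair of this when you discussed the square-root structure, but you went in the wrong direction (taking $A_x$ to be a square root of $H_x$, which leaves the gap at $\Theta(1/\sqrt N)$) rather than taking $A_x\propto H_x$ itself so that $A_x^\dagger A_x\propto H_x^2$. Note also that the sparse-model half of the theorem is handled in the paper by an entirely different construction (a frustration-free expander-graph Hamiltonian from Ref.~\cite{somma2013spectral} with gap $\Omega(1/N)$, combined with ground-state projection by random-time evolution), since $(H_{x_k})^2$ contains $\ketbra{s}$ and is therefore dense; your proposal does not address this.
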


In Sec.~\ref{sec:summary}, we report our lower bound in terms of access to a block-encoding of a Hamiltonian $H'$ with low-energy subspace $[-1, -1+\Delta/\lambda]$. That result is implied by Theorem~\ref{thm:t-bound}: the LCU model can provide stronger access to $A$ or $A_l$ compared to only providing access to their block-encodings, since these can be efficiently constructed from the LCU oracles. 
In addition, providing access to the block-encoding of $A$ is equivalent to providing access to the block-encoding of $H'$ (due to Lemmas~\ref{lem:hsgablec} and~\ref{lem:lbhsgablec}).

To show the lower bound in the intermediate regime where $\log (1/\epsilon) = o(t \lambda)$ and $t\Delta = o(\log (1/\epsilon))$, we instead use a version of the polynomial method applicable to trigonometric polynomials. This allows us to show that time-dependence in the LCU model cannot be improved from $\sqrt{t\lambda}$.

\begin{theorem}[$\sqrt{t\lambda}$ lower bound on the query complexity of simulating low-energy states of gap-amplifiable Hamiltonians]
\label{thm:sqrtt-bound}
Let $t > 0$, $\Delta > 0$, and $\epsilon > 0$ be such that $t\Delta = o(\log (1/\epsilon))$ and $\log (1/\epsilon )= o(t\lambda)$, where $\lambda=1$.
Let $\theta \in [0,\theta_M]$, where $\theta_M:= \arcsin \sqrt \Delta$. Then, there exists a
continuous family of gap-amplifiable Hamiltonians
$\{H_\theta = A^\dagger_\theta A^{\;}_\theta \}_\theta$,
each acting on a $2$-dimensional Hilbert space $\mathbb{C}^2$, $A_\theta=(A_\theta)^\dagger$, $\|A_\theta \| \le 1$, with the following properties. In the LCU model, $A_\theta=(U_\theta+(U_\theta)^\dagger)/2$ is a linear combination of two unitaries $U_\theta$ and $U^\dagger_\theta$. Assume access to these unitaries and their controlled versions.  Then,  for each $H_\theta$, there exists a low-energy state $\ket{\psi_\theta} \in \cS_{\Delta}$, such that preparing a unitary $U$ satisfying $|\!\bra{\psi}\Pi U^\dagger \Pi e^{-itH_{\theta}}\ket{\psi}\!| \geq 1-\epsilon$   requires at least
\begin{align}
    \Omega\left(\sqrt{t\lambda}\right)
\end{align}
uses of the unitaries. 
Similarly, if access to $A_\theta$ is given 
through its block-encoding, preparing $U$ requires $\Omega(\sqrt{t \lambda})$ queries to this block-encoding. 
\end{theorem}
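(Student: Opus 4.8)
The plan is to prove this bound by the polynomial method, adapted to \emph{trigonometric} polynomials, reducing it to a degree lower bound for trigonometric polynomials that approximate $\cos(t\sin^2\theta)$ on the short arc $[0,\arcsin\sqrt\Delta]$ while remaining bounded by one everywhere. As the hard family I would take $\cH\equiv\mathbb C^2$ and the single‑qubit unitary $U_\theta:=i\bigl(\ketbra{+}+e^{-i\theta}\ketbra{-}\bigr)$, whose only dependence on $\theta$ is a relative phase between the $\ket{\pm}=(\ket0\pm\ket1)/\sqrt2$ sectors. Then $A_\theta:=\tfrac12(U_\theta+U_\theta^\dagger)=\sin\theta\,\ketbra{-}$ is Hermitian with $\|A_\theta\|\le 1$, and $H_\theta:=A_\theta^\dagger A_\theta=\sin^2\theta\,\ketbra{-}\succeq 0$ is $(\lambda,1)$‑gap‑amplifiable with $\lambda=1$ (the LCU $A_\theta=\tfrac12 U_\theta+\tfrac12 U_\theta^\dagger$ has weight $1\le\lambda$). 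The spectrum of $H_\theta$ is $\{0,\sin^2\theta\}\subseteq[0,\Delta]$ for all $\theta\in[0,\theta_M]$, $\theta_M=\arcsin\sqrt\Delta$, so the whole space is the low‑energy subspace and I may take the ($\theta$‑independent) state $\ket{\psi_\theta}:=\ket0=\tfrac1{\sqrt2}(\ket{+}+\ket{-})$; on it $e^{-itH_\theta}\ket0=\tfrac1{\sqrt2}\bigl(\ket{+}+e^{-it\sin^2\theta}\ket{-}\bigr)$ evolves nontrivially. Crucially, the entries of $U_\theta$, of $U_\theta^\dagger$, and of the standard block‑encoding of $A_\theta$ (an LCU of two unitaries, hence built from $O(1)$ controlled‑$U_\theta^{(\dagger)}$) are all trigonometric polynomials in $\theta$ of degree $\le 1$.

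\emph{Reduction to trigonometric approximation.} By the polynomial method, after $q$ controlled queries to $U_\theta$ and $U_\theta^\dagger$ (on a fixed input, with a fixed ancilla and a fixed gate sequence) every amplitude of the output state is a trigonometric polynomial of degree $\le q$ in $\theta$; in particular so is $p(\theta):=\bra0\Pi U\Pi\ket0$. Unfolding the requirement $|\bra{\psi_\theta}\Pi U^\dagger\Pi e^{-itH_\theta}\ket{\psi_\theta}|\ge 1-\epsilon$ via the elementary fact ``$\|a\|=1,\ \|b\|\le 1,\ |\langle a|b\rangle|\ge1-\epsilon \Rightarrow \|b-e^{i\gamma}a\|\le\sqrt{2\epsilon}$ for some $\gamma$'', applied to $b=\Pi U\Pi\ket0$ and $a=e^{-itH_\theta}\ket0$, yields $\bigl|\,|p(\theta)|-\bigl|\tfrac12(1+e^{-it\sin^2\theta})\bigr|\,\bigr|=O(\sqrt\epsilon)$ on $[0,\theta_M]$, together with $|p(\theta)|\le 1$ for all $\theta$. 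Now $|p|^2=p\,\overline p$ is a \emph{real} trigonometric polynomial of degree $\le 2q$, so $Q:=2|p|^2-1$ is a real trigonometric polynomial of degree $\le 2q$ with $\|Q\|_{[-\pi,\pi]}\le 1$ and, using $2\bigl|\tfrac12(1+e^{-ix})\bigr|^2-1=\cos x$, with $\|Q(\theta)-\cos(t\sin^2\theta)\|_{[0,\theta_M]}=O(\sqrt\epsilon)$. It therefore suffices to lower bound $\deg Q$ by $\Omega(\sqrt t)$.

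\emph{The trigonometric‑polynomial degree bound (the main obstacle).} I would prove: in the stated regime (where the parameter conditions guarantee that the identity algorithm, which already reaches fidelity $1-\Theta((t\Delta)^2)$, does not suffice), any real trigonometric polynomial $Q$ with $\|Q\|_{[-\pi,\pi]}\le 1$ that $\delta$‑approximates $\cos(t\sin^2\theta)$ on $[0,\arcsin\sqrt\Delta]$ has $\deg Q=\Omega(\sqrt t)$. The argument splits on the size of $t\Delta$. If $t\Delta=\Omega(1)$: as $\theta$ runs over $[0,\theta_M]$ the quantity $t\sin^2\theta$ increases monotonically from $0$ to $t\Delta$, so $\cos(t\sin^2\theta)$, hence $Q$ up to $O(\delta)$, hits values near $+1$ and near $-1$ alternately, forcing total variation $\Omega(\max(1,t\Delta))$ on this interval; Bernstein's inequality gives $\|Q'\|_\infty\le\deg Q$, hence total variation at most $\theta_M\deg Q=\Theta(\sqrt\Delta)\deg Q$, so $\deg Q=\Omega(\max(1,t\Delta)/\sqrt\Delta)=\Omega(\sqrt t)$. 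If $t\Delta=o(1)$: assuming for contradiction $\deg Q=o(\sqrt t)$ we have $\deg Q\cdot\theta_M=o(\sqrt{t\Delta})\to0$, so on the short arc $[0,\theta_M]$ the boundedness $\|Q\|_\infty\le1$ (forcing Fourier coefficients of modulus $\le1$) makes $Q$ agree with a low‑order Taylor polynomial at $\theta=0$; matching the $\theta^4$‑coefficient of $\cos(t\sin^2\theta)$, which equals $-t^2/2$ exactly, then forces $|Q^{(4)}(0)|=\Omega(t^2)$, whereas the iterated Bernstein bound $\|Q^{(4)}\|_\infty\le(\deg Q)^4$ gives $|Q^{(4)}(0)|\le(\deg Q)^4=o(t^2)$ — a contradiction. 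I expect this degree bound, and in particular chasing constants through the $t\Delta=o(1)$ case together with the associated Markov/Bernstein‑type derivative estimates, to be the hardest part. Its conceptual content is that it is precisely the \emph{global} normalization $\|Q\|_\infty\le1$ — inherited from the normalization of quantum‑state amplitudes, equivalently of QSVT polynomials — that upgrades the naive $\Omega(\sqrt{t\Delta})$ bound, which an \emph{unbounded} Taylor/Jacobi–Anger truncation already matches, to the claimed $\Omega(\sqrt t)$: the arc $[0,\arcsin\sqrt\Delta]$ has length $\Theta(\sqrt\Delta)$, and a bounded trigonometric polynomial cannot concentrate its resolving power there, losing a factor $1/\sqrt\Delta$.

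\emph{Conclusion.} Since $\deg Q\le 2q$, the degree bound gives $q=\Omega(\sqrt t)=\Omega(\sqrt{t\lambda})$, which proves the first assertion of Theorem~\ref{thm:sqrtt-bound}. The block‑encoding statement is immediate: a block‑encoding $V_\theta$ of $A_\theta$ is a fixed circuit using $O(1)$ controlled‑$U_\theta^{(\dagger)}$, so its entries are again degree‑$\le 1$ trigonometric polynomials in $\theta$ and the identical counting of the output amplitudes applies; alternatively one invokes the block‑encoding equivalence recorded in Lemma~\ref{lem:lbhsgablec}.
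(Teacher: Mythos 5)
Your setup (the two-dimensional family $H_\theta=\sin^2\theta\,\ketbra{-}{-}$, the observation that every output amplitude after $q$ queries is a trigonometric polynomial of degree at most $q$ in $\theta$, and the reduction of the block-encoding case to the LCU case) matches the paper's strategy, and your closing intuition about the global normalization being what upgrades $\sqrt{t\Delta}$ to $\sqrt{t}$ is exactly right. However, the core of your argument has a fatal gap in precisely the regime the theorem addresses. To dodge the unknown global phase you pass to $|p(\theta)|^2$ and reduce to approximating $\cos(t\sin^2\theta)$. But $1-\cos(t\sin^2\theta)\le \tfrac{1}{2}(t\Delta)^2$ on all of $[0,\theta_M]$, and in the intermediate regime one has $t\Delta\le\sqrt{\epsilon}$ (the paper's instances satisfy $8\epsilon\le t\Delta\le\sqrt\epsilon$), so the target function deviates from the constant $1$ by at most $\epsilon/2$ --- strictly below the $O(\sqrt\epsilon)$ approximation error that the fidelity hypothesis grants you. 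The degree-zero polynomial $Q\equiv 1$ therefore satisfies every constraint you impose, and your claimed degree bound for the case $t\Delta=o(1)$ is simply false: there is nothing of size $\Omega(t^2\theta^4)$ left to ``match'' at fourth order, since that term is $\le\tfrac12(t\Delta)^2\le\epsilon/2$ uniformly on the arc and is invisible beneath the error. (Separately, even with zero error, sup-norm closeness on an interval does not by itself pin down $Q^{(4)}(0)$; you would need a Markov-type derivative transfer, but this is moot given the dominant problem above.) Your Case 1 ($t\Delta=\Omega(1)$) total-variation argument is fine but lies outside the theorem's hypothesis $t\Delta=o(\log(1/\epsilon))$ with $t\Delta\le\sqrt\epsilon$.

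The information you must not discard is the \emph{phase}: the signal lives in the linear term of $e^{-it\sin^2\theta}=1-it\sin^2\theta+O((t\Delta)^2)$, whose excursion $t\Delta\ge 8\epsilon$ survives the error, whereas squaring the modulus retains only the quadratic term $(t\Delta)^2$, which does not. The paper keeps the complex amplitude: it shows (Lemma~\ref{lem:sqrtt-poly}) that any degree-$K$ trigonometric polynomial $P_K$ with $|P_K|\le 1$ everywhere and $|P_K(\theta)-e^{-it\sin^2\theta}|\le\epsilon$ on $[-\theta_M,\theta_M]$ must have $K\ge\tfrac1\pi\sqrt{2t}$, by setting $Q_K=i(1-P_K)$, which must approximate $t\sin^2\theta$; then $|Q_K(0)|\le 2\epsilon$, $|Q_K(\theta_M)|\ge t\Delta-2\epsilon$, evenness gives $Q_K'(0)=0$, and Bernstein's inequality for $Q_K''$ yields $t\Delta/2\le K^2\theta_M^2$, i.e.\ $K=\Omega(\sqrt{t\Delta}/\theta_M)=\Omega(\sqrt t)$. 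In your construction the phase can be recovered despite the global-phase freedom by using both components: the product $\overline{p_+(\theta)}\,p_-(\theta)$ with $p_\pm=\bra{\pm}\Pi U\Pi\ket{0}$ is a degree-$2q$ trigonometric polynomial in which the unknown phase $e^{i\gamma(\theta)}$ cancels and which approximates $\tfrac12 e^{-it\sin^2\theta}$; feeding that into the paper's Lemma~\ref{lem:sqrtt-poly} (or its rescaled version) repairs the argument. As written, though, your reduction and your Case 2 degree bound do not establish the theorem.
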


\subsection{\texorpdfstring{$t\sqrt{\lambda\Delta}$}{Time-dominated regime} lower bound for the LCU model}
\label{sec:LCUboundGA}

We provide the proof of Thm.~\ref{thm:t-bound}, which concerns the time-dominated regime, for the LCU model.
The proof in this case is similar to that
in Sec.~\ref{sec:lcu-nff}, with the modification
that we use the square of the Hamiltonians $H_{x}$ to obtain a gap-amplifiable Hamiltonian.
This modifies that analysis slightly, as we discuss below.

We consider the gap-amplifiable Hamiltonian acting on $\cH_{K,N}=\mathbb C^{N^K}$
\begin{align}
    H_{K,N}  = \sum_{k=1}^K \one_N^{\otimes k-1} \otimes (H_{x_k})^2 \otimes \one_N^{\otimes K-k} \;,
\end{align}
where $H_x = (1+1/\sqrt N) \one_N - \ketbra x - \ketbra s$, $\ket s=\frac 1 {\sqrt N}\sum_{x'=0}^{N-1}\ket{x'}$.
Each of the $K$ bit strings $x_k \in \{0,1\}^n$ is associated with an instance of Grover search over $N$ items.
Since $\|H_{x_k}\|=\cO(1)$, the $H_{K,N}$ above is a $(\lambda_K,K)$-gap-amplifiable Hamiltonian with $\lambda_K = \cO(K)$. That is, using Lemma~\ref{lem:GAreduction}, the Hamiltonian can also be expressed as $H_{K,N} = \lambda_K A_{K,N}^\dagger A_{K,N}^{}$. The operators $A_{K,N}^{}$ of dimension $(KN) \times N$ can be constructed
from the individual operators $A_k^\dagger = A_k^{}=H_{x_k}/(1+1/\sqrt N)$ of dimension $N \times N$:
\begin{align}
    A_{K,N}:= \frac 1 {\sqrt K}\sum_{k=1}^K \ket k \otimes ( \one_N^{\otimes k-1} \otimes A_k \otimes \one_N^{\otimes K-k})\;.
\end{align}
This can also be expressed as
\begin{align}
   A_{K,N} = \tilde A_{K,N} \Pi \;,   
\end{align}
where $\tilde A_{K,N}:= \sum_{k=1}^K \ketbra k \otimes ( \one_N^{\otimes k-1} \otimes A_k \otimes \one_N^{\otimes K-k})$ is of dimension $(KN)\times (KN)$ and $\Pi$ is 
the projector onto $\cH_{K,N}$; explicitly, we can project onto $\frac 1 {\sqrt K} \sum_{k=1}^K \ket k$. 
Note that $\tilde A_{K,N}$ can be expressed as an LCU with $L_K=\Theta(K)$ unitaries that use ${\rm U}_1$, ${\rm U}_2(x_k)$, and ${\rm U}_3$ in Sec.~\ref{sec:lcu-nff}.

Importantly, the corresponding ${\rm SELECT}(A_{K,N})$
can be implemented with a single call to an oracle that implements $O_f \ket{k}\ket{x} = (-1)^{x_k}\ket{k}\ket{x}$.
This is the oracle $O_f$ of Lemma~\ref{lem:state-lb}. 
${\rm SELECT}(A_{K,N})$ requires applying ${\rm U}_2(x_k)$
on a different subsystem conditioned in $\ket k$. However,
like  in Sec.~\ref{sec:lcu-nff}, this can be achieved 
by using $O_f$ and conjugating it with corresponding controlled-SWAP gates.

The low-energy state $\ket{\psi_{K,N}}:=\ket S=\ket s^{\otimes K}$ is now exclusively supported on the subspace of eigenvalues $[0,\Delta_{K,N}=\frac {4 K}N]$ of $H_{K,N}$.
The eigenstates of $H_{K,N}$ are the same as those discussed in 
Sec.~\ref{sec:lcu-nff}, and to analyze the time evolution of $\ket S$  we consider a single system to obtain
\begin{align}
    |\bra{x_k}e^{-it (H_{x_k})^2}\ket{s}|^2 &= \left|\frac{1+\sqrt{\frac{1}{N}}}{2} - e^{-4it/N}\frac{1-\sqrt{\frac{1}{N}}}{2}\right|^2 \geq \sin\left( \frac{2t}{N}\right)^2\;.
\end{align}
At time $t_{N}=\frac \pi 4 N$, each $\ket s$ is evolved
exactly to $\ket {x_k}$. That is, if $\ket \phi = \bigotimes_{k=1}^K \ket{x_k}$, then $e^{-it_{N}H_{K,N}}\ket S= \ket \phi$.

By Lemma~\ref{lem:state-lb}, simulating this time evolution requires
$\Omega (K \sqrt N)$ queries to the oracle $O_f$, and hence 
$\Omega (K \sqrt N)$ queries to ${\rm SELECT}(A_{K,N})$.
Written in terms of $\lambda_{K}$, $\Delta_{K,N}$, and $t_{N}$, this lower bound is $\Omega (t_{N} \sqrt{\lambda_{K} \Delta_{K,N}})$, which applies even when $\Delta_{K,N}/\lambda_K=\cO(1/N)$ is asymptotically small. Then, we proved a lower bound of $\Omega(t\sqrt{\lambda \Delta})$ at time $t = \Theta(\lambda/\Delta)$ for gap-amplifiable Hamiltonians, when access to the LCU oracles for $A_{K,N}$ is given.

\subsection{\texorpdfstring{$t\sqrt{\lambda\Delta}$}{Time-dominated regime} lower bound for the sparse matrix model}

We provide the proof of Thm.~\ref{thm:t-bound}, which concerns the time-dominated regime, for the sparse matrix model.
We introduce a gap-amplifiable Hamiltonian $H_{K,N}$ acting on $\cH_{K,N}=\mathbb C^{N^K}$ built from an expander graph similar to a construction in Ref.~\cite{somma2013spectral}. Once again, time evolution under this Hamiltonian solves $(\mathrm{PARITY}\circ\mathrm{OR})_{K,N}$. The Hamiltonian is also a sum of gap-amplifiable Hamiltonians acting on $K$ different subsystems of dimension $N$. Each of these Hamiltonians can be expressed as $H_{x_k} = \sum_{l=1}^L A_{l,x_k}^\dagger A_{l,x_k}^{}=\lambda A^\dagger_k A^{}_k$ for constant $L$, $k \in [K]$, and $\lambda=\cO(1)$, and $\|A_{l,x_k}\|_{\max} \le 1$. 
Explicitly, the gap-amplifiable Hamiltonian under consideration is
\begin{align}
\label{eq:fullGAsparse}
    H_{K,N} = \sum_{k=1}^K   \one_N^{\otimes k-1} \otimes \left(\sum_{l=1}^{L} A^\dagger_{l,x_k} A^{}_{l,x_k} \right) \otimes \one_N^{\otimes K-k} = \lambda_K A^\dagger_{K,N} A^{}_{K,N} \;.
\end{align}
In particular, in this construction, if each $H_{x_k}$ has constant sparsity then $H_{K,N}$ is $d_K=\cO(K)$-sparse. Moreover, if $\lambda= \cO(1)$, then 
 $\lambda_K = \cO(K)$. 
 
 To show that $H_{K,N}$ produces the required state of Lemma~\ref{lem:state-lb}, we can largely rely on our proof for the LCU model. It suffices to analyze the individual gap-amplifiable Hamiltonians $H_x$ and show that the state $\ket{s}$ is mostly confined to the low-energy subspace corresponding to $\Delta = \cO(1/N)$. We also require that evolving this low-energy state for time $t_N=\cO(N)$ solves an instance of quantum search.

We introduce a suitable gap-amplifiable Hamiltonian with these properties. The marked element is $x$. For organizational purposes, we divide the bulk of the proof into three pieces: we introduce a Hamiltonian $H_x$ and show it is gap-amplifiable; we show that $\ket{s}$ is largely confined in the low-energy in $H_x$ and replace it with a similar state $\ket{\psi}$ that is fully confined in the low-energy subspace; finally, we show that time evolution $e^{-it_NH_x}\ket{\psi}$ produces the marked state with constant probability in time $t_N=\cO(N)$.

\begin{lemma}
\label{lem:Hxgapamplifiable}
Let $x \in \{0,1\}^n$ and $O_x:\ket y \rightarrow (-1)^{\delta_{x,y}} \ket y$ be Grover's oracle. Then, there exists a $d$-sparse $(\lambda,L)$-gap-amplifiable Hamiltonian $H_x=\sum_{l=0}^{L-1} A^\dagger_{l,x} A^{}_{l,x}$ acting on $\mathbb C^N$, such that $d=\cO(1)$, $\lambda=\cO(1)$, and $L=d+1=\cO(1)$. The operators $A^{}_{l,x}$ are sparse and of dimension $N \times N$, and satisfy $A^{}_{l,x}=A^{\dagger}_{l,x}=\Pi_{l,x}$, for some projectors $\Pi_{l,x}$.
Moreover, access to the nonzero entries of $A_{l,x}$
and their positions can be performed with one use of $O_x$. 
Finally, $H_x$ has lowest eigenvalue $0$ and the first nonzero eigenvalue is at least $1/(4N)$.
\end{lemma}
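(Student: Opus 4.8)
The plan is to realise $H_x$ as a ``deformed'' Laplacian of a fixed constant-degree expander $G$: deforming the edge terms incident to the marked vertex $x$ moves the (unique) frustration-free ground state from the uniform superposition $\ket s$ to $\ket g\propto\ket s+\ket x$, while the bulk spectral gap stays of order one, and it is the expander's constant gap that produces the claimed $1/(4N)$ lower bound on the first excited energy.

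Concretely, I would fix a $d$-regular expander $G$ on the $N$ vertices $\{0,1\}^n$ with algebraic connectivity $\lambda_2(L_G)\ge 1$ --- any Ramanujan graph with $d>4$ works, since then $\lambda_2(L_G)\ge d-2\sqrt{d-1}\ge 1$ --- and, by Vizing's theorem, properly edge-colour $G$ with $L\le d+1$ colour classes, each a matching. Set $g_x:=1+\sqrt N$ and $g_w:=1$ for $w\ne x$, and for each edge $(u,v)$ let $\ket{\xi_{uv}}:=(g_v\ket u-g_u\ket v)/\sqrt{g_u^2+g_v^2}$, so that $\ketbra{\xi_{uv}}=\tfrac12(\ket u-\ket v)(\bra u-\bra v)$ away from $x$, while for an edge $(u,x)$ it is proportional to $(\ket u-(1+\sqrt N)\ket x)(\bra u-(1+\sqrt N)\bra x)$. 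Since the edges of a colour class are vertex-disjoint, $\Pi_{l,x}:=\sum_{(u,v)\in l}\ketbra{\xi_{uv}}$ is a $2$-sparse orthogonal projector; I put $A_{l,x}:=\Pi_{l,x}$ and $H_x:=\sum_{l=0}^{L-1}\Pi_{l,x}$ (padding with a zero projector so that $L=d+1$). Then $H_x$ is $\cO(1)$-sparse, $\|H_x\|\le L=\cO(1)$, so $H_x$ is $(\lambda,L)$-gap-amplifiable with $\lambda=\cO(1)$; and since $G$ and its colouring are public, computing the nonzero entries of any row of $\Pi_{l,x}$ reduces, given this fixed data, to testing whether the vertices involved equal $x$, which is one use of $O_x$ (turning the phase oracle into a bit oracle costs only free Hadamards).

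The common kernel of all the edge projectors consists of the $\ket\psi$ with $g_v\psi_u=g_u\psi_v$ on every edge; as $G$ is connected this forces $\ket\psi\propto\sum_v g_v\ket v$, i.e.\ the one-dimensional space spanned by $\ket g:=(\ket s+\ket x)/\|\ket s+\ket x\|$. Hence $H_x\ket g=0$, and $H_x\succeq 0$ gives lowest eigenvalue $0$. For the gap I would use a similarity transformation: with $D_g:=\sum_v g_v\ketbra v$ one checks $H_x=D_g^{-1}\tilde L\,D_g^{-1}$, where $\tilde L$ is the Laplacian of $G$ with positive edge weights $w_{uv}=(g_ug_v)^2/(g_u^2+g_v^2)\in[\tfrac12,1]$ (equal to $\tfrac12$ away from $x$, close to $1$ at $x$). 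Thus $\tilde L\succeq\tfrac12 L_G\succeq\tfrac12\lambda_2(L_G)\,(\one-\tfrac1N\ketbra{\mathbf 1})$, with $\ket{\mathbf 1}:=\sum_v\ket v$. Since $\ket g\propto D_g\ket{\mathbf 1}$, Courant--Fischer together with the substitution $\ket\psi=D_g\ket\phi$ shows that the first excited energy of $H_x$ equals the minimum of $\langle\phi|\tilde L|\phi\rangle/\langle\phi|D_g^2|\phi\rangle$ over $\phi\ne 0$ with $\langle\mathbf 1|D_g^2|\phi\rangle=0$; using the linear constraint to eliminate $\langle\mathbf 1|\phi\rangle$ and bounding $\sum_{w\ne x}|\phi_w|^2\ge|\sum_{w\ne x}\phi_w|^2/(N-1)$ by Cauchy--Schwarz, a direct calculation yields $\lambda_2(H_x)\ge\lambda_2(L_G)/(2N)\ge 1/(4N)$ for $N$ beyond an absolute constant (small $N$ can be absorbed by enlarging $d$).

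\textbf{Main obstacle.} The only substantive step is this last spectral estimate. Conceptually it is clear that everything except the single soft mode along $\ket s-\ket x$ is gapped by $\Omega(1)$ (the expander), and that this soft mode costs $\Theta(d/N)$, so that $\lambda_2(H_x)=\Theta(1/N)$ and $\ket s$ will lie in an energy window of width $\Theta(1/N)$; the work is in turning this into the clean, constant-explicit bound $\ge 1/(4N)$ via the weighted-Laplacian similarity and careful handling of the constraint $\langle\mathbf 1|D_g^2|\phi\rangle=0$, which is precisely where the hypothesis $d>4$ (hence $\lambda_2(L_G)\ge 1$, with slack) is used.
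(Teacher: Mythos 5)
Your construction is essentially the paper's: a constant-degree expander whose graph Laplacian is deformed at the marked vertex, with each edge contributing a rank-one projector $\propto (g_v\ket u - g_u\ket v)(g_v\bra u - g_u\bra v)/(g_u^2+g_v^2)$, and a proper edge-colouring grouping vertex-disjoint edges into $L\le d+1$ orthogonal projectors $\Pi_{l,x}$. The only differences are cosmetic and substantive in opposite directions. Cosmetically, your weights $g_x=1+\sqrt N$, $g_w=1$ give ground state $\propto \ket s+\ket x$ rather than the paper's $\tfrac1{\sqrt2}\ket x+\tfrac1{\sqrt2}\ket\perp$ (from $c_x=1/\sqrt{d(N-1)}$, $c_w=1/\sqrt d$); these agree up to $\cO(1/\sqrt N)$ and both have overlap $\ge 1/\sqrt2$ with $\ket x$ and with $\ket s$, which is all the downstream Lemma~\ref{lem:Hxspectrum} actually needs, though if you adopt your weights you should restate that lemma's explicit ground state accordingly. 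Substantively, where the paper simply cites Ref.~\cite{somma2013spectral} for the $1/(4(N-1))$ gap, you prove it: the similarity $H_x=D_g^{-1}\tilde L D_g^{-1}$ with edge weights $w_{uv}=(g_ug_v)^2/(g_u^2+g_v^2)\in[\tfrac12,1]$ is correct, and the constrained Rayleigh quotient does close — writing $a=\phi_x$, $S=\sum_{w\ne x}\phi_w=-(1+\sqrt N)^2a$, $Q=\sum_{w\ne x}|\phi_w|^2\ge |S|^2/(N-1)$, the ratio $\bra\phi P\ket\phi/\bra\phi D_g^2\ket\phi$ is monotone in $Q$ and evaluates at $Q_{\min}$ to exactly $2/(\sqrt N(1+\sqrt N))\ge 1/N$, giving $\lambda_2(H_x)\ge \lambda_2(L_G)/(2N)\ge 1/(2N)$. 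That is a clean, self-contained replacement for the citation. The one loose end is the casual invocation of ``any Ramanujan graph with $d>4$'' for every $N=2^n$ — explicit Ramanujan families exist only for special vertex counts — but any fixed constant-degree expander family with adjacency second eigenvalue at most $d/2$ (as the paper assumes) serves the same purpose, so this is the same level of rigor as the original.
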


The proof is in Appendix A of Ref.~\cite{somma2013spectral} and also contained in Appendix~\ref{app:spectrumHtilde}.
To provide a similar proof as we did for the Hamiltonians in Sec.~\ref{sec:LCUboundGA}, we would like $\ket{s}$ to reside within a low-energy subspace of $H_x$ associated with $\Delta = \cO(1/N)$. Unfortunately, $\ket{s}$ includes vanishingly small support outside this low-energy subspace. We accordingly introduce a state $\ket{\psi}$ that has high overlap with $\ket{s}$ and is fully confined in $\cS_\Delta$.

\begin{lemma}
\label{lem:Hxspectrum}
Let $\Pi_\Delta$ be the projector onto the low-energy subspace of the Hamiltonian $H_x$ defined in Lemma~\ref{lem:Hxgapamplifiable}, and let $\Delta = \cO(1/N)$. The state $\ket{\psi}$ defined by
\begin{align}
    \ket{\psi} = \frac{\Pi_\Delta \ket{s}}{\norm{\Pi_\Delta \ket{s}}}
\end{align}
satisfies $|\bra{\psi}\ket{s}| \geq 1-\delta$ for any (arbitrarily small) constant $\delta > 0$. Moreover, for eigenstates $\ket{\phi_j}$ of $H_x$, for $j=0,\dots,N-1$, with ground state $\ket{\phi_0}$, the following holds:
\begin{align}
\label{eq:gs}
    \ket{\psi} = c_0 \ket{\phi_0} + \sum_{j\geq 1}c_j \ket{\phi_j}, \quad c_0 > \frac{1}{2}, \quad \ket{\phi_0} = \frac{1}{\sqrt{2}}\ket{x} + \frac{1}{\sqrt{2(N-1)}}\sum_{y\neq x}\ket{y}.
\end{align}
\end{lemma}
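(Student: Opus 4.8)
The plan is to prove Lemma~\ref{lem:Hxspectrum} in two short steps, taking as input the explicit expander-graph construction of $H_x$ and its spectral data from Appendix~A of Ref.~\cite{somma2013spectral} (reproduced in Appendix~\ref{app:spectrumHtilde}): (a) the kernel of $H_x$ is one-dimensional, spanned by $\ket{\phi_0}=\tfrac{1}{\sqrt 2}\ket{x}+\tfrac{1}{\sqrt{2(N-1)}}\sum_{y\neq x}\ket{y}$; (b) the first excited energy is $E_1=\Theta(1/N)$ (the lower bound $E_1\ge 1/(4N)$ is Lemma~\ref{lem:Hxgapamplifiable}); and (c) every remaining eigenstate has energy $\Omega(1)$, and $\ket s$ has only $\mathcal O(1/N)$ total weight on them --- equivalently $\bra s H_x\ket s=\mathcal O(1/N)$, which one can also see directly because the expander-graph terms of $H_x$ annihilate the uniform state $\ket s$ while the single marked-element term contributes weight $\mathcal O(1/N)$.

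For the confinement claim, fix the constant $\delta>0$ and pick $\Delta=\Theta(1/N)$ with $E_1\le\Delta$ and the constant large enough (still $\Delta<E_2=\Omega(1)$ for $N$ large). Since $\Pi_\Delta$ is Hermitian, $\braket{\psi}{s}=\bra s\Pi_\Delta\ket s/\|\Pi_\Delta\ket s\|=\|\Pi_\Delta\ket s\|$, so it suffices to bound $\|(\one-\Pi_\Delta)\ket s\|^2$. Because $H_x\succeq 0$ and $(\one-\Pi_\Delta)$ projects onto eigenstates of energy $>\Delta$, a Markov-type estimate gives $\|(\one-\Pi_\Delta)\ket s\|^2\le \bra s H_x\ket s/\Delta$, which is at most $\delta$ once $\Delta$ is a large enough constant times $\bra s H_x\ket s=\mathcal O(1/N)$. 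Hence $|\braket{\psi}{s}|=\|\Pi_\Delta\ket s\|\ge\sqrt{1-\delta}\ge 1-\delta$. (Alternatively, $\Pi_\Delta\ket s$ picks up exactly the components of $\ket s$ along $\ket{\phi_0}$ and $\ket{\phi_1}$, whose combined weight is $1-\mathcal O(1/N)$ by (c).)

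For the statement about $c_0$, choose the phase of $\ket{\phi_0}$ so that $\braket{\phi_0}{s}>0$. Using (a), a direct computation gives $\braket{\phi_0}{s}=\tfrac1{\sqrt N}\bigl(\tfrac1{\sqrt2}+(N-1)\cdot\tfrac1{\sqrt{2(N-1)}}\bigr)=\tfrac{1+\sqrt{N-1}}{\sqrt{2N}}\ge\tfrac1{\sqrt2}$, where the last inequality is $1+\sqrt{N-1}\ge\sqrt N$ (square both sides). Since $E_0=0\le\Delta$ we have $\ket{\phi_0}\in\cS_\Delta$, so $\bra{\phi_0}\Pi_\Delta=\bra{\phi_0}$, and therefore $c_0=\braket{\phi_0}{\psi}=\braket{\phi_0}{s}/\|\Pi_\Delta\ket s\|\ge\braket{\phi_0}{s}\ge\tfrac1{\sqrt2}>\tfrac12$, using $\|\Pi_\Delta\ket s\|\le 1$. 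The displayed decomposition $\ket\psi=c_0\ket{\phi_0}+\sum_{j\ge1}c_j\ket{\phi_j}$ is then just the expansion of $\ket\psi$ in the eigenbasis of $H_x$.

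The main obstacle is item (c): certifying that $\ket s$ leaks only $\mathcal O(1/N)$ of its weight out of the two-dimensional low-energy Grover subspace of $H_x$, together with the explicit form (a) of the ground state. Both are properties of the particular expander-graph Hamiltonian of Ref.~\cite{somma2013spectral}, so the cleanest route is to import them; once granted, the rest is elementary arithmetic. A minor point is ensuring $E_1=\Theta(1/N)$ (not merely $E_1\ge 1/(4N)$), so that a legitimate choice $\Delta=\mathcal O(1/N)$ with $\Delta\ge E_1$ exists --- again a consequence of the construction.
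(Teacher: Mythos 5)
Your proposal is correct and follows essentially the same route as the paper's Appendix~\ref{app:spectrumHtilde}: both arguments rest on the explicit zero-energy ground state of the expander-graph Hamiltonian, an $\cO(1/N)$ bound on the energy of $\ket{s}$, and Markov's inequality to confine $\ket{s}$ to the subspace below $\Delta=\cO(1/N)$. The only (harmless) differences are that you apply Markov directly to $\bra{s}H_x\ket{s}$ rather than first splitting $\ket{s}$ into the auxiliary states $\ket{\psi_1},\ket{\psi_2}$ as the paper does, and your bound $c_0=\braket{\phi_0}{s}/\norm{\Pi_\Delta\ket{s}}\ge 1/\sqrt{2}$ is obtained by exact computation rather than the paper's $1/\sqrt{2}-\delta$ perturbation estimate, which is slightly cleaner.
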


The proof of this lemma is also contained in Appendix~\ref{app:spectrumHtilde}.

We seek to show that time evolution of $\ket{\psi}$ under $H_x$ produces the marked state $\ket x$ with constant probability if the evolution time is $t_N=\cO(N)$. As shown in Eq.~\eqref{eq:gs}, the ground state $\ket {\phi_0}$ has constant overlap with both the marked state $\ket{x}$ and the state $\ket{\psi}$. Hence, we can obtain the marked state with high probability by projecting from $\ket \psi$ onto $\ket{\phi_0}$. We implement this measurement with random time evolution, as given by the following known result.

\begin{lemma}[Ground state projection by random time evolution, from Ref.~\cite{BKS09}]
\label{lem:randomtimemeasurement}
Let $H$ be a Hamiltonian with eigenstates $\ket{\phi_0},\ket{\phi_1},\ldots$ of eigenvalues $\lambda_0 \le \lambda_1 \le \ldots$. The time evolution channel $\mathcal{N}$ defined by drawing evolution times $T$ randomly from a distribution with characteristic function $\Phi$ satisfies
\begin{align}
    \|(\mathcal{N} - \mathcal{M})(\ketbra{s}{s})\|_1   \leq 2\sqrt{\sum_{j \geq 1}|\Phi(\lambda_j) c_0 c_j^*|^2} \leq \sup_j |\Phi(\lambda_j)|\;,
\end{align}
where $\|\cdot \|_1$ indicates trace norm, and
\begin{align}
    \mathcal{M}(\rho) = P_0 \rho P_0 + \mathcal{E}((1-P_0)\rho(1-P_0))
\end{align}
is the channel that performs a perfect measurement of the ground state. Here, $P_0 =\ketbra {\phi_0}$ is the projector onto the ground state and $\mathcal{E}$ is some quantum channel. 
\end{lemma}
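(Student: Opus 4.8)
The plan is to diagonalize both channels in the eigenbasis of $H$ and to exploit the freedom in the definition of $\mathcal{M}$ — that $\mathcal{E}$ may be any channel — to cancel everything in $(\mathcal{N}-\mathcal{M})(\ketbra{s}{s})$ except the coherences between $\ket{\phi_{0}}$ and the excited states, whose norm is then read off directly. First I would write $\mathcal{N}(\rho)=\mathbb{E}_{T}[\,e^{-iTH}\rho\,e^{iTH}\,]$, with $T$ drawn from the given law and $\Phi(\omega)=\mathbb{E}_{T}[e^{i\omega T}]$. Because $\rho\mapsto e^{-iTH}\rho\,e^{iTH}$ is invariant under shifting $H$ by a scalar, I may assume $\lambda_{0}=0$ (which holds in the application). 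Expanding $\rho=\sum_{j,k}\rho_{jk}\ket{\phi_{j}}\bra{\phi_{k}}$ then gives $\mathcal{N}(\rho)=\sum_{j,k}\rho_{jk}\,\Phi(\lambda_{k}-\lambda_{j})\ket{\phi_{j}}\bra{\phi_{k}}$; in particular the $(0,0)$ entry is unchanged since $\Phi(0)=1$, while the $(0,k)$ and $(j,0)$ entries are scaled by $\Phi(\lambda_{k})$ and $\overline{\Phi(\lambda_{j})}$.

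Next I would take $\mathcal{E}$ to be random time evolution restricted to the excited subspace, $\mathcal{E}(\sigma)=\mathbb{E}_{T}[\,e^{-iTH}\sigma\,e^{iTH}\,]$ for $\sigma$ supported on $\mathrm{range}(\one-P_{0})$, which is completely positive and trace preserving because $e^{-iTH}$ commutes with $P_{0}$. Writing $\ket{s}=\sum_{j}c_{j}\ket{\phi_{j}}$, the map $\mathcal{M}$ then reproduces the $(0,0)$ entry ($=|c_{0}|^{2}$) and the whole excited--excited block of $\mathcal{N}(\ketbra{s}{s})$, so $(\mathcal{N}-\mathcal{M})(\ketbra{s}{s})$ reduces to the cross terms $\sum_{j\geq1}c_{0}c_{j}^{*}\Phi(\lambda_{j})\ket{\phi_{0}}\bra{\phi_{j}}$ plus their conjugate, i.e. $\ket{\phi_{0}}\bra{g}+\ket{g}\bra{\phi_{0}}$ with $\ket{g}=c_{0}^{*}\sum_{j\geq1}c_{j}\,\overline{\Phi(\lambda_{j})}\ket{\phi_{j}}$ lying in the excited subspace.

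Finally, since $\ket{g}\perp\ket{\phi_{0}}$, the Hermitian operator $\ket{\phi_{0}}\bra{g}+\ket{g}\bra{\phi_{0}}$ has rank two with eigenvalues $\pm\|\ket{g}\|$, so its trace norm is $2\|\ket{g}\|=2\sqrt{\sum_{j\geq1}|\Phi(\lambda_{j})c_{0}c_{j}^{*}|^{2}}$, which is the first bound (an equality for this $\mathcal{E}$). The second follows from $\sum_{j\geq1}|\Phi(\lambda_{j})c_{0}c_{j}^{*}|^{2}\leq\big(\sup_{j\geq1}|\Phi(\lambda_{j})|\big)^{2}|c_{0}|^{2}\sum_{j\geq1}|c_{j}|^{2}=\big(\sup_{j\geq1}|\Phi(\lambda_{j})|\big)^{2}|c_{0}|^{2}(1-|c_{0}|^{2})$ together with $|c_{0}|^{2}(1-|c_{0}|^{2})\leq\tfrac{1}{4}$. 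I do not expect any genuine obstacle: the argument is purely algebraic, and the only things to watch are the sign and conjugation conventions for $\Phi$ and checking that the $\mathcal{E}$ above is a legitimate channel — the content is just that random time evolution dephases the energy eigenbasis, with the residual ground--excited coherences damped by $\Phi$ evaluated at the excited energies.
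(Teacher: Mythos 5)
Your argument is correct and complete. Note that the paper does not prove this lemma at all --- it is imported verbatim from Ref.~\cite{BKS09} --- so there is no in-paper proof to compare against; your derivation is a faithful self-contained reconstruction of the phase-randomization argument from that reference. The key steps all check out: $\mathcal{N}$ acts on matrix elements as $\rho_{jk}\mapsto \Phi(\lambda_k-\lambda_j)\rho_{jk}$; choosing $\mathcal{E}$ to be the same randomized evolution restricted to the excited subspace (legitimate, since $e^{-iTH}$ commutes with $P_0$) cancels the diagonal block and the excited--excited block exactly, leaving only the rank-two ground--excited coherence $\ket{\phi_0}\!\bra{g}+\ket{g}\!\bra{\phi_0}$ with trace norm $2\|\ket{g}\|$; and the final bound follows from $|c_0|^2(1-|c_0|^2)\le 1/4$. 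Your two caveats are the right ones and are handled correctly: the statement's $\Phi(\lambda_j)$ implicitly assumes $\lambda_0=0$ (true in the paper's application, and otherwise one just replaces $\lambda_j$ by $\lambda_j-\lambda_0$), and only $|\Phi(\lambda_j)|$ enters the bound, so the sign/conjugation convention for the characteristic function is immaterial.
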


Since for $j \ge 1$ we have $\lambda_{j} > 1/(4N)$ for $H_x$, choosing the distribution to be the uniform interval $[0, \alpha N]$ for constant $\alpha>0$ gives the characteristic function
\begin{align}
    \epsilon := \sup_j |\Phi(\lambda_j)| = \max_{1/(4N) \leq \lambda \leq d+1} \left|\frac{e^{i\lambda \alpha N}-1}{i \lambda \alpha N}\right| \leq \frac{8}{\alpha}\;.
\end{align}
The corresponding channel $\mathcal{M}$ outputs a state
\begin{align}
    \sigma = \mathcal{M}(\ketbra{\psi}{\psi}) = c_0^2 \ketbra{\phi_0}{\phi_0} + (1-c_0^2) \rho_\perp\;,
\end{align}
where $\bra{\psi}\rho_\perp\ket{\psi}=0$.  If the state $\tilde\sigma = \mathcal{N}(\ketbra{\psi}{\psi})$ is $\epsilon$-close to $\sigma$, then
\begin{align}
    |\bra{\phi_0}(\tilde\sigma-\sigma)\ket{\phi_0}| \leq \|\tilde\sigma-\sigma\| \leq \|\tilde\sigma-\sigma\|_1 \leq \epsilon\;.
\end{align}
Hence, by the triangle inequality
\begin{align}
    |\bra{\phi_0}\tilde\sigma\ket{\phi_0}| \geq c_0^2 - \epsilon \geq \frac{1}{4} - \epsilon\;.
\end{align}
Due to the definition of the ground state, evolving for random times of at most $\alpha N$ will yield the marked string $x$ with probability at least
\begin{align}
    |\bra{x}\mathcal{N}(\ketbra{\psi}{\psi})\ket{x}|^2 \geq \frac{1}{2}\left(\frac{1}{4}-\frac{8}{\alpha}\right)\;.
\end{align}
By choosing, for example $\alpha = 64$, we conclude that $t = \cO(N) = \cO(1/\Delta)$ is sufficient to solve quantum search with constant probability using the Hamiltonian $H_x$.

The full Hamiltonian over $K$ subsystems that solves $(\mathrm{PARITY}\circ\mathrm{OR})_{K,N}$ is the one in Eq.~\eqref{eq:fullGAsparse}. Each $H_{x_k}$
is given in Lemma~\ref{lem:Hxgapamplifiable}, that is,
\begin{align}
    H_{x_k} = \sum_{l=0}^{d} A_{l,x_k}^\dag A_{l,x_k}^{}=\lambda A_k^\dagger A_k^{} \;.
\end{align}
The sparsity of $H_{x_k}$ is $d = \cO(1)$ and
\begin{align}
    \lambda= d+1, \quad A_k := \frac{1}{\sqrt{d+1}} \left(\sum_{l=0}^{d} \ket{l} \otimes A_{l,x_k}\right) 
    \;.
\end{align}
The operators $A_{K,N}$ are defined via
\begin{align}
    \lambda_K = K(d+1), \quad A_{K,N} =  \frac{1}{\sqrt K} \sum_{k=1}^{K} \ket{k} \otimes A_k \;,
\end{align}
which allow us to express the Hamiltonian as $H_{K,N}=\lambda_K A^{\dagger}_{K,N}A^{}_{K,N}$.
The operators $A_{K,N}$ are $\cO(K)$-sparse.
Each entry can be labeled by $(k,x)$, and constructing
the matrix oracle that provides access to these entries and their positions can be  done with one use of the oracle of Lemma~\ref{lem:state-lb} that implements $O_f \ket{k}\ket{x} = (-1)^{x_k}\ket{k}\ket{x}$.
The initial state is $\ket{\psi_{K,N}}:=\Pi_{\Delta_{K,N}}\ket S/ \|\Pi_{\Delta_{K,N}}\ket S\|$ for $\Delta_{K,N}=\cO(K/N)$ by Lemma~\ref{lem:Hxspectrum}.

Hence, using the query complexity lower bound 
$\Omega(K\sqrt{N})$ for this problem in Lemma~\ref{lem:state-lb}, 
it implies a lower bound $\Omega(t_N \sqrt{\lambda_K \Delta_{K,N}})$ on the number of queries to the matrix oracle for $A_{K,N}$. As in the LCU case, this applies even when $\Delta_{K,N}/\lambda_K=\cO(1/N)$ is asymptotically small.

\subsection{\texorpdfstring{$\sqrt{t\lambda}$}{Intermediate regime} lower bound for the LCU model}

We now address the intermediate regime of parameters satisfying $\log (1/\epsilon) = o(t\lambda)$ and $t\Delta = o(\log (1/\epsilon))$, and provide the proof to Thm.~\ref{thm:sqrtt-bound}. In this regime, the optimization over $\Gamma \in [\Delta, \lambda]$ in Lemma.~\ref{lem:GAsimulation} produces $\Gamma = \log(1/\epsilon)/t$, and the query complexity of our algorithm is
\begin{align}
    \cO\left(\sqrt{t\lambda \log ({1}/{\epsilon})}\right).
\end{align}
We will show a lower bound of $\Omega(\sqrt{t\lambda})$   applicable to the LCU model.

Specifically, to produce problem instances in the intermediate regime, we choose parameters such that $\lambda=1$, $1 > \Delta > 0$, and $1\ge \sqrt{\epsilon}\geq t\Delta \geq 8\epsilon >0$. For example, we can let $\Delta = \epsilon$ and $t = 1/\sqrt{\epsilon}$, and satisfy $\log (1/\epsilon) = o(1/\sqrt{\epsilon})$ and $\sqrt{\epsilon} = o(\log 1/\epsilon)$, which are the desired conditions since $t \lambda = 1/\sqrt \epsilon$ and $t \Delta= \sqrt{\epsilon}$ in this example. Simple modifications of our proofs below would allow one to consider other classes of instances in this regime. 
We will show that simulating certain instances in our setting is related to realizing a trigonometric polynomial approximation to $t\sin^2\theta$, for sufficiently large $t>0$ and sufficiently small $|\theta|$. The proof of Thm.~\ref{thm:sqrtt-bound} uses the following result on the degree for such a polynomial.

\begin{lemma}[Lower bound on the degree of trigonometric-polynomial approximations]
\label{lem:sqrtt-poly}
Let $t > 0$, $\epsilon > 0$, and $\Delta < 1$ satisfy $1 \geq \sqrt\epsilon \geq t\Delta \geq 8\epsilon$, and define $\theta_M:=\arcsin(\sqrt\Delta)$. Then, the lowest degree $K$ of a complex trigonometric polynomial $P_K(\theta) := \sum_{k=-K}^K a_k e^{ik\theta}$, $a_k \in \mathbb C$, satisfying
\begin{align}
    |P_K(\theta)| &\leq 1, \; \forall \; \theta \in (-\pi,\pi] \;,\\
    |e^{-it\sin^2\theta} - P_K(\theta)| &\leq \epsilon, \; \forall \; \theta \in [-\theta_M,\theta_M] \; ,
\end{align}
is $K \geq \frac{1}{\pi}\sqrt{2t}$.
\end{lemma}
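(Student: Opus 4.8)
The plan is to exploit a tension between two facts: a trigonometric polynomial $P_K$ with $\|P_K\|_\infty\le 1$ on the whole circle has all its derivatives controlled by powers of $K$ through Bernstein's inequality, whereas the target $g(\theta):=e^{-it\sin^2\theta}$ has large \emph{discrete curvature} over the tiny interval $[-\theta_M,\theta_M]$. Concretely, although $g$ varies by only $O(t\Delta)=O(\sqrt\epsilon)$ on this interval, its second central difference at the nodes $-\theta_M,0,\theta_M$ has magnitude of order $t\Delta$, and dividing by $\theta_M^2$ (which is of order $\Delta$) this is of order $t$. So the key object I would track is the second central difference $D:=P_K(\theta_M)-2P_K(0)+P_K(-\theta_M)$ together with its counterpart $D_g:=g(\theta_M)-2g(0)+g(-\theta_M)$.

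For the lower bound on $|D|$: since $\sin^2(\pm\theta_M)=\Delta$ and $\sin^2 0=0$, one has exactly $D_g=2(e^{-it\Delta}-1)$, hence $|D_g|=4\sin(t\Delta/2)\ge \tfrac{4t\Delta}{\pi}$ by Jordan's inequality, which applies because $t\Delta\le\sqrt\epsilon\le 1<\pi$. The hypothesis $|P_K-g|\le\epsilon$ at the three nodes gives $|D-D_g|\le 4\epsilon$, and the hypothesis $t\Delta\ge 8\epsilon$ then yields $|D|\ge \tfrac{4t\Delta}{\pi}-\tfrac{t\Delta}{2}$. For the upper bound on $|D|$: write $D=\int_0^{\theta_M}\!\big(P_K'(s)-P_K'(-s)\big)\,ds=\int_0^{\theta_M}\!\!\int_{-s}^{s}P_K''(u)\,du\,ds$, and bound $\|P_K''\|_\infty\le K^2\|P_K\|_\infty\le K^2$ by applying Bernstein's inequality twice (using $\|P_K\|_\infty\le 1$ on $(-\pi,\pi]$); this gives $|D|\le K^2\theta_M^2$. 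Combining the two estimates gives $K^2\theta_M^2\ge t\Delta\big(\tfrac4\pi-\tfrac12\big)$; substituting $\Delta=\sin^2\theta_M$ and using Jordan's inequality once more in the form $\sin^2\theta_M/\theta_M^2\ge 4/\pi^2$ (valid because $\Delta<1$ forces $\theta_M<\pi/2$) yields $K^2\ge \tfrac{4}{\pi^2}\big(\tfrac4\pi-\tfrac12\big)t$, which is $\ge \tfrac{2}{\pi^2}t$ precisely because $\pi\le 4$. Hence $K\ge\tfrac1\pi\sqrt{2t}$.

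The step I expect to require the most care is conceptual rather than technical: recognizing the right obstruction. The naive reading of the hypotheses (``$g$ is essentially constant on $[-\theta_M,\theta_M]$, so a constant almost works'') suggests that no degree is needed; one must instead isolate the quantity — here the normalized second difference — that is simultaneously pinned down to within $O(\epsilon/\Delta)=O(t)$ by the approximation quality and bounded by $K^2\theta_M^2=O(\Delta K^2)$ by global boundedness. Once this is in place everything is elementary; the only arithmetic subtlety is keeping the constants so that the comparison $\pi\le 4$ closes the bound with the stated constant $1/\pi$. As a minor remark, Bernstein's inequality and the integral identity for $D$ hold verbatim for complex-valued trigonometric polynomials, so no symmetrization of $P_K$ into an even polynomial is needed.
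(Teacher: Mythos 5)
Your proof is correct, and while it rests on the same core obstruction as the paper's argument --- Bernstein's inequality forces $|P_K''|\le K^2$, which caps the second-order variation of $P_K$ over $[-\theta_M,\theta_M]$ at $K^2\theta_M^2=O(K^2\Delta)$, whereas the target forces a variation of order $t\Delta$ --- your execution differs in two substantive ways. First, the paper Taylor-expands $e^{-it\sin^2\theta}=1-it\sin^2\theta+r(\theta)$ and uses the hypothesis $\sqrt\epsilon\ge t\Delta$ to bound the remainder $|r|\le(t\Delta)^2\le\epsilon$, then works with the auxiliary polynomial $Q_K=i(1-P_K)$ approximating $t\sin^2\theta$; you instead avoid the linearization entirely by evaluating the second central difference $D_g=2(e^{-it\Delta}-1)$ exactly and lower-bounding it with Jordan's inequality, so you only need $t\Delta\le 1<\pi$ rather than $(t\Delta)^2\le\epsilon$ --- your version of this step is therefore slightly more robust. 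Second, the paper's pointwise Taylor estimate $|Q_K(\theta_M)|\le|Q_K(0)|+\theta_M|Q_K'(0)|+K^2\theta_M^2$ leaves a first-derivative term that must be killed by symmetrizing to an even polynomial; your second-difference formulation $D=\int_0^{\theta_M}\int_{-s}^{s}P_K''(u)\,du\,ds$ annihilates the odd part automatically, so no symmetrization is needed (and, as you note, Bernstein holds verbatim for complex-coefficient trigonometric polynomials). Your constants close correctly: $K^2\ge\frac{4}{\pi^2}\bigl(\frac{4}{\pi}-\frac{1}{2}\bigr)t\ge\frac{2}{\pi^2}t$ since $\pi\le 4$, matching the stated bound $K\ge\frac{1}{\pi}\sqrt{2t}$.
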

\begin{proof}
 In the Taylor series expansion of the exponential
\begin{align}
    e^{-it\sin^2\theta} = 1 - it\sin^2\theta + r( \theta),
\end{align}
the remainder satisfies, for $|\theta| \leq \theta_M$,
\begin{align}
    |r(\theta)| \leq \sum_{l=2}^\infty \frac{(t\sin^2\theta)^l}{l!} \leq (t\Delta)^2 \leq \epsilon \;,
\end{align}
where the last inequality follows from the assumption $\sqrt\epsilon \geq t\Delta$. By the triangle inequality, the condition $|e^{-it\sin^2\theta} - P_K(\theta)| \leq \epsilon$ implies that
\begin{align}
    |1 - it\sin^2\theta - P_K(\theta)| \leq 2\epsilon, \; \forall \; \theta \in [-\theta_M,\theta_M].
\end{align}
Consider instead the trigonometric polynomial $Q_K(\theta) := i(1 - P_K(\theta))$.  Applying the triangle inequality, this polynomial is required to satisfy
\begin{align}
\label{eq:qconds}
|Q_K(\theta)| &\le 2, \; \forall \; \theta \in (-\pi,\pi] \;, \\
\label{eq:qconds2}
|t \sin^2 \theta - Q_K(\theta)| &\le 2 \epsilon, \; \forall \; \theta \in [-\theta_M,\theta_M] \;.
\end{align}
We will show a lower bound on $K$ for $Q_K(\theta)$, which implies the same lower bound on the degree of $P_K(\theta)$. 

Let $f'(\theta):= \frac{{\rm d} f(\theta)}{{\rm d}\theta}$. Bernstein's inequality  implies (cf.~\cite{borwein2012polynomials})
\begin{align}
    \sup_{\theta  \in (-\pi,\pi]} \left | Q''_K(\theta)  \right| \leq K^2 \sup_{\theta  \in (-\pi,\pi]} |Q_K(\theta)| \le 2 K^2 \;.
\end{align}
Then,
\begin{align}
  \left|Q_K(\theta_M)\right| &= \left|Q_K(0) + \int_0^{\theta_M} d \theta \left (Q_K'(0) + \int_0^\theta d \theta ' Q_K''(\theta') \right ) \right|\\ &\leq \left|Q_K(0)\right| + \theta_M \left|Q_K'(0)\right| + K^2\theta_M^2.
\end{align}
Due to our condition in Eq.~\eqref{eq:qconds2}, we can impose $|Q_K(0)| \leq 2\epsilon$. Also, $Q_K(\theta)$ 
can in principle be a combination of even and odd degree polynomials. However, we note that the even part $\frac 1 2 (Q_K(\theta)+Q_K(-\theta))$ readily satisfies Eqs.~\eqref{eq:qconds} and~\eqref{eq:qconds2} if $Q_K(\theta)$ does. Hence, we can assume that $Q_K(\theta)$
is an even polynomial and place a lower bound on its degree. This allow us to set $Q_K'(0) = 0$. Since $|Q_K(\theta_M)|$ must be within $2\epsilon$ of $t\sin^2\theta_M$, we obtain
\begin{align}
    t\sin^2 \theta_M - 2\epsilon &\leq 2\epsilon + K^2\theta_M^2.
\end{align}
The condition $\Delta \leq 1$ implies $\theta_M \leq \frac{\pi}{2} \sin \theta_M=\frac \pi 2 \sqrt{\Delta}$, and the condition $t\Delta \ge 8 \epsilon$ implies $t \Delta-4 \epsilon \ge t\Delta/2$, giving the desired result on the polynomial degree:
\begin{align}
    K \geq \frac{\sqrt{t\Delta - 4\epsilon}}{\theta_M} \geq \frac{2}{\pi}\sqrt{\frac{t\Delta}{2\Delta}} \geq \frac{1}{\pi}\sqrt{2t}.
\end{align}
\end{proof}
The lower bound $\Omega(\sqrt t)$ automatically generalizes to bound the degree of a trigonometric polynomial that approximates, for example, $\frac 1 2 e^{-it \sin^2\theta}$ for $\theta \in [-\theta_M, \theta_M]$.

We now describe how this trigonometric polynomial degree bound  can be adapted to lower bound the query complexity of low-energy simulation in the LCU model. To this end,
we need the following result, which is similar to Claim 5.3 in Ref.~\cite{mande2023tight}.

\begin{lemma} [Adapted from Ref.~\cite{mande2023tight}]
\label{lem:queryboundtrigonometric}
Let $U_\theta$ be a unitary acting on $\mathbb C^2$
of the form
\begin{align}
    U_\theta := \begin{pmatrix}
        e^{i \theta} & 0 \cr 0 & e^{i \theta_M} 
    \end{pmatrix}\;,
\end{align}
where $\theta_M$ is given and fixed, and $\theta \in (-\pi, \pi]$. 
Let $K>0$ be a positive integer. 
Consider a quantum circuit on $q \ge 2$ qubits that
has starting state $\ket 0^{\otimes q}$, uses an arbitrary number of $\theta$-independent two-qubit gates, uses $K$ applications of controlled-$U_\theta$ and its inverse in total, and performs no intermediate measurements. Then the
amplitudes of basis states before the final measurement are degree-$K$ trigonometric polynomials in $\theta$.
\end{lemma}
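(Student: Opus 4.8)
The plan is to track how the Fourier frequencies (in $\theta$) of the matrix entries of the circuit grow as controlled-$U_\theta$ gates are appended, via a straightforward induction, and then read off the amplitudes as a single column of the resulting unitary. First I would record the base facts about the gate entries. The unitary $U_\theta$ is diagonal with entries $e^{i\theta}$ and $e^{i\theta_M}$; since $\theta_M$ is fixed, its entries are trigonometric polynomials with Fourier support in $\{0,1\}$, and those of $U_\theta^\dagger$ lie in $\{-1,0\}$. Adding a control qubit only introduces extra entries equal to $0$ or $1$ (frequency $0$), and tensoring with the identity on idle qubits does the same, so every entry of controlled-$U_\theta$ has Fourier support in $\{0,1\}$ and every entry of controlled-$U_\theta^\dagger$ in $\{-1,0\}$, irrespective of $q$. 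Each $\theta$-independent two-qubit gate, tensored with identities, is a constant matrix: all its entries are trigonometric polynomials of degree $0$.

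Next I would set up the induction. Call an $M \times M$ matrix \emph{degree-$d$} if each of its entries is a trigonometric polynomial in $\theta$ with Fourier support contained in $\{-d,\dots,d\}$. Two elementary closure properties do all the work: (i) if $R$ has degree $d$ and $S$ is a constant matrix, then $SR$ and $RS$ have degree $d$, since matrix multiplication only adds Fourier frequencies and a product of a frequency-$0$ factor with a frequency-in-$[-d,d]$ factor stays in $[-d,d]$; and (ii) if $R$ has degree $d$, then $(\text{controlled-}U_\theta)\,R$ has Fourier support in $\{-d,\dots,d+1\}$ and $(\text{controlled-}U_\theta^\dagger)\,R$ in $\{-d-1,\dots,d\}$, because multiplying by a matrix whose entries are supported in $\{0,1\}$ (resp.\ $\{-1,0\}$) shifts each surviving frequency by at most $+1$ (resp.\ $-1$). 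Writing the circuit as a product $G_K C_K G_{K-1}\cdots C_1 G_0$, where each $C_j$ is one of the two controlled gates and each $G_j$ is a (possibly long) product of constant gates, and starting from $G_0$ which has degree $0$, a trivial induction using (i) and (ii) shows that after processing all $K$ controlled gates the total unitary $U$ has every entry a trigonometric polynomial with Fourier support in $\{-b,\dots,a\}$ for the numbers $a,b$ of controlled-$U_\theta$ and controlled-$U_\theta^\dagger$ gates used, $a+b=K$; in particular $U$ has degree $\le K$. The pre-measurement amplitudes are the entries of the column $U\ket{0}^{\otimes q}$, hence each is a degree-$\le K$ trigonometric polynomial in $\theta$, as claimed.

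There is no genuinely hard step here; the only thing that requires care is the bookkeeping of which Fourier frequencies can appear. The two points to get right are that composing with a constant matrix never increases the degree (this is precisely why an arbitrary number of $\theta$-independent gates is harmless), and that it is the \emph{number} of controlled uses of $U_\theta$ and $U_\theta^\dagger$ — not any larger quantity — that bounds the degree, which is exactly the content of closure property (ii). If one prefers, the induction can be phrased directly on the half-width of the Fourier support after each controlled gate, which also makes transparent that the bound $K$ is insensitive to the order in which the two types of controlled gates are interleaved.
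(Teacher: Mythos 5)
Your proof is correct and follows essentially the same route as the paper's: an induction over the gate sequence in which $\theta$-independent gates preserve the trigonometric degree and each controlled-$U_\theta$ or its inverse raises it by at most one. The only cosmetic difference is that you track Fourier support of the matrix entries of the cumulative unitary (with the mild refinement of keeping the signed support $\{-b,\dots,a\}$) rather than the amplitudes of the evolving state, which is an equivalent bookkeeping.
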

We provide the proof for completeness.
\begin{proof}
This is shown by induction.
Let $V_0,\ldots,V_K$
be the $\theta$-independent unitaries interleaved with the controlled-$U_\theta$'s and their inverses. Let $Q=2^q$ be the dimension of the Hilbert space associated with the $q$ qubits. 
For $K=0$, we have $V_0 \ket 0^{\otimes q}=\sum_{i=0}^{Q-1} V_0^{i,0}\ket i$, where $V_k^{i,j}$ is the $(i,j)$-th matrix entry of $V_k$. These entries do not depend on $\theta$, and then the claim is readily true for $K=0$.

Let $\ket{\psi_k}$ be the state before the aplication of the $(k+1)$-th application of controlled-$U_\theta$ (or its inverse). Assume it can be written in the computational basis as
\begin{align}
\label{eq:statestepk}
    \ket{\psi_{k}} = \sum_{j\in\{0,1\}} \sum_{b \in \{0,1\}} \sum_{w\in\{0,1\}^{q-2}} P^{j,b,w}(\theta) \ket{j}\ket{b}\ket{w} \;,
\end{align}
where the first register is the $\mathbb C^2$ space $U_\theta$ acts on, the second register is the controlled qubits, and the third register corresponds to the remaining $q-2$ qubits used as workspace. Each $P^{j,b,w}(\theta)$ is a trigonometric polynomial of degree at most $k$ by assumption. 

We now apply the next controlled-$U_\theta$. Note that
\begin{align}
    {\rm controlled-}U_\theta \ket j \ket b \ket w \left \{ \begin{matrix}
        e^{i \theta} \ket 0 \ket 1 \ket w &\ {\rm if} \ j=0,b=1 \;, \\
        e^{i \theta_M} \ket 1 \ket 1 \ket w & \ {\rm if} \ j=1,b=1  \; , \\
        \ket j \ket 0 \ket w & \ {\rm if} \ b= 0 \;.
    \end{matrix}\right.
\end{align}
Using this in Eq.~\eqref{eq:statestepk} increases the degree of each
$P_{j,b,w}(\theta)$ by at most 1. Applying $V_{k+1}$
does not increase the degree of the polynomial because the transformation is $\theta$-independent. A similar result applies when acting with the inverse of controlled-$U_{\theta}$. Hence, we conclude that the state output by this quantum circuit takes the form 
\begin{align}
    \ket{\psi_{K}} = \sum_{j\in\{0,1\}} \sum_{b \in \{0,1\}} \sum_{w\in\{0,1\}^{q-2}} P_K^{j,b,w}(\theta) \ket{j}\ket{b}\ket{w} \;,
\end{align}
where each $P_K^{j,b,w}(\theta)$ is a trigonometric polynomial of degree at most $K$.
\end{proof}

We are now ready to prove Thm.~\ref{thm:sqrtt-bound}.
Define the gap-amplifiable Hamiltonians
\begin{align}
    H_\theta:= \left( \frac {U_\theta + (U_{\theta})^\dagger} {2i}\right)^2 \;,
\end{align}
where $\theta_M$ is given, $0< \theta_M \le \pi/2$, and $0 \le \theta \le \theta_M$. Note that we can write $H_\theta= A^\dagger_\theta A^{\;}_\theta$ (i.e., $\lambda=1$), where
\begin{align}
    A^\dagger_\theta = A^{\;}_\theta = \begin{pmatrix}
        \sin \theta & 0 \cr 0 & \sin \theta_M 
    \end{pmatrix} \;.
\end{align}
Consider, for example, the initial state $\ket \psi =(\ket 0 + \ket 1)/\sqrt 2$. This is supported on the subspace of eigenvalues at most $\Delta = \sin^2 \theta_M$ of $H_\theta$.
Time evolution implies
\begin{align}
\label{eq:timeevolutiontrigonometric}
    \bra 0 e^{-it H_\theta}\ket \psi = \frac 1 2   e^{-it \sin^2 \theta}  \;.
\end{align}

In the LCU model we assume access to $U_\theta$, its inverse, and controlled versions.
If the operator $e^{-it H_\theta}$ acting on $\ket \psi$
is approximated using controlled-$U_\theta$ and its inverse $K$ times, Lemma~\ref{lem:queryboundtrigonometric} implies
that the approximation to Eq.~\eqref{eq:timeevolutiontrigonometric} is given by a trigonometric polynomial of degree at most $K$. 
If the approximation is done within error $\epsilon$,
and assuming the parameters of Lemma~\ref{lem:sqrtt-poly}, then 
this degree must satisfy $K=\Omega(\sqrt t)$.
Note that while $\theta_M$ was given and fixed (similarly for $\Delta$), 
the phase $\theta$ is unknown a priori and the simulation
method is required to produce an $\epsilon$-approximation for all $\theta \in [0,\theta_M]$.
Also, note that the SELECT and PREPARE oracles in the LCU model for this example, which allow to block-encode $A_\theta$, are simply obtained from a single controlled-$U_\theta$ and its inverse, and a Hadamard gate.
This gives the desired result on the query complexity in the LCU model stated in Thm.~\ref{thm:sqrtt-bound}.

\section{Gate complexity lower bounds for simulating time evolution of low-energy states of gap-amplifiable Hamiltonians}
\label{sec:gate}

We depart from the oracle model and instead show a gate complexity lower bound that addresses the time-dominated regime. In this regime, a state $\ket{\psi}$ has energy at most $\Delta$ in a $(\lambda, L)$-gap-amplifiable Hamiltonian, and the simulation parameters satisfy $\log (1/\epsilon) = o(t\Delta)$ and $t\Delta = o(t\lambda)$. We present a more formal version of Theorem~\ref{thm:gateinf}.

\begin{theorem}[Gate complexity lower bound for time evolution of low-energy states of gap-amplifiable Hamiltonians]
\label{thm:gate}
For any $n$-qubit quantum circuit with $G$ two-qubit gates such that $1 \leq n \leq G/n \leq 2^n$, there exists a sequence of 9-sparse and 4-local $(\lambda=4, L=1)$-gap-amplifiable Hamiltonians $\{H_G\}_G$ on  the space $\cH_G$ of $\Theta(G)$ qubits and a state $\ket{\psi_G} \in \cS_\Delta$ for $\Delta/\lambda = \cO(1/G^2)$, such that preparing a unitary $U$ satisfying
\begin{align}
\label{eq:thmgate}
    |\bra{\psi_G}\Pi U^\dagger \Pi e^{-itH_G}\ket{\psi_G}| \geq 2/3 \;,
\end{align}
for time $t = \cO(G^2)$ and some projector $\Pi$ onto $\cH_G$,  requires at least
\begin{align}
    \tilde \Omega\left(t\sqrt{\lambda \Delta}\right)
\end{align}
two-qubit gates, where $\tilde \Omega$ ignores logarithmic factors. The unitary $U$ may use an unlimited number of ancilla qubits and the gates may be non-local and come from a possibly infinite gate set.
\end{theorem}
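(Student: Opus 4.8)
The plan is to run a circuit-to-Hamiltonian reduction of Feynman--Kitaev clock type, in the spirit of Refs.~\cite{jordan2017fast,haah2018quantum}, so that simulating the low-energy dynamics of $H_G$ is as hard as executing an arbitrary quantum circuit. Fix an $n$-qubit circuit $C$ with $G$ two-qubit gates and pad it with $\Theta(G)$ identity gates before and after the ``active'' part, so that the total clock length is $T=\Theta(G)$. Encode this padded circuit into a clock Hamiltonian $H_G$ acting on a computational register together with a clock register, where the clock uses a local (domain-wall / multi-hand) encoding so that the clock-initialization and propagation terms are $4$-local and $9$-sparse. Writing these terms as a sum of a constant number of orthogonal projectors with weights summing to $4$ makes $H_G$ a $(\lambda=4,L)$-gap-amplifiable Hamiltonian with $H_G\succeq 0$; applying Lemma~\ref{lem:GAreduction} then re-expresses $H_G=4A^\dagger A$ with $\|A\|\le 1$, giving the claimed $(\lambda=4,L=1)$ form while leaving $H_G$ itself $4$-local and $9$-sparse. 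The design point is that, since $C$ is idle during the first $\Theta(G)$ clock steps, a state of the form $\ket{\psi_{G,x}}=\ket{\chi}_{\rm clock}\otimes\ket{x}_{\rm in}\otimes\ket{0\cdots0}_{\rm work}$ with $\ket\chi$ a low-momentum wavepacket supported on the early clock region lies, up to exponentially small error, in the low-energy subspace of $H_G$: its clock kinetic energy is $\cO((\text{momentum})^2)=\cO(1/G^2)$, so $\ket{\psi_G}:=\ket{\psi_{G,x}}$ can be placed in $\cS_\Delta$ with $\Delta/\lambda=\cO(1/G^2)$. Crucially $\ket{\psi_{G,x}}$ is a product state depending on $x$ only through $\ket x$, and with a sufficiently compact clock encoding $\ket\chi$ is preparable with $o(G/\log G)$ two-qubit gates.

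Next I would show that $e^{-itH_G}\ket{\psi_{G,x}}$ with $t=\cO(G^2)$ transports the clock wavepacket at group velocity $\Theta(1/G)$ past all the active gates, so that---conditioned on a constant-probability event that the clock has advanced beyond the computation---the computational register holds $C\ket{x,0\cdots0}$; this is the slow-wavepacket analysis of Ref.~\cite{jordan2017fast} applied to our clock. Measuring a designated output qubit then returns the output bit $f_C(x)$ of $C$ with probability bounded away from $1/2$. Hence any unitary $U$ obeying Eq.~\eqref{eq:thmgate} yields, by prepending the $o(G/\log G)$-gate preparation of $\ket\chi$ and appending the measurement, a quantum circuit with $g+o(G/\log G)$ two-qubit gates (where $g$ is the gate count of $U$, acting on $\ket x$ plus ancillas) that computes $f_C$ with bounded error.

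The final step is the counting argument of Ref.~\cite{haah2018quantum}. Discretizing each two-qubit gate to accuracy $1/\mathrm{poly}(g)$ is harmless since the accumulated error stays below $1/2$, so the number of distinct Boolean functions computed with bounded error by $n$-input quantum circuits with $g'$ two-qubit gates is at most $2^{\cO(g'\log g')}$. A Shannon-style counting bound gives at least $2^{\Omega(G)}$ distinct $n$-bit Boolean functions of circuit complexity $\cO(G)$ in the stated regime $n^2\le G\le n2^n$ (this is where the range on $G$ enters). Matching the two bounds against the $g+o(G/\log G)$-gate circuits produced above forces $g+o(G/\log G)=\tilde\Omega(G)$, hence $g=\tilde\Omega(G)$; and since $t=\cO(G^2)$, $\lambda=4$, $\Delta=\Theta(1/G^2)$ give $t\sqrt{\lambda\Delta}=\Theta(G)$, this is exactly $g=\tilde\Omega(t\sqrt{\lambda\Delta})$.

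I expect the main obstacle to be step one: simultaneously making $H_G$ at once $4$-local, $9$-sparse, and \emph{exactly} of the form $4A^\dagger A$, while keeping the clock long enough to host a genuinely $\cO(1/G^2)$-energy wavepacket yet compact enough that this wavepacket is preparable in $o(G/\log G)$ gates. The wavepacket-transport analysis is a second delicate point: one must bound the tails of the propagating packet---in particular its leakage out of the computational clock window and the discrepancy between the finite clock lattice and the continuum picture of Ref.~\cite{jordan2017fast}---tightly enough to guarantee the constant-probability success that the reduction relies on. By contrast, once the reduction is in place, the counting step is essentially routine.
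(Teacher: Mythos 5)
Your proposal follows essentially the same route as the paper: a Feynman--Kitaev clock Hamiltonian with identity padding, a wide low-momentum Gaussian wavepacket placed in the $\Delta=\cO(1/G^2)$ subspace via Markov's inequality, a Jordan-style transport analysis showing the packet traverses the active gates in time $t=\Theta(G^2)$, and the Boolean-function counting argument of Ref.~\cite{haah2018quantum}. The two obstacles you flag are exactly where the paper's technical work lies. For the first, the paper's resolution is cleaner than your projector-coloring plus Lemma~\ref{lem:GAreduction} detour: conjugating $H$ by $W=\sum_x U_x\cdots U_1\otimes\ketbra{x}{x}$ gives the free-particle form $\one_N\otimes(\one_G-X)^\dagger(\one_G-X)$, so $H=4A^\dagger A$ with $A=W\bigl(\one_N\otimes(\one_G-X)/2\bigr)W^\dagger$ directly, i.e., $L=1$ with $\lambda=4$, while the unary clock keeps $H$ itself 4-local and 9-sparse. (Your route also works: the terms $h_x$ are $2\times$ projectors, and even/odd $x$ give two commuting groups, so $H$ is $(\lambda=4,L=2)$-gap-amplifiable before applying Lemma~\ref{lem:GAreduction}.) The second obstacle, the discrete/truncated wavepacket transport with tail control, is the content of Lemma~\ref{lem:1d} and Appendix~\ref{app:gates}; your plan there is the right one but carrying it out is most of the proof.

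One step as written would fail: you assert that $\ket\chi$ is preparable with $o(G/\log G)$ two-qubit gates given a "sufficiently compact clock encoding," and your counting step ($g+o(G/\log G)=\tilde\Omega(G)\Rightarrow g=\tilde\Omega(G)$) relies on this. But a compact (binary) clock makes the hopping term $\ketbra{x}{x-1}$ an increment on $\log G$ qubits, destroying 4-locality; the local (unary/domain-wall) encoding the theorem requires uses $\Theta(G)$ clock qubits, and preparing a Gaussian superposition over $\Theta(G)$ unary sites costs $\Theta(G)$ gates, which would swamp your bound. The fix is not to make the preparation cheap but to note that it is \emph{instance-independent}: the wavepacket (and the input loading) is the same fixed prefix for every hard function $f$, so in the discretization count of Lemma~\ref{lem:bool-up} only the $g$ free gates of the simulator contribute to the number of distinct Boolean functions realizable. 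With that adjustment your argument closes; without it, the accounting in your final step does not yield a bound on $g$.
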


To show Thm.~\ref{thm:gate}, we use ideas developed in Refs.~\cite{jordan2017fast} and~\cite{haah2018quantum}. We time-evolve a Gaussian wavepacket (i.e., an initial state with Gaussian-like amplitudes in the computational basis), whose width and initial momentum are chosen such that it is a low-energy state of the so-called Feynman-Kitaev clock Hamiltonian. This Hamiltonian can be shown to be gap-amplifiable. After a sufficient amount of time, the wavepacket spreads and propagates sufficiently far to encode the result of a computation that performs $G$ two-qubit gates. Using an argument based on counting distinct Boolean functions, we show that this implies that simulating the time evolution itself must have required at least $G$ two-qubit gates. In terms of the parameters of the problem instances, this coincides with the claimed lower bound of $\tilde\Omega(t\sqrt{\lambda\Delta})$.

We schematically describe the argument here before introducing modifications required to realize the necessary technical details. In standard circuit-to-Hamiltonian reductions, the Feynman-Kitaev clock Hamiltonian reads
\begin{align}
    H &= \sum_{x=1}^G \left(-U_x \otimes \ketbra{x}{x-1} - U_x^\dagger \otimes \ketbra{x-1}{x} + \one \otimes \ketbra{x}{x} + \one \otimes\ketbra{x-1}{x-1}\right) \;.
\end{align}
(We assume periodic boundary conditions were $\ket {x=G} \equiv \ket {x=0}$.)
It encodes a circuit $\cU=U_GU_{G-1}\cdots U_1$ consisting of $G$ two-qubit gates acting on $n$ qubits ($N=2^n$), and $H$ acts on the space $\mathbb C^{2^{n+G}}$, where we choose a clock encoding implemented over $G$ qubits.
Up to discretization, the dynamics of $H$ can be understood by considering a free particle on a line, since
\begin{align}
\label{eq:tight-bind}
   H_{\rm FP}:= W^\dagger H W &= \one_N \otimes \sum_{x=1}^G \left(-\ketbra{x}{x-1} - \ketbra{x-1}{x} + \ketbra{x}{x} + \ketbra{x-1}{x-1}\right)
\end{align}
for unitary
\begin{align}
    W &:= \sum_{x=1}^G U_x\cdots U_1 \otimes \ketbra{x}{x}\;.
\end{align}
Note that $H_{\rm FP}$ is a translationally invariant tight-binding model; it can be interpreted as the discretized version of $\tilde H_{\rm FP}=-\partial_x^2/2$, the free-particle Hamiltonian in the continuum (with mass $m=1$).
We will show that there exists an initial state $\ket{\psi_\Delta} \in \cS_\Delta$ within the low-energy subspace of $H$ given by $\Delta =\Theta( 1/G^2)$, such that evolution for time $t =\Theta( G^2)$ with constant error produces large support on the state that encodes the output of the circuit $\cU$, after tracing out the clock space spanned by $\{\ket x\}_x$. We will then show a lower bound requiring $\tilde \Omega(G)$ gates to simulate the evolution of $\ket {\psi_\Delta}$.

To show that this result is tight with our upper bound, we first rewrite $H$ in the form of a gap-amplifiable Hamiltonian to apply our algorithm. Using a binary encoding for $\ket x$ with $g=\log_2 G$ bits,
note that
\begin{align}
    H_{\rm FP}= \one_N \otimes (\one_G-X)^\dagger(\one_G-X) \;,
\end{align}
where $X$ is the $G\times G$ matrix that implements the cyclic permutation $X \ket x \rightarrow \ket{x-1}$.
(We assumed that $G$ is a power of two without loss of generality.) The lowest eigenvalue of $H_{\rm FP}$ is 0 and the largest is bounded by 2. We can define $A_{\rm FP}:=\one_N \otimes (\one_G-X)/ 2$ so that $H_{\rm FP}=4 A^\dagger_{\rm FP} A^{\!}_{\rm FP}$ is a $(\lambda=4,L=1)$-gap-amplifiable Hamiltonian.  Note that our quantum algorithm can simulate $e^{-it H_{\rm FP}}$ acting on $\ket{\psi_{\rm FP}}$, which is a state of supported in the subspace of energies at most $\Delta$, using $O(t\sqrt{\Delta} + 1/\sqrt{\Delta}) = \cO(G)$ queries to the block-encoding of $A_{\rm FP}$. This block-encoding can be constructed with ${\rm polylog}(G)$ two-qubit gates because of the simple structure of the operator $A^{\!}_{\rm FP}$.  For example, applying $X$ can be done using the quantum Fourier transform with $\cO(g^2)$ two-qubit gates.  Since $H$ is a similarity transformation of $H_{\rm FP}$ (Eq.~\eqref{eq:tight-bind}), then $H$ is also a $(\lambda=4,L=1)$-gap-amplifiable Hamiltonian: $H=4 A^\dagger A$, with $A:=W A_{\rm FP} W^\dagger$. To simulate $e^{-itH}$ on $\ket{\psi_\Delta}:=W \ket{\psi_{\rm FP}}$, we simply compute $e^{-itH_{\rm FP}}\ket{\psi_{\rm FP}}$ and conjugate by $W$, since $e^{-itH} = We^{-it H_{\rm FP}} W^\dagger$. The gate complexity to simulate $e^{-itH}\ket \psi$ remains $\tilde \cO(G)$, since the circuit to realize $W$ requires only an additional $\tilde \cO(G)$ two-qubit gates using standard techniques. This coincides with the lower bound, up to mild polylogarithmic corrections. The same result applies for other encodings of $\ket x$, like a unary encoding, as long as transforming from one encoding to the other can be done with $\tilde \cO(G)$ two-qubit gates. Note that QSP and other generic methods that do not use the low-energy condition would have produced an algorithm of gate complexity $\cO(G^2)$ or worse in this case, since $t=\Theta(G^2)$. 

Next, we present some technical details required for the proof of Thm.~\ref{thm:gate}.

\subsection{Clock Hamiltonian and low-energy states}

We now provide details of the clock Hamiltonian used in our construction, define the initial state, and show that this is confined to low-energy subspace. As before, we let $\cU=U_G U_{G-1} \cdots U_1$ be a quantum circuit constructed from a universal gate set of two-qubit gates (which includes one-qubit gates), where $\cU$ acts on $n$ qubits. To obtain a local Hamiltonian, we encode the clock register in unary as
\begin{align}
    \ket{0} = \ket{100\dots0}, \quad \ket{1} = \ket{010\dots 0},  \dots, \quad \ket{G'-1} = \ket{000\dots 1}
\end{align}
using $G'$ qubits, for some $G' \geq G$. This encoding will also allow us to show the result for a sparse Hamiltonian. We will specify the overhead $G'-G$ later; for now, it suffices to introduce integers $c_1, c_2 \geq 2$ constrained such that $G'=(c_1+c_2)G$. Define unitaries $V_x$ acting on the $n$ qubits that pad the original circuit with identity gates, so that
\begin{align}
    V_x = \begin{cases}
        \one_N &{\rm if} \ \ 1 \leq x \leq c_1 G \;,\\
        U_{x-c_1 G}&{\rm if} \ \ c_1 G+1 \leq x \leq (c_1 + 1)G \;,\\
        \one_N&{\rm if} \ \ (c_1+1)G + 1 \leq x \leq (c_1+c_2)G \;,
    \end{cases}
\end{align}
with periodic boundary conditions (i.e., $x=0$ is equivalent to $x=G'$). We consider the PSD clock Hamiltonian $H$ acting on Hilbert space $\cH_G:=\mathbb{C}^{2^{n+G'}}$ of $n+G'$ qubits, of periodic boundary conditions, defined by
\begin{align}
\label{eq:clock-h}
    H &:= \sum_{x=1}^{G'} \left(-V_x \otimes \ketbra{x}{x-1} - V_x^\dagger \otimes \ketbra{x-1}{x} + \one_N \otimes \ketbra{x}{x} + \one_N \otimes\ketbra{x-1}{x-1}\right)\;.
\end{align}
This Hamiltonian is at most 9-sparse because the $V^{}_x$'s and $V^\dagger_x$'s are 4-sparse in the worst case (two-qubit gates), and $H$ contains these unitaries in the off-diagonal blocks.  It is also 4-local because the elements $\ketbra{x}$ and  $\ketbra{x}{x-1}$ can be replaced by one-qubit and two-qubit interactions, respectively. 
The padding of identities $V_1= \dots= V_{c_1G}=\one_N$ allows us to initialize a wavepacket (state) over the clock register without interfering with the computation. That is, if the computation on the $n$ qubits starts in $\ket{0}^{\otimes n}$, we can choose {\em any} initial superposition over the clock register like
\begin{align}
\label{eq:wavepacket-def}
    \ket{\psi} = \ket{0}^{\otimes n} \otimes \sum_{x=0}^{c_1 G} \psi(x) \ket{x} \;,
\end{align}
without requiring any information from $\cU$. To additionally restrict $\ket \psi$ to be of low energy, we choose a Gaussian wavepacket centered at site $x_0 = (c_1/2) G$ and with initial rightwards momentum $p_0>0$, i.e.,
\begin{align}
\label{eq:wavepacketGaussian}
    \psi(x) &= \eta\exp\left[-\frac{(x - x_0)^2}{2\sigma^2} + i p_0 x\right], \quad \eta = \left(\sum_{x=0}^{c_1 G} \exp\left[-\frac{(x-x_0)^2}{\sigma^2}\right]\right)^{-1/2}\;.
\end{align}
Below, we will choose the wavepacket to have width $\sigma$ linear in $G$ and initial rightwards momentum $p_0$ linear in $1/G$. We wish to show that these choices imply that $\ket{\psi}$ is mostly confined to a low-energy subspace with cutoff $\Delta =\cO(1/G^2)$. (The choice of an initial momentum being nonzero is not necessary, but we use it to relate our results with Ref.~\cite{jordan2017fast} more directly.)

In the continuum limit, where $x \in \mathbb R$, these facts are apparent. The Hamiltonian becomes that of a free particle, i.e., $\tilde H_{\rm FP}=p^2/2m=-\partial_x^2/2$, with $m=1$. Its eigenfunctions are $e^{-ipx}$. The spatial width $\sigma \sim G$ of a Gaussian wavepacket implies, under a Fourier transform, a width of $\sim 1/G$ in momentum space. The initial momentum $p_0 \sim 1/G$ scales identically with respect to $G$.  Hence, the initial state of Gaussian amplitudes in the continuum is easily shown to be mostly supported in the subspace of energies at most $\Delta =\cO( 1/G^2)$ of $\tilde H_{\rm FP}$.

In practice, our initial state $\ket{\psi}$ is both defined on a discrete finite-dimensional system and has truncated tails compared to a true Gaussian (due to Eq.~\eqref{eq:wavepacket-def}). Nevertheless, we show that it has high overlap with a state $\ket{\psi_\Delta}$ that is fully confined to a low-energy subspace of $H$ specified by some $\Delta= \cO(1/G^2)$, by explicitly evaluating the energy of the wavepacket under the clock Hamiltonian and applying Markov's inequality.

\begin{lemma}[Low-energy state of clock Hamiltonian]
\label{lem:clockstate}
Let $H$ be the clock Hamiltonian for a $G$-gate quantum circuit $\cU=U_G \ldots U_1$ as defined in Eq.~\eqref{eq:clock-h}; it contains $c_1G$ initial identity gates and $(c_2-c_1-1)G$ final identity gates, for even positive integers $c_1, c_2$. Let $\ket{\psi}$ be the state defined in Eq.~\eqref{eq:wavepacket-def}, with amplitudes determined by Eqs.~\eqref{eq:wavepacketGaussian}, and centered at $x_0=(c_1/2)G$. Fix the width of the wavepacket to $\sigma = \hat\sigma x_0$ and the momentum to $p_0 = \hat p_0/x_0$ for constants $\hat\sigma>0$, $\hat p_0>0$. Then, for any (arbitrarily small) constant $\delta > 0$, there exists $\Delta = \cO(1/G^2)$ such that the state
\begin{align}
\label{eq:psiproj}
    \ket{\psi_\Delta} := \frac{\Pi_\Delta \ket{\psi}}{\norm{\Pi_\Delta \ket{\psi}}},
\end{align}
where $\Pi_\Delta$ is the projector onto the low-energy subspace specified by $\Delta$, satisfies
\begin{align}
    |\!\bra{\psi_\Delta}\ket{\psi}\!| \geq 1-\delta.
\end{align}
\end{lemma}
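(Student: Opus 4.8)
The plan is to bound the expected energy $\bra{\psi} H \ket{\psi}$ directly, and then invoke Markov's inequality to conclude that the weight of $\ket\psi$ on energies above any $\Delta = C/G^2$ (with $C$ a large enough constant depending on $\delta$, $\hat\sigma$, $\hat p_0$) is at most $\delta$. First I would pass to the free-particle picture via the unitary $W$ of Eq.~\eqref{eq:tight-bind}, noting that $\bra\psi H\ket\psi = \bra{\phi} H_{\rm FP}\ket\phi$ where $\ket\phi = W^\dagger\ket\psi$; since the initial computational register is $\ket 0^{\otimes n}$ and the wavepacket is supported on clock sites $x \le c_1 G$ where $V_1=\cdots=V_{c_1G}=\one_N$, we have $\ket\phi = \ket 0^{\otimes n}\otimes\sum_x \psi(x)\ket x$, i.e.\ $W$ acts trivially on the wavepacket. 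So it suffices to compute the energy of the Gaussian wavepacket under the tight-binding Hamiltonian $H_{\rm FP} = \one_N\otimes\sum_x(2\ketbra{x}{x} - \ketbra{x}{x-1} - \ketbra{x-1}{x})$ (the diagonal terms combine to $2$ on each interior site, with the appropriate boundary bookkeeping on a periodic chain, but the wavepacket's tails at the boundary are exponentially small in $G$ and contribute negligibly).

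Next I would carry out the energy estimate. Writing $H_{\rm FP}$ in momentum-like form, the expectation is $\sum_x |\psi(x)|^2\cdot 2 - 2\Re\sum_x \psi(x)^*\psi(x-1) = \sum_x |\psi(x) - \psi(x-1)|^2$ (again up to exponentially small boundary corrections). For the Gaussian ansatz $\psi(x) = \eta\exp[-(x-x_0)^2/2\sigma^2 + ip_0 x]$ one finds $\psi(x) - \psi(x-1) \approx \psi(x)\big(ip_0 - (x-x_0)/\sigma^2 + O(1/\sigma^2) + O(p_0^2)\big)$ by a first-order expansion in the small quantities $p_0 \sim 1/G$ and $1/\sigma \sim 1/G$. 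Squaring and summing, the cross term between $ip_0$ and the real part $-(x-x_0)/\sigma^2$ vanishes by symmetry of $|\psi(x)|^2$ about $x_0$, leaving $\bra\psi H\ket\psi \approx p_0^2 + \langle (x-x_0)^2\rangle/\sigma^4 = p_0^2 + 1/(2\sigma^2) + (\text{higher order})$. With $\sigma = \hat\sigma x_0 = \hat\sigma(c_1/2)G$ and $p_0 = \hat p_0/x_0$, both terms are $\Theta(1/G^2)$, so $\bra\psi H\ket\psi = \kappa/G^2$ for some constant $\kappa$ depending only on $\hat\sigma, \hat p_0, c_1$; I would also have to check that the truncation of the Gaussian tails at $x=0$ and $x=c_1G$ (the sum in Eq.~\eqref{eq:wavepacket-def} runs only to $c_1G$, and $x_0$ is in the middle) changes this by only an $e^{-\Omega(G^2/\sigma^2)} = e^{-\Omega(1)}\cdot$(no, actually $e^{-\Omega(1)}$ is not small)—here one must be a little careful: the tails are cut at distance $x_0 = (c_1/2)G$ from the center, i.e.\ at distance $(c_1/2\hat\sigma)\sigma = O(\sigma)$, so the truncated mass is a constant fraction unless $\hat\sigma$ is taken small enough; I would fix $\hat\sigma < c_1/2$ so the cutoff is at several standard deviations and the truncation error is $e^{-\Omega(1/\hat\sigma^2)}$, made as small as desired by choice of $\hat\sigma$.

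Finally, Markov's inequality gives $\|\,(\one - \Pi_\Delta)\ket\psi\,\|^2 = \sum_{E_j > \Delta} |\langle \phi_j|\psi\rangle|^2 \le \bra\psi H\ket\psi/\Delta$ (using $H \succeq 0$), so choosing $\Delta = \kappa/(\delta G^2) = \cO(1/G^2)$ makes this at most $\delta$. Then $\ket{\psi_\Delta} = \Pi_\Delta\ket\psi/\|\Pi_\Delta\ket\psi\|$ satisfies $|\langle\psi_\Delta|\psi\rangle| = \|\Pi_\Delta\ket\psi\| \ge \sqrt{1-\delta} \ge 1-\delta$, which is the claim (after renaming $\delta$). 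The main obstacle I anticipate is not the energy computation itself but making the error-control in the discrete, finite, truncated setting fully rigorous: one must simultaneously control (i) the replacement of finite differences by derivatives in the energy expansion, (ii) the boundary terms from the periodic chain, and (iii) the truncation of the Gaussian, and show each is subleading to the $\Theta(1/G^2)$ main term uniformly in $G$. All three are routine but bookkeeping-heavy; organizing the constants ($c_1, c_2, \hat\sigma, \hat p_0, \delta$) so that the dependencies are non-circular is the part that requires care.
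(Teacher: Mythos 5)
Your proposal is correct and takes essentially the same route as the paper: a direct computation of $\bra{\psi}H\ket{\psi}=\sum_{x}|\psi(x)-\psi(x-1)|^2=\cO(1/G^2)$ followed by Markov's inequality, the detour through $W$ and $H_{\rm FP}$ being cosmetic since $W$ acts trivially on the support of $\ket{\psi}$. The truncation issue you flag is the one point where you are more careful than the paper (whose displayed identity for $\bra{\psi}H\ket{\psi}$ silently drops the boundary contributions $|\psi(0)|^2+|\psi(c_1G)|^2=\Theta(\eta^2 e^{-1/\hat\sigma^2})=\Theta(e^{-1/\hat\sigma^2}/G)$ coming from the diagonal terms at the edge of the wavepacket's support), but note that fixing $\hat\sigma$ to a small \emph{constant} does not push these below the $\Theta(1/G^2)$ bulk term as $G\to\infty$, so fully closing that step would require either letting $\hat\sigma$ shrink slowly with $G$ (e.g.\ like $1/\sqrt{\log G}$) or a spectral estimate of the edge discontinuity finer than Markov applied to $\langle H\rangle$.
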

\begin{proof}
Direct computation gives
\begin{align}
    \bra{\psi}H\ket{\psi} &= \sum_{x=1}^{2x_0} |\psi(x) - \psi(x-1)|^2\\
    &= \eta^2 \sum_{x=1}^{2x_0} \exp\left[-\frac{\left(x-(x_0+1/2)\right)^2}{\sigma^2} - \frac{1}{4\sigma^2}\right] \nonumber\\
    &\quad \times \Bigg(\exp\left[\frac{x-x_0-1/2}{\sigma^2}\right] + \exp\left[-\frac{x-x_0-1/2}{\sigma^2}\right] - 2\cos(p_0)\Bigg)\;.
\end{align}
For $x_0 = c_1G/2 = \Theta(G)$, we choose wavepacket width $\sigma = \hat\sigma x_0 = \Theta(G)$ and momentum $p_0 = \hat p_0 / x_0 = \Theta(1/G)$ for constants $\hat \sigma, \hat p_0$. The terms of the sum are independent of $x$ up to $1/G$ corrections, i.e.,
\begin{align}
    \bra{\psi}H\ket{\psi} &= \eta^2 \sum_{x=1}^{2x_0} \left(\exp\left[-\frac{1}{\hat\sigma^2}\right] + \cO(G^{-1})\right) \left(\frac{\hat p_0^2}{x_0^2} + \frac{1}{\hat\sigma^2 x_0^2} + \cO(G^{-3})\right)\\
    &= \cO\left(\frac{1}{G^2}\right)\;,
\end{align}
where in the last step we used the fact that $\eta = \Theta(G^{-1/2})$. 
Let $H$ have eigenstates $\ket{\varphi_k}$ with corresponding eigenvalues $\gamma_k \geq 0$. By Markov's inequality, the support of $\ket{\psi}$ on eigenstates with energy at least $a/G^2$ is bounded by
\begin{align}
    \sum_{k \;:\; \gamma_k \geq a/G^2} |\bra{\varphi_k}\ket{\psi}|^2 = \cO\left(\frac{1}{a}\right) \;,
\end{align}
for any $a > 0 $. Consequently, for any $\delta > 0$, one can always choose $\Delta = \cO(1/G^2)$ such that the state $\ket{\psi_\Delta}$ defined in Eq.~\eqref{eq:psiproj} satisfies
\begin{align}
    |\!\bra{\psi_\Delta}\ket{\psi}\!| \geq 1-\delta.
\end{align}
\end{proof}

The state $\ket{\psi_\Delta}$ corresponds to $\ket{\psi_G}$ in Thm.~\ref{thm:gate}. Nevertheless,
we will continue our analysis to obtain a gate complexity lower bound on time-evolving $\ket{\psi}$ with constant fidelity; due to Lemma~\ref{lem:clockstate}, this implies an equivalent gate complexity lower bound on time-evolving $\ket{\psi_\Delta}$.

\subsection{Evaluating \texorpdfstring{$\cU$}{U} via time evolution of low-energy states}
\label{sec:circuittimeevolution}

We now show that time evolution of the low-energy state $\ket{\psi}$ or $\ket{\psi_\Delta}$ for time $t = \Theta(G^2)$ produces large support on the state $U_G\cdots U_1\ket{0}^{\otimes n}$ of the $n$ qubits.
We sketch this by first considering
the continuum limit, where $x \in \mathbb R$. The result in this case is readily available by the dynamics under the free particle Hamiltonian $\tilde H_{\rm FP}=-\partial_x^2/2$. A Gaussian wavepacket at time $t=0$, centered at $x_0$ and with momentum $p_0$,
\begin{align}
    \psi(x, 0) \propto \exp\left[-\frac{(x-x_0)^2}{2\sigma^2} + ip_0x\right] \;,
\end{align}
evolves under Schr\"odinger's equation $i\frac{\partial \psi(x,t)}{\partial t} = -\frac{1}{2}\frac{\partial^2 \psi(x,t)}{\partial x^2}$ to
\begin{align}
    \psi(x, t) \propto \exp\left[-\frac{(x-x_0(t))^2}{2\sigma(t)^2} + i\phi(x, t)\right]\;,
\end{align}
for phase $\phi(x, t) \in \mathbb R$ and
\begin{align}
    x_0(t) &= x_0 + {p_0 t}, \quad \sigma(t) = \sqrt{\sigma^2 + \left(\frac{t}{\sigma}\right)^2}\;.
\end{align}
That is, the center of the Gaussian wavepacket is spatially translated from $x_0$ to $x_0+{t p_0}$, and its width increases in time to $\sigma(t)$.
As discussed above, our choice of initial state $\ket{\psi}$ is, in the corresponding continuum limit, a Gaussian wavepacket with spatial width $\sigma\sim G$ centered at $x_0 \sim G$ and with initial momentum $p_0 \sim 1/G$. The clock Hamiltonian is a discretized version of  $\tilde H_{\rm FP}$. To perform a computation that evaluates $\cU=U_GU_{G-1}\cdots U_1$, we need $t$ to be sufficiently large to apply the next $G$ terms in the clock Hamiltonian; that is, we choose the parameters so that $x_0(t) - x_0 \sim G$. Additionally, if we choose $t p_0 \sim t/G$ to ``advance'' $G$ steps, we obtain a time $t \sim G^2$. This widens the wavepacket by a constant factor (i.e., $\sigma(G^2)/\sigma(0) >1$), allowing the evolved state to be supported in a sufficiently large subspace related to having applied the corresponding non-trivial $G$ gates. 

To formally prove that time evolution evaluates $U_G \ldots U_1\ket{0}^{\otimes n}$ with high probability, we can start from the analysis in the continuum limit, and then bound errors due to the discretization and truncation of the initial state $\ket{\psi}$. The proof is based on techniques established in Ref.~\cite{somma2016quantum} for analyzing the dynamics of one-dimensional systems. Due to the length of this technical analysis, we provide the proof in Appendix~\ref{app:gates}.

\begin{lemma}[Time evolution of initial Gaussian-like state under the clock Hamiltonian]
\label{lem:1d}
Let $H$ be the clock Hamiltonian for the $G$-gate unitary $\cU=U_G U_{G-1}\ldots U_1$ acting on $n$ qubits as defined in Eq.~\eqref{eq:clock-h}; it contains $c_1G$ initial identity gates and $(c_2-c_1-1)G$ final identity gates, for even integers $c_1, c_2$. Let $\ket{\psi}$ be the state defined in Eq.~\eqref{eq:wavepacket-def}, with amplitudes determined by Eqs.~\eqref{eq:wavepacketGaussian}, and centered at site $x_0=(c_1/2)G$. Fix the width of the wavepacket to $\sigma = \hat\sigma x_0$ and the momentum to $p_0 = \hat p_0/x_0$ for positive constants $\hat\sigma$, and $\hat p_0$. Consider the time-evolved state $e^{-itH}\ket{\psi}$ for time $t = \hat t x_0^2 = \cO(G^2)$ for constant $\hat t>0$. Then, for any $G\ge 1$, there exists a choice of constants $c_1$, $c_2$, $\hat \sigma$, $\hat p_0$, and $\hat t$, such that
\begin{align}
    \frac{1}{2}\norm{\cU\ketbra{0}{0}^{\otimes n} \cU^\dagger - \Pi_n e^{-itH} \ketbra{\psi}  e^{itH} \Pi_n}_1 \leq \frac{1}{4} \;,
\end{align}
where $\Pi_n$ is the projector onto the space of the $n$ qubits where the computation occurs, and $\frac{1}{2}\norm{\cdot}_1$ indicates the trace distance.
\end{lemma}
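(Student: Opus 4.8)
The plan is to reduce the clock-Hamiltonian dynamics to the dynamics of a free particle on a ring, for which the behaviour of a Gaussian wavepacket is essentially classical. Recall the gauge unitary $W = \sum_{x} (V_x\cdots V_1)\otimes\ketbra{x}{x}$ from above Eq.~\eqref{eq:tight-bind}; writing $W_x := V_x\cdots V_1$ and using $W_x = V_x W_{x-1}$, a direct computation gives $W^\dagger H W = H_{\rm FP} = \one_N\otimes L$, where $L = 2\one_{G'} - S - S^\dagger$ is the periodic discrete Laplacian on $G'$ sites and $S$ is the cyclic shift. Because the first $c_1 G$ gates are identities, $W_x = \one_N$ for every $x$ in the support $\{0,\dots,c_1 G\}$ of $\ket{\psi}$, so $W^\dagger\ket{\psi} = \ket{\psi}$ and hence
\begin{align}
    e^{-itH}\ket{\psi} = W\,e^{-itH_{\rm FP}}\ket{\psi} = \sum_{y=0}^{G'-1}\tilde\psi_t(y)\,\big(W_y\ket{0}^{\otimes n}\big)\otimes\ket{y}\;,
\end{align}
where $\tilde\psi_t := e^{-itL}\ket{\tilde\psi}$ and $\ket{\tilde\psi} = \sum_{x=0}^{c_1 G}\psi(x)\ket{x}$. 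On the ``computation finished'' region $R := \{y : (c_1+1)G < y < G'\}$ one has $W_y\ket{0}^{\otimes n} = U_G\cdots U_1\ket{0}^{\otimes n} = \cU\ket{0}^{\otimes n}$. Thus, writing $\eta^2$ for the mass $\norm{(\one-\Pi_R)\tilde\psi_t}^2$ of the evolved wavepacket outside $R$, the state on the $n$-qubit register obtained by conditioning the clock on $R$ and discarding it (the map encoded by $\Pi_n$) is within trace distance $\cO(\eta^2)$ of $\cU\ketbra{0}{0}^{\otimes n}\cU^\dagger$ — precisely because the evolved state is a coherent sum in which every clock basis vector inside $R$ carries the \emph{same} computational vector $\cU\ket{0}^{\otimes n}$. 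It therefore suffices to show that, for a suitable choice of the constants, $\eta$ is small; $\eta \le 1/2$ already gives $\tfrac12\norm{\cdot}_1\le 1/4$.

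Second, I analyze $\tilde\psi_t$ through its continuum counterpart. Since $L$ has dispersion $2-2\cos k = k^2 + \cO(k^4)$ near $k=0$, the continuum limit of $e^{-itL}$ is a free-particle propagator, under which a Gaussian of width $\sigma$, centre $x_0$, and momentum $p_0$ evolves into a Gaussian of centre $x_0 + \Theta(p_0 t)$ and width $\Theta(\sqrt{\sigma^2 + (t/\sigma)^2})$. Plugging in the scalings of Lemma~\ref{lem:clockstate} — $x_0 = (c_1/2)G$, $\sigma = \hat\sigma x_0$, $p_0 = \hat p_0/x_0$, $t = \hat t x_0^2$ — the centre advances by $\Theta(\hat p_0\hat t\, G)$ and the width grows to $\Theta\!\big(G\sqrt{\hat\sigma^2 + \hat t^2/\hat\sigma^2}\big) = \Theta(G)$. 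I will then fix $c_1, c_2, \hat\sigma, \hat p_0, \hat t$ so that: (i) the final centre lands near the middle of $R$, whose width is $(c_2-c_1-1)G$; (ii) several standard deviations of the final Gaussian still fit strictly inside $R$ (requires $c_2$ large relative to $\sqrt{\hat\sigma^2 + \hat t^2/\hat\sigma^2}$); (iii) neither the initial nor the final wavepacket touches the periodic seam $y\equiv 0$ or falls back below $(c_1+1)G$; and (iv) the initial Gaussian is cut off at $\approx 1/\hat\sigma$ standard deviations, so choosing $\hat\sigma$ small makes its truncated tail mass an arbitrarily small constant. Standard Gaussian tail estimates then give that the \emph{continuum} evolved state has all but an arbitrarily small constant fraction of its mass inside $R$.

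Third, I bound the two errors separating $\tilde\psi_t$ from this continuum picture. The discretization error comes from $2-2\cos k - k^2 = -k^4/12 + \cO(k^6)$: on the (low-momentum) bulk of the wavepacket $|k| = \cO(1/G)$, so over $t = \Theta(G^2)$ the accumulated phase mismatch is $\cO(t/G^4) = \cO(1/G^2)$, while the high-momentum part of the wavepacket is exponentially suppressed; the difference between $\ket{\tilde\psi}$ and the untruncated Gaussian contributes only the (controllably small, constant) tail mass beyond $[0,c_1G]$. Both estimates follow the one-dimensional propagation analysis of Ref.~\cite{somma2016quantum}. Combining them yields $\eta = \cO(1/G) + (\text{small tunable constant})$, which can be made $\le 1/2$ — indeed arbitrarily small — uniformly in $G$. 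This gives $\tfrac12\norm{\cU\ketbra{0}{0}^{\otimes n}\cU^\dagger - \Pi_n e^{-itH}\ketbra{\psi}{} e^{itH}\Pi_n}_1 \le \tfrac14$; by Lemma~\ref{lem:clockstate}, $\ket{\psi}$ can be replaced by $\ket{\psi_\Delta}$ at the cost of an additional $\cO(\sqrt\delta)$, still below $1/4$ for the chosen constants.

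The main obstacle is the rigorous discretization estimate of the third step: one must show, uniformly over the finite periodic lattice and uniformly in $G$, that $e^{-itL}$ tracks the free-particle evolution closely enough over the long time $t=\Theta(G^2)$, controlling simultaneously the $\cO(k^4)$ dispersion correction on the low-momentum bulk and the (exponentially small but nonzero) high-momentum tail, and then threading all of this through a single simultaneous choice of the five constants $c_1,c_2,\hat\sigma,\hat p_0,\hat t$ so that transport, spreading, truncation, and avoidance of the periodic seam all hold at once. This is precisely the technical content deferred to Appendix~\ref{app:gates}.
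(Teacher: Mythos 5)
Your proposal is correct and follows essentially the same route as the paper's proof in Appendix~D: reduce to free-particle (tight-binding) dynamics of the clock register — your explicit $W$-conjugation is equivalent to the paper's direct diagonalization in the eigenbasis $\ket{\varphi_k}\propto\sum_j e^{i2\pi jk/G'}\ket{h_j}\ket{j}$ — then show the Gaussian wavepacket transports into the region where $W_y\ket{0}^{\otimes n}=\cU\ket{0}^{\otimes n}$, controlling the same three error sources (the $\cO(k^4)$ dispersion correction accumulating to $\cO(G^{-2})$ over $t=\Theta(G^2)$, the Gaussian truncation at $1/\hat\sigma$ standard deviations, and the periodic wraparound) and concluding by tuning the five constants. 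The quantitative lattice-sum estimates you defer are exactly what the paper carries out via Poisson summation and the ratio bound $\zeta=\exp[-\Omega(R^3\log R)]$.
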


In particular, this implies
\begin{align}
    \bra 0^{\otimes n} \cU^\dagger \rho(t) \cU \ket 0^{\otimes n} \ge 3/4 \;,
\end{align}
for $\rho(t)=\tr_{\rm clock} (e^{-itH} \ketbra{\psi}  e^{itH})$, i.e., the state of the $n$ qubits after evolving with $H$ and tracing out the clock register.
By making this overlap to be larger than $2/3$,
we can allow other errors to be small constants and 
still satisfy Eq.~\eqref{eq:thmgate}.

\subsection{Gate complexity lower bound}

We summarize the analysis and results thus far. For an arbitrary circuit $\cU = U_GU_{G-1}\cdots U_1$ acting on $n$ qubits with $G$ two-qubit gates, we introduced a gap-amplifiable clock Hamiltonian $H=\lambda A^\dagger A$, with $\lambda = 4$. We defined a state $\ket{\psi_\Delta} \in \cS_\Delta$, for $\Delta = \cO(1/G^2)$, that has overlap $|\!\bra{\psi_\Delta}\ket{\psi}\!| \geq 1-\delta$ for arbitrarily small constant $\delta >0$, where $\ket \psi$
is the state with Gaussian-like amplitudes in Eqs.~\eqref{eq:wavepacket-def} and~\eqref{eq:wavepacketGaussian}. Consequently, a gate complexity lower bound on time-evolving $\ket{\psi}$ under $H$ with constant fidelity implies a lower bound on time-evolving the low-energy state $\ket{\psi_\Delta}$. We showed that evolution for time $t = \Theta(G^2)$ produces a state close to $\cU\ket{0}^{\otimes n}$. We must now show that this implies a lower bound on the gate complexity of time evolving low-energy states of gap-amplifiable Hamiltonians. Specifically, we will show that preparing $\cU\ket{0}^{\otimes n}$ with constant probability requires $\tilde \Omega(G) = \tilde \Omega(t\sqrt{\lambda\Delta})$ gates, implying the same lower bound for time-evolving $\ket{\psi_\Delta}$,
which is the state $\ket{\psi_G}$ in Thm.~\ref{thm:gate}.

We follow a similar argument to Ref.~\cite{haah2018quantum}. To this end, we constrain the circuits $\cU$ to those that compute distinct Boolean functions $f:\{0,1\}^n \to \{0,1\}$. Suppose briefly that $\cU$ acts on $n$ qubits and an unlimited number of ancilla qubits. We say that $\cU$ computes a Boolean function $f$ with high probability if measuring the first output qubit of $\cU\ket{x_1 x_2 \cdots x_n 0 \cdots 0}$ yields $f(x)$ with probability at least 2/3.
Restricting only to $\cU$ that act on a total of $n$ qubits, we count the number of distinct Boolean functions that $\cU$ can encode in terms of its depth $T$.

\begin{lemma}[Lower bound on distinct Boolean functions computed by $\cU$; Lemma 8 of Ref.~\cite{haah2018quantum})]
\label{lem:bool-low}
For any integers $n$ and $T$ such that $2 \leq n \leq T \leq 2^n$, the number of distinct Boolean functions $f:\{0,1\}^n\to\{0,1\}$ that can be computed by depth-$T$ quantum circuits on $n$ qubits that use geometrically local two-qubit gates on a constant-dimensional lattice from a finite get set is at least $2^{\tilde \Omega(Tn)}$.
\end{lemma}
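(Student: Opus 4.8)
The plan is to prove this directly by exhibiting an explicit family $\mathcal{F}$ of depth-$O(T)$ geometrically local $n$-qubit circuits with $|\mathcal{F}| \ge 2^{\tilde\Omega(Tn)}$ such that distinct members of $\mathcal{F}$ compute distinct Boolean functions on qubit $1$. It suffices to take $\mathcal{F}$ to consist of reversible \emph{classical} circuits --- built from a universal reversible basis (NOT, CNOT, and Toffoli, the last synthesized from two-qubit gates using a constant number of the $n$ qubits as reusable scratch, which affects neither locality nor the asymptotic depth) --- since such a circuit outputs a definite bit and hence computes a Boolean function with probability $1 > 2/3$. Reducing to this special case removes all analytic error bookkeeping, so the remaining content is purely combinatorial.

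First I would fix the architecture. Using $T \ge n$, partition the $T$ layers into $b = \Theta(T/\ell_0)$ identical ``blocks,'' each of depth $\ell_0 = \tilde\Theta(n^{1/d})$ on the constant-dimensional ($d$-dimensional) lattice of $n$ sites --- long enough for a signal to traverse the lattice a constant number of times. Each block is a \emph{programmable gadget}: a depth-$\ell_0$ sub-circuit on $n$ lattice qubits contains $\Theta(\ell_0 n)$ two-qubit gate slots, and I would lay out the gadget so that $\Theta(n)$ of these slots are independent free binary choices (for instance, a brick-wall layer of $\Theta(n)$ disjoint two-qubit windows, each window choosing between two gates). Thus block $j$ carries a program string $p_j \in \{0,1\}^{\Theta(n)}$, and the whole circuit is indexed by $p = (p_1,\ldots,p_b) \in \{0,1\}^{\Theta(bn)}$, i.e.\ by $\tilde\Theta(Tn)$ bits, with any fan-out and routing overhead needed for geometric locality absorbed into $\ell_0$ and hence into the $\tilde\Omega$.

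The crux is to make the gadgets so that $p$ is \emph{recoverable} from the function $f_p$ that the circuit computes; this forces the map $p \mapsto f_p$ to be injective, whence $|\{f_p\}| = 2^{\tilde\Theta(Tn)}$ distinct Boolean functions. Following the idea of Ref.~\cite{haah2018quantum}, I would use a ``self-printing'' design: reserve $O(\log(bn))$ of the $n$ input bits as an index $\iota$, and wire the blocks so that on the input $x = (\iota, 0, \ldots, 0)$ the circuit deterministically transports the $\iota$-th program bit to qubit $1$, so that $f_p(\iota, 0,\ldots,0) = p_\iota$ for every $\iota$. Then distinct $p$ already differ on these designated inputs, giving injectivity. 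Implementing this ``printing channel'' is exactly what the depth budget $\ell_0$ per block is spent on: conditioned on $\iota$ (fanned out to all sites in depth $O(\log n)$), a constant-depth controlled operation activates the single program bit addressed by the block- and site-components of $\iota$, and a distance-$\Theta(n^{1/d})$ reversible shift moves it toward qubit $1$. One arranges $\Theta(n)$ disjoint such channels within each block (one per program bit) so that all $\Theta(n)$ bits of $p_j$ are simultaneously addressable --- this is precisely the construction underlying Lemma 8 of Ref.~\cite{haah2018quantum}, which I would invoke for the detailed gadget.

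The main obstacle is this injective encoding under the triple constraint of (i) total depth $O(T)$, (ii) geometric locality on a bounded-degree lattice, and (iii) a program of length $\tilde\Omega(Tn)$ rather than the $\tilde\cO(T)$ one naively gets when each depth-$\ell_0$ block is allowed only $O(1)$ or $O(\log n)$ program bits. The resolution --- running $\Theta(n)$ mutually non-interfering ``printing channels'' in parallel across the lattice inside a single block, so that a constant number of layers already exposes $\Theta(n)$ independent program bits --- is the one nontrivial technical point, and is supplied by Haah et al. Granting it, the count follows: $\log_2 |\mathcal{F}| = \Theta(bn) = \tilde\Theta(Tn)$, and the hypotheses $n \le T \le 2^n$ ensure both $b \ge 1$ (the family is nonempty) and that $\tilde\Theta(Tn)$ stays below the trivial ceiling $\log_2\binom{2^n}{2^{n-1}} = \Theta(2^n)$ on the number of balanced Boolean functions, so the bound is consistent and not vacuous.
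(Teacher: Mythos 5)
The paper does not prove this lemma: it is quoted verbatim as Lemma 8 of Ref.~\cite{haah2018quantum}, and your proposal likewise defers the one nontrivial ingredient (the parallel ``self-printing'' gadget that exposes $\Theta(n)$ program bits per constant-depth block) to that same reference, so the two approaches coincide. Your surrounding counting sketch is consistent with that construction, with the $\tilde\Omega$ correctly absorbing the $\log(Tn)$ addressing and routing overhead (which is essential, since $Tn$ itself can exceed the $2^n$ addressable inputs when $T$ approaches $2^n$).
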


Even if we allow an unlimited number of ancilla qubits, we have an upper bound on the number of Boolean functions that such $\cU$'s with $G$ two-qubit gates can compute with high probability.

\begin{lemma}[Upper bound on computing Boolean functions with high probability; Lemma 9 of Ref.~\cite{haah2018quantum}]
\label{lem:bool-up}
The number of Boolean functions $f:\{0,1\}^n \to \{0,1\}$ that can be computed with high probability by quantum circuits with unlimited ancilla qubits using $G$ two-qubit gates from any gate set is at most $2^{\tilde O(G \log n)}$.
\end{lemma}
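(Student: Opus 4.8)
The plan is to follow the structure of Lemma~9 of Ref.~\cite{haah2018quantum}: first reduce to a bounded number of relevant qubits via a light-cone argument, then discretize the (possibly infinite) gate set using the robustness of the $2/3$ acceptance threshold, and finally count circuits over the discretized set. The point of the lemma is that the count depends only on $G$ and $n$, not on the number of ancillas, and the light-cone step is what removes the ancilla dependence.

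First I would argue that only $O(G)$ qubits matter. Since $f(x)$ is read off a single output qubit, I prune the circuit to the backward light cone of that qubit: processing the $G$ two-qubit gates in reverse order, a gate that does not touch the current cone is discarded (it cannot affect the measured bit), and a gate that does touch it enlarges the cone by at most one qubit. Hence at most $G$ gates act on at most $q:=G+1$ qubits, of which at most $\min(n,q)$ carry input bits $x_i$ while the rest are ancillas initialized to $\ket 0$. So it suffices to count Boolean functions realized by $G$-gate circuits on $q\le G+1$ qubits with a designated input-labeling and output qubit.

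Second I would discretize the gate set. Fix a $\delta$-net $\mathcal N$ of the two-qubit unitary group $U(4)$, a compact $16$-dimensional manifold, so that $|\mathcal N|=O(1/\delta)^{O(1)}$; choosing $\delta=\Theta(1/G)$ gives $|\mathcal N|=\mathrm{poly}(G)$. Replacing each gate $U_i$ by a nearest net element $\tilde U_i$ changes the overall unitary by at most $\sum_i \|U_i-\tilde U_i\|\le G\delta$ in operator norm (the standard telescoping/hybrid estimate, using that unitaries have norm $1$), hence changes any single-qubit measurement probability by at most $2G\delta$. Taking $2G\delta<1/6$, a circuit that outputs $f(x)$ with probability $\ge 2/3$ becomes a net-circuit that outputs $f(x)$ with probability $>2/3-1/6>1/2$ for every $x$; thus $f$ is recovered unambiguously as the more-likely output bit of the net-circuit. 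Therefore the set of Boolean functions computable with high probability by arbitrary $G$-gate circuits is contained in the set of ``majority-output'' functions of $G$-gate circuits built from $\mathcal N$ on $q\le G+1$ qubits.

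Finally I would count the latter. A net-circuit is specified by: which of the $q$ qubits are inputs and their labels in $[n]$ (at most $(n+1)^q=2^{O(G\log n)}$ options), the output qubit ($\le q$ options), and for each of the $G$ gates an ordered pair of qubits ($\le q^2$ options) together with an element of $\mathcal N$ ($\mathrm{poly}(G)$ options). This gives at most $2^{O(G\log n)}\cdot\bigl(q^2\,\mathrm{poly}(G)\bigr)^{G}=2^{O(G\log n)+O(G\log G)}=2^{\tilde O(G\log n)}$ net-circuits, hence at most that many Boolean functions, which is the claim (the $\tilde O$ absorbing the $\log G$ factor, harmless in the regime of interest where $G$ is at most quasi-polynomial in $n$). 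The main subtlety, and the only place care is needed, is the discretization step: one must verify that the perturbed circuit still pins down $f$ even though its acceptance probability may drop below $2/3$ (handled because $2/3-1/6>1/2$), and that the net size remains $\mathrm{poly}(G)$ so the final count is $2^{\tilde O(G\log n)}$ and not larger; everything else is bookkeeping.
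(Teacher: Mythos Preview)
The paper does not prove this lemma; it is quoted verbatim as Lemma~9 of Ref.~\cite{haah2018quantum} and used as a black box. Your sketch reproduces the standard argument from that reference---light-cone reduction to $q\le G+1$ relevant qubits, discretization of $U(4)$ via a $\Theta(1/G)$-net so that the $2/3$ threshold survives as a strict majority, and a counting of net-circuits---and is correct. One small remark: your closing parenthetical that the $\log G$ factor is harmless ``where $G$ is at most quasi-polynomial in $n$'' is unnecessarily restrictive; since $G\log G=\tilde O(G)\le \tilde O(G\log n)$ for any $n\ge 2$, the bound $2^{\tilde O(G\log n)}$ holds unconditionally once $\tilde O$ is allowed to absorb $\mathrm{polylog}(G)$ factors.
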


We can now prove Thm.~\ref{thm:gate}. In Lemma~\ref{lem:1d}, we showed that time evolution of the state $\ket{\psi}$ under the clock Hamiltonian for a circuit $\cU$ with $G$ gates can evaluate $\cU\ket{0}^{\otimes n}$ in time $t=\cO(G^2)$ with high probability. Since $Tn \geq G$, Lemma~\ref{lem:bool-low} implies that the number of distinct Boolean functions
that can be computed by time-evolving with the Hamiltonians for corresponding $\cU$'s of $G$ gates, and with high probability, is at least $2^{\tilde \Omega(G)}$.
However, by Lemma~\ref{lem:bool-up}, a circuit that uses $G$ gates can compute at most $2^{\tilde O(G\log n)}$ Boolean functions with high probability. This implies that $\tilde \Omega(G)$ two-qubit gates are required to perform the time evolution with constant error in the worst case. Equivalently, replacing the state $\ket{\psi}$ with its low-energy projection $\ket{\psi_\Delta} \in \cS_\Delta$, 
we conclude that
\begin{align}
    \tilde \Omega\left(t\sqrt{\lambda \Delta}\right)
\end{align}
two-qubit gates are required to simulate the time evolution of $\ket{\psi_\Delta}$ with constant error and in the worst case. Note that in Thm.~\ref{thm:gate}, we report this gate complexity under the equivalent condition $|\!\bra{\psi_G}\Pi U^\dagger \Pi e^{-itH}\ket{\psi_G}\!| \geq 2/3$, where $\ket{\psi_G}\equiv \ket{\psi_\Delta}$.

\section{Conclusions}
\label{sec:conclusions}

We described a quantum algorithm for simulating the time evolution of states supported in a low-energy subspace of a Hamiltonian. Under certain conditions, the quantum algorithm provides improved runtimes over alternative Hamiltonian simulation methods that apply generally, like QSP. Specifically, the improvement arises when the Hamiltonian is ``gap-amplifiable'', i.e., it is PSD, can be expressed as $H=\lambda A^\dagger A$, and its ``square root'' $A$ can be efficiently accessed. These Hamiltonians include frustration-free Hamiltonians that appear ubiquitously in physics and quantum computing. In applications of these examples,
creating access to $A$ (e.g., in the form of an LCU or a block-encoding) may incur a similar cost as creating access to $H$. Hence, our improved 
bounds on query complexity are anticipated to translate
into improved bounds on gate complexities, provided that the initial state is mostly supported on the low-energy subspace of such Hamiltonians.
In general, we observe that it is always possible to shift a Hamiltonian by a constant to become PSD. While this shift might not yield any improvement in the worst-case scenario, it is conceivable that some moderate improvement could still be achieved by employing our algorithm in this case; we provided an example of this using a nearly frustration-free Hamiltonian.

We also proved several important lower bounds for the query complexity of this Hamiltonian simulation problem.
First, we showed that the low-energy condition of the initial state cannot be exploited to improve the runtime 
of Hamiltonian simulation methods that apply generally, like QSP, in the worst case (e.g., when we do not have efficient access to $A$). This answers a long-standing
question in Hamiltonian simulation; for example Ref.~\cite{csahinouglu2021hamiltonian} also required the Hamiltonians to be PSD and noted that it was not clear how to improve the runtime of Hamiltonian simulation in the low-energy subspace in general. Our results represent strengthened lower bounds compared to previous no-fast-forwarding theorems for generic Hamiltonians.

Second, we proved matching lower bounds in query complexity to show that our algorithm has optimal dependence on time and the size of the low-energy subspace. In the asymptotic (time-dominated) regime that is most interesting for applications, our quantum algorithm for gap-amplifiable Hamiltonians is a strict improvement over generic methods like QSP. Beyond the oracle model, we also proved a matching lower bound on the gate complexity in this regime. We note that in the asymptotic (intermediate) regime where time-dependence in sublinear, the cost of our algorithm also depends on the precision; we proved a lower bound in terms of time in this regime but did not prove a lower bound in in terms the precision, however,  we note that our upper bound only has sublogarithmic dependence on $1/\epsilon$.

Our quantum algorithm builds on a polynomial approximation to $e^{-it x^2}$ as a function of $x$. Since we are interested in a domain where $|x|\ll 1$, the degree of the polynomial is smaller than that needed for approximating the function for all $x \in [-1,1]$. For gap-amplifiable Hamiltonians, $x$ is associated with $A$ (or $A^\dagger$), and the degree of the polynomial with the query complexity. Related results and improvements are thus expected for implementing other functions of gap-amplifiable Hamiltonians
when the initial state is low-energy.

Since low-energy states appear ubiquitously in physics simulation, quantum chemistry, optimization and beyond, we expect our results to be broadly useful.

\section{Acknowledgements}
We thank Robin Kothari for discussions, feedback on the presentation, and for pointing out Ref.~\cite{haah2018quantum}, 
and Thomas E. O'Brien for reviewing an earlier draft of this article and comments. We also thank Ryan Babbush, William J. Huggins, Stephen Jordan, Robbie King, and Ramis Movassagh  at Google's Quantum AI, and Isaac Chuang and Aram Harrow at MIT, for discussions.

\bibliographystyle{plainnat}

\begin{thebibliography}{52}
\providecommand{\natexlab}[1]{#1}
\providecommand{\url}[1]{\texttt{#1}}
\expandafter\ifx\csname urlstyle\endcsname\relax
  \providecommand{\doi}[1]{doi: #1}\else
  \providecommand{\doi}{doi: \begingroup \urlstyle{rm}\Url}\fi

\bibitem[Albash and Lidar(2018)]{albash2018adiabatic}
Tameem Albash and Daniel~A Lidar.
\newblock Adiabatic quantum computation.
\newblock \emph{Reviews of Modern Physics}, 90\penalty0 (1):\penalty0 015002,
  2018.
\newblock \doi{10.1103/RevModPhys.90.015002}.

\bibitem[An et~al.(2023)An, Liu, and Lin]{an2023linear}
Dong An, Jin-Peng Liu, and Lin Lin.
\newblock Linear combination of hamiltonian simulation for nonunitary dynamics
  with optimal state preparation cost.
\newblock \emph{Phys. Rev. Lett.}, 131:\penalty0 150603, Oct 2023.
\newblock \doi{10.1103/PhysRevLett.131.150603}.

\bibitem[Aspuru-Guzik et~al.(2005)Aspuru-Guzik, Dutoi, Love, and
  Head-Gordon]{aspuru2005simulated}
Al{\'a}n Aspuru-Guzik, Anthony~D Dutoi, Peter~J Love, and Martin Head-Gordon.
\newblock Simulated quantum computation of molecular energies.
\newblock \emph{Science}, 309\penalty0 (5741):\penalty0 1704--1707, 2005.
\newblock \doi{10.1126/science.1113479}.

\bibitem[Atia and Aharonov(2017)]{atia2017fast}
Yosi Atia and Dorit Aharonov.
\newblock Fast-forwarding of hamiltonians and exponentially precise
  measurements.
\newblock \emph{Nature communications}, 8\penalty0 (1):\penalty0 1572, 2017.
\newblock \doi{10.1038/s41467-017-01637-7}.

\bibitem[Babbush et~al.(2016)Babbush, Berry, Kivlichan, Wei, Love, and
  Aspuru-Guzik]{BBK+15}
Ryan Babbush, Dominic~W Berry, Ian~D Kivlichan, Annie~Y Wei, Peter~J Love, and
  Alán Aspuru-Guzik.
\newblock Exponentially more precise quantum simulation of fermions in second
  quantization.
\newblock \emph{New Journal of Physics}, 18\penalty0 (3):\penalty0 033032, mar
  2016.
\newblock \doi{10.1088/1367-2630/18/3/033032}.

\bibitem[Babbush et~al.((2017))]{BBK++15}
Ryan Babbush et~al.
\newblock Exponentially more precise quantum simulation of fermions in the
  configuration interaction representation.
\newblock \emph{Quantum Science and Technology}, 3\penalty0 (1):\penalty0
  015006, (2017).
\newblock \doi{10.1088/2058-9565/aa9463}.

\bibitem[Berry and Childs((2012))]{BC12}
Dominic~W Berry and Andrew~M. Childs.
\newblock Black-box hamiltonian simulation and unitary implementation.
\newblock \emph{Quantum Information \& Computation}, 12\penalty0
  (1-2):\penalty0 29--62, (2012).
\newblock \doi{10.26421/QIC12.1-2-4}.

\bibitem[Berry et~al.((2007))Berry, Ahokas, Cleve, and Sanders]{BAC07}
Dominic~W Berry, Graeme Ahokas, Richard Cleve, and Barry~C. Sanders.
\newblock Efficient quantum algorithms for simulating sparse hamiltonians.
\newblock \emph{Communications in Mathematical Physics}, 270\penalty0
  (2):\penalty0 359--371, (2007).
\newblock \doi{10.1007/s00220-006-0150-x}.

\bibitem[Berry et~al.((2015))Berry, Childs, Cleve, Kothari, and Somma]{BCC+15}
Dominic~W Berry, Andrew~M Childs, Richard Cleve, Robin Kothari, and Rolando~D
  Somma.
\newblock Simulating hamiltonian dynamics with a truncated taylor series.
\newblock \emph{Physical review letters}, 114\penalty0 (9):\penalty0 090502,
  (2015).
\newblock \doi{10.1103/PhysRevLett.114.090502}.

\bibitem[Berry et~al.(2014)Berry, Childs, Cleve, Kothari, and
  Somma]{berry2014exponential}
Dominic~W Berry, Andrew~M Childs, Richard Cleve, Robin Kothari, and Rolando~D
  Somma.
\newblock Exponential improvement in precision for simulating sparse
  hamiltonians.
\newblock In \emph{Proceedings of the forty-sixth annual ACM symposium on
  Theory of computing}, pages 283--292, 2014.
\newblock \doi{10.1145/2591796.2591854}.

\bibitem[Boixo et~al.(2009)Boixo, Knill, and Somma]{BKS09}
Sergio Boixo, Emanuel Knill, and Rolando~D Somma.
\newblock Eigenpath traversal by phase randomization.
\newblock \emph{Quantum Inf. Comput.}, 9\penalty0 (9\&10):\penalty0 833--855,
  2009.
\newblock \doi{10.26421/QIC9.9-10-7}.

\bibitem[Borwein and Erd{\'e}lyi(2012)]{borwein2012polynomials}
Peter Borwein and Tam{\'a}s Erd{\'e}lyi.
\newblock \emph{Polynomials and polynomial inequalities}, volume 161.
\newblock Springer Science \& Business Media, 2012.
\newblock \doi{10.1007/978-1-4612-0793-1}.

\bibitem[Boyer et~al.(1998)Boyer, Brassard, H{\o}yer, and Tapp]{boyer1998tight}
Michel Boyer, Gilles Brassard, Peter H{\o}yer, and Alain Tapp.
\newblock Tight bounds on quantum searching.
\newblock \emph{Fortschritte der Physik: Progress of Physics}, 46\penalty0
  (4-5):\penalty0 493--505, 1998.
\newblock \doi{10.1002/3527603093.ch10}.

\bibitem[Bravyi and Terhal((2010))]{BT09}
Sergey Bravyi and Barbara Terhal.
\newblock Complexity of stoquastic frustration-free hamiltonians.
\newblock \emph{Siam journal on computing}, 39\penalty0 (4):\penalty0
  1462--1485, (2010).
\newblock \doi{10.1137/08072689X}.

\bibitem[Campbell((2019))]{Cam19}
Earl Campbell.
\newblock A random compiler for fast hamiltonian simulation.
\newblock \emph{Physical review letters}, 123:\penalty0 070503, (2019).
\newblock \doi{10.1103/PhysRevLett.123.070503}.

\bibitem[Chen et~al.(2012)Chen, Ji, Kribs, Wei, and Zeng]{chen2012ground}
Jianxin Chen, Zhengfeng Ji, David Kribs, Zhaohui Wei, and Bei Zeng.
\newblock Ground-state spaces of frustration-free hamiltonians.
\newblock \emph{Journal of mathematical physics}, 53\penalty0 (10), 2012.
\newblock \doi{10.1063/1.4748527}.

\bibitem[Childs and Goldstone(2004)]{childs2004spatial}
Andrew~M Childs and Jeffrey Goldstone.
\newblock Spatial search by quantum walk.
\newblock \emph{Physical Review A}, 70\penalty0 (2):\penalty0 022314, 2004.
\newblock \doi{10.1103/PhysRevA.70.022314}.

\bibitem[Childs et~al.((2017))Childs, Kothari, and Somma]{CKS17}
Andrew~M Childs, Robin Kothari, and Rolando~D Somma.
\newblock Quantum algorithm for systems of linear equations with exponentially
  improved dependence on precision.
\newblock \emph{SIAM Journal on Computing}, 46\penalty0 (6):\penalty0
  1920--1950, (2017).
\newblock \doi{10.1137/16M1087072}.

\bibitem[Childs et~al.((2021))Childs, Su, Tran, Wiebe, and Zhu]{CST+19}
Andrew~M Childs, Yuan Su, Minh~C Tran, Nathan Wiebe, and Shuchen Zhu.
\newblock Theory of trotter error with commutator scaling.
\newblock \emph{Physical Review X}, 11\penalty0 (1):\penalty0 011020, (2021).
\newblock \doi{10.1103/PhysRevX.11.011020}.

\bibitem[Chowdhury and Somma(2017)]{chowdhury2016quantum}
Anirban~Narayan Chowdhury and Rolando~D Somma.
\newblock Quantum algorithms for gibbs sampling and hitting-time estimation.
\newblock \emph{Quant. Inf. Comp.}, 17:\penalty0 0041, 2017.
\newblock \doi{10.1145/3313276.3316366}.

\bibitem[de~Beaudrap et~al.((2010))de~Beaudrap, Ohliger, Osborne, and
  Eisert]{BOO10}
Niel de~Beaudrap, Matthias Ohliger, Tobias~J Osborne, and Jens Eisert.
\newblock Solving frustration-free spin systems.
\newblock \emph{Physical review letters}, 105:\penalty0 060504, (2010).
\newblock \doi{10.1103/PhysRevLett.105.060504}.

\bibitem[Farhi et~al.(2000)Farhi, Goldstone, Gutmann, and Sipser]{FGGS00}
Edward Farhi, Jeffrey Goldstone, Sam Gutmann, and Michael Sipser.
\newblock Quantum computation by adiabatic evolution.
\newblock \emph{Preprint at
  \href{https://arxiv.org/abs/quant-ph/0001106}{https://arxiv.org/abs/quant-ph/0001106}},
  2000.
\newblock \doi{10.48550/arXiv.quant-ph/0001106}.

\bibitem[Feynman(1982)]{Fey82}
Richard~P Feynman.
\newblock Simulating physics with computers.
\newblock \emph{International Journal of Theoretical Physics}, 21\penalty0
  (6--7):\penalty0 467--488, 1982.
\newblock \doi{10.1007/BF02650179}.

\bibitem[Gily{\'e}n(2023)]{gilyencorrect}
Andr{\'a}s Gily{\'e}n.
\newblock personal communication, 2023.

\bibitem[Gily{\'e}n et~al.(2019)Gily{\'e}n, Su, Low, and
  Wiebe]{gilyen2019quantum}
Andr{\'a}s Gily{\'e}n, Yuan Su, Guang~Hao Low, and Nathan Wiebe.
\newblock Quantum singular value transformation and beyond: exponential
  improvements for quantum matrix arithmetics.
\newblock In \emph{Proceedings of the 51st Annual ACM SIGACT Symposium on
  Theory of Computing}, pages 193--204, 2019.
\newblock \doi{10.1145/3313276.3316366}.

\bibitem[Gong et~al.(2024)Gong, Zhou, and Li]{gong2023theory}
Weiyuan Gong, Shuo Zhou, and Tongyang Li.
\newblock Complexity of {D}igital {Q}uantum {S}imulation in the {L}ow-{E}nergy
  {S}ubspace: {A}pplications and a {L}ower {B}ound.
\newblock \emph{{Quantum}}, 8:\penalty0 1409, July 2024.
\newblock ISSN 2521-327X.
\newblock \doi{10.22331/q-2024-07-15-1409}.

\bibitem[Gu et~al.(2021)Gu, Somma, and
  {\c{S}}ahino{\u{g}}lu]{gu2021fastforwarding}
Shouzhen Gu, Rolando~D Somma, and Burak {\c{S}}ahino{\u{g}}lu.
\newblock Fast-forwarding quantum evolution.
\newblock \emph{{Quantum}}, 5:\penalty0 577, November 2021.
\newblock \doi{10.22331/q-2021-11-15-577}.

\bibitem[Haah et~al.((2018))Haah, Hastings, Kothari, and Low]{haah2018quantum}
Jeongwan Haah, Matthew Hastings, Robin Kothari, and Guang~Hao Low.
\newblock Quantum algorithm for simulating real time evolution of lattice
  hamiltonians.
\newblock In \emph{2018 IEEE 59th Annual Symposium on Foundations of Computer
  Science (FOCS)}, pages 350--360. IEEE, (2018).
\newblock \doi{10.1137/18M1231511}.

\bibitem[Holmes et~al.(2022)Holmes, Muraleedharan, Somma, Subasi, and
  {\c{S}}ahino{\u{g}}lu]{holmes2022quantum}
Zoe Holmes, Gopikrishnan Muraleedharan, Rolando~D Somma, Yigit Subasi, and
  Burak {\c{S}}ahino{\u{g}}lu.
\newblock Quantum algorithms from fluctuation theorems: Thermal-state
  preparation.
\newblock \emph{Quantum}, 6:\penalty0 825, 2022.
\newblock \doi{10.22331/q-2022-10-06-825}.

\bibitem[Hoyer et~al.(2007)Hoyer, Lee, and Spalek]{hoyer2007negative}
Peter Hoyer, Troy Lee, and Robert Spalek.
\newblock Negative weights make adversaries stronger.
\newblock In \emph{Proceedings of the thirty-ninth annual ACM symposium on
  Theory of computing}, pages 526--535, 2007.
\newblock \doi{10.1145/1250790.1250867}.

\bibitem[Jordan(2017)]{jordan2017fast}
Stephen~P. Jordan.
\newblock Fast quantum computation at arbitrarily low energy.
\newblock \emph{Phys. Rev. A}, 95:\penalty0 032305, Mar 2017.
\newblock \doi{10.1103/PhysRevA.95.032305}.

\bibitem[Jordan et~al.((2012))Jordan, Lee, and Preskill]{JLP12}
Stephen~P Jordan, Keith~S.M. Lee, and John Preskill.
\newblock Quantum algorithms for quantum field theories.
\newblock \emph{Science}, 336:\penalty0 1130, (2012).
\newblock \doi{10.1126/science.1217069}.

\bibitem[Lanyon et~al.(2010)Lanyon, Whitfield, Gillett, Goggin, Almeida,
  Kassal, Biamonte, Mohseni, Powell, Barbieri, et~al.]{lanyon2010towards}
Benjamin~P Lanyon, James~D Whitfield, Geoff~G Gillett, Michael~E Goggin,
  Marcelo~P Almeida, Ivan Kassal, Jacob~D Biamonte, Masoud Mohseni, Ben~J
  Powell, Marco Barbieri, et~al.
\newblock Towards quantum chemistry on a quantum computer.
\newblock \emph{Nature chemistry}, 2\penalty0 (2):\penalty0 106--111, 2010.
\newblock \doi{10.1038/nchem.483}.

\bibitem[Lloyd((1996))]{Llo96}
Seth Lloyd.
\newblock Universal quantum simulators.
\newblock \emph{Science}, 273\penalty0 (5278):\penalty0 1073--1078, (1996).
\newblock \doi{10.1126/science.273.5278.1073}.

\bibitem[Low(2019)]{low2019sparse}
Guang~Hao Low.
\newblock Hamiltonian simulation with nearly optimal dependence on spectral
  norm.
\newblock In \emph{Proceedings of the 51st Annual ACM SIGACT Symposium on
  Theory of Computing}, STOC 2019, page 491–502, New York, NY, USA, 2019.
\newblock \doi{10.1145/3313276.3316386}.

\bibitem[Low and Chuang((2017))]{LC17}
Guang~Hao Low and Isaac~L. Chuang.
\newblock Optimal hamiltonian simulation by quantum signal processing.
\newblock \emph{Physical review letters}, 118\penalty0 (1):\penalty0 010501,
  (2017).
\newblock \doi{10.1103/PhysRevLett.118.010501}.

\bibitem[Low and Chuang(2017)]{low2017hamiltonian}
Guang~Hao Low and Isaac~L Chuang.
\newblock Hamiltonian simulation by uniform spectral amplification.
\newblock \emph{Preprint at
  \href{https://arxiv.org/abs/1707.05391}{https://arxiv.org/abs/1707.05391}},
  2017.
\newblock \doi{10.48550/arXiv.1707.05391}.

\bibitem[Low and Chuang(2019)]{low2019qubitization}
Guang~Hao Low and Isaac~L Chuang.
\newblock Hamiltonian simulation by qubitization.
\newblock \emph{Quantum}, 3:\penalty0 163, 2019.
\newblock \doi{10.22331/q-2019-07-12-163}.

\bibitem[Mande and de~Wolf(2023)]{mande2023tight}
Nikhil~S Mande and Ronald de~Wolf.
\newblock Tight bounds for quantum phase estimation and related problems.
\newblock \emph{Preprint at
  \href{https://arxiv.org/abs/2305.04908}{https://arxiv.org/abs/2305.04908}},
  2023.
\newblock \doi{10.48550/arXiv.2305.04908}.

\bibitem[Nielsen and Chuang(2010)]{nielsen2010quantum}
Michael~A Nielsen and Isaac~L Chuang.
\newblock \emph{Quantum computation and quantum information}.
\newblock Cambridge university press, 2010.
\newblock \doi{10.1119/1.1463744}.

\bibitem[Orsucci and Dunjko(2021)]{orsucci2021solving}
Davide Orsucci and Vedran Dunjko.
\newblock On solving classes of positive-definite quantum linear systems with
  quadratically improved runtime in the condition number.
\newblock \emph{{Quantum}}, 5:\penalty0 573, November 2021.
\newblock ISSN 2521-327X.
\newblock \doi{10.22331/q-2021-11-08-573}.

\bibitem[Reichardt(2014)]{reichardt2014span}
Ben~W Reichardt.
\newblock Span programs are equivalent to quantum query algorithms.
\newblock \emph{SIAM Journal on Computing}, 43\penalty0 (3):\penalty0
  1206--1219, 2014.
\newblock \doi{10.1137/100792640}.

\bibitem[Sachdev(1999)]{sachdev1999quantum}
Subir Sachdev.
\newblock Quantum phase transitions.
\newblock \emph{Physics world}, 12\penalty0 (4):\penalty0 33, 1999.
\newblock \doi{10.1002/9780470022184.hmm108}.

\bibitem[{\c{S}}ahino{\u{g}}lu and Somma(2021)]{csahinouglu2021hamiltonian}
Burak {\c{S}}ahino{\u{g}}lu and Rolando~D Somma.
\newblock Hamiltonian simulation in the low-energy subspace.
\newblock \emph{npj Quantum Information}, 7\penalty0 (1):\penalty0 119, 2021.
\newblock \doi{10.1038/s41534-021-00451-w}.

\bibitem[Schuch et~al.(2008)Schuch, Cirac, and
  Verstraete]{schuch2008computational}
Norbert Schuch, Ignacio Cirac, and Frank Verstraete.
\newblock Computational difficulty of finding matrix product ground states.
\newblock \emph{Physical review letters}, 100\penalty0 (25):\penalty0 250501,
  2008.
\newblock \doi{10.1103/PhysRevLett.100.250501}.

\bibitem[Somma et~al.((2002))Somma, Ortiz, Gubernatis, Knill, and
  Laflamme]{SOGKL02}
Rolando Somma, Gerardo Ortiz, James~E Gubernatis, Emanuel Knill, and Raymond
  Laflamme.
\newblock Simulating physical phenomena by quantum networks.
\newblock \emph{Physical Review A}, 65\penalty0 (4):\penalty0 042323, (2002).
\newblock \doi{10.1103/PhysRevA.65.042323}.

\bibitem[Somma((2016))]{Som16}
Rolando~D Somma.
\newblock A trotter-suzuki approximation for lie groups with applications to
  hamiltonian simulation.
\newblock \emph{Journal of Mathematical Physics}, 57\penalty0 (6):\penalty0
  062202, (2016).
\newblock \doi{10.1063/1.4952761}.

\bibitem[Somma(2016)]{somma2016quantum}
Rolando~D. Somma.
\newblock Quantum simulations of one dimensional quantum systems.
\newblock \emph{Quantum Info. Comput.}, 16\penalty0 (13–14):\penalty0
  1125–1168, oct 2016.
\newblock ISSN 1533-7146.

\bibitem[Somma and Boixo(2013)]{somma2013spectral}
Rolando~D Somma and Sergio Boixo.
\newblock Spectral gap amplification.
\newblock \emph{SIAM Journal on Computing}, 42\penalty0 (2):\penalty0 593--610,
  2013.
\newblock \doi{10.1137/120871997}.

\bibitem[Thibodeau and Clark(2023)]{thibodeau2023nearly}
Matthew Thibodeau and Bryan~K. Clark.
\newblock Nearly-frustration-free ground state preparation.
\newblock \emph{{Quantum}}, 7:\penalty0 1084, August 2023.
\newblock ISSN 2521-327X.
\newblock \doi{10.22331/q-2023-08-16-1084}.

\bibitem[Tran et~al.(2021)Tran, Su, Carney, and Taylor]{tran2021faster}
Minh~C Tran, Yuan Su, Daniel Carney, and Jacob~M Taylor.
\newblock Faster digital quantum simulation by symmetry protection.
\newblock \emph{PRX Quantum}, 2\penalty0 (1):\penalty0 010323, 2021.
\newblock \doi{10.1103/PRXQuantum.2.010323}.

\bibitem[Wiebe et~al.((2010))Wiebe, Berry, H{\o}yer, and Sanders]{WBH+10}
Nathan Wiebe, Dominic Berry, Peter H{\o}yer, and Barry~C. Sanders.
\newblock Higher order decompositions of ordered operator exponentials.
\newblock \emph{Journal of Physics A: Mathematical and Theoretical},
  43\penalty0 (6):\penalty0 065203, (2010).
\newblock \doi{10.1088/1751-8113/43/6/065203}.

\end{thebibliography}

\onecolumn\newpage
\appendix

\section{Proof of Lemma~\ref{lem:polyapprox}}
\label{app:polyapprox}
We prove Lemma~\ref{lem:polyapprox}, reproduced here for convenience. It is a simplified form of Corollary 66 of Ref.~\cite{gilyen2019quantum}, with a minor correction in the bounds of $\epsilon$ from $\epsilon\in(0,1/2B)$ to $\epsilon\in(0,3B/2)$~\cite{gilyencorrect}. Consistent with our notation throughout this work, we will ignore $\log\log$ factors associated with the Jacobi-Anger expansion in Lemma~\ref{lem:jacobi}.

\begin{customlemma}{2.1} [Piecewise polynomial approximations based on a local Taylor series]
Let $r \in (0, 1]$, $\delta \in (0,r]$, and let $f:[-r-\delta, r+\delta]$ be such that $f(x) = \sum_{k=0}^\infty a_k x^k$ for all $x \in [-r-\delta, r+\delta]$. Choose any $B>0$ such that $B \geq \sum_{k=0}^\infty  |a_k| (r+\delta)^k$. For $\epsilon \in (0, 3B/2)$, there is an efficiently computable polynomial $P$ of degree $\cO(\frac{1}{\delta}\log\frac{B}{\epsilon})$ such that
\begin{align}
    |f(x)-P(x)|_{[-r,r]} \leq \epsilon \quad \mathrm{and} \quad |P(x)|_{[-1, 1]} \leq \epsilon + |f(x)|_{[-r-\frac{\delta}{2}, r+\frac{\delta}{2}]}\;.
\end{align}
\end{customlemma}

\begin{proof}
Our proof largely follows that of Ref.~\cite{gilyen2019quantum}. Let $L(x) = \frac{x}{r+\delta}$ be the linear transformation from $[-r - \delta, r + \delta]$ to $[-1, 1]$. Let $g(y) = f(L^{-1}(y))$; this looks like $f(x)$ on $[-r, r]$ expanded onto $[-1, 1]$. Since $y = x/(r+\delta)$, we have $g(y) = \sum_{k=0}^{\infty} (a_k(r+\delta)^k) y^k = \sum_{k=0}^\infty b_k y^k$ for $b_k := a_k (r+\delta)^k$.
For any $K \ge 1$ and $x : |x|\le r$, we obtain
\begin{align}
    \left|g(y) - \sum_{k=0}^{K-1} b_k y^k\right| = \left|\sum_{k=K}^\infty b_k y^k\right| \leq \sum_{k=K}^\infty \left|b_k\left(1 - \frac{\delta}{2(r+\delta)}\right)^k\right| \;,
\end{align}
since we only need to bound the error for $|x| \leq r$, so that
\begin{align}
    |y| \leq \frac{r}{r+\delta} = 1 - \frac{\delta}{r+\delta} \leq 1 - \frac{\delta}{2(r+\delta)}\;.
\end{align}
In particular, we will take
\begin{align}
    K = \left\lceil \frac{2(r+\delta)}{\delta} \log \left(\frac{12B}{\epsilon}\right)\right\rceil,
\end{align}
where 
$\frac{12B}{\epsilon} > 1$, i.e., $\epsilon < 12B$. Continuing, we have
\begin{align}
    \sum_{k=K}^\infty \left|b_k\left(1 - \frac{\delta}{2(r+\delta)}\right)^k\right| & \leq \left(1 - \frac{\delta}{2(r+\delta)}\right)^K \sum_{k=K}^\infty |b_k|\\
    & \leq \exp\left[-K\frac{\delta}{2(r+\delta)}\right] \sum_{k=K}^\infty |a_k| (r+\delta)^k \\ & \le
   \exp\left[-K\frac{\delta}{2(r+\delta)}\right] \sum_{k=0}^\infty  |a_k| (r+\delta)^k
    \\
    & \le \exp\left[-K\frac{\delta}{2(r+\delta)}\right] B \;,
\end{align}
since $\left(1 - \frac{\delta}{2(r+\delta)}\right)^K \leq \exp\left[-K\frac{\delta}{2(r+\delta)}\right]$.
Finally, we plug in $K$ to obtain
\begin{align}
    \left|g(y) - \sum_{k=0}^{K-1} b_k y^k\right| \leq \exp\left[-\left\lceil\frac{2(r+\delta)}{\delta} \log \left(\frac{12B}{\epsilon}\right)\right\rceil\frac{\delta}{2(r+\delta)}\right] B \le \frac{\epsilon}{12}\;.
\end{align}
We now require the following result (adapted from Lemma 65 of Ref.~\cite{gilyen2019quantum}).
\begin{lemma}[Low-weight approximation by Fourier series]
\label{lem:fourier}
Let $\delta'\in (0, 1)$, $\epsilon \in (0, 1)$, and $g : \mathbb{R}\to\mathbb{C}$ be such that $\left|g(y)-\sum_{k=0}^{K-1} b_k y^k\right| \leq \epsilon/12$ for all $y \in [-1+\delta', 1-\delta']$. Then, there exists ${\bf c}:=(c_{-M},\ldots,c_M) \in \mathbb{C}^{2M+1}$ such that
\begin{align}
    \left|g(y) - \sum_{m=-M}^M c_m e^{i\pi m y/2}\right| \leq \epsilon/3
\end{align}
for all $y \in [-1+\delta', 1-\delta']$, where
\begin{align}
    M = \max\left(2\left\lceil\frac{1}{\delta'} \log \frac{12\|{\bf b}\|_1}{\epsilon}\right\rceil, 0\right) \;,
\end{align}
${\bf b}:=(b_0,\ldots,b_{K-1}) \in \mathbb C^{K}$,
and $\|{\bf c}\|_1 \leq \|{\bf b}\|_1$. Moreover ${\bf c}$ can be efficiently calculated on a classical computer in time $\mathrm{poly}(K, M, \log 1/\epsilon)$.
\end{lemma}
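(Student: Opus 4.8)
The plan is to reduce the statement to a per-monomial approximation and then recombine linearly. By hypothesis, on the interval $I:=[-1+\delta',1-\delta']$ the function $g$ is within $\epsilon/12$ of the polynomial $P(y):=\sum_{k=0}^{K-1}b_k y^k$, so it suffices to produce a trigonometric polynomial $\sum_{|m|\le M}c_m e^{i\pi m y/2}$ that is within $\epsilon/4$ of $P$ on $I$ and has $\|{\bf c}\|_1\le\|{\bf b}\|_1$; the triangle inequality then yields the claimed $\epsilon/3$ bound. I would build this trigonometric polynomial monomial by monomial: for each $k$ I will exhibit coefficients ${\bf d}^{(k)}=(d^{(k)}_{-M},\dots,d^{(k)}_M)$ with $\|{\bf d}^{(k)}\|_1\le 1$ such that
\begin{align}
\Big|\,y^k-\sum_{m=-M}^{M}d^{(k)}_m e^{i\pi m y/2}\,\Big|\ \le\ \frac{\epsilon}{4\|{\bf b}\|_1}\qquad\text{for all }y\in I,
\end{align}
and then set $c_m:=\sum_{k=0}^{K-1}b_k\,d^{(k)}_m$. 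This gives $\|{\bf c}\|_1\le\sum_k|b_k|\,\|{\bf d}^{(k)}\|_1\le\|{\bf b}\|_1$, and the error on $I$ telescopes to at most $\epsilon/12+\|{\bf b}\|_1\cdot\epsilon/(4\|{\bf b}\|_1)=\epsilon/3$. So everything reduces to the monomial step with the stated degree $M=\mathcal O\!\big(\tfrac1{\delta'}\log\tfrac{\|{\bf b}\|_1}{\epsilon}\big)$.

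For the monomial step I would follow the construction in Lemma~65 of Ref.~\cite{gilyen2019quantum}. Monomials of large exponent are essentially free: since $|y|\le 1-\delta'$ on $I$, we have $\|y^k\|_I\le(1-\delta')^k\le e^{-\delta' k}$, which drops below $\epsilon/(4\|{\bf b}\|_1)$ once $k=\Omega\!\big(\tfrac1{\delta'}\log\tfrac{\|{\bf b}\|_1}{\epsilon}\big)$; for those $k$ one takes ${\bf d}^{(k)}={\bf 0}$. For the remaining $k=\mathcal O(M)$, one passes to the $4$-periodic triangle wave $\Lambda$ that equals the identity on $[-1,1]$, so that $\Lambda^k=y^k$ on $[-1,1]\supset I$, $\|\Lambda^k\|_\infty=1$ globally, and $\Lambda^k$ is analytic everywhere except at the odd integers, which lie at distance at least $\delta'$ from $I$. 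One then extracts a degree-$M$ trigonometric polynomial that is $\epsilon/(4\|{\bf b}\|_1)$-close to $\Lambda^k$ on $I$ and has Fourier $\ell_1$-norm at most $1$; the point is that $M=\mathcal O(\tfrac1{\delta'}\log\tfrac1\epsilon)$ suffices because the only obstruction to fast Fourier decay — the corners of $\Lambda^k$ — lies at distance $\delta'$ from the region where accuracy is required. The same smoothing can be applied directly to $P\circ\Lambda$ instead of to each monomial, which is the route of Ref.~\cite{gilyen2019quantum}.

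The main obstacle is exactly this last extraction: producing a degree-$\mathcal O(\tfrac1{\delta'}\log\tfrac1\epsilon)$ trigonometric polynomial that is uniformly close to $\Lambda^k$ on $I$ \emph{and} has $\ell_1$-bounded coefficients, given that $\Lambda^k$ is only Lipschitz (with Lipschitz constant as large as $\Theta(1/\delta')$ near $y=\pm1$) and that a positive convolution kernel of that degree cannot by itself deliver exponentially small tails. This forces the use of a carefully tailored kernel, with the width chosen delicately — small enough that the convolution does not blur the corners of $\Lambda^k$ or the steep variation of $y^k$ near $y=\pm(1-\delta')$, yet large enough that the truncation degree remains logarithmic in $1/\epsilon$. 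Reconciling these two requirements while preserving the $\ell_1$ bound is the technical heart; I would lift the explicit kernel, the degree bound, and the constants from Ref.~\cite{gilyen2019quantum}, incorporating the minor correction to the allowed range of $\epsilon$ noted in Ref.~\cite{gilyencorrect}, rather than re-derive them here.

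Finally, the efficiency claim follows from the construction. The smoothed-monomial (or smoothed-$P\circ\Lambda$) Fourier coefficients have closed forms — products of the explicit Fourier coefficients of $\Lambda^k$ with those of the kernel — and ${\bf c}$ is then an $\mathcal O(K)$-fold sum of $\mathcal O(M)$-dimensional vectors, computable in time $\mathrm{poly}(K,M,\log(1/\epsilon))$; alternatively one samples a smoothed $4$-periodic extension of $P$ and recovers ${\bf c}$ by a fast Fourier transform, again in time $\mathrm{poly}(K,M,\log(1/\epsilon))$.
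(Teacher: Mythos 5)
The first thing to say is that the paper does not prove this lemma at all: it is imported, with adjusted constants, from Lemma 65 of Ref.~\cite{gilyen2019quantum}, so there is no internal argument to compare yours against. Your outer scaffolding is correct and is what any proof must look like: the triangle-inequality reduction from $g$ to $P$, the monomial-by-monomial decomposition with $c_m=\sum_k b_k d^{(k)}_m$ so that $\|{\bf c}\|_1\le\|{\bf b}\|_1$ follows from $\|{\bf d}^{(k)}\|_1\le 1$, the observation that monomials with $k=\Omega(\frac{1}{\delta'}\log\frac{\|{\bf b}\|_1}{\epsilon})$ may be replaced by zero, and the $\epsilon/12+\epsilon/4=\epsilon/3$ budget all check out.

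The problem is that everything you actually prove is routine, while the one step you defer --- a degree-$\mathcal{O}(\frac{1}{\delta'}\log\frac{1}{\epsilon})$ trigonometric approximant to $y^k$ on $[-1+\delta',1-\delta']$ with Fourier $\ell_1$ norm at most $1$ --- is the entire content of the lemma. You diagnose correctly why the easy routes fail: plain truncation of the Fourier series of $\Lambda^k$ decays only polynomially in the degree, and a positive symmetric kernel whose Fourier transform is negligible above frequency $\frac{1}{\delta'}\log\frac{1}{\epsilon}$ must have width $\Theta(\delta'/\sqrt{\log(1/\epsilon)})$ and therefore perturbs $y^k$ by $\Theta(1/\log(1/\epsilon))$ near $|y|=1-\delta'$, where the second derivative is $\Theta(1/\delta'^2)$ at the worst $k\approx 2/\delta'$. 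But you then assert that a ``carefully tailored kernel'' closes the gap and that you will lift it from Ref.~\cite{gilyen2019quantum}. Two cautions. First, a signed kernel with vanishing moments no longer satisfies $|\hat\phi(m)|\le 1$ automatically, so the exact bound $\|{\bf d}^{(k)}\|_1\le 1$ (not merely $\mathcal{O}(1)$, which would break $\|{\bf c}\|_1\le\|{\bf b}\|_1$) is itself at risk in your scheme. Second, the triangle-wave-plus-kernel mechanism is your own reconstruction of the reference's proof, not a paraphrase of it; the cited lemma is not, to my knowledge, proved by smoothing $P\circ\Lambda$, so the object you propose to lift may not exist there in the form your argument needs. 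If you intend to use the lemma as a black box, your write-up is exactly as complete as the paper's; if you intend the triangle-wave route as an independent proof, the kernel construction is a genuine gap.
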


We apply this result directly to the function $g(y)=\sum_{k=0}^\infty b_k y^k = \sum_{k=0}^\infty a_k (r+\delta)^k y^k$ of Lemma~\ref{lem:polyapprox} to obtain
\begin{align}
    \left|\sum_{k=0}^{K-1}b_k y^k - \sum_{m=-M}^M c_m e^{i\pi m y/2}\right| \leq \frac{\epsilon}{3}\;,
\end{align}
where $y \in \left[-1+\frac{\delta}{2(r+\delta)},1-\frac{\delta}{2(r+\delta)}\right]$ and $\|{\bf c}\|_1  \leq \|{\bf b}\|_1 \leq \sum_{k=0}^\infty |b_k| \le B$. 
Let $\delta':=\frac{\delta}{2(r+\delta)}$. Then, Lemma~\ref{lem:fourier} gives
\begin{align}
    M = \max\left(2\left\lceil\frac{2(r+\delta)}{\delta}\log \frac{12\|{\bf b}\|_1}{\epsilon}\right\rceil, 0\right) = \cO\left(\frac{r}{\delta}\log\left( \frac{B}{\epsilon}\right)\right)\;.
\end{align}
The restriction on $\epsilon$ here is that $\epsilon/3 \in (0, 1)$. Changing variables back to $x$, the Fourier approximation of $f(x)=\sum_{k=0}^\infty a_k x^k$ can be expressed as
\begin{align}
    \tilde f(x) = \sum_{m=-M}^M c_m e^{i\pi m x/2(r+\delta)}\;.
\end{align}
This is an $\epsilon/3$-approximation to $f$ on $[-r-\delta/2, r+\delta/2]$, i.e., $|f(x)-\tilde f(x)|\le \epsilon/3$ in the domain.

To return to polynomials, we require the following two results (Lemmas 57 and 59 of Ref.~\cite{gilyen2019quantum}).
\begin{lemma}[Polynomial approximations of trigonometric functions]
\label{lem:jacobi}
Let $t \in \mathbb{R}\setminus \{0\}, \epsilon \in (0, 1/e)$, and
\begin{align}
    R = \left\lfloor\frac{1}{2}r\left(\frac{e|t|}{2}, \frac{5}{4}\epsilon\right)\right\rfloor\;,
\end{align}
where $r(u, \xi)$ is defined as the solution to
\begin{align}
    \xi = \left(\frac{u}{r}\right)^r : r \geq u\;,
\end{align}
which satisfies
\begin{align}
    r(u, \xi) = \Theta\left(u + \frac{\log(1/\xi)}{\log(e + \log(1/\xi)/u)}\right) = \cO\left(u + \log \frac{1}{\xi}\right)
\end{align}
for all $u > 0$ and $\xi \in (0, 1)$. Then, for all $x \in [-1, 1]$,
\begin{align}
    \left |\cos(tx) - J_0(t) + 2\sum_{k=1}^R (-1)^k J_{2k}(t) T_{2k}(x)\right |_{[-1, 1]} &\leq \epsilon \;,\\
    \left |\sin(tx) - 2 \sum_{k=0}^R (-1)^k J_{2k+1}(t) T_{2k+1}(x)\right |_{[-1, 1]} &\leq \epsilon \;,
\end{align}
where $J_m(t)$ are the Bessel functions  of the first kind and 
$T_m(x)$ are the Chebyshev polynomials of the first kind.
\end{lemma}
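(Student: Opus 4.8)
The plan is to deduce the claimed polynomial approximations from the exact Jacobi--Anger expansion together with an elementary tail bound on Bessel functions. Recall that the generating function $e^{i z \cos\theta} = \sum_{m=-\infty}^{\infty} i^m J_m(z) e^{i m \theta}$ yields, after collecting real and imaginary parts and using $T_m(\cos\theta) = \cos(m\theta)$, the exact identities
\begin{align}
    \cos(tx) &= J_0(t) + 2\sum_{k=1}^{\infty} (-1)^k J_{2k}(t) T_{2k}(x) \;, \\
    \sin(tx) &= 2\sum_{k=0}^{\infty} (-1)^k J_{2k+1}(t) T_{2k+1}(x) \;,
\end{align}
valid for all $x \in [-1,1]$. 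First I would truncate each series at $k = R$. Since $|T_m(x)| \le 1$ on $[-1,1]$, the truncation error is at most $2\sum_{k > R} |J_{2k}(t)|$ in the cosine case and $2\sum_{k > R} |J_{2k+1}(t)|$ in the sine case, so it suffices to choose $R$ making these Bessel tails at most $\epsilon$.

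Next I would bound the tails using the standard inequality $|J_m(t)| \le |t/2|^m / m!$, which with $m! \ge (m/e)^m$ gives $|J_m(t)| \le (e|t|/(2m))^m$. Once $m \ge e|t|/2$ the terms are decreasing, and for $m$ slightly above $e|t|/2$ the ratio of consecutive terms is bounded away from $1$, so the whole tail $\sum_{m \ge m_0} (e|t|/(2m))^m$ is at most a small constant (at most $5/4$) times its first term. Setting the first neglected term equal to $\epsilon$ up to this constant --- i.e.\ solving $(e|t|/(2m))^{m} = \tfrac{4}{5}\epsilon$ for the cutoff index $m$ --- is exactly the defining equation of $r(u,\xi)$ with $u = e|t|/2$ and $\xi = \tfrac{5}{4}\epsilon$ (the leading factor $2$ in the sum and the factor $2$ relating $m$ to $k$ are absorbed into the $\tfrac12$ and the $\tfrac54$), yielding $R = \lfloor \tfrac12 r(e|t|/2, \tfrac54 \epsilon) \rfloor$ as stated.

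It then remains to establish the asymptotic estimate of $r(u,\xi)$. Writing $r = r(u,\xi)$ and taking logarithms in $\xi = (u/r)^r$ gives $\log(1/\xi) = r \log(r/u)$. When $\log(1/\xi) = \cO(u)$ one checks $r = \Theta(u)$ directly; when $\log(1/\xi) \gg u$, a fixed-point argument (substituting the ansatz $r = \log(1/\xi)/\log(\log(1/\xi)/u)$ and verifying self-consistency, a Lambert-$W$ type estimate) gives $r = \Theta\!\big(\log(1/\xi)/\log(e + \log(1/\xi)/u)\big)$. Combining the two regimes produces the stated $\Theta(u + \log(1/\xi)/\log(e+\log(1/\xi)/u))$, and the cruder $\cO(u + \log(1/\xi))$ follows since the denominator is at least $1$.

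I expect the main obstacle to be bookkeeping rather than conceptual: tracking the constants through the tail summation so that they land precisely on $r(e|t|/2,\tfrac54\epsilon)$ and $R = \lfloor \tfrac12 r(\cdot,\cdot)\rfloor$ as written, and carrying out the two-regime asymptotic analysis of the implicitly defined $r(u,\xi)$ carefully enough to obtain the $\log(e + \log(1/\xi)/u)$ denominator exactly. Since this is a restatement of Lemma~57 of Ref.~\cite{gilyen2019quantum}, an acceptable alternative is simply to invoke that reference; the sketch above is the self-contained route, with $\log\log$ factors suppressed as elsewhere in this work.
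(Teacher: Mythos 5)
The paper does not prove this lemma at all: it is imported verbatim as Lemma~57 of Ref.~\cite{gilyen2019quantum} (together with its Corollary), and the appendix simply invokes it. Your proposal therefore goes beyond what the paper does by reconstructing the underlying argument, and the reconstruction is the standard one: the exact Jacobi--Anger identities, truncation at $k=R$ using $|T_m(x)|\le 1$, the Bessel bound $|J_m(t)|\le |t/2|^m/m! \le (e|t|/(2m))^m$, summing the tail geometrically, and the two-regime analysis of the implicitly defined $r(u,\xi)$. All of these steps are sound in outline and match the proof in the cited reference. The one place where your sketch is genuinely optimistic rather than merely tedious is the claim that for $m$ ``slightly above'' $e|t|/2$ the tail is at most $\tfrac54$ times its first term: the consecutive-term ratio $a_{m+1}/a_m \le u/(m+1)$ only gives the geometric factor $\tfrac54$ once $m \gtrsim 5u$, and for $\xi = \tfrac54\epsilon$ near its upper limit $5/(4e)$ the solution $r(u,\xi)$ need not be that far above $u$; Gily\'en et al.\ handle this with a more careful estimate, and you would need to reproduce it (or restrict $\epsilon$) to land exactly on the stated constants. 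Since you explicitly identify this constant-tracking as the remaining obstacle and offer the citation as a fallback --- which is precisely the paper's route --- the proposal is acceptable as written.
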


Hence, this result shows that $\cos(tx)$ and $\sin(tx)$
can be approximated by polynomials of even degree up to $2R$ and odd degree up to $2R+1$, respectively.

\begin{corollary}
There exist polynomials $P_\mathrm{sin}, P_\mathrm{cos}$ of degree
\begin{align}
    \cO\left(t + \log \frac{1}{\epsilon}\right)
\end{align}
such that for all $x \in [-1, 1]$,
\begin{align}
    \left |\cos(tx) - P_\mathrm{cos}(x)\right |_{[-1, 1]} \leq \epsilon, \quad \left |\sin(tx) - P_\mathrm{sin}(x)\right |_{[-1, 1]} \leq \epsilon
\end{align}
and
\begin{align}
    |P_\mathrm{cos}(x)|_{[-1, 1]} \leq 1, \quad |P_\mathrm{sin}(x)|_{[-1, 1]} \leq 1
\end{align}
for $\epsilon \in (0, 2/e)$.
\end{corollary}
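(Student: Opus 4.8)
The plan is to obtain the corollary directly from Lemma~\ref{lem:jacobi} by one rescaling step that enforces $\|P(x)\|_{[-1,1]} \le 1$ at the cost of only a constant factor in accuracy. (Here $\|g(x)\|_{[-1,1]}$ denotes $\sup_{x\in[-1,1]}|g(x)|$.) We may assume $t\neq 0$, since for $t=0$ one simply takes $P_{\cos}(x) = 1$ and $P_{\sin}(x) = 0$.

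First I would apply Lemma~\ref{lem:jacobi} with error parameter $\epsilon/2$; this is legitimate because $\epsilon \in (0, 2/e)$ forces $\epsilon/2 \in (0, 1/e)$ and $\tfrac{5}{8}\epsilon \in (0,1)$, so that the quantity $R = \lfloor \tfrac12 r(e|t|/2, \tfrac58\epsilon)\rfloor$ is well defined. This produces the truncated Jacobi--Anger polynomials
\[
\tilde P_{\cos}(x) := J_0(t) + 2\sum_{k=1}^{R}(-1)^k J_{2k}(t)\,T_{2k}(x), \qquad
\tilde P_{\sin}(x) := 2\sum_{k=0}^{R}(-1)^k J_{2k+1}(t)\,T_{2k+1}(x),
\]
which satisfy $\|\tilde P_{\cos}(x) - \cos(tx)\|_{[-1,1]} \le \epsilon/2$ and $\|\tilde P_{\sin}(x) - \sin(tx)\|_{[-1,1]} \le \epsilon/2$. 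Since $\deg T_m = m$ and $R = \cO(t + \log(1/\epsilon))$, the degrees $\deg\tilde P_{\cos} \le 2R$ and $\deg\tilde P_{\sin} \le 2R+1$ are both $\cO(t + \log(1/\epsilon))$.

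Second I would normalize. Because $|\cos(tx)|, |\sin(tx)| \le 1$ on $[-1,1]$, the triangle inequality gives $\|\tilde P_{\cos}(x)\|_{[-1,1]}, \|\tilde P_{\sin}(x)\|_{[-1,1]} \le 1 + \epsilon/2$. Set $P_{\cos} := \tilde P_{\cos}/(1+\epsilon/2)$ and $P_{\sin} := \tilde P_{\sin}/(1+\epsilon/2)$; these have the same degrees and obey $\|P_{\cos}(x)\|_{[-1,1]}, \|P_{\sin}(x)\|_{[-1,1]} \le 1$. For the accuracy, write $f$ for $\cos(tx)$ or $\sin(tx)$ and $\tilde P$ for the corresponding truncation:
\[
\|P(x) - f\|_{[-1,1]} \le \|P(x) - \tilde P(x)\|_{[-1,1]} + \|\tilde P(x) - f\|_{[-1,1]} \le \frac{\epsilon/2}{1+\epsilon/2}\,\|\tilde P(x)\|_{[-1,1]} + \frac{\epsilon}{2} \le \frac{\epsilon}{2} + \frac{\epsilon}{2} = \epsilon.
\]
Since rescaling does not change the degree, this yields the claimed bounds.

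I do not expect a substantive obstacle: the entire content is careful constant bookkeeping --- calling the Jacobi--Anger lemma with error $\epsilon/2$ rather than $\epsilon$ (which is exactly why the hypothesis is $\epsilon\in(0,2/e)$ and not $(0,1/e)$), checking that the arguments of $r(\cdot,\cdot)$ lie in the admissible range, and observing that replacing $\epsilon$ by a constant multiple leaves $\cO(t+\log(1/\epsilon))$ unchanged. An alternative would be to feed the Taylor expansions of $\cos$ and $\sin$ (entire, with $\sum_k|a_k| = \cosh|t|$ resp.\ $\sinh|t| < e^{|t|}$) into the general machinery of Lemma~\ref{lem:polyapprox}; but this still outputs a polynomial bounded only by $1+\epsilon$ on $[-1,1]$ and hence still needs the same rescaling, while giving no better degree than Jacobi--Anger, so the direct argument above is preferable.
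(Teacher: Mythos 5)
Your proof is correct and takes essentially the same route as the paper's: apply the Jacobi--Anger truncation of Lemma~\ref{lem:jacobi} and then rescale the resulting polynomial by $1/(1+\epsilon/2)$ (the paper phrases this as rescaling by $1/(1+\epsilon)$, which turns error $\epsilon$ into $2\epsilon$, and then widens the admissible range to $\epsilon\in(0,2/e)$ --- the same bookkeeping). Your explicit check that $\epsilon/2\in(0,1/e)$ and that the argument of $r(\cdot,\cdot)$ stays in range is exactly the point of the modified hypothesis.
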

\begin{proof}
We rescale the polynomial approximation by $\frac{1}{1+\epsilon}$ to ensure it is confined to magnitude 1. By the triangle inequality, for any function $h(x)$ that is $\epsilon$-approximated by $P(x)$,
\begin{align}
\left |\frac{1}{1+\epsilon}P(x) - h(x)\right |_{[-1, 1]} \leq \frac{1}{1+\epsilon} |P(x) - h(x)|_{[-1,1]} + \frac{\epsilon}{1+\epsilon}|h(x)|_{[-1,1]} \leq \frac{2\epsilon}{1+\epsilon} < 2\epsilon\;.
\end{align}
Hence, this rescaling only takes $\epsilon$ to $2\epsilon$. Accordingly, we rescale the bounds of $\epsilon$ to $(0, 2/e)$.
\end{proof}

We need to approximate $\sin(\pi m x/2(r+\delta))$ and $\cos(\pi m x/2(r+\delta))$, taking
\begin{align}
    t = \frac{\pi m}{2(r+\delta)} = \cO\left(\frac{1}{\delta}\log\frac{B}{\epsilon}\right)
\end{align}
and applying the above lemma. Hence, a normalized polynomial approximation $P_{\tilde f}(x)$ to $\tilde f(x)$ with error $\|P_{\tilde f}(x) - \tilde f(x)\|_{[-1,1]} \leq \xi$ has degree
\begin{align}
    \cO\left(\frac{1}{\delta}\log\frac{B}{\epsilon} + \log\frac{1}{\xi}\right)\;.
\end{align}
We choose $\xi = \epsilon/3B \leq 2/e$ to give a degree of $\cO\left(\frac{1}{\delta}\log\frac{B}{\epsilon}\right)$ and $\epsilon \leq 6B/e$. Summarizing the properties of $P_{\tilde f}$, we have
\begin{align}
    |P_{\tilde f}(x)\|_{[-1, 1]} \leq B, \quad\; |P_{\tilde f}(x) - \tilde f(x)|_{[-1,1]} \leq \frac{\epsilon}{3}, \quad\; |P_{\tilde f}(x) - f(x)|_{[-r-\delta/2, r+\delta/2]} \leq \frac{2\epsilon}{3}
\end{align}
where the first property follows from $\|c\|_1 \leq B$; the second property follows from the $\epsilon/3B$-approximation of each trigonometric function and the fact $\|c\|_1 \leq B$; and the third property adds the $\epsilon/3$ error of the original Fourier approximation.

Since $B$ may exceed 1, the first condition does not guarantee that $P_{\tilde f}$ is properly normalized to use as a QSVT polynomial. To obtain the final polynomial, we need a polynomial approximation of the rectangle function (Lemma 29 of Ref.~\cite{gilyen2019quantum}).

\begin{lemma}[Polynomial approximation of the rectangle function]
Let $\delta, \epsilon \in (0, 1/2)$ and $t \in [-1, 1]$. There exists an even polynomial $P$ of degree $\cO\left(\frac{1}{\delta}\log\frac{1}{\epsilon}\right)$ such that
\begin{align}
    |P(x)|_{[-1, 1] \setminus [-t-\delta, t+\delta]} \leq \epsilon, \qquad |P(x)-1|_{[-t+\delta, t-\delta]} \leq \epsilon, \qquad |P(x)|_{[-1, 1]} \leq 1\;.
\end{align}
\end{lemma}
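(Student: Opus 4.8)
The plan is to reduce the rectangle window to the sign function and then invoke a polynomial approximation of the latter. First I would record the elementary identity
\begin{equation}
\mathrm{rect}_{[-t,t]}(x) \;=\; \tfrac12\bigl(\mathrm{sign}(x+t) - \mathrm{sign}(x-t)\bigr),
\end{equation}
valid for all $x$ bounded away from $\pm t$, and check that the three regions in the statement pull back correctly: on $[-1,1]\setminus[-t-\delta,t+\delta]$ the two arguments $x\pm t$ share a sign and both have $|x\pm t|\ge\delta$, whereas on $[-t+\delta,t-\delta]$ (nonempty only when $t>\delta$) one has $x+t\ge\delta$ and $x-t\le-\delta$. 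Hence, given an odd polynomial $P_{\mathrm{sgn}}$ that is $(\epsilon/2)$-close to $\mathrm{sign}$ on $[-1,1]$ outside a gap of half-width $\delta/2$ around $0$ and satisfies $|P_{\mathrm{sgn}}|\le1$ on $[-1,1]$, the polynomial $P(x):=\tfrac12\bigl(P_{\mathrm{sgn}}(\tfrac{x+t}{2})-P_{\mathrm{sgn}}(\tfrac{x-t}{2})\bigr)$ satisfies the first two bounds (with total error $\epsilon/2+\epsilon/2$) and the last because $|P|\le\tfrac12(1+1)=1$; the rescaling by $\tfrac12$ both brings the arguments into $[-1,1]$ and turns the half-gap $\delta/2$ into $\delta$. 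Finally, since $P_{\mathrm{sgn}}$ is odd one checks directly that $P(-x)=P(x)$, so $P$ is even, as required.

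The substantive step is producing $P_{\mathrm{sgn}}$: an odd polynomial of degree $\cO(\tfrac1\delta\log\tfrac1\epsilon)$ that is close to $\mathrm{sign}$ off a gap of half-width $\cO(\delta)$ and bounded by $1$ on $[-1,1]$. The route I would take is the standard error-function one: approximate $\mathrm{sign}(x)$ by $\mathrm{erf}(kx)$, using the tail bound $1-\mathrm{erf}(k\delta)\le \tfrac{1}{k\delta\sqrt\pi}e^{-k^2\delta^2}$, so that $k=\Theta\!\bigl(\tfrac1\delta\sqrt{\log(1/\epsilon)}\bigr)$ makes this approximation adequate outside the gap; then approximate $\mathrm{erf}(kx)=\tfrac{2k}{\sqrt\pi}\int_0^x e^{-k^2y^2}\,dy$ by integrating a carefully truncated expansion of the Gaussian integrand $e^{-k^2y^2}$ on $[-1,1]$, which (following the construction of Ref.~\cite{gilyen2019quantum}) can be done with degree $\cO\!\bigl(\sqrt{(k^2+\log(1/\epsilon))\log(1/\epsilon)}\bigr)=\cO\!\bigl(\tfrac1\delta\log\tfrac1\epsilon\bigr)$. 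A final rescaling by $1/(1+\epsilon)$ (or composition with a low-degree clipping polynomial) enforces $|P_{\mathrm{sgn}}|\le1$ on $[-1,1]$ at the cost of doubling the error, which is absorbed into constants. Equivalently --- and this is presumably how the paper proceeds in practice --- one cites the sign-function polynomial of Ref.~\cite{gilyen2019quantum} and plugs it into the reduction above.

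I expect the only real obstacle to be the degree bookkeeping in the $\mathrm{erf}$ step: a crude Taylor truncation of $e^{-k^2y^2}$ costs degree $\cO(k^2)=\cO(\tfrac1{\delta^2}\log\tfrac1\epsilon)$ and misses the claimed $\tfrac1\delta$ scaling, so extracting the extra factor of $\delta$ requires the more delicate (Chebyshev-type) expansion of the Gaussian --- or simply quoting it. Everything else, namely the $\mathrm{rect}$/$\mathrm{sign}$ identity, the region checks, the evenness, and the rescaling of the argument and of the normalization, is routine and should not raise difficulties.
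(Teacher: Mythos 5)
The paper does not actually prove this statement: it is imported verbatim as Lemma 29 of Ref.~\cite{gilyen2019quantum}, so there is no in-paper argument to compare against. Your reconstruction is correct and is essentially the proof given in that reference: the rectangle is written as $\tfrac12\bigl(\mathrm{sign}(x+t)-\mathrm{sign}(x-t)\bigr)$, the shifted/rescaled arguments land in $[-1,1]$ with the gap half-width tracked correctly, oddness of the sign approximant gives evenness of $P$, and the degree $\cO(\tfrac1\delta\log\tfrac1\epsilon)$ comes from the erf-based sign polynomial with $k=\Theta(\tfrac1\delta\sqrt{\log(1/\epsilon)})$ followed by the Chebyshev-type (not naive Taylor) expansion of the Gaussian integrand --- you correctly flag that a crude truncation would lose a factor of $\delta$. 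The only thing I would add is that the bound $|P_{\mathrm{sgn}}|\le 1$ must hold on all of $[-1,1]$ including inside the gap (which the cited construction guarantees), since your third inequality relies on it there; with that noted, the argument is complete.
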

We define polynomial $Q$ as the product of $P_{\tilde f}$ and a polynomial approximation $P_\mathrm{rect}$ of a rectangle function that satisfies
\begin{align}
    |P_\mathrm{rect}(x) - 1|_{[-r, r]} \leq \frac{\epsilon}{3B}, \quad |P_\mathrm{rect}(x)|_{[-1, 1] \setminus [-r-\delta/2, r+\delta/2]} \leq \frac{\epsilon}{3B}, \quad |P_\mathrm{rect}(x)|_{[-1, 1]} \leq 1\;.
\end{align}
The rectangle function must have $\epsilon/3B < 1/2$, i.e., $\epsilon < 3B/2$. This is the tightest condition on $\epsilon$, providing the condition in the final theorem statement. Since the rectangle function is $\epsilon/3B$-close to 1 on $[-r, r]$, the total error in $[-r, r]$ is $(\epsilon/3B + \epsilon/3B)(B) + \epsilon/3 = \epsilon$. The three terms come from the rectangle function error, the polynomial approximation of the Fourier expansion expansion, and the Fourier expansion respectively. Hence, $Q$ satisfies
\begin{align}
    |Q(x) - f(x)|_{[-r, r]} \leq \epsilon\;.
\end{align}
Due to the rectangle function approximating 0 outside $[-r-\delta/2, r+\delta/2]$, we similarly have
\begin{align}
    |Q(x)|_{[-1, 1] \setminus [-r-\delta/2, r+\delta/2]} \leq \epsilon\;.
\end{align}
Because the rectangle function is at most 1, and $P_{\tilde f}$ approximates the original function $f$ over $[-r-\delta/2, r+\delta/2]$, we have
\begin{align}
    |Q(x)|_{[-1, 1]} \leq \epsilon + |f(x)|_{[-r-\delta/2, r+\delta/2]}\;.
\end{align}
Finally, the degree of $Q$ is $\cO\left(\frac{1}{\delta}\log\frac{B}{\epsilon}\right)$ for $\epsilon \in (0, 3B/2)$, completing the Lemma.

\end{proof}

\newpage
\section{Quantum search task for lower bounds}
\label{app:por}

To show lower bounds for both gap-amplifiable Hamiltonians and generic Hamiltonians, we shall reduce to lower bounds for a particular search task. We define the following search problem; heuristically, it consists of $K$ instances of a typical Grover search, with one marked element per instance.
\begin{problem}[$K$-partition quantum search]
\label{def:ksearch}
We denote the $K$-partition quantum search over a length-$KN$ bitstring by $\mathrm{SEARCH}_{K,N}$, where $N=2^n$, and $K \ge 1$, $n \ge 1$ are integer. Each partition $k \in [K]$ consists of $N$ bits, where one bit is marked $f_k(x_k)=1$ for some $x_k \in \{0,\ldots,N-1\}$, and all other bits $x \neq x_k$ are marked $f_k(x) = 0$.  
Access to the  functions $f_1, \dots, f_K:\{0,1\}^N\rightarrow \{0,1\}$ is provided through a phase oracle $O_f$ that implements $\ket{k}\ket{x} \mapsto (-1)^{f_k(x)}\ket{k}\ket{x}$. The $K$-partition quantum search problem is to return the marked element in each partition, i.e., $x_1, \dots, x_K$, with probability at least $2/3$.
\end{problem}

Briefly, we comment on the connection between quantum search and Hamiltonian simulation, and
note that the use of multiple repetitions of quantum search is critical to obtaining a meaningful lower bound.
For a single instance of Grover's quantum search in dimension $N$, there exists an example of an $n$-qubit Hamiltonian with spectral gap $\Delta=\Theta (1/\sqrt N)$ such that a quantum walk for time $t \sim 1/\Delta$ can find the marked item; see Sec.~\ref{sec:lcu-nff} and Ref.~\cite{nielsen2010quantum}. In our case, we want to vary different parameters independently, and the only independent parameter in that example is $N$. Showing a lower bound for times independent of, and larger than, $1/\Delta$ will become important when we must separate time lower bounds from error lower bounds that also depend on $1/\Delta$. By including $K$ copies of that quantum system, each associated with an instance of quantum search of dimension $N$, the energy scale associated with the low-energy subspace increases with $K$ to include multiple energy levels, while the evolution time depends solely on $N$. A lower bound on $\mathrm{SEARCH}_{K,N}$ is then carried to a lower bound
on simulating such quantum systems. 

In practice, we consider the related decision problem $(\mathrm{PARITY}\circ\mathrm{OR})_{K,N}$, which can be solved by $\mathrm{SEARCH}_{K,N}$.

\begin{problem}[$K$-partition PARITY $\circ$ OR decision problem]
\label{def:KParityOr}
We denote the $K$-partition decision problem over a length-$KN$ bitstring by $(\mathrm{PARITY}\circ\mathrm{OR})_{K,N}$, where $N=2^n$, and $K \ge 1$, $n \ge 1$ are integer. Each partition $k \in [K]$ consists of $N$ bits, where one bit is marked $f_k(x_k)=1$ for some $x_k \in \{0,\ldots,N-1\}$, and all other bits $x \neq x_k$ are marked $f_k(x) = 0$. Access to the  functions $f_1, \dots, f_K: \{0,1\}^N \rightarrow \{0,1\}$ is provided through a phase oracle $O_f$ that implements $\ket{k}\ket{x} \mapsto (-1)^{f_k(x)}\ket{k}\ket{x}$. Let $g:\{0,1\}^N \rightarrow \{0,1\}$ be such that $g(x)$, $x \in \{0,1\}^N$, is the OR of the first $N/2$ bits of $x$. The $K$-partition {\rm PARITY $\circ$ OR} decision problem is to return $\mathrm{YES}$ if $\mathrm{PARITY}(g(x_1), \dots, g(x_K)) = 1$ and $\mathrm{NO}$ otherwise, with probability at least $2/3$.
\end{problem}

Known lower bounds for PARITY and for OR are sufficient to place a lower bound on $(\mathrm{PARITY}\circ\mathrm{OR})_{K,N}$ using the adversary method.

\begin{lemma}[Query complexity of $K$-partition PARITY $\circ$ OR]
\label{lem:por}
Deciding $(\mathrm{PARITY}\circ\mathrm{OR})_{K,N}$ requires $\Omega(K\sqrt{N})$ queries to $O_f$.
\end{lemma}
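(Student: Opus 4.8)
\textbf{Proof plan for Lemma~\ref{lem:por} ($K$-partition PARITY $\circ$ OR lower bound).}

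The plan is to use the (positive-weight or negative-weight) adversary method, exploiting a composition property: the adversary bound composes multiplicatively under function composition, so $\mathrm{ADV}(\mathrm{PARITY}_K \circ \mathrm{OR}_N) \ge \mathrm{ADV}(\mathrm{PARITY}_K) \cdot \mathrm{ADV}(\mathrm{OR}_{N})$, up to the usual caveats about how the outer function's inputs are encoded. Since $\mathrm{ADV}(\mathrm{PARITY}_K) = \Theta(K)$ and $\mathrm{ADV}(\mathrm{OR}_N) = \Theta(\sqrt N)$ (the latter is Grover's bound, and both are classical adversary-method computations), this would immediately give $\Omega(K\sqrt N)$. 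However, one subtlety is that the promise here is restrictive: each block has \emph{exactly one} marked bit $x_k$, so $g(x_k) = \mathrm{OR}(\text{first }N/2\text{ bits})$ is simply the indicator of whether $x_k < N/2$. I would therefore be careful to set up the adversary matrix directly on the promised input set rather than invoking a black-box composition theorem, since composition theorems for adversary bounds typically require the inner function to be total or at least have a matching structure on its domain.

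Concretely, first I would fix the input set $\mathcal{D} \subseteq ([N])^K$ of all tuples $(x_1,\dots,x_K)$ of marked positions, and define the target value of an instance as $\mathrm{PARITY}(b_1,\dots,b_K)$ where $b_k = \mathbf{1}[x_k < N/2]$. Next I would build the adversary matrix $\Gamma$ as a tensor-like construction: take $\Gamma = \Gamma_{\mathrm{PAR}} \otimes (\text{inner Grover adversary matrices})$, i.e. a weighted sum over which block $k^\star$ the ``hard'' bit-flip lives in, with the Grover adversary matrix $\Gamma_{\mathrm{Grover}}$ (the all-ones matrix on the $\sqrt N$-hard pair of inputs, appropriately restricted) acting within block $k^\star$ and a PARITY-style sign structure on the other blocks. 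Then I would bound the spectral norm $\|\Gamma\|$ from below by exhibiting a good eigenvector (the product of uniform-superposition vectors), and bound each $\|\Gamma \circ \Delta_i\|$ from above, where $\Delta_i$ is the indicator matrix for disagreement on oracle query position $i = (k, x)$. The key estimate is that a query into block $k$ only ``touches'' the Grover structure of that single block, so $\|\Gamma \circ \Delta_i\| = O(\|\Gamma\| / (K\sqrt N))$, yielding $\mathrm{ADV} = \min_i \|\Gamma\|/\|\Gamma\circ\Delta_i\| = \Omega(K\sqrt N)$. Finally I would invoke the adversary lower bound theorem (e.g.~Refs.~\cite{hoyer2007negative,reichardt2014span}) to convert this into a query lower bound of $\Omega(K\sqrt N)$ for any algorithm succeeding with probability $\ge 2/3$.

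The main obstacle I expect is handling the query cost accounting correctly: a single query to $O_f$ reveals one bit $f_k(x)$, and I must verify that the adversary matrix entries change under such a query in a way that localizes the effect to block $k$ and to the ``Grover hardness'' within that block — in particular that flipping which element is marked in block $k$ (moving $x_k$) is the expensive operation, and that this costs $\sqrt N$ per block and $K$ blocks multiply rather than add. An alternative, cleaner route that sidesteps the spectral-norm bookkeeping is a hybrid/direct-reduction argument: show that an algorithm making $o(K\sqrt N)$ queries must, by an averaging/pigeonhole argument over blocks, make $o(\sqrt N)$ queries to some block in expectation, and then embed a single hard Grover instance into that block (randomizing over which block) while fixing the parities of all other blocks so that the overall PARITY$\circ$OR answer is determined by that one block's OR value; this reduces $(\mathrm{PARITY}\circ\mathrm{OR})_{K,N}$ to $\mathrm{OR}_{N/2}$ and contradicts the $\Omega(\sqrt N)$ Grover lower bound. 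I would likely present the reduction argument as the primary proof since it is more transparent and avoids delicate adversary-matrix norm computations, keeping the composition-of-adversary-bounds viewpoint as a remark.
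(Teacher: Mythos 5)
Your first route is exactly the paper's proof: it writes $(\mathrm{PARITY}\circ\mathrm{OR})_{K,N}$ as $\mathrm{PARITY}\circ g$ for $g$ the OR of the first $N/2$ bits, quotes $\mathrm{ADV}(\mathrm{PARITY})=\Theta(K)$ and $\mathrm{ADV}(g)=\Theta(\sqrt N)$, and invokes the multiplicativity of the adversary bound under composition from Refs.~\cite{hoyer2007negative,reichardt2014span}. Your worry about the promise (exactly one marked bit per block) is the right thing to worry about --- the bound must hold on the promised domain, not for the total composed function --- but it is resolved by noting that the negative-weight adversary and its composition theorem apply to partial inner functions; on weight-one strings the inner function ``is the marked element in the first half?'' is itself a search-hard partial function with adversary bound $\Theta(\sqrt N)$. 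Your second route (an explicit tensor-product adversary matrix on the promised input set) is sound and is essentially an unrolling of that composition theorem.

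The route you propose to make primary, however, has a genuine gap. First, for a quantum algorithm the oracle $O_f$ acts on $\ket{k}\ket{x}$ and a single query can be in superposition over all $K$ blocks, so ``the number of queries made to block $k$'' is not a well-defined quantity and the pigeonhole/averaging step does not go through as stated; the correct analogue is an accounting of query \emph{magnitudes} \`a la BBBV. Second, even setting that aside, the embedding does not reduce the simulator's query budget: once the other $K-1$ blocks are fixed and answered internally, the simulator still forwards to its $\mathrm{OR}_{N/2}$ oracle every query that lands in block $k^\star$, which for an adversarially chosen algorithm can be all $Q$ of them. The reduction therefore only contradicts Grover when $Q=o(\sqrt N)$, yielding an $\Omega(\sqrt N)$ bound rather than $\Omega(K\sqrt N)$; recovering the factor $K$ requires showing that \emph{total} query magnitude $Q$ spread over $K$ blocks leaves some block with magnitude $Q/K$, and that low magnitude on a block prevents computing its OR --- at which point you are re-deriving the hybrid/adversary machinery. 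I would keep the composition (or explicit adversary-matrix) argument as the primary proof and drop the reduction, or present the reduction only with the full BBBV-style magnitude bookkeeping.
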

\begin{proof}
Consider the Boolean function $\mathrm{PARITY}(z)$, where $z \in \{0,1\}^K$, and define the Boolean function $g(x) : \{0,1\}^{N} \to \{0, 1\}$ to be the OR of the first $N/2$ bits of $x$. The function composition $h = \mathrm{PARITY} \circ g$ evaluates $\mathrm{PARITY}(g(x_1), \dots, g(x_K))$, which is precisely $(\mathrm{PARITY}\circ\mathrm{OR})_{K,N}$. 
The quantum adversary method provides lower bounds on the query complexity of PARITY and $g$ as $\mathrm{ADV}(\mathrm{PARITY}) = c_1K$ and $\mathrm{ADV}(g) = c_2 \sqrt{N}$, for constants $c_1>0$ and $c_2>0$.
References~\cite{hoyer2007negative,reichardt2014span} 
imply a lower bound on the query complexity of the composition of Boolean functions as the product of corresponding lower bounds. Hence, 
\begin{align}
    \mathrm{ADV}(\mathrm{PARITY} \circ g) = c K \sqrt{N}
\end{align}
for $c=c_1c_2$. Since our phase oracle is equivalent to a bit-flip oracle, the implication is that solving $(\mathrm{PARITY}\circ\mathrm{OR})_{K,N}$ requires $\Omega(K\sqrt{N})$ queries to $O_f$.
\end{proof}

Next, we use the lower bound on the query complexity of $(\mathrm{PARITY}\circ\mathrm{OR})_{K,N}$ to show a lower bound on preparing a particular state of $K$ quantum systems that emerges naturally from time evolution. Since the search within each partition is over $N$ elements, we rewrite the oracle to implement $f_1, \dots, f_K : \{0,1\}^n \rightarrow \{0,1\}$ for $N=2^n$. Although one-hot encoding each marked element was convenient when writing the problem in terms of Boolean functions above, this formulation is more useful in practice as it reduces the overhead in the second register presented to the oracle.

\begin{lemma}[Query complexity of state preparation]
\label{cor:state-lb}
Let $f_1,\ldots,f_K:\{0,1\}^n \rightarrow \{0,1\}$ be  functions, where $N=2^n$, $K \ge 1$, and $n \ge 1$ are integer. For each partition $k \in [K]$, assume that $f_k(x_k)=1$ for some marked item $x_k \in \{0,1\}^n$, and $f_k(x)=0$ otherwise. Then, 
at least $\Omega(K\sqrt{N})$ queries to the $K$-partition Grover oracle $O_f$ that implements $\ket k \ket x \mapsto (-1)^{f_k(x)}\ket k \ket x$  are required to prepare a $Kn$-qubit  state
\begin{align}
    \ket{\phi} = \bigotimes_{k=1}^K \ket{\phi_k} \;,
\end{align}
satisfying $|\!\bra{\phi_k}\ket{x_k}\!|^2 \ge \alpha^2$ for constant $\alpha \in (0, 1)$, where $\ket{x_k}$ is the $n$-qubit state corresponding to the marked item $x_k$ in the $k^{\rm th}$ partition. Here, $\ket{\phi_k} \in \mathbb C^N$ and $\ket \phi \in \mathbb C^{N^k}$.
\end{lemma}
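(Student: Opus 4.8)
\textbf{Proof plan for Lemma~\ref{cor:state-lb} (Query complexity of state preparation).}

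The plan is to reduce the decision problem $(\mathrm{PARITY}\circ\mathrm{OR})_{K,N}$, whose query complexity is $\Omega(K\sqrt N)$ by Lemma~\ref{lem:por}, to the state-preparation task. Suppose for contradiction that there is a quantum algorithm $\mathcal{A}$ that prepares the stated product state $\ket\phi = \bigotimes_{k=1}^K \ket{\phi_k}$ with $|\!\bra{\phi_k}\ket{x_k}\!|^2 \ge \alpha^2$ using $o(K\sqrt N)$ queries to $O_f$. I will show this yields an algorithm for $(\mathrm{PARITY}\circ\mathrm{OR})_{K,N}$ with a comparable query count, contradicting Lemma~\ref{lem:por}. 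The key point is that in each partition, once we have a register $\ket{\phi_k}$ with constant overlap $\alpha^2$ on the marked item $x_k$, a single computational-basis measurement returns $x_k$ with probability at least $\alpha^2$; and from $x_k$ we can classically compute $g(x_k)$, the OR of the first $N/2$ bits, hence the parity $\mathrm{PARITY}(g(x_1),\dots,g(x_K))$, with no further oracle queries.

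The main obstacle is boosting success probability: a single run of $\mathcal{A}$ gives each $x_k$ only with probability $\alpha^2$, which need not exceed $2/3$, and measuring all $K$ registers simultaneously succeeds on all partitions only with probability $\ge (\alpha^2)^K$ if we naively take the tensor product, which is exponentially small. The fix is to run $\mathcal{A}$ (and the subsequent measurement) $m = O(\log K)$ times independently. For a fixed partition $k$, each run returns the correct $x_k$ with probability $\ge \alpha^2$ (a constant), so across $m$ runs we obtain $x_k$ at least once with probability $\ge 1 - (1-\alpha^2)^m$; choosing $m = c\log K$ for suitable constant $c$ makes the failure probability per partition $O(1/K^2)$. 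Since $x_k$ is the unique marked item, a run that returns $x_k$ can be \emph{verified} with a single extra query to $O_f$ (check $f_k(x_k)=1$), so we can reliably identify the correct value. A union bound over the $K$ partitions then gives overall success probability $1 - O(1/K) \ge 2/3$ for large $K$ (and the small-$K$ case is handled by a constant number of repetitions since $K$ is then bounded). This procedure uses $m = O(\log K)$ invocations of $\mathcal{A}$, hence $o(K\sqrt N)\cdot O(\log K)$ queries, plus $O(Km)$ verification queries; since $\sqrt N = \omega(\log K)$ can be assumed (or, more carefully, since the lower bound $\Omega(K\sqrt N)$ in Lemma~\ref{lem:por} dominates any $\mathrm{polylog}$ overhead when $N$ is large, and for bounded $N$ the statement is trivial), this contradicts the $\Omega(K\sqrt N)$ lower bound.

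A cleaner way to avoid the $\log K$ overhead entirely, which I would present as the main argument, is to use the \emph{adversary method directly} on the state-preparation task rather than reducing through the boosted decision problem. The adversary bound is defined in terms of how well any algorithm making $q$ queries can distinguish oracle inputs, and the quantity $|\!\bra{\phi_k}\ket{x_k}\!|^2 \ge \alpha^2$ already certifies that the output state depends on $x_k$ to a constant degree in \emph{every} partition simultaneously. Concretely, one invokes the composition property of the (negative-weight) adversary method from Refs.~\cite{hoyer2007negative,reichardt2014span}: preparing $\ket{\phi_k}$ with constant overlap on $x_k$ is at least as hard as computing $g(x_k)$ (one more query distinguishes, via a SWAP test or direct measurement, whether the first $N/2$ bits contain the mark), so preparing the product $\bigotimes_k\ket{\phi_k}$ is at least as hard as computing all $K$ of the $g(x_k)$, whose joint adversary value is $\Omega(K\sqrt N)$ exactly as in the proof of Lemma~\ref{lem:por}. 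I expect the bookkeeping in making ``distinguishing $\ket{\phi_k}$ from an unmarked state costs one query'' fully rigorous — i.e.\ converting a constant-overlap guarantee into a constant bias on a one-bit measurement — to be the one place requiring care, but it is routine.
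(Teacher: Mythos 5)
Your proposal has a genuine gap, and it sits exactly where you flag it as ``routine.'' Your first reduction boosts each partition to success probability $1-O(1/K^2)$ by running the preparation $m=O(\log K)$ times, so an algorithm using $Q$ queries yields a $(\mathrm{PARITY}\circ\mathrm{OR})_{K,N}$ solver using $O(Q\log K + K\log K)$ queries. From $Q\log K = \Omega(K\sqrt N)$ you only get $Q=\Omega(K\sqrt N/\log K)$; the claim that $Q=o(K\sqrt N)$ already gives a contradiction is false (take $Q=K\sqrt N/\log K$). Your second route does not repair this: the adversary composition theorem of Refs.~\cite{hoyer2007negative,reichardt2014span} applies to \emph{bounded-error} computation of the composed function, whereas a single measurement of $\ket{\phi_k}$ yields $x_k$ (hence $g(x_k)$) only with probability $\alpha^2$, which may be far below $2/3$. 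Converting a constant-overlap guarantee in every partition \emph{simultaneously} into a bounded-error computation of all $K$ values of $g$ is a direct-product-type statement, not a consequence of adversary composition; making it rigorous either reintroduces the boosting (and the log factor) or requires a strong direct product theorem you have not established.

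The paper closes this gap with a hybrid argument you are missing. Measure a \emph{constant} number of independent copies of $\ket\phi$ and verify each outcome with one query to $O_f$; since each partition succeeds with constant probability $\alpha^2$ per copy, concentration guarantees that at least $\beta K$ of the $K$ marked items are found, for any constant $\beta<1$, with probability at least $2/3$ and only $O(K)$ verification queries. The remaining $(1-\beta)K$ instances are then solved by ordinary Grover search at cost $\tfrac{\pi}{4}(1-\beta)K\sqrt N$. Writing $Q$ for the cost of the constant number of state preparations, Lemma~\ref{lem:por} forces $Q+\tfrac{\pi}{4}(1-\beta)K\sqrt N\ge cK\sqrt N$, and choosing $\beta>1-4c/\pi$ makes the Grover term strictly smaller than the lower bound, so $Q=\Omega(K\sqrt N)$ with no logarithmic loss. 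The key idea is that you do not need to recover \emph{all} marked items from the prepared state --- recovering a sufficiently large constant fraction and paying full Grover price for the rest already yields the tight bound.
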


\begin{proof}
Observe that with probability at least $2/3$ and with $\cO(K)$ queries to the oracle $O_f$, concentration implies that a constant number of copies of the state $\ket{\phi}$ are sufficient to successfully find $\beta K$ elements $S_{\beta K} = \{x_{i_1}, \dots, x_{i_{\beta K}}\}$ for $i_1, \dots, i_{\beta K} \in [K]$ and any constant $\beta$. This follows from measuring the copies of $\ket \phi$ in the computational basis and verifying each measurement outcome with $O_f$ for the identification of successful search outcomes. We denote the number of queries to $O_f$ required to obtain $S_{\beta K}$ by $Q$.

Consider the remaining unsuccessful $(1-\beta)K$ search problems. By Ref.~\cite{boyer1998tight}, it suffices to perform $(1-\beta)K$ independent Grover searches using a total of $\frac{\pi}{4}(1-\beta)K\sqrt{N}$ queries to the oracle to solve these instances with probability at least $2/3$. Hence, preparing $\beta K$ copies of $\ket{\phi}$ in addition to the Grover searches would allow us to solve Problem~\ref{def:KParityOr}.
By Lemma~\ref{lem:por}, there exists some constant $c$ such that at least $cK\sqrt{N}$ queries are required to decide $(\mathrm{PARITY}\circ\mathrm{OR})_{K,N}$, implying that
\begin{align}
    Q + \frac{\pi}{4}(1-\beta)K\sqrt{N} \geq cK\sqrt{N}\;.
\end{align}
Choosing constant $\beta$ such that $1 > \beta > 1-\frac{4c}{\pi}$ is always possible since $c>0$ (and we can assume $4c < \pi$). This implies that $Q = \Omega(K\sqrt{N})$ is necessary to satisfy the lower bound of Lemma~\ref{lem:por}; i.e., $\Omega(K\sqrt{N})$ queries to $O_f$ are required to prepare $\ket{\phi}$.
\end{proof}

Lemma~\ref{cor:state-lb} directly implies Lemma~\ref{lem:state-lb}, which
is the main result used to prove the lower bounds in Secs.~\ref{sec:nff} and~\ref{sec:nffsga}.

\newpage
\section{Proofs of Lemmas~\ref{lem:Hxgapamplifiable} and~\ref{lem:Hxspectrum}}
\label{app:spectrumHtilde}

We begin with the proof of Lemma~\ref{lem:Hxgapamplifiable}.
Following Appendix A of Ref.~\cite{somma2013spectral},
let $\tilde H_x$ be a Hamiltonian acting on $\mathbb C^N$ defined over an expander graph $G=(N, d, \eta)$ with $N=2^n$ vertices, constant degree $d = \cO(1)$, and second-largest eigenvalue $\eta d$ of the adjacency matrix, taking $\eta \leq 1/2$. Here, $x \in \{0,1\}^n$ refers to a marked item in Grover's search. The matrix elements of $\tilde H_x$ are given by
\begin{align}
    \bra{y}\tilde H_x\ket{z} = \begin{cases}
    \frac{1}{N-1} & y=z=x\\
    -\frac{1}{d\sqrt{N-1}} & \{y, z\} \in E \text{ and } y=x \text{ or } y=z\\
    -\frac{1}{d} & \{y, z\} \in E \text{ and } x \neq y \neq z \neq x\\
    1 & y=z\neq x\\
    0 & \text{otherwise}
    \end{cases}\;,
\end{align}
for edge set $E$. By construction, this Hamiltonian is $(d+1)$-sparse; we must convert it into a Hamiltonian that is gap-amplifiable. For each ordered edge $(y, z)$ with $y \leq z$, we observe that
\begin{align}
    \tilde H_x &= \sum_{(y, z) \in E} \left(c_y\ket{y} - c_z\ket{z}\right)\left(c_y\bra{y} - c_z\bra{z}\right)
\end{align}
for
\begin{align}
    c_v = \begin{cases}
    \frac{1}{\sqrt{d(N-1)}} & v=x\\
    \frac{1}{\sqrt{d}} & \text{else}
    \end{cases}\;.
\end{align}
To rewrite the Hamiltonian as a sum of projectors or PSD terms, we rescale the above terms to obtain
\begin{align}
    H_x &= \sum_{(y, z) \in E} \frac{\left(c_y\ket{y} - c_z\ket{z}\right)\left(c_y\bra{y} - c_z\bra{z}\right)}{c_y^2 + c_z^2} = \sum_{(y, z) \in E} \Pi_{(y, z)}\;.
\end{align}
Note that $(\Pi_{(y, z)})^2=\Pi_{(y, z)}$.
Hence, $H_x$ remains $(d+1)$-sparse; moreover, as shown in Ref.~\cite{somma2013spectral}, $H_x$ has a spectral gap of at least $\frac 1 {4(N-1)}$ and  $\|H_x\|_\mathrm{max} = \cO(\| H_x\|) = \cO(1)$.

In this decomposition over projectors $\Pi_{(y,z)}$, the number of terms in the frustration-free basis scales extensively with $N$. To ensure efficient access, we color the graph with $\cO(d)=\cO(1)$ colors, grouping together projectors that do not share a vertex using their orthogonality. This decomposition produces $H_x = \sum_{l=1}^L \Pi_{l,x}$ where the number of terms $L=\cO(1)$ is constant, and each $\Pi_{l,x}$ is a projector that depends on $x$. Note that this coloring also implies an upper bound on $\norm{H_x}$: since there are at most $d+1$ colors and hence $d+1$ projectors, the largest eigenvalue is at most $d+1 = \cO(1)$.
\qed

\vspace{.5cm}
Next we provide the proof of Lemma~\ref{lem:Hxspectrum}.

\vspace{.5cm}

Let $\ket{\perp} = \frac{1}{\sqrt{N-1}} \sum_{y\neq x} \ket{y}$. The ground state of $H_x$ has zero energy and is given by
\begin{align}
    \ket{\phi_0} = \frac{1}{\sqrt{2}}\ket{x} + \frac{1}{\sqrt{2}}\ket{\perp}\;.
\end{align}
Hence, $|\bra{\phi_0}\ket{s}| > 1/\sqrt{2}$. To show the uniform superposition $\ket{s}$ is low-energy in $H_x$, we evaluate powers of $\bra{s}H_x^k \ket{s}$. First, we write $H_x$ explicitly, using $(a,b)\in E$ to denote ordered edges $(a,b)$ in the edge set and $\mathcal{N}(x)$ to denote the edges with $x$ at one vertex:
\begin{align}
    H_x &= \sum_{y \in \mathcal{N}(x)} \left(\frac{1}{d} + \frac{1}{d(N-1)}\right)^{-1} \left(\frac{1}{\sqrt{d(N-1)}}\ket{x}-\frac{1}{\sqrt{d}}\ket{y}\right)\left(\frac{1}{\sqrt{d(N-1)}}\bra{x}-\frac{1}{\sqrt{d}}\bra{y}\right)\\
    &\qquad + \sum_{\substack{(y, z) \in E\\ y\neq x, z \neq x}} \left(\frac{1}{d} + \frac{1}{d}\right)^{-1}\left(\frac{1}{\sqrt{d}}\ket{y}-\frac{1}{\sqrt{d}}\ket{z}\right)\left(\frac{1}{\sqrt{d}}\bra{y}-\frac{1}{\sqrt{d}}\bra{z}\right)\\
    &= \frac{1}{N}\sum_{y \in \mathcal{N}(x)}\left(\ketbra{x}{x} + (N-1)\ketbra{y}{y} - \sqrt{N-1}(\ketbra{x}{y} + \ketbra{y}{x})\right) \\
    &\qquad + \frac{1}{2}\sum_{\substack{(y,z)\in E\\ y\neq x, z \neq x}} \left(\ketbra{y}{y} + \ketbra{z}{z} - \ketbra{y}{z}-\ketbra{z}{y}\right)\;.
\end{align}
In general, we find that states of the form $\alpha\ket{x}+\beta\ket{\perp}$ have energy $d(\alpha-\beta)^2/N$. We introduce the states
\begin{align}
    \ket{\psi_1} &= \sqrt{\frac{1+\sqrt{\frac{1}{N}}}{2}}\ket{x} + \sqrt{\frac{1-\sqrt{\frac{1}{N}}}{2}}\ket{\perp}, \quad 
    \ket{\psi_2} = \frac{1}{\sqrt{2\left(1 + \frac{1}{\sqrt{N}-1}\right)}}\ket{x} - \sqrt{\frac{1+\sqrt{\frac{1}{N}}}{2}}\ket{\perp}\;,
\end{align}
which produce the convenient decomposition
\begin{align}
    \ket{s} = \sqrt{\frac{1+\frac{1}{\sqrt{N}}}{2}}\ket{\psi_1} - \sqrt{\frac{1 - \frac{1}{\sqrt{N}}}{2}}\ket{\psi_2}
\end{align}
and have energies
\begin{align}
    \bra{\psi_1} H_x \ket{\psi_1} &= \cO(1/N^2)\\
    \bra{\psi_2} H_x \ket{\psi_2} &= \cO(1/N)\;.
\end{align}
By explicit computation, $|\bra{\psi_1}\ket{\phi_0}|^2 = 1 - \cO(1/N)$ and $|\bra{\psi_2}\ket{\phi_0}| = \cO(1/N)$. By Markov's inequality, for decomposition
\begin{align}
    \ket{\psi_2} = \sum_{j} d_j \ket{\phi_j}\;,
\end{align}
we have that
\begin{align}
    \sum_{j \;:\; \lambda_j \geq \frac{c}{N}} |d_j|^2 = \cO\left(\frac{1}{c}\right)
\end{align}
for any $c > 0$. Consequently, for any choice of (arbitrarily small) constant $\delta > 0$, there exists a choice of $\Delta=\cO(1/N)$ such that for projector $\Pi_\Delta$ onto the low-energy subspace, the state $\ket{\psi} = \Pi_\Delta \ket{s} / \norm{\Pi_\Delta \ket{\psi}}$ has overlap $|\bra{\psi}\ket{s}| \geq 1-\delta$. Moreover, due to the support of $\ket{s}$ on $\ket{\psi_1}$, we can choose $\delta$ such that $\ket{\psi}$ satisfies $|\bra{\psi}\ket{\phi_0}| > 1/\sqrt{2}-\delta > 1/2$.
\qed

\newpage
\section{Proof of Lemma~\ref{lem:1d}}
\label{app:gates}

For convenience, we reproduce the definitions and lemma statement before proving the result.

The clock Hamiltonian $H$ encodes a quantum circuit $\cU=U_GU_{G-1}\cdots U_1$ acting on $n$ qubits. We pad the circuit with identity gates, introducing unitaries $V_x$ given by
\begin{align}
    V_x = \begin{cases}
        \one_N &{\rm if} \ \ 1 \leq x \leq c_1 G\\
        U_{x-c_1 G} &{\rm if} \ \ c_1 G+1 \leq x \leq (c_1 + 1)G\\
        \one_N &{\rm if} \ \ (c_1+1)G + 1 \leq x \leq (c_1+c_2)G\;.
    \end{cases}
\end{align}
There are $G'=(c_1+c_2)G$ unitaries $V_x$.
The clock Hamiltonian is
\begin{align}
\label{eq:clockHamiltonianapp}
    H &= \sum_{x=1}^{G'} \left(-V_x \otimes \ketbra{x}{x-1} - V_x^\dagger \otimes \ketbra{x-1}{x} + \one_N \otimes \ketbra{x}{x} + \one_N \otimes\ketbra{x-1}{x-1}\right) \;,
\end{align}
where $G'=(c_1+c_2)G$. We assume periodic boundary conditions, i.e., $x=0$ is equivalent to $x=G'$. The initial state is supported only on basis states corresponding to initial padding with the identities $\one_N$, that is,
\begin{align}
\label{eq:gaussianwaveapp}
    \ket{\psi} = \ket{0}^{\otimes n} \otimes \sum_{x=0}^{c_1 G} \psi(x) \ket{x}\;.
\end{align}
The amplitudes are those of a Gaussian wavepacket centered at site $x_0 = (c_1/2) G$ and with initial rightwards momentum $p_0>0$, such that
\begin{align}
\label{eq:amplitudesapp}
    \psi(x) &= \eta\exp\left[-\frac{(x - x_0)^2}{2\sigma^2} + i p_0 x\right], \quad \eta = \left(\sum_{x=0}^{c_1 G} \exp\left[-\frac{(x-x_0)^2}{\sigma^2}\right]\right)^{-1/2}\;.
\end{align}
Without loss of generality, we choose to start the computation by $\cU$ in state $\ket{0}^{\otimes n}$. We seek to prove the following result from Sec.~\ref{sec:circuittimeevolution}.

\begin{customlemma}{6.3}[Time evolution of initial Gaussian-like state under the clock Hamiltonian]
Let $H$ be the clock Hamiltonian for the $G$-gate   unitary $\cU=U_G U_{G-1}\ldots U_1$ acting on $n$ qubits as defined in Eq.~\eqref{eq:clockHamiltonianapp}; it contains $c_1G$ initial identity gates and $(c_2-c_1-1)G$ final identity gates, for even integers $c_1, c_2$. Let $\ket{\psi}$ be the state defined in Eq.~\eqref{eq:gaussianwaveapp}, with amplitudes determined by Eqs.~\eqref{eq:amplitudesapp}, and centered at site $x_0=(c_1/2)G$. Fix the width of the wavepacket to $\sigma = \hat\sigma x_0$ and the momentum to $p_0 = \hat p_0/x_0$ for positive constants $\hat\sigma$, and $\hat p_0$. Consider the time-evolved state $e^{-itH}\ket{\psi}$ for time $t = \hat t x_0^2 = \cO(G^2)$ for constant $\hat t>0$. Then, for any $G\ge 1$, there exists a choice of constants $c_1$, $c_2$, $\hat \sigma$, $\hat p_0$, and $\hat t$, such that
\begin{align}
    \frac{1}{2}\norm{\cU\ketbra{0}{0}^{\otimes n} \cU^\dagger - \Pi_n e^{-itH} \ketbra{\psi}{\psi} e^{itH} \Pi_n}_1 \leq \frac{1}{4} \;,
\end{align}
where $\Pi_n$ is the projector onto the space of the $n$ qubits where the computation occurs, and $\frac{1}{2}\norm{\cdot}_1$ indicates the trace distance.
\end{customlemma}

\begin{proof}
    On the first register of $n$ qubits where the computation occurs, we introduce an initial state $\ket{h_0} := \ket{0}^{\otimes n}$ and subsequent states $\ket{h_1}$ through $\ket{h_{G'}}$ given by
\begin{align}
    \ket{h_i} = V_i V_{i-1}\cdots V_1 \ket{h_0}\;.
\end{align}
Here, $G'=(c_1+c_2)G$, for some even constants $c_1>0$ and $c_2>0$.
Due to its connection to a ``tight-binding model'', the clock Hamiltonian admits a convenient diagonalization through the Fourier transform.
In the subspace fixed by $\ket{h_0}$, the eigenstates are
\begin{align}
    \ket{\varphi_k} = \frac{1}{\sqrt{G'}}\sum_{j=0}^{G'-1} e^{i 2\pi jk/G'}\ket{h_j}\ket{j}
\end{align}
where $0 \le k \le G'-1$. The corresponding eigenvalues are $\gamma_k = 2(1-\cos(2\pi k /G'))$.
These satisfy $0 \le \gamma_k \le 2$, showing that $H$ is PSD. The other eigenspaces are degenerate and fixed by other choices of $\ket{h_0}$.

Since $c_1 G=2x_0$,
we can express the initial state $\ket{\psi(0)}:=\ket \psi$
of Eq.~\eqref{eq:gaussianwaveapp} in terms of the eigenstates as
\begin{align}
    \ket{\psi(0)} &= \frac{\eta}{\sqrt{G'}}\sum_{x=0}^{2x_0}\sum_{k=0}^{G'-1} \exp\left[-\frac{(x-x_0)^2}{2(\hat\sigma x_0)^2} + i \frac{\hat p_0}{x_0}x\right]  e^{-i 2\pi x k/G'} \ket{\varphi_k} \;.
\end{align}
Time evolution under $H$ for time $t$ produces
\begin{align}
    \ket{\psi(t)} &= \frac{\eta}{\sqrt{G'}}\sum_{x=0}^{2x_0}\sum_{k=0}^{G'-1} \exp\left[-\frac{(x-x_0)^2}{2(\hat\sigma x_0)^2} + i \frac{\hat p_0}{x_0}x\right] e^{-it\gamma_k} e^{-i 2\pi  x k/G'} \ket{\varphi_k}\\
    &= \frac{\eta}{G'}\sum_{x=0}^{2x_0}\sum_{k=0}^{G'-1}\sum_{j=0}^{G'-1} \exp\left[-\frac{(x-x_0)^2}{2(\hat\sigma x_0)^2} + i \frac{\hat p_0}{x_0}x\right] e^{-it2(1-\cos(2\pi k /G'))} e^{i2\pi (j-x) k/G'} \ket{h_j}\ket{j}\;.
\end{align}

We will ultimately use a state $\ket{\psi_\Delta}$ that is entirely supported in a low-energy subspace $\Delta$ of $H$; all eigenvalues $\gamma_k$ in this subspace will satisfy
\begin{align}
    \gamma_k \leq \Delta = \cO(\hat p_0^2/G^2) \;,
\end{align}
implying that all $k = \cO((c_1+c_2)\hat p_0)$ supported by $\ket{\psi_\Delta}$ are also bounded by a constant. In this low-energy subspace, the eigenvalues of $H$ can be approximated as
\begin{align}
\label{eq:appsmallk}
    \gamma_k &= \left(\frac{2\pi k}{(c_1+c_2)G}\right)^2 + \cO(G^{-4}) \;.
\end{align}
We choose evolution time $t=\hat t x_0^2$ for constant $\hat t$ and evaluate the support of the time-evolved state on a particular basis state $\ket{h_j}\ket j$.
We replace $\gamma_k$ in the time evolution with the small-$k$ approximation in Eq.~\eqref{eq:appsmallk} and keep track of the additional error introduced by the approximation. Due to the choice of simulation time $t =\Theta( G^2)$, the $\cO(G^{-4})$ error in the approximation of the eigenvalue produces $\cO(G^{-2})$ error in the exponent, i.e.,
\begin{align}
 \bra{h_j}   \bra{j}\ket{\psi(t)} &= \frac{\eta}{(c_1+c_2)G} \sum_{x=0}^{c_1 G}\sum_{k=0}^{G'-1} \exp\Bigg[-\frac{2}{(\hat\sigma c_1)^2}\frac{(x-c_1G/2)^2}{G^2} \nonumber\\
    &\qquad + i \left(\frac{2\hat p_0}{c_1}\frac{x}{G} - \hat t \left(\frac{\pi k}{1+c_2/c_1}\right)^2 + \frac{2\pi k}{c_1+c_2}\frac{j-x}{G} + \cO(G^{-2})\right)\Bigg]\;.
\end{align}
Here, we used $G' = (c_1+c_2)G$ and $x_0 = c_1G/2$.
Equivalently, we can offset the sum to be
\begin{align}
   \bra{h_j} \bra{j}\ket{\psi(t)} &= \frac{1}{c_1+c_2}\frac{\eta}{G} \sum_{x=-c_1G/2}^{c_1 G/2-1}\sum_{k=0}^{G'-1} \exp\Bigg[-\frac{2}{(\hat\sigma c_1)^2}\frac{x^2}{G^2} + \cO(G^{-2})\Bigg] \nonumber\\
    &\qquad \times\exp\Bigg[i \left(\hat p_0\left(\frac{2}{c_1}\frac{x}{G}+1\right) - \hat t \left(\frac{\pi k}{1+c_2/c_1}\right)^2 + \frac{2\pi k}{c_1+c_2}\left(\frac{j-x}{G} - \frac{c_1}{2}\right)\right)\Bigg]\\
    &= \frac{1}{c_1+c_2}\frac{\eta}{G} \sum_{k=0}^{G'-1} \exp\Bigg[- \frac{\hat \sigma^2}{2} \left(\hat p_0 - \alpha k\right)^2 + i \left(\hat p_0 - (\alpha k)^2 \hat t + \alpha k\left(\frac{2}{c_1}\frac{j}{G} - 1\right)\right)\Bigg] \nonumber\\
    &\qquad \times \sum_{x=-c_1G/2}^{c_1 G/2-1}\exp\Bigg[-\frac{1}{2}\left(\frac{2}{\hat \sigma c_1 G}x - i \hat \sigma \left(\hat p_0 - \alpha k\right)\right)^2 + \cO(G^{-2})\Bigg]\;,
\end{align}
where we defined for convenience a constant
\begin{align}
    \alpha := \frac{\pi}{1+c_2/c_1}\;.
\end{align}
For fixed $k$, the last sum is of the form
\begin{align}
    \sum_{x=-S/2}^{S/2-1} \exp\left[-\frac{1}{2}\left(\frac{x}{S\hat\sigma/2} - i\beta\right)^2\right]
\end{align}
for $\beta = \hat\sigma(\hat p_0 - \alpha k)$ and $S = c_1 G$. We will expand its range at the cost of additional error $\epsilon_\mathrm{trunc}$, i.e.,
\begin{align}
    \sum_{x=-S/2}^{S/2-1} \exp\left[-\frac{1}{2}\left(\frac{x}{S\hat\sigma/2} - i\beta\right)^2\right] &= \sum_{x=-\infty}^\infty \exp\left[-\frac{1}{2}\left(\frac{x}{S\hat\sigma/2} - i\beta\right)^2\right] + 2\epsilon_\mathrm{trunc}\;.
\end{align}
For small $\hat\sigma$, the truncation error is bounded by
\begin{align}
    \epsilon_\mathrm{trunc} &= \left|\sum_{x=S/2}^\infty \exp\left[-\frac{1}{2}\left(\frac{x}{S\hat\sigma/2} - i\beta\right)^2\right]\right| \\
    &\leq \sum_{x=S/2}^\infty \exp\left[-\frac{1}{2}\left(\frac{x^2}{(S\hat\sigma/2)^2} - \beta^2\right)\right] \\
    &\leq \exp\left[\frac{\beta^2}{2}\right] \int_{S/2-1}^\infty \exp\left[-\frac{1}{2}\frac{x^2}{(S\hat\sigma/2)^2}\right] dx\\
    &\leq \frac{\sqrt{2\pi}}{4} S \hat \sigma \, \mathrm{erfc}\left(\frac{S-2}{\sqrt{2} S \hat \sigma}\right) \exp\left[\frac{\beta^2}{2}\right]\\
    &= \cO\left(S\hat\sigma^2 e^{-1/8\hat\sigma^2}\right)\;,
\end{align}
where after evaluating the integral we used the fact that $\beta$ is proportional to $\hat\sigma$. Applying Poisson summation, we have
\begin{align}
    \sum_{x=-\infty}^\infty \exp\left[-\frac{1}{2}\left(\frac{x}{S\hat\sigma/2} - i\beta\right)^2\right] &= \sqrt{\frac{\pi}{2}}S \hat\sigma e^{-\beta^2/2} \sum_{\omega=-\infty}^\infty \exp\left[-\frac{1}{2}\left(\pi\omega S \hat\sigma - \beta\right)^2\right]\\
    &= \sqrt{\frac{\pi}{2}} c_1 G \hat\sigma  + \cO(\epsilon_{|\omega| > 0})\;,
\end{align}
where
\begin{align}
    \epsilon_{|\omega|>0} &\leq \sqrt{\frac{\pi}{2}} \hat\sigma S e^{-\beta^2/2}\int_{-\infty}^\infty \exp\left[-\frac{1}{2}\left(\pi\omega\hat\sigma S - \beta\right)^2\right]d\omega\\
    &\leq e^{-\beta^2/2}\\
    &\leq 1\;.
\end{align}
Note that above, we took $\beta$ to be constant compared to $G$, despite its dependence on $k$. For the purposes of our proof, this holds because the state $\ket{\psi_\Delta}$ has arbitrarily large overlap with $\ket{\psi(0)}$, and $\ket{\psi_\Delta}$ is confined to a low-energy subspace where $k$ is at most constant-sized with respect to $G$.
The time-evolved state thus satisfies
\begin{align}
   \bra{h_j} \bra{j}\ket{\psi(t)} &= \sqrt{\frac{\pi}{2}}\frac{\hat\sigma \eta}{1+c_2/c_1}\left(1 + \cO\left( \hat\sigma e^{-1/8\hat\sigma^2}\right)\right) \nonumber\\
    &\quad \times\sum_{k=0}^{G'-1} \exp\Bigg[- \frac{\hat \sigma^2}{2} \left(\hat p_0 - \alpha k\right)^2 + i \left(\hat p_0 - (\alpha k)^2 \hat t + \alpha k\left(\frac{2}{c_1}\frac{j}{G} - 1\right) + \cO(G^{-2})\right)\Bigg]\\
    &= \sqrt{\frac{1}{2\pi}} \alpha \hat\sigma \eta\left(1 + \cO\left( \hat\sigma e^{-1/8\hat\sigma^2}\right)\right) \exp\left[\frac{1}{2}\left(\hat p_0(2i - \hat p_0 \hat \sigma^2) - \left(\frac{\alpha(\hat j - i \hat p_0 \hat \sigma^2)}{\gamma}\right)^2\right)\right] \nonumber\\
    &\quad \times \sum_{k=-G'/2}^{G'/2-1} \exp\left[-\frac{\gamma^2}{2}\left(k - \left(\frac{\alpha(\hat p_0 \hat\sigma^2 + i \hat j)}{\gamma^2} -\frac{G'}{2}\right)\right)^2 + \cO(G^{-2})\right]
\end{align}
for
\begin{align}
    \gamma = \alpha\left(\hat \sigma^2 + 2i\hat t\right)^{1/2}, \quad \hat j &= \frac{2}{c_1}\frac{j}{G} - 1\;.
\end{align}
To evaluate the last sum, we introduce constant-size variables
\begin{align}
    \gamma_1 = \hat\sigma^2 \alpha^2, \quad \gamma_2 = 2 \hat t \alpha^2
\end{align}
as well as
\begin{align}
    \mu_1 = \frac{\alpha^3 (\hat p_0 \hat \sigma^4 + 2 \hat j \hat t)}{\gamma_1^2 + \gamma_2^2}-\frac{G'}{2}, \quad \mu_2 = \frac{\alpha^3 \hat \sigma^2(\hat j - 2\hat p_0 \hat t)}{\gamma_1^2 + \gamma_2^2}, \quad \mu = \mu_1 - \frac{\gamma_2}{\gamma_1}\mu_2\;.
\end{align}
The last sum can now be rewritten as
\begin{align}
    \sum_{k=-G'/2}^{G'/2-1} \exp\left[-\frac{\gamma_1 + i\gamma_2}{2}\left(k - (\mu_1+i\mu_2)\right)^2\right]\;.
\end{align}
Similarly to before, we expand the limits of the sum at the cost of error
\begin{align}
    \epsilon_\mathrm{trunc}' &= \left|\sum_{x=G'/2}^\infty \exp\left[-\frac{\gamma_1 + i\gamma_2}{2}\left(k - (\mu_1+i\mu_2)\right)^2\right]\right|\\
    &\leq \exp\left[\frac{\mu_2^2(\gamma_1^2 + \gamma_2^2)}{2\gamma_1}\right]\int_{G'/2-1}^\infty \exp\left[-\frac{\gamma_1}{2}\left(k - \mu\right)^2\right] dk\\
    &\leq \sqrt{\frac{\pi}{2\gamma_1}} \exp\left[\frac{\mu_2^2(\gamma_1^2 + \gamma_2^2)}{2\gamma_1}\right] \mathrm{erfc}\left(\sqrt{\frac{\gamma_1}{2}}\left(\frac{G'}{2} - (\mu+1)\right)\right)\\
    &= \cO\left(\frac{1}{G}e^{-(\alpha \hat\sigma G)^2/2}\right)\;.
\end{align}
Up to additional error $\epsilon_\mathrm{trunc}'$, our sum is now approximated by
\begin{align}
    \sum_{k=-\infty}^\infty \exp\left[-\frac{\gamma_1 + i\gamma_2}{2}\left(k - (\mu_1+i\mu_2)\right)^2\right] &= \vartheta_3(0, e^{-(\gamma_1+i\gamma_2)/2})
\end{align}
for Jacobi elliptic theta function $\vartheta_3$. Note that this is independent of the site $j$. Moreover, since the error from $\epsilon_\mathrm{trunc}'$ is dominated by the error from $\epsilon_\mathrm{trunc}$, we have
\begin{align}
	|\bra{h_j}\bra{j}\ket{\psi(t)}|^2 &\propto \left|\exp\left[\frac{1}{2}\left(\hat p_0(2i - \hat p_0 \hat \sigma^2) - \left(\frac{\alpha(\hat j - i \hat p_0 \hat \sigma^2)}{\gamma}\right)^2\right) + \cO(G^{-2})\right]\right|^2 \nonumber\\
	&\qquad \times\left(1 + \cO\left(\hat \sigma e^{-1/8\hat \sigma^2}\right)\right)\\
	&\propto \exp\left[-\frac{1}{2}\frac{\hat \sigma^2}{\hat \sigma^4 + 4\hat t^2} \left(\frac{2}{c_1}\frac{j}{G} - 1 - 2\hat p_0 \hat t\right)^2 + \cO(G^{-2})\right]\left(1 + \cO\left(\hat \sigma e^{-1/8\hat \sigma^2}\right)\right)
\end{align}
for a constant of proportionality independent of $j$.
We wish to show that the wavepacket is localized on the identity gates at the end of the circuit. To achieve this behavior, we must appropriately choose the identity padding, wavepacket width, wavepacket momentum, and evolution time. One such choice is as follows, where $R$ will ultimately be a large constant:
\begin{align}
    c_2 = R^2, \quad c_1 = R, \quad \hat\sigma^2 = 1/R, \quad \hat p_0 = R^2, \quad \hat t = 1/R\;,
\end{align}
which implies
\begin{align}
    |\bra{h_j}\bra{j}\ket{\psi(t)}|^2 &\propto \exp\left[-\frac{R}{10} \left(\frac{2}{R}\frac{j}{G} - 1 - 2R)^2 + \cO(G^{-2})\right)\right]\left(1 + e^{-\cO(R)}\right)\;.
\end{align}
To show that the wavepacket is localized on the identity padding at the end of the Hamiltonian, we show that the support on the first $(c_1+1)G$ sites (i.e., before $\cU$ is computed) is vanishingly small compared to the support on the final identity padding (i.e., after $\cU$ is computed). The LHS below defines this ratio $\zeta$ after substituting in the $R$ parameter. For our choice of time and momentum, it suffices to only evaluate the support on the last $G$ sites of the Hamiltonian, i.e., we can upper-bound the ratio $\zeta$ by the quantity
\begin{align}
	\zeta = \frac{\sum_{j=0}^{(R+1)G} |\bra{h_j}\bra{j}\ket{\psi(t)}|^2}{\sum_{j'=(R+1)G}^{(R^2+R+1)G} |\bra{h_{j'}}\bra{j'}\ket{\psi(t)}|^2} \leq \frac{\sum_{j=0}^{(R+1)G} |\bra{h_j}\bra{j}\ket{\psi(t)}|^2}{\sum_{j'=(R^2+R)G}^{(R^2+R+1)G} |\bra{h_{j'}}\bra{j'}\ket{\psi(t)}|^2}\;.
\end{align}
We then bound the numerator and denominator by the maximal and minimal terms in each sum respectively to obtain
\begin{align}
	\zeta &\leq \left(1 + e^{-\cO(R)}\right)\frac{((R+1)G+1)\exp\left[-\frac{R}{10} \left(\frac{2(R+1)}{R} - 1 - 2R)^2 + \cO(G^{-2}\right)\right]}{G\exp\left[-\frac{R}{10} \left(\frac{2(R^2+R)}{R} - 1 - 2R)^2 + \cO(G^{-2}\right)\right]}\\
	&= \exp\left[-\Omega(R^3 \log R)\right]\;.
\end{align}
Hence, we can always choose a sufficiently large constant $R$ such that the support on the sites where the computation is unfinished becomes arbitrarily small, compared to the support on the sites where the computation is finished. This completes the proof.
\end{proof}

\end{document}